\documentclass[12pt,letterpaper,titlepage]{article}

\makeatletter
\def\input@path{{draft/}{./}}
\makeatother

\usepackage[margin=1in]{geometry}

\usepackage{amsmath, amssymb, amsthm, bm}

\usepackage[justification=justified, labelfont={color=black, bf}, labelsep=period, singlelinecheck=false, font=footnotesize]{caption}
\usepackage[justification=justified, labelfont={color=black, bf}, font=footnotesize]{subcaption}
\usepackage{graphicx, booktabs, multicol, multirow, adjustbox, tabularx, array, threeparttable}
\usepackage{siunitx}
\newcommand{\sym}[1]{\rlap{\ensuremath{^{#1}}}}
\usepackage{makecell}
\usepackage{float}
\usepackage{pdflscape}

\usepackage{setspace}
\usepackage{enumitem}

\usepackage{xcolor,colortbl}
\definecolor{prussianblue}{RGB}{0, 49, 82}

\usepackage[colorlinks=true, linkcolor=prussianblue, citecolor=prussianblue, urlcolor=prussianblue]{hyperref}

\usepackage[T1]{fontenc}
\usepackage{txfonts}

\usepackage{appendix}

\usepackage[stable]{footmisc}
\usepackage{titling}
\usepackage[backend=biber, maxnames=5, style=authoryear, natbib=true, doi=false, isbn=false, url=false, eprint=false, uniquelist=false, uniquename=false]{biblatex}
\usepackage{xspace}
\newcommand{\NTM}{\text{NTM}\xspace}
\newcommand{\FFM}{\text{FFM}\xspace}

\newcommand{\poly}{\text{poly}\xspace}
\newcommand{\mn}{\text{m}\xspace}

\providecommand{\orig}[1]{}
\usepackage{mathtools}
\usepackage{tikz}
\usepackage[most]{tcolorbox}
\usepackage{microtype}

\newcommand{\E}{\mathbb{E}}
\newcommand{\Prb}{\mathbb{P}}
\newcommand{\ppost}{\rho}
\newcommand{\rvar}{\nu}
\newcommand{\one}{\mathbf{1}}

\newcommand{\PM}{\mathrm{P}}
\newcommand{\Spot}{\mathrm{S}}
\newcommand{\hbuy}{\mathtt{+}}
\newcommand{\hzero}{\mathtt{0}}
\newcommand{\hsell}{\mathtt{-}}

\DeclareMathOperator*{\argmax}{arg\,max}

\theoremstyle{plain}
\newtheorem{proposition}{Proposition}
\newtheorem{lemma}{Lemma}
\newtheorem{corollary}{Corollary}
\theoremstyle{definition}
\newtheorem{definition}{Definition}
\newtheorem{assumption}{Assumption}

\graphicspath{{exhibits/}{../exhibits/}}

\newcommand{\titletext}{Settlement Manipulation in Prediction Markets}
\title{\titletext}

\newif\ifanonymous
\anonymousfalse

\ifanonymous
\author{}
\date{}
\else
\author{David Dai\thanks{Department of Management Science and Engineering, Stanford University; htdai@stanford.edu.} \and Ruizhe Jia\thanks{Department of Management Science and Engineering, Stanford University; ruizhe@stanford.edu.} \and Shihao Yu\thanks{Lee Kong Chian School of Business, Singapore Management University; shihaoyu@smu.edu.sg.}}
\date{Current version: \today}
\fi

\begin{document}

\ifanonymous
\else
\begingroup
\thanksmarkseries{fnsymbol}
\maketitle
\endgroup
\setcounter{footnote}{0}
\fi

\vspace*{2.5cm}
\begin{center}
  {\LARGE\titletext\par}
\end{center}

\vspace{2em}
{\centering\bfseries\abstractname\par}
\vspace{0.75em}
\begin{spacing}{1}
\noindent Prediction markets increasingly list contracts settling on an asset price that holders can move by trading the underlying. We build a model showing that such contracts transfer wealth from prediction-market liquidity traders to manipulators and harm price discovery in the underlying, even as it becomes more liquid. After the launch of Polymarket's five-minute Bitcoin contract, settlement-time spot order flow spikes, causing large price reversals after settlement. Manipulators capture a large amount of profit, mostly from retail. Manipulation is largely absent in the fifteen-minute contracts: lengthening the contract horizon removes it, providing the market-design remedy our model and evidence support.

\end{spacing}

\clearpage

\section{Introduction}
\label{sec:intro}

Prediction markets have grown with remarkable speed: combined monthly volume on Kalshi and Polymarket rose nearly fivefold in seven months, from under \$5 billion in September 2025 to about \$24 billion by April 2026.\footnote{Pew Research Center, ``Trading Volume on Prediction Markets Has Soared in Recent Months'' (2026).} Surprisingly, the fastest-growing of these contracts are not written on elections or sports, but on asset prices. Each is a binary contract: it pays \$1 if the asset is above a chosen level at the contract's close, and \$0 otherwise. These began in cryptocurrencies, where they are already large. On Kalshi alone, volume passed \$1 billion a month for the first time in March 2026.\footnote{CF Benchmarks, ``Kalshi Leads Surging Crypto Event Contract Market'' (2026).} Now they are reaching mainstream finance: Nasdaq and Cboe have filed to list them on equity indices.\footnote{Nasdaq and Cboe each filed with the SEC in March 2026 to list binary (fixed-payout) options on equity indices.} And more importantly, the contract horizon keeps shrinking: from a day, to an hour, to fifteen minutes, and on Polymarket to just five minutes.

Regulators have not merely permitted prediction contracts; they describe their benefits to the public. The CFTC tells investors that ``a citrus farmer might buy a weather event contract to hedge against losses that might be caused by a sudden freeze,'' and that prediction markets ``can sometimes forecast event outcomes better than polling or other forms of forecasting.''\footnote{Commodity Futures Trading Commission, ``Understanding Prediction Markets and Event Contracts'' (investor advisory), \texttt{cftc.gov}.} Academic work offers concrete support: the Iowa Electronic Markets forecast U.S.\ presidential vote shares to within about $1.5$ percentage points, outperforming large-scale polls \citep{wolfers2004prediction}, and a prominent group of economists argues that prediction markets deliver ``a lower prediction error than conventional forecasting methods'' \citep{arrow2008promise}. Even the worry that someone might manipulate the price has largely been set aside: a manipulator can increase price accuracy \citep{hanson2009manipulator}, and documented manipulation attempts reverted quickly \citep{rhode2008manipulating}. If that logic carried over to asset prices, these contracts would make financial markets more informative, not less.

This paper shows the opposite for asset-price contracts: rather than making the underlying market more informative, they degrade its price discovery and transfer wealth from ordinary traders to those who manipulate them. The vulnerability is structural. An asset-price contract settles on a financial price, and that price can be moved by trading the underlying market itself, the cash-settlement manipulation of \citet{kumar1992futures}, now inside a retail prediction market with horizons as short as five minutes. U.S.\ law already requires every listed contract to be ``not readily susceptible to manipulation'';\footnote{Commodity Exchange Act, Designated Contract Market Core Principle~3; reaffirmed for event contracts in CFTC Staff Advisory~26-08 (2026).} we show that asset-price contracts raise exactly the concern that standard is meant to prevent. The footprint is in the data (Figure~\ref{fig:headline}): after the five-minute contract went live on Polymarket, in the final seconds before settlement, Binance spot order flow spikes, resulting in price movements that revert shortly after settlement. Clearly, this is a transitory push to manipulate the spot price, not trading on information.

\begin{figure}[!t]
    \centering
    \includegraphics[width=\linewidth]{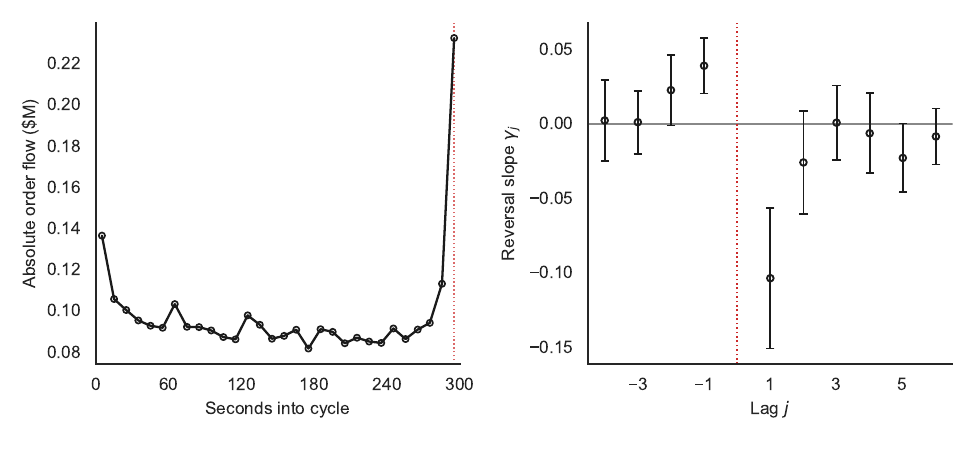}
    \caption{\textbf{The footprint of settlement manipulation.} This figure plots Binance BTC spot activity around the five-minute contract's close, in the post-launch period. \emph{Left}: absolute order flow per five-minute cycle, folded onto the rolling clock; the dotted vertical line marks the final ten seconds before the close. \emph{Right}: the post-settlement reversal slope $\gamma_j$ at each lag $j$ around the close, the slope of the lag-$j$ ten-second return on the final-ten-second return. Section~\ref{sec:empirics} details both constructions and establishes the patterns against pre-launch and across-moneyness benchmarks.}
    \label{fig:headline}
\end{figure}

We develop a sequential game-theoretic model of the two markets. The spot market has its usual players (an informed trader who knows the asset's value and liquidity traders who do not) and now a manipulator, who has bought the contract and trades the spot just before settlement. Her order pools with the rest, and because she trades on no information, a late buy is now less likely to be the informed trader's. It therefore carries less information about the asset, so the settlement price reveals less of it and price discovery falls. Yet, by lowering the adverse selection the spot maker faces, it leads the spot maker to supply more liquidity, so the manipulation makes the spot market more liquid even as it makes the price less informative.

The same inability to tell who is trading also makes the liquidity traders in the prediction market earn worse-than-fair odds. A contract buy is public, but the spot maker cannot tell the manipulator from an ordinary contract buyer. It therefore puts some probability on a manipulator behind the trade, one who will push the spot toward her side at settlement. Anticipating that push, the spot maker shades its settlement-window quote against it, and the settlement tilts against the side the contract buyer took. A prediction-market liquidity trader who bought with no intent to manipulate is overpaying all the time for this contract. More importantly, we show that a prediction-market maker earns a rent as long as the bid and ask cannot cross. The prediction-market liquidity trader's loss is thus a transfer to the prediction-market maker and the manipulator.

These harms are not inevitable: both the evidence and the model point to one design lever, the contract's horizon. The model explains why the horizon matters. A short horizon leaves little room for the asset's own price discovery before settlement, so a fixed spot push is more likely to determine the contract's resolution. A longer horizon aggregates more ordinary spot-market information before the close, making the price paths in which a push is pivotal rarer and reducing the price-discovery loss on net. The remedy is direct: lengthen the horizon. In the data, the manipulation signature is sharp for the five-minute contract, but much attenuated for the fifteen-minute contract.

Empirically, we study Polymarket's Bitcoin five-minute up/down contract: a binary claim that pays \$1 if Bitcoin is higher at the close of a five-minute window than at its open, and \$0 otherwise. Every trade is recorded on a public blockchain, so each trade and trader is traceable, a transparency centralized exchanges and options venues do not offer. The contract resolves not on Polymarket but against a Chainlink oracle that averages the price across major spot exchanges. The price on Binance, the largest crypto exchange by volume, is an economically tight proxy for that benchmark: its mid sits about two and a half basis points from the oracle and moves essentially one-for-one with it within each five-minute window, finishing on the same side of the strike as the resolution about $85\%$ of the time, so a push that drives the Binance mid a few basis points past the strike reliably carries the resolution. We therefore measure the settlement-time flow on Binance. The contract's launch on February 12, 2026 offers a clean natural experiment: no five-minute contract existed before that date, so we compare three regimes: no such contract (P1), the 15-minute and 4-hour contracts only (P2), and the five-minute contract live (P3).

As soon as it launched, a relatively small prediction market began redirecting large spot orders to the settlement seconds. In the final ten seconds before each close, trading on Binance spikes: the magnitude of net order flow jumps, and volatility rises with it. The timing is no accident: the spike appears only after the five-minute launch (about 50\% above the pre-launch level), and is significantly attenuated at the fifteen-minute horizon. The spike is sharpest where a push is pivotal: in the roughly 6\% of cycles whose contract price still implies a near-even outcome just before the close, the near-settlement order-flow jump is about $3.9$ times that in the rest. The clearest sign that this is not information is the reversal: within ten seconds the price reverts, by about a quarter in the near-even cycles and a tenth in the others. Real information would persist; the price impact of a manipulative push reverts. Last, the liquidity evidence is asymmetric across venues: near-even closes add depth on Binance, exactly where the model predicts an uninformed push should lower spot adverse selection, while Polymarket depth falls as binary-market makers face direct exposure to a flipping settlement.

We then identify the cycles most likely to contain a manipulative push. For each cycle we measure the size of the final-ten-second net order flow on Binance relative to the cycle's own typical flow. We classify the cycles in the top decile of this measure, about $1{,}600$, as manipulated. Away from the close they are quieter than average, and they fall disproportionately in thin-liquidity hours: $56\%$ in the overnight window and $44\%$ on weekends, against $40\%$ and $27\%$ of normal cycles, where a given dollar of near-settlement flow moves the spot price, and therefore the resolution, the most.

Within these manipulated cycles, the pushes do more than cluster at the close: they change which side wins. In cycles where the contract was priced near even, a push against the favored side flipped the winner 65\% of the time, against 41\% in ordinary trading. It works by dragging the Binance price, and thus the oracle price, across the strike. The stronger test is the cycles that were all but decided: even when the market gave one side a 90-to-100\% chance just before the close, a push against it reversed the outcome 34\% of the time, against 1\% in cycles with no push. A bet the market treated as near-certain was overturned one time in three.

We next ask who gains from that and who pays for it. Because Polymarket settles on a public blockchain, every wallet's profit is visible, and we identify as manipulators the traders who gain in the manipulated cycles, an identification based on realized profits rather than a direct observation of intent. Just 821 of them fit, about one in three hundred of the roughly 243{,}000 who traded the contract, and they take \$8.2~million in the pushed cycles while breaking even in the rest: the edge lives only where the manipulation does. Who pays divides just as sharply. Setting aside the market makers, who quote passively and end each cycle flat, almost none of the loss lands on them: 93\% of it falls on retail. These retail accounts are the liquidity traders of the model, and the contract treats them accordingly: they are on the losing side in 65\% of manipulated cycles, against 48\% of normal ones. All of it comes from a single contract, on a single platform, over two months.

The natural innocent explanation is hedging: a prediction-market maker who has sold the binary contract offsets its risk by trading the spot. Two facts rule that out, and the clearer one is the state of the contract when the pushing happens. A binary contract's sensitivity to the spot price is large only when the price sits near the strike; once one side is almost certain to win, the contract barely moves with the spot, and the prediction-market maker is left with almost nothing to hedge. Yet those nearly decided cycles (where the market already gave one side a 90-to-100\% chance) are exactly the ones a push still flips, one time in three. A trade placed when there is nothing to hedge cannot be a hedge; whatever hedging does occur can only come afterward, once the push has dragged the price back toward the strike and revived the exposure. The second fact is the timing. Even when there is something to hedge, a prediction-market maker would build its spot position gradually, as that exposure accumulates over the cycle. Instead the flow arrives in a single burst at the very end: essentially all of the signed trading falls in the final fifty seconds, and almost none before. A position assembled only in the closing seconds is not a hedge accumulated over the cycle; it is a trade placed to move the settlement price.

The paper proceeds as follows. Section~\ref{sec:literature} positions the contribution. Section~\ref{sec:institutional} describes the contracts and how they settle. Sections~\ref{sec:theory_setup} and~\ref{sec:theory_results} develop the model and its predictions. Section~\ref{sec:empirics} presents the testable predictions, the data, and the empirical results, and Section~\ref{sec:conclusion} discusses the market-design implications.

\section{Related Literature}
\label{sec:literature}

This paper joins three strands of literature that have largely developed apart: the economics of prediction markets, the manipulation of cash-settled derivatives, and the market microstructure of decentralized and crypto venues.

\paragraph{Prediction markets.}
Prediction markets are valued as information aggregators whose prices forecast events as accurately as expert judgment \citep{wolfers2004prediction}, a view recently reaffirmed for macroeconomic contracts on Kalshi \citep{diercks2026kalshi}. A long-standing concern is that such prices are vulnerable to manipulation, yet the literature has largely concluded that manipulation need not impair price accuracy. In a Kyle-style model, a manipulator's order flow raises the returns to informed trading and can even improve price accuracy \citep{hanson2009manipulator}; in laboratory markets, other traders anticipate and discount manipulators' biased orders, leaving accuracy intact \citep{hanson2006information}; and documented manipulation attempts in the field have produced only transient, mean-reverting price effects \citep{rhode2008manipulating}. A growing empirical literature exploits transaction-level data from Polymarket and Kalshi: one strand examines who profits and who is informed, finding that a small group of sophisticated traders captures most of the gains while retail participants fund them \citep{akey2026who,gomezcram2026accuracy,dellavedova2026profits}; a parallel strand measures adverse selection and order flow \citep{bartlett2026adverse,tsang2026anatomy,dubach2026anatomy} and screens for suspicious accounts \citep{mitts2026informed}.

{To our knowledge this is the first study of ultra-short, oracle-settled prediction markets on an asset's price, and the first to identify settlement manipulation in them: a trader who holds the binary contract trades the underlying asset near settlement to move the price on which the contract pays. We show that this manipulation not only transfers wealth from liquidity traders to manipulators in the prediction market, but also harms price discovery in the underlying spot market.}

\paragraph{Manipulation of cash-settled derivatives.}
{At its core, our mechanism is the cash-settlement manipulation of \citet{kumar1992futures}: a trader takes a derivative position and then trades the underlying spot to move the cash-settlement price, profiting when the gain on the position exceeds the cost of the spot trade. Another theory paper studying manipulation in derivatives markets is \citet{zhang2022competition}, in which manipulators can take endogenous derivative contract positions and leave every agent worse off. On the empirical side, the literature documents settlement and expiration distortions in equities and volatility benchmarks \citep{ni2005stock,comertonforde2011measuring,griffin2018vix,baltussen2023derivative}. \citet{spatt2014security} surveys a variety of manipulation strategies. Finally, a design literature asks how to build benchmarks that resist manipulation \citep{duffie2021robust}.

    Our contribution relative to this work is twofold. First, on the theory side, we build a model of cash-settlement manipulation in a new market: ultra-short, oracle-settled prediction markets on an asset's price. This model is based on the sequential trading framework of \citet{glosten1985bid} and on the two-venue price-discovery structure of \citet{zhu2014dark}. In this market, manipulation harms liquidity traders and degrades the underlying spot's price discovery. We further provide a market-design remedy: a longer contract horizon lowers both the manipulator's rents and the price-discovery loss. Second, on the empirical side, because this market settles on-chain---unlike the equity, futures, and volatility-benchmark venues where cash-settlement manipulation is otherwise studied---we observe trading at the level of pseudonymous wallets, which allows a more granular analysis: we identify the likely manipulators and measure the wealth they extract. }

\paragraph{Decentralized and crypto market microstructure.}
{Finally, the paper contributes to the microstructure of decentralized and crypto markets by studying a new on-chain venue: a binary, oracle-settled prediction market. One strand studies the microstructure, price discovery, and quality of crypto trading venues, both centralized and on-chain \citep{makarov2020trading,lehar2022fragmentation,klein2023discovery,aoyagi2025coexisting,easleyohara2024crypto,barbon2024quality}. A second studies liquidity provision and price discovery on decentralized exchanges and automated market makers \citep{hasbrouck2024dex,capponi2025liquidity,lehar2025decentralized,capponi2026price,hasbrouck2026fees}. A third studies value extraction through transaction ordering and front-running, and the design flaws and responses it has prompted \citep{daian2020flashboys,park2023conceptual,capponi2025mev}. A fourth concerns on-chain settlement and the price oracles that feed external data to smart contracts \citep{chiu2019settlement,cong2025oracle}. Our mechanism is nonetheless distinct from these on-chain frictions: Polymarket runs a central limit order book, and the manipulation operates on the centralized spot exchange where the contract settles, working through trading into the settlement window rather than transaction reordering or a compromised oracle.}

\section{Institutional Background}
\label{sec:institutional}

\subsection{Prediction markets and asset-price contracts}

A prediction market lets participants trade contracts that pay off on the realized outcome of a future event, so the contract's price can be read directly as the market's estimated probability that the event occurs. The established contracts settle on real-world events (elections, policy decisions, sporting results) and are valued as forecasters whose prices have matched or beaten conventional methods \citep{wolfers2004prediction,arrow2008promise}. One desirable feature stands out: the settlement event lies outside the traders' control (it is \emph{exogenous} to them).

The fastest-growing asset-price contracts are the cryptocurrency ``up/down'' markets: a yes-or-no bet on the direction of a reference asset (most heavily Bitcoin) over a fixed horizon, paying off only if the asset ends the horizon above where it began. The horizons are short, from minutes to hours, and the contracts open and settle automatically on a rolling schedule, so a single asset generates a continuous stream of fresh binary markets through the day.\footnote{Within months of launch, Polymarket's five- and fifteen-minute crypto up/down markets traded over \$4~billion cumulatively and roughly tripled platform daily volume; sources: Polymarket and Chainlink launch announcements, 2026.}

The same structure is now reaching mainstream finance. In March 2026 Nasdaq filed with the U.S.\ Securities and Exchange Commission to list binary options on the Nasdaq-100 and Nasdaq-100 Micro indices, and Cboe is pursuing comparable ``all-or-none'' contracts on financial and economic events; like the crypto up/down markets, each pays a fixed amount if a stated price condition is met at expiry and nothing otherwise.\footnote{Nasdaq's proposed contracts are priced between one cent and one dollar on the Nasdaq-100 and Nasdaq-100 Micro indices (SEC rule-change filing, March 2026); Cboe has announced ``all-or-none'' event contracts targeting a 2026 launch, subject to regulatory approval. Sources: Nasdaq and Cboe SEC filings and announcements, 2026.} These proposals would carry the asset-price contract onto the most heavily traded U.S.\ equity indices, so the settlement vulnerability this paper documents in crypto would extend to far larger markets.

\subsection{Venues}

These contracts trade on venues that differ sharply in design. \emph{Polymarket} is a decentralized, blockchain-based market: its contracts are created and settled automatically by code that runs on a public blockchain (on-chain smart contracts), every trade is recorded on that blockchain, and positions are held in a token pegged to the U.S.\ dollar (a dollar stablecoin) \citep{tsang2026anatomy,dubach2026anatomy}. It lists the widest menu of horizons, including five-minute, fifteen-minute, and four-hour up/down contracts on Bitcoin and several other coins. \emph{Kalshi} is a centralized, federally regulated U.S.\ exchange (trades route through one operator rather than a blockchain) \citep{diercks2026kalshi,cftc2026advisory}; over our sample it lists only a fifteen-minute crypto up/down contract, with no five-minute or four-hour equivalent; its hourly and daily crypto series are price-range markets, not binary up/down.

\subsection{Polymarket}

Each Polymarket crypto up/down contract pays \$1 if the reference asset's price at the close exceeds its price at the open, and \$0 otherwise. The contracts are created and settled programmatically on a rolling schedule, so a fresh five-minute Bitcoin contract opens every five minutes around the clock. The three horizons trade simultaneously and overlap on the clock (every fifteen-minute contract spans three consecutive five-minute contracts), a feature we exploit below.

\paragraph{Resolution.} The five-minute, fifteen-minute, and four-hour contracts settle against a price produced by a decentralized oracle (an outside service that reports a reference price by combining data from many exchanges) rather than against any single exchange. The oracle, Chainlink, aggregates trade data from many centralized and decentralized venues (including the largest spot exchanges) and reduces them to one consensus print through a multi-stage procedure~\citep{cong2025oracle,aspembitova2022oracle}: several professional aggregators each compute a volume-weighted average price (a price that weights each trade by its size) across the venues they cover, independent node operators take the middle value (the median) across aggregators, and the network publishes the middle value across nodes.\footnote{For the fast markets Polymarket reads Chainlink Data Streams, a low-latency pull-based feed, rather than the standard on-chain push feed, so the settlement value can be sampled at the exact close; see Chainlink Data Feeds documentation, \url{https://docs.chain.link/data-feeds}.} The strike against which a contract resolves is the oracle price sampled at the contract's open, and the settlement value is the oracle price at its close; the ``up'' side wins if and only if the close print exceeds the strike. (The one-hour and daily contracts, which fall outside our sample, instead reference a single exchange's spot price directly.) This aggregated construction is central to the manipulation question we study. Because the resolution price combines many exchanges into one agreed figure, moving it appears to require moving the price on many exchanges at once rather than on just one. Yet within a five-minute horizon, arbitrage (traders profiting from price gaps across exchanges) keeps the largest single venue and the combined price within a few basis points (hundredths of a percent) of each other. A spot-price move concentrated on one venue can therefore still push the combined price across the strike if the move is large enough relative to that narrow margin. Appendix~\ref{app:chainlink_validation} measures this gap and the residual disagreement directly.

\paragraph{Product rollout.} The three Bitcoin contracts launched in sequence, and that staggered introduction is the backbone of our between-period comparison. The fifteen-minute contract began trading on October 9, 2025 and the four-hour contract on October 15, 2025; the five-minute contract was created on-chain in December 2025 but sat dormant, with no trading, until its public launch on February 12, 2026.\footnote{Launch dates are identified from the on-chain trade record; see Section~\ref{sec:data}.} We therefore partition the sample into three periods: a pre-launch baseline before any of these contracts traded, an intermediate period with the fifteen-minute and four-hour contracts live, and a final period with all three (including the five-minute contract) trading together.

\section{Theoretical Setup}
\label{sec:theory_setup}
\label{sec:the-model}

Two markets operate in parallel. On a spot market a risky asset
changes hands; on a separate venue a binary contract settles on that asset's
price at the close of trading, with some noise. The settlement rule is what links them: the
contract settles on a price the spot market itself discovers---and a trader
who holds the contract can move that price by trading the asset. An uninformed
manipulator turns this into a strategy. She takes a position in the contract,
then trades the spot in the final moments before settlement to push its price
toward the side she holds.

\subsection{The Two Markets and the Contract}
\label{sec:setup-markets}

{This section describes the two assets the model needs---a risky spot
asset and a binary contract written on it---and the rule that settles the
contract.}

The first asset is the spot asset itself. The fundamental
value of the spot asset is a random variable $V$ with equal
probability of being \(-\sigma\) or \(+\sigma\), drawn by nature at stage~0.
$\sigma>0$ is the volatility of the fundamental value and exogenous.
One informed spot trader observes the realization.
Everyone else holds the symmetric prior. Spot positions ultimately
liquidate at the value of \(V\).

The second asset is the binary contract---a claim on the asset's direction. Crucially, it
does not settle on the realized value $V$ but on the spot market's \emph{posterior
mean} of $V$ at the end of the prediction horizon, that is, the value the spot market's own trading has revealed, perturbed
by a small independent noise. This is the link the manipulator exploits: by moving the
closing spot price, she moves the very quantity her contract settles on. Write $h$ for
the public order flow and $\bar V_h$ for the spot market's posterior mean of $V$ given
$h$ (both made precise in Section~\ref{sec:setup-order-flow}).

For a final price path \(h\), the binary contract pays
\begin{equation}
\label{eq:contract-payoff}
    D_\varepsilon\left(h\right)
    =
    \begin{cases}
        0, & \bar V_h+\varepsilon\xi<0,\\[0.3em]
        \frac{1}{2}, & \bar V_h+\varepsilon\xi=0,\\[0.3em]
        1, & \bar V_h+\varepsilon\xi>0,
    \end{cases}
\end{equation}
where $\xi$ is a random noise term uniformly distributed on \(\left[-1,1\right]\), independent of \(V\) and all orders, and
$\varepsilon>0$ is an exogenous parameter that controls the noise level.\footnote{
In the no-noise limit \(\varepsilon \downarrow 0\),
the expected payoff jumps when $\bar V_h$ crosses zero and we might not have an
equilibrium. The noise smooths the payoff and guarantees the existence of an equilibrium.
In the large-noise limit \(\varepsilon \uparrow \infty\), the binary contract decouples
from the spot market and its expected payoff converges to \(1/2\) regardless of the
price path, which makes manipulation unprofitable.}

The payoff in~\eqref{eq:contract-payoff} makes the contract a \emph{binary} (or digital) option on the spot price: it pays out only when the settlement value \(\bar V_h+\varepsilon\xi\) clears a fixed threshold, here zero. We therefore adopt standard option terminology throughout. We call that threshold the \emph{strike}; the contract finishes \emph{in-the-money} when the settlement value is above the strike (it pays~\(1\)), \emph{out-of-the-money} when below it (it pays~\(0\)), and \emph{at-the-money} when the price sits at the strike, where the outcome is still in doubt.

Why settle on a noisy price rather than on the spot price directly? In
practice the settlement reference is not the manipulator's to dictate. The oracle
that resolves the contract reads many exchanges at once---not just the one venue
she can trade---and her push lands amid a stream of other traders' orders she does
not control. A push therefore carries the settlement across the strike only with
some probability, never for sure. The
term \(\varepsilon\xi\) captures exactly this residual uncertainty, and
\(\varepsilon\) sets how much of it remains.

Let $H$ denote the cumulative distribution function of \(\xi\). The expected payoff
of the PM contract given \(h\) is
\begin{equation}
\label{eq:contract-expected-payoff}
    \E\left[D_\varepsilon\left(h\right)\mid h\right]
    =
    H\left(\frac{\bar V_h}{\varepsilon}\right)
    =
    \begin{cases}
        0, & \bar V_h\le -\varepsilon,\\[0.3em]
        \frac{1}{2}+\frac{\bar V_h}{2\varepsilon},
        & -\varepsilon<\bar V_h<\varepsilon,\\[0.5em]
        1, & \bar V_h\ge \varepsilon.
    \end{cases}
\end{equation}
That is, the contract pays \(1\) if the final posterior mean is weakly above \(\varepsilon\),
\(0\) weakly below \(-\varepsilon\), and interpolates linearly on the band of
width \(2\varepsilon\) around zero.

\subsection{Timeline}
\label{sec:setup-timeline}

The model has five types of participants---market makers on the spot and
prediction-market venues, an informed spot trader, a manipulator, and liquidity
traders---who act over five stages (Figure~\ref{fig:timeline}). Each order below is
publicly observed. At each trading stage the venue's market maker has posted bid
and ask quotes, and the arriving order executes against them. {Later, Section~\ref{sec:setup-players}
specifies how each participant behaves strategically; here we only lay out the sequence.}

\noindent\textbf{Stage 0 (Nature):}
Draws \(V\in\left\{-\sigma,+\sigma\right\}\), observed only by the informed spot trader.

\noindent\textbf{Stage 1 (Prediction market):}
One order \(Y_1^\PM\) arrives on the prediction market (PM), from the manipulator or a liquidity trader.

\noindent\textbf{Stage 2 (Spot market):}
One order \(Y_2^\Spot\) arrives on the spot market, from the informed trader or a liquidity trader.

\noindent\textbf{Stage 3 (Spot market during settlement window):}
One order \(Y_3^\Spot\) arrives on the spot market just before the close: the manipulator
if she entered at stage~1, otherwise an ordinary spot order like stage~2's.

\noindent\textbf{Stage 4 (Settlement):}
The binary contract settles on the spot market's posterior mean of $V$ plus a small noise
after stage~3 ends.
All spot positions liquidate at the realized value of \(V\).

\begin{figure}[ht]
\centering
\resizebox{\textwidth}{!}{%
\begin{tikzpicture}[font=\small]
  \draw[->,thick] (-0.6,0) -- (15.4,0);
  \node[anchor=west] at (15.4,0) {\footnotesize time};
  \foreach \x in {0,7,14}{\draw[thick] (\x,0.12) -- (\x,-0.12);}
  \foreach \x in {3.5,10.5}{\filldraw (\x,0) circle (2.4pt);}
  \node[align=center,anchor=south] at (0,0.22){\textbf{Stage 0}\\[-2pt]\footnotesize Nature};
  \node[align=center,anchor=south] at (3.5,0.22){\textbf{Stage 1}\\[-2pt]\footnotesize Prediction market};
  \node[align=center,anchor=south] at (7,0.22){\textbf{Stage 2}\\[-2pt]\footnotesize Spot};
  \node[align=center,anchor=south] at (10.5,0.22){\textbf{Stage 3}\\[-2pt]\footnotesize Spot: settlement window};
  \node[align=center,anchor=south] at (14,0.22){\textbf{Stage 4}\\[-2pt]\footnotesize Settlement};
  \node[align=center,anchor=north,text width=2.6cm] at (0,-0.28){\footnotesize draw $V\in\{-\sigma,+\sigma\}$; informed observes it};
  \node[align=center,anchor=north,text width=2.9cm] at (3.5,-0.28){\footnotesize order $Y_1^{\PM}$; manipulator or liquidity};
  \node[align=center,anchor=north,text width=2.6cm] at (7,-0.28){\footnotesize order $Y_2^{\Spot}$; informed or liquidity};
  \node[align=center,anchor=north,text width=2.9cm] at (10.5,-0.28){\footnotesize order $Y_3^{\Spot}$; manipulator if entered, else ordinary};
  \node[align=center,anchor=north,text width=2.6cm] at (14,-0.28){\footnotesize contract pays on $\bar V_h+\varepsilon\xi$; spot liquidates at $V$};
\end{tikzpicture}%
}
\caption{The five-stage timeline. Filled dots mark the manipulator's two decision points: entry on the prediction market (Stage~1) and the settlement-window spot trade (Stage~3).}
\label{fig:timeline}
\end{figure}

\subsection{Players}
\label{sec:setup-players}

\paragraph{Market makers.}
A competitive, risk-neutral market maker stands on each venue and sees the same
public information \(\left\{\mathcal F_t\right\}\). Competition drives its expected profit to
zero, so it quotes a break-even price---a one-unit bid and a one-unit ask equal to the asset's expected
value given the order it faces. One qualification is required: the book cannot be crossed, so the ask
never falls below the bid. Where the break-even quotes satisfy this, the maker posts them and
earns nothing; where they would cross, it cannot---and the tightest feasible book, a zero
spread, lets it earn a rent instead. It cannot distinguish informed or strategic orders from
liquidity trades, so the quote conditions on the order alone. The spot maker's quotes are \(\left(B_t^\Spot,A_t^\Spot\right)\): at stage~2,
\(\left(B_2^\Spot,A_2^\Spot\right)\); at stage~3, after stage-2 outcome \(h_2\),
\(\left(B_{h_2},A_{h_2}\right)\). The prediction-market maker's stage-1 quotes are \(\left(B_1^\PM,A_1^\PM\right)\).

\paragraph{Informed trader.}
The informed trader knows the realized \(V\) and trades to profit from it. She
arrives with probability \(\pi\in\left(0,1\right)\) at each \emph{ordinary} spot stage---stage~2
always, and stage~3 whenever the manipulator has not entered---and follows the
Glosten--Milgrom rule \citep{glosten1985bid},
\begin{equation}
\label{eq:informed-rule}
    \text{buy if }V>A_t^\Spot,
    \qquad
    \text{sell if }V<B_t^\Spot,
    \qquad
    \text{stay out otherwise}.
\end{equation}
That is, she buys when the asset is worth more than the ask and sells when it is worth less.
Because break-even quotes lie inside \(\left[-\sigma,\sigma\right]\), one side always pays, so
she never stays out.

\paragraph{Liquidity traders.}
Liquidity traders are uninformed and trade for reasons outside the model, buying or
selling at random. One sits on each venue: on the prediction market a liquidity trader takes
the stage-1 slot whenever the manipulator does not enter, and on the spot market a liquidity
trader takes any ordinary spot slot the informed trader does not. Either way exactly one order
prints per stage, so the stage-2 outcome is a buy or a sell.

\paragraph{Manipulator.}
The manipulator never observes \(V\). She profits not from information but from the
contract itself: by trading the spot in the settlement window she moves the price the contract
pays on. At stage~1 she enters the prediction market with probability
\(p_{\mathrm M}\in\left[0,1\right]\), buying or selling the contract; with probability
\(1-p_{\mathrm M}\) a liquidity trader takes the stage-1 slot instead. The prediction-market
maker sees only the order, not who sent it, so it cannot tell an entering manipulator from a
liquidity trader. Conditioning on a stage-1 buy (the sell branch is symmetric), let
\(M_t\left(\delta\right)=\Prb\left(\Delta_t=\delta\mid\mathcal F_t\right)\) be the public
belief that her contract inventory after stage~\(t\) is \(\delta\in\left\{-1,0,+1\right\}\):
\[
    M_1\left(-1\right)=0,
    \qquad
    M_1\left(0\right)=1-p_{\mathrm M},
    \qquad
    M_1\left(+1\right)=p_{\mathrm M}.
\]
She does not trade at stage~2, so \(M_{3_-}=M_1\). What happens at stage~3 depends on
whether she entered. If she did, she alone takes the settlement-window slot and chooses how hard
to push: with probability \(\alpha\in[0,1]\) she buys, moving the close toward the side of the
contract she holds, and otherwise does nothing. This manipulation probability is her second choice variable, alongside the entry probability; the equilibrium pins down both. If she did not enter, stage~3 is an ordinary spot stage, exactly
like stage~2: the informed trader with probability \(\pi\), a liquidity trader otherwise.

\subsection{Equilibrium}
\label{sec:setup-equilibrium}

\begin{definition}[Equilibrium]\label{def:full-equilibrium}
An \emph{equilibrium} is a manipulator strategy 
(her entry probability \(p_{\mathrm M}\) and her manipulation probability \(\alpha\)), a quoting
rule for each market maker, and a system of beliefs, taking the informed trader's rule in~\eqref{eq:informed-rule} and liquidity-trader behavior as given, such that:
\begin{enumerate}[leftmargin=2em,itemsep=0.15em]
    \item \emph{sequential rationality}---the manipulator's trades 
    maximize her expected profit given the makers' quotes and the beliefs;
    \item \emph{zero-profit quoting}---each maker's bid and ask equal the expected value
    conditional on the order it faces, subject to a non-crossing book (the ask at least the
    bid), so that where the unconstrained break-even quotes would cross, the maker quotes a zero
    spread and earns the resulting rent;
    \item \emph{belief consistency}---beliefs are formed from the strategies by Bayes' rule on
    every information set reached with positive probability.
\end{enumerate}
Where condition~(i) leaves more than one manipulation probability \(\alpha\) consistent with
the quotes, we select the smallest one that is stable to small perturbations of the conjecture,
following \citet{zhu2014dark}.
\end{definition}

\subsection{Price Paths and Notation}
\label{sec:setup-order-flow}

Each spot order has one of three outcomes---a buy \(\hbuy\), a sell \(\hsell\), or no
trade \(\hzero\)---and we write \(\mathcal O=\{\hbuy,\hsell,\hzero\}\) for this set. At an
ordinary spot stage some order always trades (no trade has probability zero), so the outcome is
a buy or a sell; write \(\mathcal O_2=\{\hbuy,\hsell\}\) for this smaller set. A spot price path
\(h\) records the realized outcomes---the stage-2 outcome \(h_2\), and after the settlement
window the pair \(h_2h_3\). Given a price path \(h\), the posterior probability of the up state is
\(\ppost_h=\Prb(V=+\sigma\mid h)\); because \(V\) takes only the two values \(\pm\sigma\),
this one number fixes the posterior mean and variance,
\[
    \bar V_h=\sigma\left(2\ppost_h-1\right),
    \qquad
    \rvar_h=\sigma^2\bigl(1-(2\ppost_h-1)^2\bigr)=4\sigma^2\ppost_h\left(1-\ppost_h\right).
\]
Finally, two price-path probabilities recur. We write \(q_{h_2}\) for the probability that the stage-2 price path is \(h_2\), and \(q_{h_2h_3}\) for the \emph{conditional} probability that the settlement window then produces outcome \(h_3\), given \(h_2\); their product \(q_{h_2}q_{h_2h_3}\) is the probability of the full price path \(h_2h_3\).

We restrict attention throughout to
the empirically relevant region: adverse selection is
moderate, and the settlement noise is small.
\begin{assumption}\label{ass:canonical}
$\varepsilon<\sigma\pi<\tfrac12$.
\end{assumption}

\section{Theoretical Results}
\label{sec:theory_results}

This section solves the model and then interprets the equilibrium. We first solve
the manipulator's settlement-window push for a fixed entry probability
(Section~\ref{sec:results-stage-quotes}). We then show that manipulator participation
reduces settlement-time spot price discovery (Section~\ref{sec:results-price-discovery}),
endogenize entry in the equilibrium and characterize the transfer and liquidity asymmetry it produces (Section~\ref{sec:results-canonical}), use a second ordinary spot stage to
study the contract horizon as the design lever,
and generalize the horizon lever to $n$ ordinary spot stages
(Section~\ref{sec:results-horizon-n}). Throughout, proofs are deferred to the appendix.

\subsection{The Manipulator's Settlement-Window Push}
\label{sec:results-stage-quotes}
\label{sec:stage3-settlement-window-continuation-equilibrium}

The prediction-market contract settles on the spot price discovered at the close, so the manipulator, who holds the contract, can move what it pays by trading the spot asset.

She acts in the settlement window: stage~3, the spot trade in the final moments
before the close. She reaches the window holding the prediction-market contract she bought at
stage~1, and we condition throughout on her having bought it (a long position); the short-contract
branch is symmetric. Her one decision is whether to \emph{push}---to buy the spot so as to lift the
close toward the side she holds---and how often. The push has both a payoff and a price: it lifts
the settlement-time spot price, and with it her contract's expected payoff, but she must pay the
spot maker's ask, which exceeds the price the spot has discovered. Figure~\ref{fig:stage3-tree}
lays out the resulting continuation game: the stage-2 spot order leaves an up or a down public
posterior, after which the manipulator, if she entered, chooses whether to push.

\begin{figure}[ht]
\centering
\begin{tikzpicture}[font=\small,
    state/.style={rectangle,rounded corners,draw,align=center,inner sep=3pt},
    leaf/.style={rectangle,rounded corners,draw,align=center,inner sep=3pt,fill=gray!6},
    elab/.style={font=\footnotesize,inner sep=1.5pt}]
  \node[state] (root) at (0,3) {Stage-2 spot order $Y_2^{\Spot}$};
  \node[state] (buy)  at (-3.7,1.4) {spot \textbf{buy} ($h_2=\hbuy$)\\[-1pt]$\bar V_{\hbuy}=+\sigma\pi$};
  \node[state] (sell) at ( 3.7,1.4) {spot \textbf{sell} ($h_2=\hsell$)\\[-1pt]$\bar V_{\hsell}=-\sigma\pi$};
  \node[leaf] (bp) at (-5.2,-0.7) {push\\[-1pt]$\alpha_{\hbuy}$};
  \node[leaf] (bo) at (-2.4,-0.7) {stay out\\[-1pt]$1-\alpha_{\hbuy}$};
  \node[leaf] (sp) at ( 2.4,-0.7) {push\\[-1pt]$\alpha_{\hsell}$};
  \node[leaf] (so) at ( 5.2,-0.7) {stay out\\[-1pt]$1-\alpha_{\hsell}$};
  \draw[->] (root) -- (buy)  node[elab,midway,above left] {$\tfrac12$};
  \draw[->] (root) -- (sell) node[elab,midway,above right]{$\tfrac12$};
  \draw[->] (buy)  -- (bp);
  \draw[->] (buy)  -- (bo);
  \draw[->] (sell) -- (sp);
  \draw[->] (sell) -- (so);
\end{tikzpicture}
\caption{The stage-3 continuation game, conditional on a stage-1 manipulator buy on the prediction market. The stage-2
spot order produces an up posterior ($h_2=\hbuy$) or a down posterior ($h_2=\hsell$), each
with probability $\tfrac12$. If the manipulator entered at stage~1 (probability
$p_{\mathrm M}$), she then either pushes the spot with the buy probability $\alpha_{h_2}$
derived below or stays out; otherwise the settlement-window order is an ordinary spot order
(an informed trader or a liquidity trader), exactly as at stage~2.}
\label{fig:stage3-tree}
\end{figure}

The object we solve for is a \emph{continuation equilibrium}: a set of stage-3 spot quotes and a
stage-3 strategy for the manipulator that fit together. Three requirements make them fit: the spot
maker quotes break-even prices, equal to the asset's expected value given how the manipulator is
expected to trade, subject to a non-crossing book; the manipulator trades optimally given those prices; and the spot maker's beliefs
follow Bayes' rule. Two probabilities drive the analysis. The first, $p_{\mathrm M}$, is the probability the manipulator \emph{enters} the contract at
stage~1. The second, $\alpha$, is the probability she \emph{pushes} the spot at stage~3 once she
has entered. We solve this continuation game with $p_{\mathrm M}$ held fixed, then let stage~1 choose
$p_{\mathrm M}$ in Section~\ref{sec:results-canonical}.

We begin with the spot market's ordinary Glosten--Milgrom round at stage~2, whose only
job is to fix the beliefs that stage~3 starts from. A single spot order arrives: with
probability~$\pi$ the informed trader, who has seen the asset and buys when it is worth more than
the ask and sells when it is worth less; otherwise a liquidity trader, who trades at random. The
spot maker cannot tell the two apart, so it quotes to break even, setting each quote to the asset's
expected value given the order it sees. Only the informed trader's order carries news, so a buy
raises the posterior that the asset is high and a sell lowers it, and the break-even quotes are
the matching posterior means. Lemma~\ref{lem:stage2-quotes} states these quotes and the posteriors
stage~3 inherits, with the derivation in Appendix~\ref{app:stage3-quotes-and-posteriors}.

\begin{lemma}[Stage-2 quotes and posteriors]\label{lem:stage2-quotes}
The stage-2 break-even quotes are the ask $A_2^\Spot=\sigma\pi$ and the bid
$B_2^\Spot=-\sigma\pi$. Each outcome $h_2\in\left\{\hbuy,\hsell\right\}$ occurs with probability
$q_{h_2}=\tfrac12$ and leaves the spot posterior
\[
    \ppost_{h_2}=\frac{1\pm\pi}{2},\qquad
    \bar V_{h_2}=\pm\,\sigma\pi,\qquad
    \rvar_{h_2}=\sigma^2\left(1-\pi^2\right),
\]
with the plus sign after a buy and the minus sign after a sell. The posterior mean
$\bar V_{h_2}$ is the price the spot market has discovered, and the residual variance
$\rvar_{h_2}$ is the uncertainty about the asset that remains.
\end{lemma}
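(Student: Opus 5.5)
The plan is a direct Bayesian computation in the one-period Glosten--Milgrom round at stage~2, using only the setup of Section~\ref{sec:setup-players} and the fact that the manipulator does not trade at stage~2. So the stage-2 order comes from the informed trader with probability $\pi$ and from a liquidity trader otherwise.

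\textbf{Step 1: informed trader's side.} I would first conjecture quotes with $-\sigma<B_2^\Spot\le A_2^\Spot<\sigma$. Under this conjecture, the rule~\eqref{eq:informed-rule} makes the informed trader buy exactly when $V=+\sigma$ and sell exactly when $V=-\sigma$.

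\textbf{Step 2: order probabilities and posteriors.} Take the liquidity trader to buy or sell with probability $\tfrac12$ each, which is the ``at random'' specification. Then
\[
\Prb(\hbuy\mid V=+\sigma)=\pi+\tfrac{1-\pi}{2}=\tfrac{1+\pi}{2},\qquad
\Prb(\hbuy\mid V=-\sigma)=\tfrac{1-\pi}{2}.
\]
Averaging over the symmetric prior gives $q_{\hbuy}=\tfrac12$, and by symmetry $q_{\hsell}=\tfrac12$. Bayes' rule gives $\ppost_{\hbuy}=\tfrac{1+\pi}{2}$ and $\ppost_{\hsell}=\tfrac{1-\pi}{2}$.

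\textbf{Step 3: means, variances, and quotes.} Substituting into the formulas of Section~\ref{sec:setup-order-flow} gives
\[
\bar V_{h_2}=\sigma(2\ppost_{h_2}-1)=\pm\sigma\pi,\qquad
\rvar_{h_2}=\sigma^2(1-\pi^2).
\]
Zero-profit quoting sets the ask equal to $\E[V\mid\hbuy]$ and the bid equal to $\E[V\mid\hsell]$. This yields $A_2^\Spot=\sigma\pi$ and $B_2^\Spot=-\sigma\pi$.

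\textbf{Step 4: consistency and non-crossing.} Since $\pi\in(0,1)$, we have $-\sigma<-\sigma\pi<\sigma\pi<\sigma$. This confirms the conjecture in Step~1 and shows the break-even book does not cross, so the non-crossing qualification in the equilibrium definition does not bind and the maker earns zero rent. Assumption~\ref{ass:canonical} is not needed here.

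\textbf{Main obstacle.} The only delicate point is the fixed-point character of Step~1: the informed trader's behavior depends on the quotes, which in turn depend on that behavior. Rather than search over all possibilities, I would close this by verifying the conjectured interior quotes as in Step~4. I would add a remark that any alternative, such as the informed trader staying out, would require a quote equal to $\pm\sigma$, and such a quote is inconsistent with break-even pricing whenever liquidity orders arrive with positive probability.
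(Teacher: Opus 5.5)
Your proposal is correct and follows essentially the same route as the paper's proof in Appendix~\ref{app:stage2-quotes-and-posteriors}. Both arguments compute the state-conditional order probabilities $(1\pm\pi)/2$, apply Bayes' rule for $\ppost_{h_2}$, impose break-even quotes, and finish by checking that $\pm\sigma\pi$ lies strictly inside $(-\sigma,\sigma)$, so the informed trader indeed trades on her signal. The one cosmetic difference is that the paper writes the break-even condition as uninformed profit equal to informed loss, $\tfrac{1-\pi}{2}A_2^{\Spot}=\tfrac{\pi}{2}\left(\sigma-A_2^{\Spot}\right)$, whereas you set the ask to $\E[V\mid\hbuy]$ directly, which is the same equation.
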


Stage~2 leaves the spot maker with the posterior $\ppost_{h_2}$ from Lemma~\ref{lem:stage2-quotes}, so stage~3 opens from that belief. The manipulator's stage-3 action remains to be pinned down. She would never sell the spot, because a sell moves the close downward, the wrong way for the long contract she holds, and so can only cost her. She is therefore left to buy, to stay out, or to randomize between the two, so her whole stage-3 strategy collapses to a single number, the probability $\alpha_{h_2}$ that she pushes (buys the spot) after stage-2 outcome $h_2$.

Before pricing the interior case, two values of this push probability can be set aside. Take $p_{\mathrm M}=1$ first: the lone stage-3 trader is then the manipulator, who is uninformed about the asset, so her buy reveals nothing and the break-even ask equals the prior mean $\bar V_{h_2}$, so her push probability does not affect the price. Take $p_{\mathrm M}=0$ next: she is absent from stage~3 altogether, so there is no push. We then fix $p_{\mathrm M}\in\left(0,1\right)$ for the rest of the analysis.

Consider next the price the spot maker sets for the single spot order that arrives, an order whose sender it cannot see. A buy can come from any of three types. It can come from the informed trader, who buys only when the asset is truly high; from a liquidity trader, who buys at random; or from the manipulator, who pushes with probability $\alpha_{h_2}$ to lift the close. Of the three, only the informed trader's buy carries information about the asset, because the liquidity trader and the manipulator both trade for reasons unrelated to value and are therefore uninformed. The spot maker cannot tell which type sent the order, so it prices against this pooled flow and sets each quote to break even, earning zero in expectation given the order it observes. Imposing that break-even condition pins the quotes down, as Proposition~\ref{prop:stage3-closed-form-quotes} records.

\begin{proposition}[Stage-3 quotes and posteriors]\label{prop:stage3-closed-form-quotes}
Fix the stage-2 outcome \(h_2\in\left\{\hbuy,\hsell\right\}\), an interior entry probability \(p_{\mathrm M}\in\left(0,1\right)\), and a conjectured stage-3 push probability \(\alpha_{h_2}\in\left[0,1\right]\). The unique break-even stage-3 quotes are
\[
    A_{\hbuy}\left(\alpha_{\hbuy};p_{\mathrm M}\right)
    =
    \frac{\sigma\pi\left[p_{\mathrm M}\alpha_{\hbuy}+\left(1-p_{\mathrm M}\right)\right]}
         {p_{\mathrm M}\alpha_{\hbuy}+\left(1-p_{\mathrm M}\right)\frac{1+\pi^2}{2}},
    \qquad
    A_{\hsell}\left(\alpha_{\hsell};p_{\mathrm M}\right)
    =
    \frac{-\sigma\pi\, p_{\mathrm M}\alpha_{\hsell}}
         {p_{\mathrm M}\alpha_{\hsell}+\left(1-p_{\mathrm M}\right)\frac{1-\pi^2}{2}} ,
\]
\[
    B_{\hbuy}=0,
    \qquad
    B_{\hsell}=\frac{-2\sigma\pi}{1+\pi^2} .
\]
\end{proposition}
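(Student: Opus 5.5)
The plan is a direct Bayesian computation. At stage~3 each quote equals the conditional expectation of $V$ given the stage-2 outcome $h_2$ and the side of the arriving stage-3 order. After that, I would check the non-crossing constraint so that the break-even quotes are the equilibrium quotes. I start from the stage-3 prior $\ppost_{h_2}=(1\pm\pi)/2$ inherited from Lemma~\ref{lem:stage2-quotes}, with $2\ppost_{h_2}-1=\pm\pi$.

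\textbf{Step 1: order-arrival likelihoods.} I would write the probability of a stage-3 buy and of a stage-3 sell conditional on each state $V=\pm\sigma$. With probability $p_{\mathrm M}$ the manipulator holds the slot: she buys with probability $\alpha_{h_2}$, otherwise there is no trade, and she never sells. With probability $1-p_{\mathrm M}$ the slot is ordinary, and the order comes from the informed trader with probability $\pi$ or from a liquidity trader otherwise. The informed trader always trades in the direction of $V$, because break-even quotes lie in $(-\sigma,\sigma)$. This gives
\[
\Prb(\hbuy\mid V=\pm\sigma)=p_{\mathrm M}\alpha_{h_2}+(1-p_{\mathrm M})\tfrac{1\pm\pi}{2},\qquad
\Prb(\hsell\mid V=\pm\sigma)=(1-p_{\mathrm M})\tfrac{1\mp\pi}{2}.
\]

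\textbf{Step 2: Bayes' rule.} The ask is
\[
A_{h_2}=\sigma\,\frac{\ppost\Prb(\hbuy\mid+)-(1-\ppost)\Prb(\hbuy\mid-)}{\ppost\Prb(\hbuy\mid+)+(1-\ppost)\Prb(\hbuy\mid-)},
\]
and the bid is analogous. Simplifying uses three identities:
\[
\ppost(1+\pi)-(1-\ppost)(1-\pi)=(1\pm\pi)\pi+\ldots,
\]
which equals $2\pi$ after a buy and $0$ after a sell;
\[
\ppost(1+\pi)+(1-\ppost)(1-\pi),
\]
which equals $1+\pi^2$ after a buy and $1-\pi^2$ after a sell; and
\[
\ppost(1-\pi)-(1-\ppost)(1+\pi),
\]
which is $0$ after a buy and $-2\pi$ after a sell.

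\textbf{Step 3: read off the four quotes.}
\begin{itemize}[leftmargin=2em,itemsep=0.15em]
\item For $A_{\hbuy}$, the manipulator term contributes $p_{\mathrm M}\alpha\pi$ to the numerator. The ordinary term contributes $(1-p_{\mathrm M})\pi$ to the numerator and $(1-p_{\mathrm M})(1+\pi^2)/2$ to the denominator.
\item For $A_{\hsell}$, the ordinary contribution to the numerator vanishes. What remains is $-\sigma\pi p_{\mathrm M}\alpha$ over $p_{\mathrm M}\alpha+(1-p_{\mathrm M})(1-\pi^2)/2$.
\item For the bids, the factor $(1-p_{\mathrm M})$ cancels because the manipulator never sells. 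This gives $B_{\hbuy}=0$ and $B_{\hsell}=-2\sigma\pi/(1+\pi^2)$.
\end{itemize}

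\textbf{Step 4: uniqueness and non-crossing.} For $p_{\mathrm M}\in(0,1)$, both the buy and the sell information sets occur with positive probability, since $(1-p_{\mathrm M})(1\mp\pi)/2>0$. Hence Bayes' rule, and with it the break-even quote, is uniquely determined. I would then verify that the book does not cross, so that the zero-spread rent clause of the equilibrium definition never binds here. After a buy, $A_{\hbuy}\ge 0=B_{\hbuy}$, because the numerator and denominator are both positive. After a sell, $A_{\hsell}\ge-\sigma\pi$, because the ratio $p_{\mathrm M}\alpha/(p_{\mathrm M}\alpha+\cdot)\le 1$. Moreover $-\sigma\pi>-2\sigma\pi/(1+\pi^2)=B_{\hsell}$, since $1+\pi^2<2$.

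The only delicate point is this non-crossing check together with the observation that the bids do not depend on $\alpha$ or $p_{\mathrm M}$. The rest is routine algebra, and the main risk there is a sign slip in the three identities of Step 2.
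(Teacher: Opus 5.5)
Your proposal is correct and is essentially the paper's argument. The paper writes break-even as expected profit on uninformed buys (manipulator plus liquidity) equaling expected loss to the informed buyer, $\bigl[p_{\mathrm M}\alpha_{h_2}+(1-p_{\mathrm M})\tfrac{1-\pi}{2}\bigr](A_{h_2}-\bar V_{h_2})=(1-p_{\mathrm M})\pi\ppost_{h_2}(\sigma-A_{h_2})$; this is a rearrangement of your $A_{h_2}=\E[V\mid \hbuy, h_2]$, solved for general $\ppost_{h_2}$ (a formula reused in the horizon extensions) before substituting. Your non-crossing check is a harmless extra that also follows at once from $B_{h_2}\le\bar V_{h_2}\le A_{h_2}$, and your garbled first identity should read $\ppost(1+\pi)-(1-\ppost)(1-\pi)=2\ppost-1+\pi$.
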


 The {break-even ask balances the spot maker's expected gain when it sells to an uninformed buyer (a liquidity trader or the manipulator) against its expected loss when it sells to the informed trader.} The corollary below traces how the ask responds as the manipulator is expected to push more aggressively.

\begin{corollary}[Stage-3 quote monotonicity]\label{cor:stage3-pm-buy-quote-monotonicity}
The ask falls as the manipulator is expected to push harder: for each \(h_2\), the ask \(A_{h_2}\left(\alpha;p_{\mathrm M}\right)\) is strictly decreasing in the conjectured push \(\alpha\) and weakly decreasing in the entry probability \(p_{\mathrm M}\) (strictly when \(\alpha>0\)), while the bid \(B_{h_2}\) and the no-trade posterior \(\bar V_{h_2\hzero}=\bar V_{h_2}\) are unaffected.
\end{corollary}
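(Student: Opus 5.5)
The corollary follows by differentiating the closed forms in Proposition~\ref{prop:stage3-closed-form-quotes}. I will treat each ask as a ratio of affine functions of \(x=p_{\mathrm M}\alpha\) and \(y=1-p_{\mathrm M}\), and sign the derivatives with a single determinant computation.

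\textbf{After a stage-2 buy.} Write
\[
A_{\hbuy}=\sigma\pi\,\frac{x+y}{x+cy},\qquad c=\frac{1+\pi^2}{2}\in\left(\tfrac12,1\right).
\]
Then
\[
\frac{\partial A_{\hbuy}}{\partial x}=\sigma\pi\,\frac{(c-1)y}{(x+cy)^2}.
\]
This is strictly negative whenever \(y>0\), which holds since \(p_{\mathrm M}<1\). Because \(\partial x/\partial\alpha=p_{\mathrm M}>0\), the ask is strictly decreasing in \(\alpha\).

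For \(p_{\mathrm M}\) at fixed \(\alpha\), I will instead write \(A_{\hbuy}\) as a function of \(p=p_{\mathrm M}\):
\[
A_{\hbuy}=\sigma\pi\,\frac{1-p(1-\alpha)}{c-p(c-\alpha)}.
\]
The numerator of the derivative is proportional to
\[
-(1-\alpha)\bigl(c-p(c-\alpha)\bigr)+(c-\alpha)\bigl(1-p(1-\alpha)\bigr).
\]
The \(p\)-terms cancel, leaving \(-(1-\alpha)c+(c-\alpha)=\alpha(c-1)\). This is \(\le0\), with equality iff \(\alpha=0\), which is exactly the weak/strict claim.

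\textbf{After a stage-2 sell.} Write
\[
A_{\hsell}=-\sigma\pi\,\frac{x}{x+dy},\qquad d=\frac{1-\pi^2}{2}>0.
\]
The map \(x\mapsto x/(x+dy)\) has derivative \(dy/(x+dy)^2>0\), so \(A_{\hsell}\) is strictly decreasing in \(\alpha\). In \(p\) at fixed \(\alpha\), the ratio becomes
\[
\frac{p\alpha}{p\alpha+d(1-p)}.
\]
Its derivative is proportional to \(\alpha d\), which is \(\ge0\) and is strict iff \(\alpha>0\). Hence \(A_{\hsell}\) is weakly decreasing in \(p_{\mathrm M}\), strictly when \(\alpha>0\). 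At \(\alpha=0\) both asks are independent of \(p_{\mathrm M}\), consistent with the weak statement.

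\textbf{Bid and no-trade posterior.} The bids \(B_{\hbuy}=0\) and \(B_{\hsell}=-2\sigma\pi/(1+\pi^2)\) in the proposition contain neither \(\alpha\) nor \(p_{\mathrm M}\). This is because the manipulator never sells, so a stage-3 sell comes only from the ordinary informed/liquidity mixture.

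For the no-trade outcome \(\hzero\), which arises only when the manipulator entered and stayed out, I will argue directly by Bayes' rule. Its probability, \(p_{\mathrm M}(1-\alpha)\), does not depend on \(V\) because the manipulator is uninformed. The likelihood ratio is therefore one, and \(\ppost_{h_2\hzero}=\ppost_{h_2}\), so \(\bar V_{h_2\hzero}=\bar V_{h_2}\).

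\textbf{Main obstacle.} The only delicate point is the \(p_{\mathrm M}\)-derivative, where both numerator and denominator move. The cancellation of the \(p\)-terms, which leaves \(\alpha(c-1)\) and \(\alpha d\), is what delivers the weak/strict dichotomy. A minor check is the edge case where the no-trade event has zero probability (\(\alpha=1\)); there the claim holds trivially or by the convention used in the appendix.
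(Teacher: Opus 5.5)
Your proof is correct but reaches the result by a different route from the paper's. The paper does not differentiate the closed forms. It works with the break-even condition
\[
\left[p_{\mathrm M}\alpha_{h_2}+\left(1-p_{\mathrm M}\right)\tfrac{1-\pi}{2}\right]\left(A_{h_2}-\bar V_{h_2}\right)=\left(1-p_{\mathrm M}\right)\pi\ppost_{h_2}\left(\sigma-A_{h_2}\right).
\]
Both coefficients are positive, which forces $A_{h_2}\in(\bar V_{h_2},\sigma)$. Only the left side involves $\alpha_{h_2}$ and it rises with it, so $A_{h_2}$ must fall to restore equality. For $p_{\mathrm M}$, the paper divides by $1-p_{\mathrm M}$ so that $p_{\mathrm M}$ enters only through $\frac{p_{\mathrm M}}{1-p_{\mathrm M}}\alpha_{h_2}$, and then says ``apply a similar argument.''

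What each route buys:
\begin{itemize}
\item The paper's argument holds for any entering belief $\ppost_{h_2}\in(0,1)$. The paper relies on this later, when it cites ``the ask strictly decreases in $\alpha$'' at $h_2=\hsell\hsell$ in the two-round game and at $d=k\le -2$ in the $n$-round game.
\item Your computation is tied to the two baseline posteriors. In exchange it gives exact derivatives, and your numerators $\alpha(c-1)$ and $\alpha d$ make the weak/strict dichotomy in $p_{\mathrm M}$ explicit. The paper's ``similar argument'' leaves this implicit; the coefficient $\frac{p_{\mathrm M}}{1-p_{\mathrm M}}\alpha_{h_2}$ is flat in $p_{\mathrm M}$ exactly when $\alpha_{h_2}=0$.
\end{itemize}

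Your method loses no generality if you run it on the general ask \eqref{eq:app-stage3-ask-general}. Set $m=2\ppost_{h_2}-1$, $x=p_{\mathrm M}\alpha$, $u=(1-p_{\mathrm M})/2$, and $D=x+u(1+m\pi)$. The same determinant computation gives $\partial A/\partial x=\sigma u\pi(m^2-1)/D^2<0$ and $dA/dp_{\mathrm M}=-\sigma\pi(1-m^2)\alpha/(2D^2)$. This recovers the paper's posterior-free statement, and your two cases are $m=\pm\pi$.
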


{The} more often the manipulator is expected to push, the more uninformed buys are pooled in with the informed trader's, so any given buy is less likely to be hers. A buy therefore carries less adverse selection risk, and so the spot maker lowers its ask. 

The question now is how hard the manipulator actually chooses to push. She pushes only when pushing pays, so we compare the payoff from buying with the payoff from staying out. The difference is the best-response gap $\Gamma_{h_2}\left(\alpha_{h_2}\right)$, the rise in her contract's expected payoff from a higher close net of the spot half-spread $A_{h_2}-\bar V_{h_2}$ she pays to push:
\[
\begin{aligned}
    \Gamma_{h_2}\left(\alpha_{h_2}\right)
    &=
    \underbrace{H\left(\frac{A_{h_2}\left(\alpha_{h_2}\right)}{\varepsilon}\right)
    -
    \left[A_{h_2}\left(\alpha_{h_2}\right)-\bar V_{h_2}\right]}_{\substack{\text{expected payoff} \\ \text{from buying}}}
    -
    \underbrace{H\left(\frac{\bar V_{h_2}}{\varepsilon}\right)}_{\substack{\text{expected payoff} \\ \text{from staying out}}} \\
    &=
    \underbrace{H\left(\frac{A_{h_2}\left(\alpha_{h_2}\right)}{\varepsilon}\right)
    -
    H\left(\frac{\bar V_{h_2}}{\varepsilon}\right)}_{\substack{\text{increase in expected} \\ \text{contract payoff from buying}}}
    -
    \underbrace{\vphantom{H\left(\frac{A_{h_2}\left(\alpha_{h_2}\right)}{\varepsilon}\right)
    -
    H\left(\frac{\bar V_{h_2}}{\varepsilon}\right)}\left[A_{h_2}\left(\alpha_{h_2}\right)-\bar V_{h_2}\right]}_{\text{spot cost}}.
\end{aligned}
\]
Note that the price the manipulator pays at stage-1 to enter the prediction market is sunk for
the purpose of solving her stage-3 optimality, so it does not enter the gap.

In equilibrium the spot maker's conjecture must be borne out: the push the manipulator actually chooses must equal the one the spot maker priced against. Her $\alpha_{h_2}$ is therefore a \emph{fixed point} of the gap. She stays out when pushing never pays ($\Gamma_{h_2}\left(0\right)\le0$), pushes for sure when it always pays ($\Gamma_{h_2}\left(1\right)\ge0$), and mixes only at an interior probability where the two exactly tie ($\Gamma_{h_2}\left(\alpha_{h_2}\right)=0$). Because both the gain and the cost shrink as the conjecture rises, the gap can cross zero more than once; following \citet{zhu2014dark}, the equilibrium keeps the smallest \emph{stable} fixed point, and we write $\mathcal E_{h_2}$ for the set of stable fixed points.

\begin{proposition}[Selected stage-3 continuation equilibrium]\label{prop:stage3-selected-existence}
Fix an interior entry probability \(p_{\mathrm M}\in\left(0,1\right)\). The set of stable fixed points \(\mathcal E_{h_2}\) is nonempty for each \(h_2\), so the selected push \(\alpha_{h_2}^{*}=\min\mathcal E_{h_2}\) is well defined.
\end{proposition}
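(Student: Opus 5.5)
The plan is to read the stage-3 problem as a one-dimensional fixed-point problem on $[0,1]$ and show that at least one fixed point is stable. First, fix $h_2$ and $p_{\mathrm M}\in(0,1)$. Proposition~\ref{prop:stage3-closed-form-quotes} then makes $\alpha\mapsto A_{h_2}(\alpha;p_{\mathrm M})$ a continuous rational function on $[0,1]$: its denominator is bounded below by $(1-p_{\mathrm M})\frac{1\pm\pi^2}{2}>0$. Since $H$ is continuous and piecewise linear, the gap $\Gamma_{h_2}(\alpha)=H(A_{h_2}(\alpha)/\varepsilon)-H(\bar V_{h_2}/\varepsilon)-[A_{h_2}(\alpha)-\bar V_{h_2}]$ is continuous on $[0,1]$.

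Next, define the best-response correspondence as usual: push probability $0$ if $\Gamma<0$, $1$ if $\Gamma>0$, and all of $[0,1]$ if $\Gamma=0$. A fixed point is then exactly one of the three cases listed in the text. Call a fixed point stable if a small perturbation of the conjectured $\alpha$ moves the best response back toward it. The three cases are:
\begin{itemize}
\item $\alpha=0$ with $\Gamma(0)<0$, which is stable by continuity;
\item $\alpha=1$ with $\Gamma(1)>0$, which is likewise stable;
\item an interior zero at which $\Gamma$ crosses from positive to negative, which is stable because $\Gamma>0$ just below it pushes $\alpha$ up and $\Gamma<0$ just above pushes it down.
\end{itemize}

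Existence then comes from a sign-change argument. If $\Gamma(0)<0$, then $0$ is stable. If $\Gamma(1)>0$, then $1$ is stable. Otherwise $\Gamma(0)\ge 0\ge\Gamma(1)$. In that case let $\alpha^\dagger=\sup\{\alpha:\Gamma(\alpha')\ge0 \text{ for all } \alpha'\le\alpha\}$, or more robustly the last point at which $\Gamma$ changes sign from nonnegative to nonpositive. By continuity, $\Gamma(\alpha^\dagger)=0$ and $\Gamma$ is nonnegative to its left.

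The main obstacle is the degenerate case where $\Gamma$ touches zero without changing sign, or vanishes on an interval. One source of this is $H$'s kinks at $\pm\varepsilon$: where $A_{h_2}/\varepsilon$ lies outside $(-1,1)$, $\Gamma$ reduces to the linear spot cost term. I would handle it in two steps. First, I would use the closed form to show that $\Gamma$ is piecewise a ratio of polynomials of bounded degree in $\alpha$, so its zero set is finite unless $\Gamma\equiv0$ on a piece. Second, I would check that $\Gamma\equiv0$ on a piece forces $A_{h_2}$ to be constant there. That is impossible for $\alpha$ away from the trivial region, because Corollary~\ref{cor:stage3-pm-buy-quote-monotonicity} gives strict monotonicity of $A_{h_2}$ in $\alpha$. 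The one exception is the trivial region where $A_{h_2}=\bar V_{h_2}$, which occurs only at $\alpha$ with $p_{\mathrm M}\alpha$ large, and by Proposition~\ref{prop:stage3-closed-form-quotes} that never happens for $p_{\mathrm M}<1$.

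With finitely many zeros, I then take the following cases:
\begin{itemize}
\item If $\Gamma(0)\ge0\ge\Gamma(1)$ and $\Gamma$ is not identically zero near the endpoints, the first zero at which $\Gamma$ goes from $>0$ to $<0$ is stable.
\item If no strict crossing exists, then $\Gamma\ge0$ everywhere, so $\Gamma(1)=0$ and $\alpha=1$ is weakly stable, since perturbations below $1$ yield best response $1$. Symmetrically, if $\Gamma\le 0$ everywhere, $\alpha=0$ is stable.
\end{itemize}

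This makes $\mathcal E_{h_2}$ a nonempty finite set, so its minimum $\alpha^*_{h_2}$ exists.
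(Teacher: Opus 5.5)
Your plan follows the paper's proof. You show $\Gamma_{h_2}$ is continuous, you get finitely many zeros because $\Gamma_{h_2}$ is an affine function of the strictly monotone ask on each $\alpha$-interval where the ask lies above, inside, or below $[-\varepsilon,\varepsilon]$, and you then pick $\alpha=0$, $\alpha=1$, or a down-crossing. Two steps need tightening, but neither requires a new idea.

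\textbf{First: the claim that $\Gamma_{h_2}\equiv0$ on a piece forces $A_{h_2}$ to be constant.} This needs the coefficient of $A_{h_2}$ in $\Gamma_{h_2}$ to be nonzero on that piece. Inside the band that coefficient is $\frac{1}{2\varepsilon}-1$. It vanishes at $\varepsilon=\frac12$, where the in-band piece of $\Gamma_{h_2}$ is constant in $\alpha$. You must invoke Assumption~\ref{ass:canonical} ($\varepsilon<\frac12$) at this point, as the paper does. Your ``trivial region where $A_{h_2}=\bar V_{h_2}$'' detour is unnecessary: Corollary~\ref{cor:stage3-pm-buy-quote-monotonicity} already gives strict monotonicity of the ask on all of $[0,1]$ for $p_{\mathrm M}\in(0,1)$.

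\textbf{Second: your final case split has a hole.} The claim ``if no strict crossing exists, then $\Gamma\ge0$ everywhere'' is false under your standing hypothesis $\Gamma_{h_2}(0)\ge0\ge\Gamma_{h_2}(1)$. For example, take $\Gamma_{h_2}(0)=0$, then $\Gamma_{h_2}$ negative, then positive, ending at $\Gamma_{h_2}(1)=0$. This has no crossing from positive to negative and is of neither sign everywhere, so neither bullet covers it. The fix is to branch on the sign of $\Gamma_{h_2}$ near each endpoint rather than at the endpoint itself, as the paper does:
\begin{enumerate}
  \item If $\Gamma_{h_2}\le0$ on some $(0,\eta)$, then $0$ is stable.
  \item If $\Gamma_{h_2}\ge0$ on some $(1-\eta,1)$, then $1$ is stable.
  \item Otherwise there are points with $\Gamma_{h_2}>0$ arbitrarily close to $0$ and points with $\Gamma_{h_2}<0$ arbitrarily close to $1$. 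Since there are finitely many zeros, some zero between such points separates a positive interval from a negative one, and that zero is stable.
\end{enumerate}
This also makes your stability notion consistent. Your first list uses strict endpoint conditions ($\Gamma(0)<0$, $\Gamma(1)>0$), while your last bullet uses the weak condition $(\alpha-\alpha^{*})\Gamma_{h_2}(\alpha)\le0$ near $\alpha^{*}$. The weak condition is the one that defines $\mathcal E_{h_2}$ in the statement.
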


Existence is the standard fixed-point argument applied to the continuous gap $\Gamma_{h_2}$, and the equilibrium takes its smallest stable crossing. This pins down the continuation $\alpha_{h_2}^{*}$ for every entry probability $p_{\mathrm M}$, which stage~1 then chooses in Section~\ref{sec:results-canonical}.

\subsection{Price Discovery Falls}
\label{sec:results-price-discovery}
\label{sec:price-discovery}

The manipulator's mere presence makes the
settlement-time spot price strictly less informative, for any entry probability
\(p_{\mathrm M}\in\left(0,1\right)\).

To make ``less informative'' precise, we measure how much uncertainty about the asset's value $V$ the spot price still leaves once stage~3 closes. This is the \emph{expected residual variance} $\E\left[\rvar_{h_2 h_3}\right]$, the variance of $V$ that remains after conditioning on the discovered price, averaged over outcomes:
\[
\E\left[\rvar_{h_2 h_3}\right]=
\sum_{h_2}q_{h_2}\sum_{h_3}q_{h_2h_3}\,\rvar_{h_2h_3}.
\]
By the law of total variance, this residual variance and the variance of the discovered price sum to the asset's prior variance, $\rvar_\emptyset=\E\left[\rvar_{h_2 h_3}\right]+\operatorname{Var}\left(\bar V_{h_2 h_3}\right)$, with the total fixed at $\rvar_\emptyset=\sigma^2$. A price that tracks $V$ closely leaves little residual variance, so a \emph{higher} residual variance means a \emph{less} informative price. The proposition below shows that any interior entry raises it above the no-entry benchmark $p_{\mathrm M}=0$.\footnote{From the spot market's perspective, setting \(p_{\mathrm M}=0\) is equivalent to shutting down the prediction market, since it is the only interaction channel between the two.}

\begin{proposition}[Manipulator participation strictly harms price discovery]\label{prop:price-discovery-sign}
With informed traders present (\(\pi>0\)), for every interior entry probability \(p_{\mathrm M}\in\left(0,1\right)\) and every conjectured profile of stage-3 push probabilities \(\left(\alpha_{h_2}\right)_{h_2\in\mathcal O_2}\),
\[
    \E\left[\rvar_{h_2 h_3}\right]\left(p_{\mathrm M}\right)>
    \E\left[\rvar_{h_2 h_3}\right]\left(0\right) .
\]
\end{proposition}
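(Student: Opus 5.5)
The plan is to reduce the comparison to the information revealed at stage~3 alone, and then show that the stage-3 order under entry is a strictly noisier version of the ordinary stage-3 order. Stage~2 does not involve the manipulator: she does not trade there, so \(q_{h_2}=\tfrac12\) and the posteriors \(\ppost_{h_2}\), \(\bar V_{h_2}\), \(\rvar_{h_2}\) of Lemma~\ref{lem:stage2-quotes} do not depend on \(p_{\mathrm M}\). Conditional on \(h_2\), the posterior mean is a martingale, and the conditional law of total variance gives
\[
    \sum_{h_3}q_{h_2h_3}\rvar_{h_2h_3}
    =\rvar_{h_2}-\sum_{h_3}q_{h_2h_3}\bigl(\bar V_{h_2h_3}-\bar V_{h_2}\bigr)^2 .
\]
It therefore suffices to show, for each \(h_2\), that the stage-3 variance reduction \(R_{h_2}(p_{\mathrm M})=\sum_{h_3}q_{h_2h_3}(\bar V_{h_2h_3}-\bar V_{h_2})^2\) is strictly smaller at any interior \(p_{\mathrm M}\) than at \(p_{\mathrm M}=0\). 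Averaging over \(h_2\) then gives the claim.

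Next I will write each term through a covariance. Conditional on \(h_2\), for every outcome with \(q_{h_2h_3}>0\),
\[
    \bar V_{h_2h_3}-\bar V_{h_2}=\frac{\operatorname{Cov}\left(V,\one\{h_3\}\mid h_2\right)}{q_{h_2h_3}},
\]
so \(R_{h_2}=\sum_{h_3}c_{h_3}^2/q_{h_2h_3}\), where \(c_{h_3}\) denotes that covariance.

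The key structural fact is that the stage-3 outcome distribution is a mixture. With probability \(1-p_{\mathrm M}\) the order is ordinary (informed with probability \(\pi\), liquidity otherwise), with outcome law \(q^0_{h_3}\) and covariance \(c^0_{h_3}\). With probability \(p_{\mathrm M}\) it is the manipulator's, with an outcome law \(Q_{h_3}\) (buy with probability \(\alpha_{h_2}\), no trade otherwise) that is independent of \(V\), because she never observes it. Hence
\[
    q_{h_2h_3}=(1-p_{\mathrm M})q^0_{h_3}+p_{\mathrm M}Q_{h_3},
    \qquad
    c_{h_3}=(1-p_{\mathrm M})c^0_{h_3}.
\]
This holds for every conjectured \(\alpha_{h_2}\), which is why the result does not depend on the push profile. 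Outcomes with \(c^0_{h_3}=0\) contribute nothing; in particular the no-trade outcome never occurs at an ordinary stage. For each remaining outcome we have \(q^0_{h_3}>0\), and the pointwise inequality
\[
    \frac{(1-p_{\mathrm M})^2}{(1-p_{\mathrm M})q^0_{h_3}+p_{\mathrm M}Q_{h_3}}<\frac{1}{q^0_{h_3}}
\]
reduces to \((1-p_{\mathrm M})^2q^0_{h_3}<(1-p_{\mathrm M})q^0_{h_3}+p_{\mathrm M}Q_{h_3}\). This is true because \((1-p_{\mathrm M})^2<1-p_{\mathrm M}\) for \(p_{\mathrm M}\in(0,1\)) and \(Q_{h_3}\ge0\). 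Summing over outcomes yields \(R_{h_2}(p_{\mathrm M})\le R_{h_2}(0)\), with strict inequality as soon as some \(c^0_{h_3}\neq0\).

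The step needing the most care is this strictness, together with the bookkeeping that the mixture representation is exactly the paper's belief system. For strictness I will check that the ordinary stage-3 buy has nonzero covariance with \(V\). The ordinary buy probability is \(\pi\one\{V>A\}+(1-\pi)/2\), and we need the informed trader to buy in one state and not the other, i.e.\ \(A\in[-\sigma,\sigma)\). This holds for break-even quotes because \(\ppost_{h_2}\in(0,1)\) when \(\pi<1\), so \(\rvar_{h_2}>0\), and the covariance equals \(\tfrac{\pi}{2}\cdot 2\sigma\ppost_{h_2}(1-\ppost_{h_2})\cdot 2\neq 0\) (up to sign, and a positive constant I will pin down). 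The only hypothesis used here is \(\pi>0\).
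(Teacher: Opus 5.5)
Your proposal is correct and follows essentially the paper's own argument. The law of total variance reduces the claim to the within-\(h_2\) dispersion of the stage-3 posterior mean, and each outcome's contribution shrinks because its informed (covariance) component scales by \(1-p_{\mathrm M}\) while its probability is at least \(1-p_{\mathrm M}\) times the benchmark. The paper handles the no-trade, sell and buy outcomes case by case and gets the weak bound \(\operatorname{Var}(\bar V_{h_2h_3}\mid h_2)(p_{\mathrm M})\le(1-p_{\mathrm M})\operatorname{Var}(\bar V_{h_2h_3}\mid h_2)(0)\) before using positivity of the benchmark; your uniform mixture/covariance identity instead gives strictness term by term, which is only a cosmetic difference.
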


The loss runs through two channels. The first is displacement: the manipulator occupies the stage-3 slot with probability $p_{\mathrm M}$, thinning the informed order flow by the factor $1-p_{\mathrm M}$, and this alone makes the loss strict even when she never pushes ($\alpha_{h_2}=0$). The second is dilution: once she does push ($\alpha_{h_2}>0$), her uninformed buys pool with the informed trader's, so a buy reveals less about the asset. The benchmark $p_{\mathrm M}=0$ shuts down both.

This is the mirror image of Zhu's dark-pool result. In \citet{zhu2014dark}, the
second venue draws relatively uninformed flow away from the exchange, concentrating informed
traders there and sharpening the exchange price. Here the linked contract feeds uninformed,
settlement-motivated flow into the price-setting market, so the settlement-time spot price
becomes less informative. A derivative that settles on a market's price can therefore
degrade the informativeness of the price it references.

\subsection{The Manipulation Equilibrium}
\label{sec:results-canonical}

We then solve the stage-1 entry-and-pricing game. The manipulator chooses how often to enter the prediction market. The prediction-market maker, unable to tell the manipulator from an ordinary buyer, quotes a price. Two results emerge. The first is the distinctive feature of the model. The prediction market posts a \emph{zero} bid-ask spread, yet the prediction-market maker still earns a positive rent while the manipulator earns a positive profit. The second is the welfare counterpart. A single uninformed outsider, the liquidity trader, funds both the rent and the profit.

To solve the game we step back to stage~1. Two kinds of buyer arrive there, and the prediction-market maker cannot tell them apart. One is the manipulator. When ordinary flow has gone against the side she holds, she pushes the spot at stage~3, so her contract is worth a high expected payoff. The other is an uninformed liquidity trader. She has no such push, so her contract may be worth far less. The prediction-market maker prices a single contract knowing it may be either type, and we work out three stage-1 objects in turn: the two types' expected payoffs, the break-even price the prediction-market maker posts, and the manipulator's entry decision.

Start with the prediction-market maker's problem. Fix the manipulator's entry probability $p_{\mathrm M}\in\left(0,1\right)$. Both payoffs are built from the contract's settlement value $H$, defined in \eqref{eq:contract-expected-payoff}. There, $H\left(x\right)$ is the contract's expected payoff when the posterior mean sits $x$ settlement-noise units above the strike, that is, the probability the contract settles in-the-money. If the manipulator enters, her expected PM contract payoff after stage-2 outcome $h_2$ is
\[
    \Pi_{\mathrm M}^{+}\left(p_{\mathrm M},h_2\right)
    =
    \alpha_{h_2}^{*}
    H\left(\frac{A_{h_2}^{*}}{\varepsilon}\right)
    +
    \left[1-\alpha_{h_2}^{*}\right]
    H\left(\frac{\bar V_{h_2}}{\varepsilon}\right) .
\]
This averages her contract value across two events. When the stage-2 price path has gone against her, she pushes the spot at stage~3. When ordinary flow already carries the price her way, she does not. The liquidity trader has no such ability, so she faces ordinary settlement, and her expected PM contract payoff after $h_2$ is
\[
    \Pi_{\mathrm L}^{+}\left(p_{\mathrm M},h_2\right)
    =
    \left[\frac{1-\pi}{2}+\pi\,\ppost_{h_2}\right]
    H\left(\frac{A_{h_2}^{*}}{\varepsilon}\right)
    +
    \left[\frac{1-\pi}{2}+\pi\left(1-\ppost_{h_2}\right)\right]
    H\left(\frac{B_{h_2}}{\varepsilon}\right),
\]
where $\ppost_{h_2}$ is the stage-2 posterior that settlement favors her side. Averaging over the stage-2 outcome $h_2$ using the outcome probabilities $q_{h_2}$ as weights gives the unconditional expected payoff for each stage-1 type:
\[
\Pi^{+}_{\mathrm{M}}\left(p_{\mathrm M}\right) =\sum_{h_2}q_{h_2}\Pi^{+}_{\mathrm{M}}\left(p_{\mathrm M},h_2\right),
\qquad
\Pi^{+}_{\mathrm{L}}\left(p_{\mathrm M}\right) =\sum_{h_2}q_{h_2}\Pi^{+}_{\mathrm{L}}\left(p_{\mathrm M},h_2\right).
\]

The competitive prediction-market maker sees only that someone wishes to buy, so it posts a single price that breaks even over the mix of the two types. That break-even ask is the weighted average of the two payoffs:
\[
    A_1^{\PM,\mathrm{BE}}\left(p_{\mathrm M}\right)
    =
    p_{\mathrm M}\Pi_{\mathrm M}^{+}\left(p_{\mathrm M}\right)
    +
    \left(1-p_{\mathrm M}\right)\Pi_{\mathrm L}^{+}\left(p_{\mathrm M}\right).
\]
This break-even ask can fall strictly below $1/2$, which implies a break-even bid strictly above $1/2$ and therefore strictly above the break-even ask. But the prediction market cannot post a crossed book, so the posted ask is floored at $1/2$:
\[
    A_1^{\PM *}\left(p_{\mathrm M}\right)
    =
    \max\left\{A_1^{\PM,\mathrm{BE}}\left(p_{\mathrm M}\right),\ \tfrac{1}{2}\right\},
    \qquad
    B_1^{\PM *}\left(p_{\mathrm M}\right)=1-A_1^{\PM *}\left(p_{\mathrm M}\right).
\]
The floor is a general property of the book, not a restriction we impose. When the break-even already exceeds $1/2$, the floor is slack, the posted ask equals the break-even ask, and the spread is positive, just as in an ordinary adverse-selection market. The striking case is the opposite one, and it is the one the model delivers: when the break-even sits below $1/2$, the non-crossing constraint binds, the prediction-market maker is pinned at $1/2$, and it keeps the gap $\tfrac{1}{2}-A_1^{\PM,\mathrm{BE}}$ as a strictly positive rent on every contract sold, even though the posted spread is exactly zero. 

Now turn to the manipulator's problem. Her profit from entering, paying the posted ask, and pushing at settlement is her expected contract payoff net of the spot-push cost and the price she paid:
\[
    S_{\mathrm M}^{+}\left(p_{\mathrm M}\right)
    =
    \left[\Pi_{\mathrm M}^{+}\left(p_{\mathrm M}\right)-K^+\left(p_{\mathrm M}\right)\right]
    -
    A_1^{\PM *}\left(p_{\mathrm M}\right) ,
\]
where $K^{+}$ is the expected resources she spends moving the spot, summed over the two ordinary stage-2 outcomes $h_2\in\mathcal O_2=\{\hbuy,\hsell\}$:
\[
    K^{+}\left(p_{\mathrm M}\right)
    =
    \sum_{h_2\in\mathcal O_2}
    \underbrace{q_{h_2}\alpha_{h_2}^{*}}_{\substack{\text{probability} \\ \text{of manipulation}}}
    \underbrace{\left[A_{h_2}^{*}-\bar V_{h_2}\right]}_{\substack{\text{spot manipulation} \\ \text{cost}}}
    \ge0.
\]
She chooses her entry frequency $p_{\mathrm M}$ to maximize her ex-ante expected profit, the entry probability times the profit per entry:\footnote{A maximizer exists whenever $\bar{S}_{\mathrm M}^{+}$ is continuous in $p_{\mathrm M}$, which holds below, where $\bar{S}_{\mathrm M}^{+}$ is available in closed form. When the maximizer is not unique, we select the smallest one.}
\[
    \bar{S}_{\mathrm M}^{+}\left(p_{\mathrm M}\right)
    =
    p_{\mathrm M}\,S_{\mathrm M}^{+}\left(p_{\mathrm M}\right),
    \qquad
    p_{\mathrm M}^*
    \in
    \argmax_{p_{\mathrm M}\in\left[0,1\right]}
    \bar{S}_{\mathrm M}^{+}\left(p_{\mathrm M}\right).
\]

For compactness, write the pure-push cutoff derived in Appendix~\ref{app:optimal-manipulator-entry-probability} as
\[
    p_0
    \coloneq
    \left(
        1+
        \frac{2\left(\sigma\pi-\varepsilon\right)}
             {\varepsilon\left(1-\pi^2\right)\left(1-2\sigma\pi\right)}
    \right)^{-1}.
\]
The following proposition characterizes the full equilibrium. 

\begin{proposition}[The manipulation equilibrium]\label{prop:canonical-equilibrium}
Fix \(\sigma>0\), \(\pi\in\left(0,1\right)\), and \(\varepsilon>0\) satisfying Assumption~\ref{ass:canonical}. The equilibrium of Definition~\ref{def:full-equilibrium} has:
\begin{enumerate}[leftmargin=2em,itemsep=0.15em]
    \item optimal entry probability
    \[
    p_{\mathrm M}^{*}
    =
    \sqrt{\left(\frac{1-\pi^2}{1+\pi^2}\right)^2
          +\frac{1-\pi^2}{1+\pi^2}p_0}
    -\frac{1-\pi^2}{1+\pi^2}
    \in \left(0, p_0\right);
    \]
    \item push policy
    \[
    \alpha_{\hbuy}^{*}=0,
    \qquad
    \alpha_{\hsell}^{*}=1;
    \]
    \item strictly positive expected profit
    \[
    \bar{S}_{\mathrm M}^{+}\left(p_{\mathrm M}^{*}\right)
    =
    \frac{\left(1-2\sigma\pi\right)p_{\mathrm M}^{*2}}{4p_0}
    >0;
    \]
    \item zero-spread quotes for the prediction market:
    \[
    A_1^{\PM{*}}=B_1^{\PM{*}}=\frac{1}{2}.
    \]
\end{enumerate}
\end{proposition}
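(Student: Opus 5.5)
The plan is to solve the model backward: first the stage-3 continuation for a fixed $p_{\mathrm M}$ on each branch $h_2$, then the stage-1 quotes, then the entry choice. Assumption~\ref{ass:canonical} does most of the work because it places every relevant posterior mean outside or at the edge of the settlement band $[-\varepsilon,\varepsilon]$.

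\emph{Stage 3.} On the up branch, Lemma~\ref{lem:stage2-quotes} gives $\bar V_{\hbuy}=\sigma\pi>\varepsilon$, so $H(\bar V_{\hbuy}/\varepsilon)=1$. By Proposition~\ref{prop:stage3-closed-form-quotes} and Corollary~\ref{cor:stage3-pm-buy-quote-monotonicity}, $A_{\hbuy}(\alpha)\ge\sigma\pi$, so a push adds nothing to the contract payoff. Its cost $A_{\hbuy}-\bar V_{\hbuy}$ is positive whenever $\alpha<1$ (it vanishes only in the degenerate conjecture $\alpha=1$). Hence $\Gamma_{\hbuy}(0)<0$, and the smallest stable fixed point is $\alpha^*_{\hbuy}=0$; this is also where $A^*_{\hbuy}=2\sigma\pi/(1+\pi^2)$ comes from.

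On the down branch, $\bar V_{\hsell}=-\sigma\pi<-\varepsilon$, so staying out yields $0$. Pushing yields $H(A/\varepsilon)-(A+\sigma\pi)$ with $A=A_{\hsell}(\alpha)\in[A_{\hsell}(1),0]$. On $[-\varepsilon,0]$ this equals $\tfrac12-\sigma\pi+A\left(\tfrac{1}{2\varepsilon}-1\right)$. Because $\varepsilon<\sigma\pi<\tfrac12$, this expression is increasing in $A$ and equals $\tfrac12-\sigma\pi>0$ at $A=0$. The gap is therefore minimized at $\alpha=1$. If $\Gamma_{\hsell}(1)\ge0$, the gap is positive on all of $[0,1)$ and $\alpha^*_{\hsell}=1$ is the unique fixed point.

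Substituting $A_{\hsell}(1)=-\sigma\pi p/\bigl(p+(1-p)\tfrac{1-\pi^2}{2}\bigr)$, the condition $\Gamma_{\hsell}(1)\ge0$ rearranges to $p_{\mathrm M}\le p_0$; this is exactly how the cutoff $p_0$ in the text is defined. I will also verify that $A_{\hsell}(1)\ge-\varepsilon$ on this range, so that the linear formula for $H$ applies.

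\emph{Stage 1 quotes.} For $p_{\mathrm M}\le p_0$ the payoffs are
\[
\Pi^+_{\mathrm M}=\tfrac12+\tfrac12 H(A^*_{\hsell}/\varepsilon),\qquad K^+=\tfrac12\left(A^*_{\hsell}+\sigma\pi\right).
\]
For the liquidity trader I use $H(A^*_{\hbuy}/\varepsilon)=1$, $H(B_{\hbuy}/\varepsilon)=H(0)=\tfrac12$, and $H(B_{\hsell}/\varepsilon)=0$ (since $2\sigma\pi/(1+\pi^2)>\sigma\pi>\varepsilon$), together with $H(A^*_{\hsell}/\varepsilon)$ on the down branch. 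I will then form $A_1^{\PM,\mathrm{BE}}$ and show that it is strictly below $\tfrac12$. The idea is that the liquidity buyer's payoff falls short of $\tfrac12$ by more than the manipulator's excess, since her settlement is tilted down by the shaded quote $A^*_{\hsell}<0$. This makes the non-crossing floor bind, which delivers part~4: $A_1^{\PM*}=B_1^{\PM*}=\tfrac12$.

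\emph{Entry.} With the ask fixed at $\tfrac12$, $\bar S^+_{\mathrm M}(p)=p\bigl[\tfrac12 H(A^*_{\hsell}/\varepsilon)-\tfrac12(A^*_{\hsell}+\sigma\pi)\bigr]$. Since $\Gamma_{\hsell}(1)$ is exactly the bracketed difference, this is $\tfrac{p}{2}\Gamma_{\hsell}(1;p)$, which is explicit and rational in $p$. Writing $c=(1-\pi^2)/(1+\pi^2)$, I expect the first-order condition to reduce to $p^2+2cp-cp_0=0$. Its positive root is the stated $p^*_{\mathrm M}$, and $p^*<p_0$ holds because the quadratic is positive at $p=p_0$. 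Plugging back in should give part~3 after simplification.

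\emph{Main obstacle.} The hard step is ruling out $p_{\mathrm M}>p_0$. There the selected push is interior, defined by $\Gamma_{\hsell}(\alpha)=0$, and I need to show that profit per entry is nonpositive, or at least below the maximum found above. My first attempt: in the interior regime the indifference condition gives $\Pi^+_{\mathrm M}-K^+=\tfrac12$ exactly, since pushing and not pushing are equally valuable on the down branch and staying out yields $\tfrac12$ overall. The ask is at least $\tfrac12$, so $S^+_{\mathrm M}\le0$ there. Secondary checks are that $A_1^{\PM,\mathrm{BE}}<\tfrac12$ uniformly on $(0,p_0]$, and that the smallest-maximizer rule picks $p^*$, which follows from strict concavity of the quadratic-over-linear objective on $(0,p_0]$.
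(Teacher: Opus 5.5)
Your proposal is correct and follows essentially the paper's route: the pure-push continuation $\alpha^*_{\hsell}=1$ for $p_{\mathrm M}\le p_0$ with the shaded ask inside the settlement band, $A_1^{\PM,\mathrm{BE}}=\tfrac12-p_{\mathrm M}(\sigma\pi-\varepsilon)/(4\varepsilon)<\tfrac12$, the quadratic first-order condition, and zero conditional profit on the mixed branch because $\Pi^+_{\mathrm M}-K^+=\tfrac12$ there. Your identity $S^+_{\mathrm M}=\tfrac12\Gamma_{\hsell}(1;p_{\mathrm M})$ and the indifference argument are just more compact packagings of the paper's explicit algebra. One harmless slip: the cost $A_{\hbuy}-\bar V_{\hbuy}$ does not vanish at $\alpha=1$, since for $p_{\mathrm M}<1$ the ask is $\sigma\pi/\bigl(p_{\mathrm M}+(1-p_{\mathrm M})\tfrac{1+\pi^2}{2}\bigr)>\sigma\pi$; you only need $\Gamma_{\hbuy}\le 0$, which holds regardless.
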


The manipulator enters with an interior probability $p_{\mathrm M}^{*}$. She enters often enough to profit, but not so often that a worse prediction-market ask and a smaller settlement gain from her stage-3 push erode her profit. Her push policy sits at a corner. She never pushes after a stage-2 buy ($\alpha_{\hbuy}^{*}=0$), because ordinary flow has already carried the price her way, and there is nothing to manipulate. She always pushes after a stage-2 sell ($\alpha_{\hsell}^{*}=1$), because flow has gone against her side and a push can still move the settlement her way.

If both the prediction-market maker and the manipulator profit at a zero spread, a third party must fund them. That party is the uninformed liquidity trader. She buys at the same $1/2$ but has no settlement-window push of her own, so she expects to receive less than what she pays. Why does she lose in expectation, holding no information and no view? The pooling that does it sits at the \emph{spot} maker, not the prediction-market maker. The manipulator is long the ``up'' contract and would lift the settlement price by buying the spot. A settlement-window buy becomes less informative, so by Corollary~\ref{cor:stage3-pm-buy-quote-monotonicity} the spot maker lowers its ask while the bid is unchanged. The result is a settlement bias \emph{against} the ``up'' holder, including the liquidity trader even though she is innocent. She need not meet the manipulator, hold any view about the spot asset, or even know the spot market exists; simply holding the ``up'' contract exposes her to the spot-maker quote response. The harm is cross-market~(Figure~\ref{fig:cross-market-channel}): a prediction-market position triggers a spot-maker quote response that tilts the settlement against that position, which harms the contract holder but makes the spot market more liquid.

\begin{figure}[t]
\centering
\begin{tikzpicture}[>=stealth,font=\small,yscale=1.15]
  \def\yL{1.1}\def\yBE{2.2}\def\yH{3.0}\def\yM{4.8}
  \draw[->,gray!70] (0,0.4) -- (0,5.35) node[above,black,font=\footnotesize]{contract value};
  \draw[dashed,gray!70] (-0.25,\yH) -- (7.6,\yH);
  \node[left] at (-0.12,\yM){$\Pi_{\mathrm M}$};
  \node[left] at (-0.12,\yH){$\tfrac12$};
  \node[left] at (-0.12,\yBE){$A_1^{\PM,\mathrm{BE}}$};
  \node[left] at (-0.12,\yL){$\Pi_{\mathrm L}$};
  \foreach \y in {\yM,\yBE,\yL}{\draw (-0.09,\y)--(0.09,\y);}
  \fill (0,\yH) circle (1.2pt);
  \draw[->,line width=1.4pt,green!45!black] (2.0,\yBE) -- (2.0,\yH);
  \node[right,green!40!black,align=left,font=\scriptsize] at (2.15,2.6){maker's\\rent $>0$};
  \draw[->,line width=1.4pt,green!45!black] (4.5,\yH) -- (4.5,\yM);
  \node[right,green!40!black,align=left,font=\scriptsize] at (4.65,3.9){manipulator\\gains $>0$};
  \draw[->,line width=1.4pt,red!70!black] (7.0,\yH) -- (7.0,\yL);
  \node[right,red!65!black,align=left,font=\scriptsize] at (7.15,2.05){liquidity\\trader loses};
\end{tikzpicture}

{\footnotesize\textbf{(a)} The zero-spread rent: the break-even price lies below $1/2$, so the non-crossing floor pins the posted price at $1/2$ and the prediction-market maker keeps the gap as a positive rent.}
\par\vspace{1.3em}
\begin{tikzpicture}[font=\small,>=stealth,
   b/.style={rectangle,rounded corners,draw,align=center,inner sep=4pt,text width=5cm},
   loss/.style={rectangle,rounded corners,draw=red!70!black,fill=red!4,align=center,inner sep=4pt,text width=4.3cm},
   gain/.style={rectangle,rounded corners,draw=green!50!black,fill=green!5,align=center,inner sep=4pt,text width=4.3cm}]
  \node[b] (buy) at (0,4.6) {Liquidity trader buys the ``up'' contract};
  \node[b] (pool) at (0,3.12) {Spot maker cannot tell an ordinary buy from the manipulator's coming push};
  \node[b] (ask) at (0,1.4) {Spot maker discounts the buy and lowers the ask while the bid remains unchanged};
  \node[loss] (loss) at (-3.4,-0.8) {Settlement biased down:\\``up'' collects $<1/2$};
  \node[gain] (liq) at (3.4,-0.8) {Tighter spot ask:\\the spot looks more liquid};
  \draw[->] (buy)--(pool);
  \draw[->] (pool)--(ask);
  \draw[->] (ask) -- (loss);
  \draw[->] (ask) -- (liq);
\end{tikzpicture}

{\footnotesize\textbf{(b)} The cross-market channel: a prediction-market ``up'' buy makes a settlement-window spot buy look possibly manipulative, so the spot maker lowers its ask; the settlement is biased against the ``up'' side (the holder collects $<1/2$) while the spot ask tightens (more liquid).}
\caption{\textbf{The zero-spread rent and the cross-market channel.} Panel (a) shows where the prediction-market maker's rent comes from; panel (b) shows how the uninformed liquidity trader's loss is inflicted through the spot maker's quotes.}
\label{fig:zero-spread-rent}
\label{fig:cross-market-channel}
\end{figure}

The following proposition gives the accounting.

\begin{proposition}[Liquidity-trader loss decomposition]\label{prop:pm-maker-rent}\label{cor:contract-payoff-wedge}
At the equilibrium of Proposition~\ref{prop:canonical-equilibrium}, evaluated at the interior entry probability \(p_{\mathrm M}^{*}\in\left(0,p_0\right)\), the liquidity trader's ex-ante loss splits exactly into three components:
\[
    \left(1-p_{\mathrm M}^{*}\right)L^{+}\left(p_{\mathrm M}^{*}\right)
    =
    R_{\PM}^{+*}
    +
    p_{\mathrm M}^{*}S_{\mathrm M}^{+}\left(p_{\mathrm M}^{*}\right)
    +
    p_{\mathrm M}^{*}K^{+}\left(p_{\mathrm M}^{*}\right),
\]
where the prediction-market maker's rent, the liquidity trader's per-order loss, and the spot push cost are
\[
    R_{\PM}^{+*}
    \coloneq
    A_1^{\PM *}-A_1^{\PM,\mathrm{BE}}\left(p_{\mathrm M}^{*}\right)
    =
    \frac{p_{\mathrm M}^{*}\left(\sigma\pi-\varepsilon\right)}
         {4\varepsilon}>0 ,
\]
\[
    L^{+}\left(p_{\mathrm M}^{*}\right)
    \coloneq
    \frac{1}{2}-\Pi_{\mathrm L}^{+}\left(p_{\mathrm M}^{*}\right)
    =
    \frac{\left(1-\pi^2\right)\sigma\pi p_{\mathrm M}^{*}}
         {4\varepsilon\left[\left(1-\pi^2\right)
          +\left(1+\pi^2\right)p_{\mathrm M}^{*}\right]}
    >0 ,
\]
\[
    K^{+}\left(p_{\mathrm M}^{*}\right)
    =
    \frac{1}{2}\,
    \frac{\sigma\pi\left(1-\pi^2\right)\left(1-p_{\mathrm M}^{*}\right)}
         {\left(1-\pi^2\right)+\left(1+\pi^2\right)p_{\mathrm M}^{*}} .
\]
\end{proposition}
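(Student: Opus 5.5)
The plan is to first establish the three-way split as a pure accounting identity, and only then compute each component in closed form at the equilibrium of Proposition~\ref{prop:canonical-equilibrium}. For the identity, I would use the definitions directly. The break-even ask is $A_1^{\PM,\mathrm{BE}}=p_{\mathrm M}\Pi_{\mathrm M}^{+}+(1-p_{\mathrm M})\Pi_{\mathrm L}^{+}$. Part~(iv) of Proposition~\ref{prop:canonical-equilibrium} gives $A_1^{\PM *}=\tfrac12$, and the definition of $S_{\mathrm M}^{+}$ gives $\Pi_{\mathrm M}^{+}=S_{\mathrm M}^{+}+K^{+}+\tfrac12$. Then
\[
(1-p_{\mathrm M})\bigl(\tfrac12-\Pi_{\mathrm L}^{+}\bigr)=\tfrac12-A_1^{\PM,\mathrm{BE}}+p_{\mathrm M}\bigl(\Pi_{\mathrm M}^{+}-\tfrac12\bigr)=R_{\PM}^{+*}+p_{\mathrm M}S_{\mathrm M}^{+}+p_{\mathrm M}K^{+},
\]
which holds for every $p_{\mathrm M}$ at which the floor binds, and in particular at $p_{\mathrm M}^{*}$. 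Once the closed forms below are in hand, this identity can also be used as a consistency check on them.

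For the closed forms, I would plug the push policy $\alpha_{\hbuy}^{*}=0$, $\alpha_{\hsell}^{*}=1$ into Proposition~\ref{prop:stage3-closed-form-quotes}, using $q_{h_2}=\tfrac12$ and the posteriors from Lemma~\ref{lem:stage2-quotes}. This gives four quotes:
\begin{itemize}
\item $A_{\hbuy}^{*}=2\sigma\pi/(1+\pi^2)$;
\item $B_{\hbuy}=0$;
\item $B_{\hsell}=-2\sigma\pi/(1+\pi^2)$;
\item $A_{\hsell}^{*}=-2\sigma\pi p_{\mathrm M}/D$, where $D\coloneq(1-\pi^2)+(1+\pi^2)p_{\mathrm M}$.
\end{itemize}
By Assumption~\ref{ass:canonical}, $\sigma\pi>\varepsilon$. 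Hence $A_{\hbuy}^{*}\ge\sigma\pi>\varepsilon$ gives $H=1$, $B_{\hbuy}=0$ gives $H=\tfrac12$, and $B_{\hsell}\le-\sigma\pi<-\varepsilon$ gives $H=0$.

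The one delicate evaluation is $H(A_{\hsell}^{*}/\varepsilon)$. I would verify that $|A_{\hsell}^{*}|<\varepsilon$ at $p_{\mathrm M}^{*}$, so that $H$ lies on its linear band and equals $\tfrac12+A_{\hsell}^{*}/(2\varepsilon)$. My approach is to show that $p_0$ is precisely the threshold at which $A_{\hsell}^{*}$ reaches $-\varepsilon$ (or the point where the band/pure-push condition switches), and then invoke $p_{\mathrm M}^{*}<p_0$ from Proposition~\ref{prop:canonical-equilibrium}. If that identification fails, I would instead check the inequality $2\sigma\pi p_{\mathrm M}^{*}<\varepsilon D$ directly from the explicit square-root formula for $p_{\mathrm M}^{*}$. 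This band verification is the step I expect to be the main obstacle.

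With these values the remaining computations are routine.
\begin{itemize}
\item $K^{+}=\tfrac12\bigl(A_{\hsell}^{*}+\sigma\pi\bigr)$, and putting this over the common denominator $D$ gives the stated $\tfrac12\sigma\pi(1-\pi^2)(1-p_{\mathrm M})/D$.
\item $\Pi_{\mathrm M}^{+}=\tfrac12\cdot1+\tfrac12\bigl(\tfrac12+A_{\hsell}^{*}/(2\varepsilon)\bigr)$.
\item For $\Pi_{\mathrm L}^{+}$, the weights are $(1+\pi^2)/2$ and $(1-\pi^2)/2$ after a buy, and $(1-\pi^2)/2$ on the ask after a sell. This gives $\Pi_{\mathrm L}^{+}=\tfrac12\bigl[\tfrac{1+\pi^2}{2}+\tfrac{1-\pi^2}{4}\bigr]+\tfrac12\cdot\tfrac{1-\pi^2}{2}\bigl(\tfrac12+A_{\hsell}^{*}/(2\varepsilon)\bigr)$. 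Simplifying, $\tfrac12-\Pi_{\mathrm L}^{+}$ reduces to $-(1-\pi^2)A_{\hsell}^{*}/(8\varepsilon)$, which is the stated $L^{+}$ and is positive because $A_{\hsell}^{*}<0$.
\item $R_{\PM}^{+*}=\tfrac12-p_{\mathrm M}\Pi_{\mathrm M}^{+}-(1-p_{\mathrm M})\Pi_{\mathrm L}^{+}$. Expanding this, the $D$ factors should cancel and leave $p_{\mathrm M}(\sigma\pi-\varepsilon)/(4\varepsilon)$. Its positivity then follows from $\sigma\pi>\varepsilon$ in Assumption~\ref{ass:canonical}.
\end{itemize}

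The cancellation of $D$ in $R_{\PM}^{+*}$ is the algebraic step I would double-check. Should it not come out cleanly, I would instead derive $R_{\PM}^{+*}$ as the residual of the accounting identity, using the closed form of $\bar S_{\mathrm M}^{+}$ from Proposition~\ref{prop:canonical-equilibrium}.
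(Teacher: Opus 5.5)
Your proposal is correct and follows the paper's proof: the three-way split is the same definitional accounting identity, and the closed forms come from the pure-push branch ($\alpha_{\hbuy}^{*}=0$, $\alpha_{\hsell}^{*}=1$) exactly as in the appendix. Your algebra checks out, including $K^{+}$, $L^{+}=-(1-\pi^2)A_{\hsell}^{*}/(8\varepsilon)$, and the clean cancellation of $D$ in $R_{\PM}^{+*}$.

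One correction on the band step: $p_0$ is not where the pure-push ask reaches $-\varepsilon$. It is where that ask equals the indifference ask $-\varepsilon(1-2\sigma\pi)/(1-2\varepsilon)\in(-\varepsilon,0)$. Since $A_{\hsell}^{*}(p_{\mathrm M})$ starts at $0$ and decreases in $p_{\mathrm M}$, it stays inside the band for all $p_{\mathrm M}\le p_0$. This is how the paper settles the band check, and it makes your fallback square-root check unnecessary.
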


The identity shows that the liquidity trader's expected loss funds the two rents and one cost on the right-hand side. The first rent is the prediction-market maker's $R_{\PM}^{+*}$, the gap between the posted $1/2$ and the break-even ask that the prediction-market maker keeps because it cannot quote any tighter under the non-crossing constraint. The second is the manipulator's ex-ante profit $p_{\mathrm M}^{*}S_{\mathrm M}^{+}$ while paying the same $1/2$. The third term, $p_{\mathrm M}^{*}K^{+}$, is the manipulator's ex-ante cost of moving the spot in the settlement-window push, which helps the spot maker finance the adverse-selection loss to the informed trader.

Having pinned down who pays, we next ask when manipulation incentives are strongest.

\begin{corollary}[Comparative statics]\label{lem:freq-magnitude}
The signs of the comparative statics are summarized in the table below.
\begin{center}
\renewcommand{\arraystretch}{1.4}
\begin{tabular}{@{}llccc@{}}
\toprule
Quantity & Symbol & \(\partial/\partial\varepsilon\) & \(\partial/\partial\sigma\) & \(\partial/\partial\pi\)\\
\midrule
Pure-push cutoff & \(p_0\) & \(>0\) & \(<0\) & \(<0\)\\
Optimal entry probability & \(p_{\mathrm M}^{*}\) & \(>0\) & \(<0\) & \(<0\)\\
Stage-3 ask & \(A_{\hsell}^{*}\left(p_{\mathrm M}^{*}\right)\) & \(<0\) & \(>0\) & \(>0\)\\
Spot manipulation cost at \(p_{\mathrm M}^{*}\) & \(A_{\hsell}^{*}\left(p_{\mathrm M}^{*}\right)-\bar V_{\hsell}\) & \(<0\) & \(>0\) & \(>0\)\\
Ex-ante manipulation profit & \(\bar{S}_{\mathrm M}^{+}\left(p_{\mathrm M}^{*}\right)\) & \(>0\) & \(<0\) & \(<0\)\\
\bottomrule
\end{tabular}
\end{center}
\end{corollary}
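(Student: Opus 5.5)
The plan is to work entirely with the closed forms that Proposition~\ref{prop:canonical-equilibrium} already supplies, and to reduce every row of the table to the sign of a derivative of one or two primitive objects: the cutoff \(p_0\) and the map \(p_0\mapsto p_{\mathrm M}^{*}\). Write \(c\coloneq(1-\pi^2)/(1+\pi^2)\), so that \(p_{\mathrm M}^{*}=\sqrt{c^2+cp_0}-c\). Two monotonicity facts drive everything. First, \(p_{\mathrm M}^{*}\) is strictly increasing in \(p_0\), since \(\partial p_{\mathrm M}^{*}/\partial p_0=c/(2\sqrt{c^2+cp_0})>0\). Second, \(p_{\mathrm M}^{*}\) is increasing in \(c\) for fixed \(p_0\): the derivative is \((2c+p_0)/(2\sqrt{c^2+cp_0})-1\), and this is positive because \((2c+p_0)^2>4(c^2+cp_0)\). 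Since \(c\) is decreasing in \(\pi\) and independent of \(\sigma,\varepsilon\), the \(\pi\)-effects through \(c\) and through \(p_0\) will have to be checked to point the same way.

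\textbf{Row 1 (\(p_0\)).} Set \(g\coloneq 2(\sigma\pi-\varepsilon)/[\varepsilon(1-\pi^2)(1-2\sigma\pi)]\), so that \(p_0=(1+g)^{-1}\). Each sign of \(p_0\) is the opposite of the sign of \(g\).
\begin{itemize}
\item \(g\) is decreasing in \(\varepsilon\), because \((\sigma\pi-\varepsilon)/\varepsilon=\sigma\pi/\varepsilon-1\).
\item \(g\) is increasing in \(\sigma\): the numerator \(\sigma\pi-\varepsilon\) rises, and the factor \(1-2\sigma\pi>0\) falls, using Assumption~\ref{ass:canonical}.
\item \(g\) is increasing in \(\pi\): the numerator rises, and both \(1-\pi^2\) and \(1-2\sigma\pi\) fall.
\end{itemize}
These give the first row directly. \textbf{Row 2 (\(p_{\mathrm M}^{*}\)).} The \(\varepsilon\) and \(\sigma\) signs follow from the chain rule and the first fact. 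For \(\pi\), both channels push \(p_{\mathrm M}^{*}\) down: \(p_0\) falls and \(c\) falls. So all three signs follow without computing cross terms.

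\textbf{Rows 3 and 4 (ask and push cost).} Substitute \(\alpha_{\hsell}^{*}=1\) into Proposition~\ref{prop:stage3-closed-form-quotes}. This gives
\[
A_{\hsell}^{*}=\frac{-\sigma\pi p}{p+(1-p)\frac{1-\pi^2}{2}},\qquad
A_{\hsell}^{*}-\bar V_{\hsell}=\frac{\sigma\pi(1-p)\frac{1-\pi^2}{2}}{p+(1-p)\frac{1-\pi^2}{2}},
\]
with \(p=p_{\mathrm M}^{*}\). By Corollary~\ref{cor:stage3-pm-buy-quote-monotonicity}, \(A_{\hsell}\) is decreasing in \(p\), and the cost is visibly decreasing in \(p\).
\begin{itemize}
\item \(\varepsilon\): this parameter enters only through \(p\), which rises, so both quantities fall.
\item \(\sigma\): \(p\) falls, which raises both quantities. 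The direct effect is neutral for the cost, which is proportional to \(\sigma\) times a positive factor, so it is increasing. For \(A_{\hsell}^{*}\) the direct effect makes it more negative. I therefore plan to write \(A_{\hsell}^{*}=-\sigma\pi\,\phi(p)\) with \(\phi\in(0,1)\) and show that \(\partial\log\phi/\partial\log\sigma<-1\) along the equilibrium path, using the explicit formula for \(p_{\mathrm M}^{*}\).
\item \(\pi\): the cost has a direct factor \(\pi(1-\pi^2)\), which is not monotone, and the ask has the analogous issue.
\end{itemize}
The \(\sigma\)- and \(\pi\)-signs for these two rows are the main obstacle. Here I would first try to re-express the objects using the equilibrium identity \(p^{2}+2cp=cp_0\) to eliminate the square root, then sign the total derivative as a rational function of \(p\) and the parameters on the region of Assumption~\ref{ass:canonical}. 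I expect the inequality \(\varepsilon<\sigma\pi<1/2\) to be exactly what closes the sign. As a fallback, I would bound the elasticity of \(p_{\mathrm M}^{*}\) with respect to \(\sigma\) and \(\pi\) from below using \(p<p_0\).

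\textbf{Row 5 (profit).} Use \(\bar S_{\mathrm M}^{+}=(1-2\sigma\pi)p^{2}/(4p_0)\) and substitute \(p_0=(p^{2}+2cp)/c\). This gives
\[
\bar S_{\mathrm M}^{+}=\frac{(1-2\sigma\pi)\,c\,p}{4(p+2c)}.
\]
This is increasing in \(p\) and in \(c\), and decreasing in \(\sigma\pi\) directly. The \(\varepsilon\) sign comes from \(p\) rising. The \(\sigma\) sign comes from \(p\) falling together with the direct factor. The \(\pi\) sign follows because \(p\), \(c\) and \(1-2\sigma\pi\) all fall in \(\pi\). So the last row is clean once Row 2 is established.
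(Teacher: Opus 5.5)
Your treatment of rows 1, 2 and 5, and of the whole $\varepsilon$ column, is correct and is essentially the paper's argument. That covers $p_0$ signed by inspection, $p_{\mathrm M}^{*}$ increasing in both $c$ and $p_0$, and $\bar S_{\mathrm M}^{+}=(1-2\sigma\pi)cp/[4(p+2c)]$ obtained from the first-order condition. Your $\sigma$-sign for the push cost is also fine.

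The gap is that three entries are never established: $\partial A_{\hsell}^{*}(p_{\mathrm M}^{*})/\partial\sigma>0$, $\partial A_{\hsell}^{*}(p_{\mathrm M}^{*})/\partial\pi>0$, and the $\pi$-sign of the push cost. In these entries the direct effect and the effect through $p_{\mathrm M}^{*}$ point in opposite directions, so the monotonicity facts you assembled cannot sign them. Saying you expect Assumption~\ref{ass:canonical} to close the sign is a plan, not a proof. Your $\sigma$-elasticity plan does work if carried out: using $p(p+2c)=cp_0$, the required condition reduces to $1/(1-2\sigma\pi)>2c/(p+2c)$, which is trivially true. The $\pi$ case, where $c$ also moves, is left untouched.

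The missing idea is to eliminate $p_{\mathrm M}^{*}$ entirely rather than differentiate through it. Since $p_{\mathrm M}^{*}+c=\sqrt{c^2+cp_0}$, you get $A_{\hsell}^{*}(p_{\mathrm M}^{*})=-\frac{2\sigma\pi}{1+\pi^2}\bigl(1-\sqrt{c/(c+p_0)}\bigr)$. Substituting the closed form of $p_0$ gives $c/(c+p_0)=1-x$, with $x=\frac{\varepsilon(1+\pi^2)(1-2\sigma\pi)}{2\sigma\pi(1-2\varepsilon)}$. Rationalizing $1-\sqrt{1-x}=x/(1+\sqrt{1-x})$ cancels the troublesome $\sigma\pi$ prefactor:
\[
A_{\hsell}^{*}(p_{\mathrm M}^{*})=-\frac{\varepsilon(1-2\sigma\pi)}{(1-2\varepsilon)\bigl(1+\sqrt{1-x}\bigr)} .
\]
The numerator is positive and falls in $\sigma$ and $\pi$. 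The quantity $x$ also falls in both, because $(1+\pi^2)/\pi=\pi+1/\pi$ is decreasing on $(0,1)$. Hence $A_{\hsell}^{*}$ rises in both $\sigma$ and $\pi$.

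Row 4 then needs no separate work. The cost equals $A_{\hsell}^{*}+\sigma\pi$, that is, $A_{\hsell}^{*}$ plus a term increasing in $\sigma$ and $\pi$. So the non-monotone factor $\pi(1-\pi^2)$ you worried about is a red herring.
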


When settlement noise $\varepsilon$ rises, the adverse outcome is easier to rescue, so the manipulator enters more often and $p_{\mathrm M}^{*}$ rises. Because she now enters more often, a stage-3 buy after a stage-2 sell is more likely to be manipulative and therefore less informative about the fundamental, so the stage-3 ask and the spot-push cost both fall. Even so, expected profit rises, because the gain from entering more often outweighs the lower profit per entry.

When fundamental volatility $\sigma$ or information frequency $\pi$ rises, the adverse outcome is harder to rescue. Each moves the stage-2 posterior further from the strike, so the settlement-window push is more costly. The manipulator enters less often, and expected profit falls.

The full comparative-statics derivations are in Appendix~\ref{app:comparative-statics-derivations}.

\subsection{The Horizon Lever}

\label{sec:results-horizon}
\label{sec:results-horizon-two}
\label{sec:results-horizon-n}

A venue cannot change a trader's signal or a manipulator's budget, but it can change one thing directly: the contract's \emph{horizon}, the span of trading before the price settles. The horizon can be a design lever against manipulation. Each ordinary trading round before settlement lets the price absorb more information, so the posterior entering the settlement window sits further from the strike and a fixed-size push stays pivotal on a smaller set of potential paths. As the horizon grows, manipulation grows \emph{rarer} and the settlement price \emph{more} informative; lengthen the horizon without bound and manipulation vanishes outright. We first solve the smallest extension, one extra round, where the mechanism is cleanest, then the general case of $n$ ordinary rounds.

We lengthen the contract's horizon by inserting $n\ge 1$ ordinary trading rounds between the manipulator's entry and the settlement window. Each round is an ordinary Glosten--Milgrom round: an informed trader arrives with probability $\pi$ and trades on her signal as in~\eqref{eq:informed-rule}, and otherwise a liquidity trader buys or sells at random, so each round adds one buy $\hbuy$ or one sell $\hsell$ to the public record. The manipulator still trades only in the settlement window. The baseline is the one-round case, $n=1$.

A stage-2 price path with $n$ rounds is denoted by a string of $n$ buys or sells, $h_2=h_{2.1}\cdots h_{2.n}$. What the market learns from it is summarized by a single number, the \emph{order imbalance} $d\left(h_2\right)$, the number of buys minus the number of sells. The imbalance is a sufficient statistic for the posterior entering settlement: two price paths with the same imbalance leave the market at the same posterior, so a stage-2 price path can be summarized by \(d\) alone.

A fixed-size push moves the settlement price by a bounded amount, so it can change which side wins only when the price already sits near the strike---that is, only when the imbalance is near balance. Once the order flow has moved the imbalance far to one side, the outcome is settled and the manipulator stays out. The horizon works by making a near-balance imbalance rarer. Figure~\ref{fig:n-stage-timeline} lays out the timeline.

\begin{figure}[ht]
\centering
\begin{tikzpicture}[font=\small,>=stealth,x=1cm,y=1cm]
  \draw[->,thick] (0,0) -- (12.6,0) node[right]{time};
  \foreach \x in {1,3.5,4.5,5.5,7.5,10.5} \filldraw (\x,0) circle (1.7pt);
  \node at (6.5,0) {$\cdots$};
  \node[above=3pt,align=center] at (1,0) {\textbf{Stage 1}};
  \node[above=2pt] at (3.5,0) {\footnotesize $2.1$};
  \node[above=2pt] at (4.5,0) {\footnotesize $2.2$};
  \node[above=2pt] at (5.5,0) {\footnotesize $2.3$};
  \node[above=2pt] at (7.5,0) {\footnotesize $2.n$};
  \node[above=3pt,align=center] at (10.5,0) {\textbf{Stage 3}};
  \node[below=4pt,align=center,text width=2.7cm] at (1,0) {\footnotesize entry: manipulator enters w.p.\ $p_{\mathrm M}$, else a liquidity trader};
  \node[below=4pt,align=center,text width=4.4cm] at (5.5,0) {\footnotesize $n$ ordinary Glosten--Milgrom stages: each a buy $\hbuy$ or sell $\hsell$};
  \node[below=4pt,align=center,text width=2.9cm] at (10.5,0) {\footnotesize settlement window: push w.p.\ $\alpha$; price settles on the posterior $+\varepsilon\xi$};
\end{tikzpicture}
\caption{The extended game with $n$ ordinary spot rounds between entry and settlement. The
manipulator decides whether to enter at stage~1, $n$ ordinary Glosten--Milgrom rounds then reveal
order flow, and at stage~3 the manipulator (if she entered) pushes the spot price before the
contract settles. The baseline (Section~\ref{sec:results-canonical}) is the case $n=1$.}
\label{fig:n-stage-timeline}
\end{figure}

We solve the smallest case first---two rounds, where the few price paths can be enumerated and the mechanism is clearest---then the general $n$-round case. With two ordinary rounds the order flow can come out four ways. When the two orders agree---both buys or both sells---they move the posterior far enough from the strike that the settlement is already decided. When they disagree---a buy and a sell---they cancel and leave the posterior at the strike, where a push can still change the settlement. So manipulation survives only on the two disagreeing paths. Figure~\ref{fig:two-stage-tree} traces the four paths, and Appendix~\ref{app:horizon-two-posteriors} records the posteriors they reach.

\begin{figure}[ht]
\centering
\begin{tikzpicture}[font=\small,
    state/.style={rectangle,rounded corners,draw,align=center,inner sep=3pt},
    leaf/.style={rectangle,rounded corners,draw,align=center,inner sep=3pt,fill=gray!6},
    push/.style={rectangle,rounded corners,draw,align=center,inner sep=3pt,fill=gray!20},
    elab/.style={font=\footnotesize,inner sep=1.5pt}]
  \node[state] (root) at (0,2.8) {Stage 2.1 spot order};
  \node[state] (a) at (-3.4,1.4) {2.1: \textbf{buy}};
  \node[state] (b) at ( 3.4,1.4) {2.1: \textbf{sell}};
  \node[leaf] (uu) at (-5.2,-1.0) {$\hbuy\hbuy$\\[-1pt]above strike\\[-1pt]stay out, $\alpha^{*}=0$};
  \node[push] (ud) at (-1.75,-1.0) {$\hbuy\hsell$\\[-1pt]at strike\\[-1pt]\textbf{push}, $\alpha^{*}=1$};
  \node[push] (du) at ( 1.75,-1.0) {$\hsell\hbuy$\\[-1pt]at strike\\[-1pt]\textbf{push}, $\alpha^{*}=1$};
  \node[leaf] (dd) at ( 5.2,-1.0) {$\hsell\hsell$\\[-1pt]below strike\\[-1pt]stay out, $\alpha^{*}=0$};
  \draw[->] (root) -- (a);
  \draw[->] (root) -- (b);
  \draw[->] (a) -- (uu) node[elab,midway,above left] {2.2 buy};
  \draw[->] (a) -- (ud) node[elab,midway,above right]{2.2 sell};
  \draw[->] (b) -- (du) node[elab,midway,above left] {2.2 buy};
  \draw[->] (b) -- (dd) node[elab,midway,above right]{2.2 sell};
\end{tikzpicture}
\caption{The lengthened game with two ordinary spot rounds before the settlement
window, conditional on a stage-1 buy. The two ordinary orders (stage~2.1 then stage~2.2) produce
four stage-2 price paths. The agreeing price paths $\hbuy\hbuy$ and $\hsell\hsell$ move
the posterior far enough from the strike that a push cannot change the settlement, so the
manipulator stays out; the disagreeing price paths $\hbuy\hsell$ and $\hsell\hbuy$ leave it at the
strike, where the outcome stays undecided and she pushes.}
\label{fig:two-stage-tree}
\end{figure}

The equilibrium of the two-round game is as follows.

\begin{proposition}[Equilibrium with two ordinary spot rounds]
\label{prop:horizon-two-equilibrium}
Fix \(\sigma>0\), \(\pi\in\left(0,1\right)\), and \(\varepsilon>0\) satisfying Assumption~\ref{ass:canonical}. The two-round game has a unique equilibrium (Definition~\ref{def:full-equilibrium}), with a unique optimal entry probability \(p_{\mathrm M}^{*\left(2\right)}\in\left(0,1\right)\). In this equilibrium:
\begin{enumerate}[leftmargin=2em,itemsep=0.2em]
    \item she pushes only after the two ordinary orders disagree, and never after they agree:
\[
        \alpha_{\hbuy\hbuy}^{*}
        =
        \alpha_{\hsell\hsell}^{*}
        =
        0,
        \qquad
        \alpha_{\hbuy\hsell}^{*}
        =
        \alpha_{\hsell\hbuy}^{*}
        =
        1;
    \]

    \item price discovery strictly improves relative to the baseline:
\[
        \E\left[\rvar_{h_2 h_3}^{\left(2\right)}\right]
        \left(p_{\mathrm M}^{*\left(2\right)}\right)
        <
        \E\left[\rvar_{h_2 h_3}^{\left(1\right)}\right]
        \left(p_{\mathrm M}^{*\left(1\right)}\right).
    \]
\end{enumerate}
\end{proposition}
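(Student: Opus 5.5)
The proof retraces the baseline argument of Propositions~\ref{prop:stage3-closed-form-quotes}--\ref{prop:canonical-equilibrium} on the four stage-2 paths of the two-round game.

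\textbf{Posteriors.} Two Glosten--Milgrom rounds (Lemma~\ref{lem:stage2-quotes} applied twice) give the posteriors on the four paths:
\[
\bar V_{\hbuy\hbuy}=-\bar V_{\hsell\hsell}=\frac{2\sigma\pi}{1+\pi^2},
\qquad
\bar V_{\hbuy\hsell}=\bar V_{\hsell\hbuy}=0 .
\]
The agreeing paths each occur with probability $(1+\pi^2)/4$ and the disagreeing paths with probability $(1-\pi^2)/4$.

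\textbf{Stage-3 quotes.} For each path I recompute the break-even quotes exactly as in Proposition~\ref{prop:stage3-closed-form-quotes}: the ask pools informed, liquidity and manipulator buys given a conjectured $\alpha_{h_2}$. I then form the gap $\Gamma_{h_2}(\alpha)$.

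\textbf{Push policy, part (i).} I handle the paths one at a time.
\begin{itemize}
\item On $\hbuy\hbuy$: $\bar V_{\hbuy\hbuy}=2\sigma\pi/(1+\pi^2)$. Assumption~\ref{ass:canonical} gives $\sigma\pi>\varepsilon$ and $\pi<1$, so $\bar V_{\hbuy\hbuy}>\sigma\pi>\varepsilon$. Then $H(\bar V/\varepsilon)=1$ already, the contract gain from pushing is zero, and any ask above $\bar V$ makes $\Gamma<0$. So $\alpha^*=0$.
\item On $\hsell\hsell$: the stage-3 ask is at most the upper posterior after one more buy. I expect this to stay $\le-\varepsilon$, or at least to give a contract gain below the half-spread, because $\bar V_{\hsell\hsell}$ lies well below $-\varepsilon$. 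The claim to verify is that one buy from $\bar V_{\hsell\hsell}$ returns at most to about $-\sigma\pi<-\varepsilon$. Then $\Gamma(\alpha)\le0$ for all $\alpha$, the smallest stable fixed point is $0$, and $\alpha^*=0$.
\item On the disagreeing paths: $\bar V=0$, and the ask $A(\alpha)\in[0,\sigma\pi]$. Then
\[
\Gamma(\alpha)=\min\{A,\varepsilon\}/(2\varepsilon)-A .
\]
Using $\varepsilon<\sigma\pi<1/2$, I check that $\Gamma(1)\ge0$: with $\alpha=1$ the ask is $\sigma\pi/[p_{\mathrm M}+(1-p_{\mathrm M})(1+\pi^2)/2]\cdot[\ldots]$, and I need $A\le1/2$ and the gain $\min\{A,\varepsilon\}/(2\varepsilon)\ge A$. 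The latter holds whenever $A\le\varepsilon\cdot$const or $A\le1/2$. The case $A>\varepsilon$ requires $1/2\ge A$, which holds since $A\le\sigma\pi<1/2$. So pushing for sure is a best response, and I verify stability of the fixed point at $1$ as in Proposition~\ref{prop:stage3-selected-existence}. If $\Gamma(0)\ge0$ as well, then $\Gamma\ge0$ everywhere and $\alpha^*=1$ is the unique stable point.
\end{itemize}

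\textbf{Entry and uniqueness.} With the push policy fixed, $\Pi_{\mathrm M}^+$, $\Pi_{\mathrm L}^+$, $K^+$ and $A_1^{\PM,\mathrm{BE}}$ become explicit rational functions of $p_{\mathrm M}$, exactly as in Appendix~\ref{app:optimal-manipulator-entry-probability}. I would show that $A_1^{\PM,\mathrm{BE}}<1/2$, so the posted price is $1/2$, and that $\bar S_{\mathrm M}^+(p)=p\,S_{\mathrm M}^+(p)$ is strictly concave on the region where it is positive. This gives a unique interior maximizer $p_{\mathrm M}^{*(2)}$. The main obstacle is here: the algebra mirrors the baseline but with different constants, and concavity plus the sign of $S_{\mathrm M}^+$ near $0$ must be checked. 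I would first try to reduce the first-order condition to a quadratic, as in Proposition~\ref{prop:canonical-equilibrium}.

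\textbf{Price discovery, part (ii).} Decompose the expected residual variance over paths. On the agreeing paths, stage 3 is an ordinary round with no manipulator effect. On the disagreeing paths (total probability $(1-\pi^2)/2$), the residual variance equals the one-round baseline formula started from $\bar V=0$ with entry $p_{\mathrm M}^{*(2)}$. I would compare this explicitly with $\E[\rvar^{(1)}](p_{\mathrm M}^{*(1)})$. The cleanest route is a bound: $\E[\rvar^{(2)}](p^{*(2)})\le\E[\rvar^{(2)}](p=1)$, since residual variance is monotone in $p_{\mathrm M}$ by the argument of Proposition~\ref{prop:price-discovery-sign}. The worst case is still below the baseline value, because of the extra informative round. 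If that bound turns out too crude, I substitute the closed forms of $p_{\mathrm M}^{*(1)}$ and $p_{\mathrm M}^{*(2)}$ and verify the inequality directly under Assumption~\ref{ass:canonical}.
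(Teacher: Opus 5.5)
Your push-policy argument for part (i) is essentially the paper's. On the disagreeing paths the cleaner statement is that $A(\alpha)\in(0,\sigma\pi]$ for every $\alpha$, so $\Gamma>0$ on all of $[0,1]$ and there is no interior fixed point.

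The entry step, however, rests on a claim that is false. In the two-round game the break-even stage-1 ask lies strictly \emph{above} $1/2$. With the push policy of part (i),
\[
A_1^{\PM,\mathrm{BE}}(p_{\mathrm M})-\tfrac12=\tfrac{1-\pi^2}{4}\bigl[(1+p_{\mathrm M})H(x)-1\bigr],
\qquad
x=\frac{\sigma\pi(1-p_{\mathrm M})}{\varepsilon(1+p_{\mathrm M})}.
\]
This equals $(1-\pi^2)p_{\mathrm M}/4$ when $x\ge1$ and $(1-\pi^2)(1-p_{\mathrm M})(\sigma\pi-\varepsilon)/(8\varepsilon)$ when $x<1$, so it is positive in both cases. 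The pivotal paths now sit \emph{at} the strike, so the push lifts the manipulator's contract above $1/2$ rather than rescuing a losing one. Consequently the non-crossing floor is slack, the prediction-market maker earns no rent, and the posted ask is the $p_{\mathrm M}$-dependent break-even ask. The objective to maximize is $\bar S_{\mathrm M}^{+}=\tfrac{1-\pi^2}{4}\Phi(p_{\mathrm M})$ with $\Phi(p)=p\bigl[(1-p)H(x)-2\sigma\pi(1-p)/(1+p)\bigr]$, not $p\,(\Pi_{\mathrm M}^{+}-K^{+}-\tfrac12)$. Fixing the price at $1/2$ gives the wrong equilibrium quotes and the wrong $p_{\mathrm M}^{*(2)}$.

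Even with the correct objective, your uniqueness plan fails because $\bar S_{\mathrm M}^{+}$ is not concave in general. On the branch $x\ge1$, which contains a neighborhood of $0$, $\Phi(p)=p(1-p)(1+p-2\sigma\pi)/(1+p)$ and $\Phi''(p)=-2+8\sigma\pi/(1+p)^3$. So $\Phi$ is strictly convex near $0$ whenever $\sigma\pi>1/4$. The first-order condition is also not a quadratic, and the maximizer can sit at the kink $p_{\mathrm M}=(\sigma\pi-\varepsilon)/(\sigma\pi+\varepsilon)$. The paper instead shows that $\ln\Phi$ is strictly concave on each branch and that its derivative jumps down at the kink, which yields a unique maximizer in $(0,1)$.

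For part (ii) your bounding idea works, but not for the reasons you give. Proposition~\ref{prop:price-discovery-sign} compares $p_{\mathrm M}$ only with $0$ and does not by itself give monotonicity in $p_{\mathrm M}$. The agreeing paths are also \emph{not} unaffected: the staying-out manipulator still displaces the stage-3 informed trader there, scaling the within-path dispersion by $1-p_{\mathrm M}$. The direct argument runs as follows:
\begin{itemize}
\item Stages~2.1--2.2 reproduce the two ordinary rounds of the baseline benchmark.
\item Stage~3 is informative with probability $1-p_{\mathrm M}>0$, so $\E\bigl[\rvar^{(2)}_{h_2h_3}\bigr](p_{\mathrm M})<\E\bigl[\rvar^{(1)}_{h_2h_3}\bigr](0)$ for every $p_{\mathrm M}\in(0,1)$, with equality only in the limit $p_{\mathrm M}\uparrow1$.
\item Proposition~\ref{prop:price-discovery-sign} gives $\E\bigl[\rvar^{(1)}_{h_2h_3}\bigr](0)<\E\bigl[\rvar^{(1)}_{h_2h_3}\bigr](p_{\mathrm M}^{*(1)})$.
\end{itemize}
Chaining these is exactly what the paper verifies by explicit computation.
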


The two-round case isolates the horizon lever in miniature: the extra ordinary round filters out the agreeing price paths, leaving only the disagreeing price paths vulnerable to manipulation. We now solve the same logic for general \(n\).

\begin{proposition}[Equilibrium with $n>2$ ordinary spot rounds]
\label{prop:horizon-n-equilibrium}
Fix \(n > 2\), \(\sigma>0\), \(\pi\in\left(0,1\right)\), and \(\varepsilon>0\) satisfying Assumption~\ref{ass:canonical}. The $n$-round game has a unique equilibrium (Definition~\ref{def:full-equilibrium}), with a unique optimal entry probability \(p_{\mathrm M}^{*\left(n\right)} \in \left(0, 1\right)\). In this equilibrium:
\begin{enumerate}[leftmargin=2em,itemsep=0.2em]
    \item she pushes only when the order imbalance is near balance, leaving the price close to the strike:
\[
        \alpha_{h_2}^{*\left(n\right)}
        =
        \begin{cases}
            1, & n \text{ is odd and } d\left(h_2\right)=-1,\\
            1, & n \text{ is even and } d\left(h_2\right)=0,\\
            0, & \text{otherwise;}
        \end{cases}
    \]

    \item price discovery strictly improves with every added ordinary spot stage:
\[
        \E\left[\rvar_{h_2 h_3}^{\left(n\right)}\right]
        \left(p_{\mathrm M}^{*\left(n\right)}\right)
        <
        \E\left[\rvar_{h_2 h_3}^{\left(n-1\right)}\right]
        \left(p_{\mathrm M}^{*\left(n-1\right)}\right).
    \]

    \item the entry probability differs by the parity of $n$:
\[
    p_{\mathrm M}^{*\left(n\right)}
    =
    \begin{cases}
        p_{\mathrm M}^{*\left(1\right)}, & n \text{ is odd},\\
        p_{\mathrm M}^{*\left(2\right)}, & n \text{ is even},
    \end{cases}
\]
\end{enumerate}
\end{proposition}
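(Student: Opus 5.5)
The plan is to reduce the $n$-round game to the two games already solved: the baseline (Proposition~\ref{prop:canonical-equilibrium}) when $n$ is odd, and the two-round game (Proposition~\ref{prop:horizon-two-equilibrium}) when $n$ is even. The reduction works state by state on the order imbalance. First I will compute the posteriors. Each ordinary round multiplies the likelihood ratio by $(1+\pi)/(1-\pi)$ after a buy and by its inverse after a sell, so
\[
\ppost_d=\frac{(1+\pi)^d}{(1+\pi)^d+(1-\pi)^d},\qquad \bar V_d=\sigma\,\frac{(1+\pi)^d-(1-\pi)^d}{(1+\pi)^d+(1-\pi)^d}.
\]
This confirms that $d$ is sufficient. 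It also gives $\bar V_0=0$, $\bar V_{\pm1}=\pm\sigma\pi$, and $|\bar V_d|\ge\bar V_2=2\sigma\pi/(1+\pi^2)>\sigma\pi>\varepsilon$ for $|d|\ge2$, using Assumption~\ref{ass:canonical}.

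\textbf{Stage 3.} The continuation game after $h_2$ depends only on the inherited posterior $\ppost_d$, since the stage-3 order is the same Glosten--Milgrom draw with the manipulator pooled in. The quotes of Proposition~\ref{prop:stage3-closed-form-quotes} therefore carry over with $\ppost_{h_2}$ replaced by $\ppost_d$. Any break-even ask lies between the no-trade posterior $\bar V_d$ and the informed-buy posterior $\bar V_{d+1}$.
\begin{itemize}[leftmargin=2em,itemsep=0.15em]
    \item If $d\ge1$, then $\bar V_d\ge\sigma\pi>\varepsilon$, so $H(\bar V_d/\varepsilon)=1$. Pushing has no gain and a nonnegative cost, so $\alpha^*=0$.
    \item If $d\le-3$, then $A\le\bar V_{-2}<-\varepsilon$, so $H(A/\varepsilon)=H(\bar V_d/\varepsilon)=0$. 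Pushing has zero gain and strictly positive cost, so $\alpha^*=0$.
    \item The remaining reachable states are $d=-1$ ($n$ odd) and $d=0$ ($n$ even), plus $d=-2$ when $n$ is even. At $d=-1$ and $d=0$ the subgame coincides with the baseline $h_2=\hsell$ subgame and the two-round disagreeing subgame respectively, so I invoke those results for $\alpha^*=1$. At $d=-2$ the ask is at most $\bar V_{-1}=-\sigma\pi<-\varepsilon$, which again gives $\alpha^*=0$.
\end{itemize}
This proves part (i).

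\textbf{Stage 1.} Let $P_n$ be the probability of the pivotal neighbourhood: $d\in\{-1,+1\}$ for odd $n$, and $d\in\{-2,0,+2\}$ treated as in the two-round game for even $n$. Outside it the stage-3 outcome cannot move $|d|$ into the band, so the contract settles at $1$ or $0$ for both the manipulator and the liquidity trader alike. Hence $\Pi_{\mathrm M}^+$ and $\Pi_{\mathrm L}^+$ each equal $P_n$ times the corresponding reference-game quantity plus a common term. By the symmetry of $d$ under the public prior, that common term is $(1-P_n)/2$. Consequently
\[
A_1^{\PM,\mathrm{BE}}-\tfrac12=P_n\bigl(A^{\mathrm{BE}}_{\mathrm{ref}}-\tfrac12\bigr),
\]
so the floor binds exactly when it does in the reference game, and $A_1^{\PM*}=1/2$. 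For the same reason $K^+$ scales by $P_n$, and therefore $\bar S_{\mathrm M}^{+(n)}(p_{\mathrm M})=P_n\,\bar S^{+}_{\mathrm{ref}}(p_{\mathrm M})$. The maximizer, and its uniqueness, are inherited from the reference game, which gives part (iii). Uniqueness of the whole equilibrium follows from the uniqueness of each continuation.

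\textbf{Price discovery, part (ii).} This is the main obstacle, because the comparison is across parities with different entry probabilities. I would write
\[
\E\bigl[\rvar^{(n)}\bigr]=\E\bigl[\rvar^{(n)}_{0}\bigr]+P_n\,\Delta_{\mathrm{par}(n)},
\]
where the first term is the no-manipulation residual variance after $n+1$ Glosten--Milgrom rounds, and $\Delta_{\mathrm{odd}}$, $\Delta_{\mathrm{even}}$ are the excess variances computed in the baseline and two-round games. These excesses are positive by Proposition~\ref{prop:price-discovery-sign}, and they are $n$-independent because $p^*$ is parity-fixed. The first term strictly decreases by an explicit one-round information gain, computable from the posteriors above via the law of total variance. $P_n$ is an explicit binomial-type probability in $d$ that decreases in $n$. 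I would try to show that the one-round drop in $\E[\rvar_0]$ dominates $|P_n\Delta_{\mathrm{par}(n)}-P_{n-1}\Delta_{\mathrm{par}(n-1)}|$, bounding $\Delta$ by $\sigma^2$ times a function of $p_0$. If that inequality is not uniform in $(\sigma,\pi,\varepsilon)$, the fallback is to compare the $n$-game path by path with the $(n-1)$-game plus one informative round inserted.
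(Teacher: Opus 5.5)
Part (ii) has a genuine gap.

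\textbf{The decomposition is wrong.} You write $\E\bigl[\rvar^{(n)}\bigr]=\E\bigl[\rvar^{(n)}_{0}\bigr]+P_n\Delta_{\mathrm{par}(n)}$, but this omits part of the manipulation excess. An entering manipulator occupies the settlement-window slot after \emph{every} stage-2 path, not only the pivotal ones. So even where $\alpha^*=0$ she displaces the informed trader, and the within-state dispersion of the stage-3 posterior mean falls to $(1-p_{\mathrm M})$ times its benchmark value. Every non-pivotal imbalance (for odd $n$, all $|d|\ge3$) therefore adds a strictly positive excess that your formula leaves out. The total excess is not $P_n$ times an $n$-independent constant.

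\textbf{The domination inequality is only conjectured.} You give no argument that the one-round drop beats the change in excess, and you concede it may not be uniform. The missing idea is to route both sides through the benchmark $\E\bigl[\rvar^{(n-1)}_{h_2h_3}\bigr](0)$.
\begin{enumerate}
\item With $p_{\mathrm M}=0$, the $(n-1)$-round game is just $n$ ordinary rounds. So this benchmark equals $\sum_k q^{(n)}_{d=k}\rvar_{d=k}$, the $n$-round game's residual variance \emph{before} its settlement window.
\item For any $p_{\mathrm M}\in(0,1)$, the $n$-round settlement window still strictly lowers it. A stage-3 sell, which the manipulator never sends, has probability $(1-p_{\mathrm M})\bigl[\tfrac{1-\pi}{2}+\pi(1-\ppost_{d=k})\bigr]>0$. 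It moves the posterior mean to $B_{d=k}=\bar V_{d=k-1}<\bar V_{d=k}$. The law of total variance then gives $\E\bigl[\rvar^{(n)}_{h_2h_3}\bigr](p^{(n)}_{\mathrm M})<\E\bigl[\rvar^{(n-1)}_{h_2h_3}\bigr](0)$.
\item The proof of Proposition~\ref{prop:price-discovery-sign} conditions only on the stage-2 posterior, so it applies to the $(n-1)$-round game. It gives $\E\bigl[\rvar^{(n-1)}_{h_2h_3}\bigr](0)<\E\bigl[\rvar^{(n-1)}_{h_2h_3}\bigr](p^{(n-1)}_{\mathrm M})$.
\end{enumerate}
This chain holds for all entry probabilities, so parity, the values of $p^*_{\mathrm M}$, and any bound on $\Delta$ are irrelevant. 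In your own terms: the $n$-round excess is strictly smaller than the one-round gain itself, and the $(n-1)$-round excess is positive. That is exactly the signed domination you needed.

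\textbf{Parts (i) and (iii)} follow the paper's route: imbalance sufficiency, state-by-state continuation, and a profit objective that is a positive multiple of the reference one. Two details are off for even $n$, though neither changes the conclusion.
\begin{itemize}
\item The states $d=\pm2$ are not pivotal. Every stage-3 outcome leaves the posterior mean at or above $\bar V_{1}=\sigma\pi>\varepsilon$ when $d=2$, and at or below $\bar V_{-1}<-\varepsilon$ when $d=-2$. So both types settle at $1$ or $0$ there. For $n\ge4$ the mass of $\{-2,0,2\}$ is not split in the two-round proportions, so your identity (payoff $=(1-P_n)/2+P_n\times$reference payoff) fails. The correct statement is that $\Pi^{+}_{\mathrm M}-\tfrac12$ and $\Pi^{+}_{\mathrm L}-\tfrac12$ scale by $q^{(n)}_{d=0}/q^{(2)}_{d=0}$.
\item The floor does not bind for even $n$. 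As in the two-round game, $A_1^{\PM,\mathrm{BE}}>\tfrac12$, so $A_1^{\PM*}=A_1^{\PM,\mathrm{BE}}$, not $\tfrac12$.
\end{itemize}
The quantities $\Pi^{+}_{\mathrm M}-\tfrac12$, $K^{+}$ and $A_1^{\PM*}-\tfrac12$ all scale by that same positive factor. Hence the argmax is inherited and part (iii) stands.

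One small point on part (i): $\alpha^*_{d=-1}=1$ holds only for $p_{\mathrm M}\le p_0$. It is the fact $p^{*(1)}_{\mathrm M}<p_0$ that delivers it in equilibrium.
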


Two things matter. First, however long the horizon, a push pays on the price paths that leave the price closest to the strike but still on the wrong side of it: the balanced ($d=0$) price paths when $n$ is even and the one-step-below ($d=-1$) price paths when $n$ is odd. Second, price discovery improves with every round: each additional round gives the market one more reading of the order flow, so the settlement price tracks the asset's value more closely, and the expected residual variance \(\E[\rvar_{h_2 h_3}^{\left(n\right)}]\) introduced in Section~\ref{sec:price-discovery} falls. The entry probability itself does not settle down---it oscillates with the parity of the horizon, but that oscillation does not govern the contract's safety. What governs safety is how often a push can be pivotal at all.

That a push pays only on near-balanced price paths is why manipulation disappears as the horizon grows. The imbalance follows a random walk, moving $\pm 1$ each round (Figure~\ref{fig:imbalance-walk}). A push is pivotal only if the imbalance ends within one step of balance, but its dispersion grows with the horizon---the unconditional standard deviation of the imbalance is of order $n$---while the width of the pivotal band stays fixed. The probability of ending inside the band therefore falls to zero, and the manipulator's expected profit with it.

\begin{figure}[ht]
\centering
\begin{tikzpicture}[font=\footnotesize,>=stealth,x=0.5cm,y=0.42cm]
  \begin{scope}
    \fill[gray!18] (0,-1) rectangle (5,1);
    \draw[->] (0,-4.4) -- (0,4.4) node[above]{$d$};
    \draw[->] (0,0) -- (5.6,0) node[right]{stage};
    \draw[very thick,gray!35!black] (0,0)--(1,1)--(2,0)--(3,1)--(4,0);
    \filldraw (4,0) circle (2.2pt);
    \node[anchor=north,align=center,text width=4.2cm] at (2.5,-5.0)
      {\textbf{short horizon}: the walk ends in the band ($\left|d\right|\le 1$)---a push can still change the settlement};
  \end{scope}
  \begin{scope}[xshift=8.2cm]
    \fill[gray!18] (0,-1) rectangle (11,1);
    \draw[->] (0,-4.4) -- (0,4.4) node[above]{$d$};
    \draw[->] (0,0) -- (11.6,0) node[right]{stage};
    \draw[very thick,gray!35!black] (0,0)--(1,1)--(2,2)--(3,1)--(4,2)--(5,3)--(6,2)--(7,3)--(8,4)--(9,3)--(10,4);
    \filldraw (10,4) circle (2.2pt);
    \node[anchor=north,align=center,text width=5.2cm] at (5.5,-5.0)
      {\textbf{long horizon}: the walk is likely to end far from balance, outside the band, so a push no longer affects the settlement};
  \end{scope}
\end{tikzpicture}
\caption{The pre-settlement order imbalance $d$ as a $\pm 1$ random walk over the ordinary
stages. The shaded strip is the pivotal band: a settlement-window push can change the settlement
only when the walk ends within one step of balance ($\left|d\right|\le 1$)---exactly $d=0$ when $n$
is even and $d=-1$ when $n$ is odd. As the horizon $n$ grows the walk spreads out (the imbalance has
standard deviation of order $n$) while the band stays fixed, so the chance of ending inside
it---the pivotal probability $m^{\left(n\right)}$---falls to zero.}
\label{fig:imbalance-walk}
\end{figure}

\begin{proposition}[Manipulation vanishes as the horizon grows]
\label{cor:horizon-n-manipulation-prone-mass}
As the horizon grows, manipulation becomes both rarer and worthless. Write \(m^{\left(n\right)}\) for the chance a push is still pivotal, \(S_{\mathrm M}^{+\left(n\right)}\) for the manipulator's profit conditional on entry, and \(\bar S_{\mathrm M}^{+\left(n\right)}\) for her unconditional expected profit at the optimal entry. All three vanish:
\[
    \lim_{n \uparrow\infty} m^{\left(n\right)}
    =
    0,
    \qquad
    \lim_{n \uparrow\infty} S_{\mathrm M}^{+\left(n\right)}
    \left(p_{\mathrm M}^{*\left(n\right)}\right)
    =
    0,
    \qquad
    \lim_{n \uparrow\infty} \bar S_{\mathrm M}^{+\left(n\right)}
    \left(p_{\mathrm M}^{*\left(n\right)}\right)
    =
    0.
\]
\end{proposition}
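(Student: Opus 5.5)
We prove the statement by first showing that the pivotal mass $m^{(n)}$ vanishes, and then bounding the manipulator's conditional profit by a multiple of it.

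\textbf{Step 1: the pivotal mass.} By Proposition~\ref{prop:horizon-n-equilibrium}, the push is active only on paths with imbalance $d(h_2)=0$ when $n$ is even and $d(h_2)=-1$ when $n$ is odd. So $m^{(n)}=\Prb\left(d(h_2)=d_n\right)$, with $d_n\in\{0,-1\}$ given by parity. Conditional on $V$, the ordinary orders are i.i.d.\ buys with probability $(1+\pi)/2$ when $V=+\sigma$ and $(1-\pi)/2$ when $V=-\sigma$. Hence $d(h_2)$ is a biased $\pm1$ random walk conditional on $V$, and unconditionally a symmetric mixture of the two.

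\textbf{Step 2: bounding $m^{(n)}$.} Since the conditional drift is nonzero, the mass at a fixed point decays geometrically. The binomial point probability at the middle index is at most $\binom{n}{\lfloor n/2\rfloor}\bigl(\tfrac{1+\pi}{2}\bigr)^{n/2}\bigl(\tfrac{1-\pi}{2}\bigr)^{n/2}$, up to a bounded factor. This is at most $(1-\pi^2)^{n/2}\cdot O(1)$, which tends to $0$ because $\pi\in(0,1)$. If one prefers to avoid the drift, the local CLT bound $\max_k\Prb(d=k)=O(n^{-1/2})$ suffices. Either way, $m^{(n)}\to0$.

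\textbf{Step 3: the conditional profit.} Decompose $S_{\mathrm M}^{+(n)}=\Pi_{\mathrm M}^{+(n)}-K^{+(n)}-A_1^{\PM*}$. Off the pivotal paths the manipulator does not push, so on those paths her contract payoff equals the liquidity trader's no-push payoff. Therefore
\[
\Pi_{\mathrm M}^{+(n)}-K^{+(n)}\le \Pi_{\mathrm L}^{+(n)}+ m^{(n)}\cdot C,
\]
where $C\le 2$, since $H$ and the spot cost both lie in bounded ranges ($|A-\bar V|\le 2\sigma$, but since $\Gamma\ge0$ when she pushes, the net gain per path is at most $1$). It remains to show that the liquidity trader's expected payoff tends to $1/2$. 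On non-pivotal paths the posterior is already at least one imbalance step from the strike. By Assumption~\ref{ass:canonical}, which makes one step already exceed $\varepsilon$ as in the $n=2$ case, $H$ is $0$ or $1$ there. By the symmetry of the walk, paths above and below the strike are equally likely except for the pivotal asymmetry, so $|\Pi_{\mathrm L}^{+(n)}-1/2|\le m^{(n)}$.

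\textbf{Step 4: the ask and the limits.} Because $A_1^{\PM*}\ge 1/2$ always and the break-even ask is a convex combination of $\Pi_{\mathrm M}^{+(n)}$ and $\Pi_{\mathrm L}^{+(n)}$, both within $O(m^{(n)})$ of $1/2$, we get $A_1^{\PM*}\to1/2$. Hence $|S_{\mathrm M}^{+(n)}|=O(m^{(n)})\to0$, uniformly in $p_{\mathrm M}$ and in particular at $p_{\mathrm M}^{*(n)}$. Finally, $\bar S_{\mathrm M}^{+(n)}=p_{\mathrm M}^{*(n)}S_{\mathrm M}^{+(n)}$ with $p_{\mathrm M}^{*(n)}\in(0,1)$, so it also vanishes.

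\textbf{Main obstacle.} The hard step is the bookkeeping in Step 3. On non-pivotal paths, the stage-3 quotes still depend on $p_{\mathrm M}$ through the displaced informed trader. The claim that non-pivotal payoffs are exactly $0$ or $1$ for both types requires checking that a one-step move from imbalance at least $1$ keeps the posterior mean outside $(-\varepsilon,\varepsilon)$. I would verify this from the posterior formula $\rho$ as a function of $d$ together with Assumption~\ref{ass:canonical}. If it fails on a few boundary paths, those paths are still a fixed finite set of imbalance values, whose total mass is $O(m^{(n)})$ by the same argument as in Step 2, and the bound goes through.
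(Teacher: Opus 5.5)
Your proof is correct. For \(m^{(n)}\to0\) it is the paper's argument: the paper bounds \(m^{(n)}\le(1-\pi^2)^{\lfloor n/2\rfloor}\) using \(\binom{n}{\lfloor n/2\rfloor}\le 2^n\) and applies the squeeze theorem.

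For the two profit limits you take a different route. The paper does no estimation there. The closed forms already derived for Proposition~\ref{prop:horizon-n-equilibrium} show that \(\bar S_{\mathrm M}^{+(n)}(p_{\mathrm M})\) is exactly \(2q_{d=-1}\) times the baseline objective for odd \(n\), and \(\tfrac{q_{d=0}}{2}\Phi(p_{\mathrm M})\) for even \(n\). The optimal profits are therefore exact rescalings, \(S_{\mathrm M}^{+(n)}(p_{\mathrm M}^{*(n)})=\tfrac{m^{(n)}}{m^{(1)}}S_{\mathrm M}^{+(1)}(p_{\mathrm M}^{*(1)})\) or \(\tfrac{m^{(n)}}{m^{(2)}}S_{\mathrm M}^{+(2)}(p_{\mathrm M}^{*(2)})\), and they vanish with \(m^{(n)}\).

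You instead prove \(|S_{\mathrm M}^{+(n)}(p_{\mathrm M})|=O(m^{(n)})\) uniformly in \(p_{\mathrm M}\). You get this from the symmetry of the imbalance walk plus bounded payoff discrepancies on finitely many near-strike imbalances. The paper's identity is sharper, giving the exact constant and rate, but it rests on the parity-by-parity closed forms and on identifying \(p_{\mathrm M}^{*(n)}\). Your bound needs neither. It would also extend to variants where several imbalances are pivotal and the exact scaling breaks.

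Three corrections to the bookkeeping in your Step 3.

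\begin{itemize}
\item The claim that one step from a non-pivotal imbalance stays outside \((-\varepsilon,\varepsilon)\) holds exactly for even \(n\), since \(B_{d=2}=\sigma\pi>\varepsilon\) and \(A^{*}_{d=-2}=-\sigma\pi<-\varepsilon\). It fails at \(k=1\) for odd \(n\): a stage-3 sell there lands at \(B_{d=1}=\bar V_{d=0}=0\), so the liquidity trader collects \((3+\pi^2)/4\) while the manipulator collects \(1\). Your fallback covers this because \(q_{d=1}=q_{d=-1}=m^{(n)}\), but it is the actual case, not a contingency.
\item The \(p_{\mathrm M}\)-dependence of non-pivotal quotes that you flag as the main obstacle does not arise. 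With \(\alpha=0\) the factors \(1-p_{\mathrm M}\) cancel, so \(A_{d=k}(0;p_{\mathrm M})=\bar V_{d=k+1}\).
\item The bound of \(1\) on the per-path net gain from pushing comes from \(0\le H\le1\) and \(A\ge\bar V\), not from \(\Gamma\ge0\). What \(\Gamma\ge0\) actually gives is the lower bound: \(\Pi_{\mathrm M}^{+}-K^{+}\) is at least the no-push payoff.
\end{itemize}
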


The horizon is therefore a design lever for market safety. Lengthening a contract's horizon does not make manipulation impossible---a large enough push always moves the settlement price---but it makes a fixed-size push almost never pivotal. A contract is safe not when a manipulator cannot act, but when enough price discovery runs before settlement that a single trader is unlikely to affect the outcome.

A robust settlement reference is a second, equally powerful lever. Our contract settles on the spot price at a single instant---the close---which is exactly what a final-seconds push can move. A reference that aggregates instead---averaged over a window, or struck in a closing auction that pools many participants' orders---forces the manipulator to move the whole aggregate rather than one print; a time-weighted oracle, for one, raises the manipulation cost roughly in proportion to the averaging window \citep{aspembitova2022oracle}. The two levers address manipulation from opposite sides: a longer horizon adds price discovery \emph{before} settlement, while an aggregating reference spreads the settlement price \emph{across} orders or time. The binary Nasdaq-100 index options the SEC approved in April 2026 (SR-MRX-2026-05) sit on the safe side of both: their expirations run from weekly to multi-year, and they settle on the index value struck in the Nasdaq Closing Cross, the exchange's closing auction, which pools the day's closing interest into one print and is correspondingly hard to push.\footnote{Nasdaq MRX, LLC, \emph{Order Granting Accelerated Approval of a Proposed Rule Change to Adopt New Options Rule~3B to List and Trade Binary Broad-Based Index Options}, Exchange Act Release No.~34-105342, File No.~SR-MRX-2026-05 (Apr.~30, 2026). The settlement value is the Nasdaq-100 Index as reported at the conclusion of the Nasdaq Closing Cross (Nasdaq Equity~4, Rule~4757); the contracts are cash-settled, European-style binary options listed in weekly, monthly, quarterly, end-of-month, and long-term (12--60 month) expirations.} Our five-minute, single-instant contract sits at the opposite corner.

\section{Empirics}
\label{sec:empirics}

This section takes the model's predictions to the data. We first state what settlement manipulation should leave in the data according to the theory (Section~\ref{sec:predictions}). We then describe the sample (Section~\ref{sec:data}) and present the empirical evidence in two parts: Section~\ref{sec:headline} documents a near-settlement \emph{push} in the spot market, defined as a burst of one-sided trading in the final seconds before a contract settles, and shows it bears the predicted signature: it appears only for the five-minute contract, concentrates in the cycles the prediction market has left undecided, causes price reversals within seconds of settlement, and improves spot liquidity; Section~\ref{sec:wallet_cohorts} then shows the push is deliberate manipulation: it overturns resolutions even in near-certain cycles, transfers wealth from retail holders to a small cohort of manipulators, and cannot be explained solely by prediction-market makers' hedging.

\subsection{Model Predictions}
\label{sec:predictions}

The first set of predictions concerns the footprint manipulation leaves in the market. If settlement manipulation exists, it must appear as a directional push in the final seconds before the close: a burst of one-sided order flow on the spot venue the oracle samples, large enough to move the spot price (Proposition~\ref{prop:canonical-equilibrium}). Because that flow carries no information about fundamentals, its price impact should be transitory: the move should revert once settlement has passed, because the settlement-window price is less informative (Proposition~\ref{prop:price-discovery-sign}), in contrast to the permanent impact of informed trading.

The same lack of information content shapes liquidity on the spot venue. The manipulator's uninformed order pools with informed flow and lowers the adverse selection any individual trade carries, so spot market makers supply \emph{more} liquidity into the close even as activity surges (Corollary~\ref{cor:stage3-pm-buy-quote-monotonicity}). Finally, the entire footprint should be governed by whether a push can change anything: it concentrates in cycles the prediction market has left undecided near the close, and in the shortest-horizon contract, where the asset's own price discovery has had the least time to unfold and a fixed-size push is most likely to be pivotal (Proposition~\ref{prop:horizon-n-equilibrium} and Proposition~\ref{cor:horizon-n-manipulation-prone-mass}).

The second prediction concerns who pays. A successful push can overturn the resolution the market had priced, even in near-certain cycles we later identify in the data (Proposition~\ref{prop:canonical-equilibrium}). The contract is then mispriced relative to the manipulated resolution, and the loss falls on whoever holds the losing side at the close: liquidity traders, as predicted in the model. Manipulator profits are, to first order, liquidity traders' losses, according to the model's accounting identity (Proposition~\ref{prop:pm-maker-rent}).

\subsection{Data}
\label{sec:data}
\label{app:data_quality}

The analysis combines three data sources: on-chain Polymarket trade records for the crypto up/down contracts, off-chain Polymarket order-book snapshots, and trades and quotes for BTC spot on Binance. Binance is the reference spot market throughout: it carries the largest BTC spot trading volume and arguably has the largest impact on the Chainlink reference price against which the Polymarket contracts resolve, so it is the natural venue on which a settlement manipulator would trade. Appendix~\ref{app:chainlink_validation} validates this choice directly, showing that the Binance midquote tracks the Chainlink resolution price to within a few basis points.

\begin{itemize}
    \item \emph{Polymarket on-chain trades.} Every order fill executed on Polymarket's exchange contracts settles on the Polygon blockchain, from which we extract the crypto up/down markets: over 60 million fills on the BTC five-minute contract alone. Each fill carries the block timestamp, the wallet identifiers on both sides, the asset on each side (one side is USDC, the other an Up- or Down-token), and the amounts deposited and received, which together recover the trade's price (in dollars per share, between 0 and 1), size, direction, and aggressor.\footnote{The on-chain log produces multiple records for each economic Polymarket trade: one aggregate record summarizing the aggressor's fill, plus one record for each resting order the aggressor consumes. We keep the single aggregate record per trade, so volume is never double-counted; on that record, the aggressor's deposited and received assets jointly identify both the contract side traded and the dollars exchanged. For wallet-level position and PnL tracking (Section~\ref{subsec:wallet_cohorts_id}), we instead sum the resting-side counterparty across both record classes, so each wallet's net position is recovered exactly once.} We link each fill to its market's cycle, strike, and realized resolution. Volume is measured two ways: \emph{share volume}, the number of shares changing hands (each share pays \$1 if its side wins, so the share count is the natural unit of exposure), and \emph{dollar volume}, the cash transferred. We sign each fill by the aggressor's direction, expressed in the Up token (an aggressive Up buy or Down sell is bullish-Up), so net order flow over a window measures directional pressure on the contract.
    \item \emph{Polymarket order book.} Polymarket's market-data feed delivers both full limit-order-book snapshots (the prevailing best bid and ask alongside several dozen price levels of resting depth on each side) and single-level updates whenever an existing level's resting size changes, for both tokens of each market. The feed covers roughly the final seven weeks of the sample. From it we measure the Up-token midquote, the bid--ask spread, and resting depth near the mid.
    \item \emph{Binance BTC spot trades and order book.} The full tick-by-tick trade records and order-book snapshots for the reference asset. Each trade carries price, size, and aggressor side, from which we build signed spot order flow; each book snapshot carries the five best bid and ask levels with their resting quantities, timestamped at 100-millisecond precision, from which we build time-weighted spreads and depth. Both are aggregated to ten-second buckets aligned with the five-minute contract clock.
\end{itemize}

\paragraph{Sample window and rollout periods.} The sample covers July 1, 2025 through April 8, 2026, starting about three months before the launch of the oracle-resolved Polymarket up/down contracts we study and ending about two months after the five-minute contract launched. We divide the sample into three periods based on the contract rollout schedule: the pre-launch baseline (P1), July 1 through October 8, 2025 (100 days); the intermediate period with the fifteen-minute and four-hour contracts live (P2), October 9, 2025 through February 11, 2026 (126 days); and the post-launch period with the five-minute contract also live (P3), February 12 through April 8, 2026 (56 days).\footnote{Four narrow windows within the post-launch period require care: in three (late February 2026) the oracle-resolved Polymarket crypto contracts stop accepting trades while the rest of the platform trades normally (consistent with brief oracle-integration interruptions), and in a fourth (March 2026) the platform halts entirely. Cycles in these windows carry no on-chain Polymarket trades, so they drop out of every analysis that requires Polymarket data; the Binance-side panels retain them. On the spot side, we exclude October 10, 2025 (a market-wide flash crash) from the intra-cycle Binance figures, because a transient hundred-dollar spread during that episode would dominate a time-weighted average.}

\paragraph{Summary statistics.} Table~\ref{tab:data_summary} reports summary statistics for the per-cycle trading activity over the five-minute contract's live window.\footnote{Each five-minute contract is listed for trading several hours before its live resolution window opens, but about 93\% of five-minute Bitcoin trades fall inside the live window (the 300 seconds between open and close), with the remainder split roughly evenly between pre-window positioning and the brief post-close interval before settlement. All cycle-level aggregates restrict to the live window.}

\begin{table}[!t]
    \centering
    \footnotesize
    \caption{\textbf{Per-cycle summary statistics, BTC five-minute contract.} This table reports per-cycle trading activity on the BTC five-minute contract and the underlying Binance spot market, one observation per cycle in the contract's live window (February 12 -- April 8, 2026); $N$ is the number of cycles and the remaining columns report the cross-cycle distribution. The \emph{Polymarket} rows aggregate the contract's own trades over the 300-second live window: \emph{Trades} is the per-cycle taker trade count, \emph{Volume} the number of shares traded (each carrying \$1 of face value), and \emph{Distinct traders} the number of unique wallets with a fill. The \emph{Binance} rows aggregate the reference asset's spot tape over the same cycles: \emph{Trades} and \emph{Volume} (dollar volume) are per-cycle sums, and \emph{Absolute return} is the absolute open-to-close change in the log midquote over the five-minute window (basis points).}
    \label{tab:data_summary}
    \begin{tabular}{lrrrrrrrr}
    \toprule
     & $N$ & Mean & SD & Min & P25 & Median & P75 & Max \\
    \midrule
    \multicolumn{9}{l}{\textit{Polymarket BTC 5-minute contract}} \\
    \quad Trades & 16{,}073 & 3{,}912 & 1{,}228 & 2 & 3{,}084 & 3{,}811 & 4{,}642 & 12{,}622 \\
    \quad Volume (1000 shares) & 16{,}073 & 108.9 & 36.4 & 0.0 & 84.3 & 104.9 & 129.8 & 323.9 \\
    \quad Distinct traders & 16{,}073 & 1{,}689 & 374 & 4 & 1{,}467 & 1{,}685 & 1{,}918 & 3{,}332 \\
    \addlinespace
    \multicolumn{9}{l}{\textit{Binance BTC spot}} \\
    \quad Trades & 16{,}128 & 15{,}793 & 14{,}039 & 765 & 7{,}131 & 11{,}676 & 19{,}290 & 223{,}213 \\
    \quad Volume (\$M) & 16{,}128 & 5.08 & 7.01 & 0.17 & 1.79 & 3.15 & 5.86 & 284.69 \\
    \quad Absolute return (bps) & 16{,}128 & 10.16 & 11.81 & 0.00 & 2.88 & 6.58 & 13.51 & 376.29 \\
    \bottomrule
\end{tabular}

\end{table}

Two observations stand out. First, the prediction market is small relative to its underlying spot: the median cycle trades about 105 thousand \$1-face shares on Polymarket against \$3.15 million of Binance volume. At first glance this asymmetry cuts against a manipulation story: if the prize pool is small relative to the market one must move, why bother? Two considerations restore the incentive. The gap is a factor of thirty, not several orders of magnitude, so the pool at stake is economically meaningful. More importantly, the five-minute-volume comparison is not the relevant one: the manipulator does not need to match five minutes of spot trading, only to deliver the final push, a burst of one-sided trading in the last seconds that carries the close across the strike. Second, ordinary five-minute spot moves are modest: the median absolute open-to-close return is 6.6 bps, with the inter-quartile range from 2.9 to 13.5 bps. A near-settlement spot push of a few basis points can therefore be large enough to clear the strike in a near-the-money cycle.

\subsection{Does manipulation leave a footprint?}
\label{sec:headline}

Section~\ref{sec:predictions} tells us what to look for: if manipulation exists, the close of each cycle should carry a footprint with four parts: a burst of directional \emph{order flow} (the push itself), an outsized \emph{absolute return} (the push moves the price), improved spot \emph{depth} (the push is uninformed, so spot makers supply more liquidity, not less), and a post-settlement \emph{reversal} (for the same reason, the price impact does not last). We look for this footprint in three steps. We first show it emerges with the launch of the five-minute contract. A launch coincidence alone cannot establish manipulation, so we then turn to the identification: within the post-launch period, the footprint concentrates in the cycles where a push pays off, the cycles still undecided near the close. We close with the model's horizon prediction: the footprint should weaken as the contract horizon lengthens, and indeed the fifteen-minute contract, live earlier and settling on the same rule, exhibits only a strongly attenuated version of it.

\subsubsection{The footprint emerges with the five-minute launch}
\label{subsec:launch_impact}
\label{subsec:activity_did}
\label{app:method}

\paragraph{Near-settlement activity spike.}

We divide each five-minute cycle into 30 ten-second bins, indexed 0 through 29, and compute within each bin three of the footprint's four parts: absolute order flow, absolute return, and depth within \$0.10 of the midquote. For each bin we then average across all cycles in a period and compare the pre-launch baseline (P1) against the five-minute-live period (P3); the intermediate period (P2), in which only the fifteen-minute and four-hour contracts trade, returns when we study the fifteen-minute contract on its own (Section~\ref{sec:fifteen_min}).

\begin{figure}[t]
    \centering
    \includegraphics[width=\textwidth]{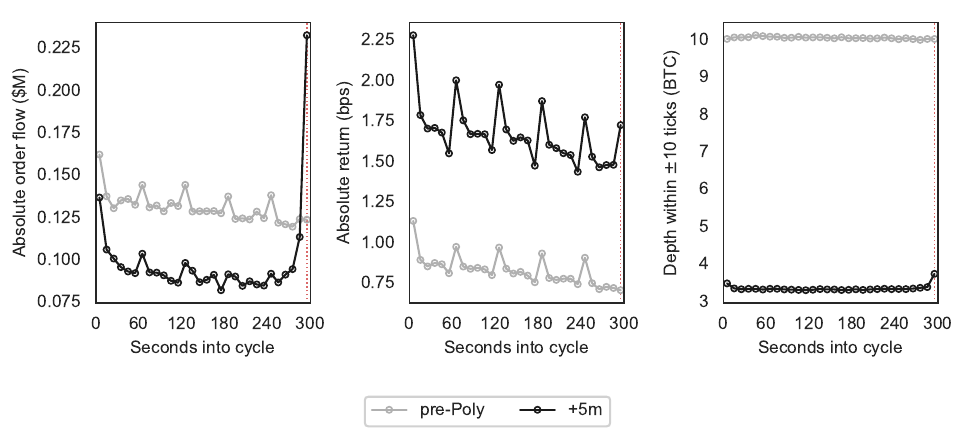}
    \caption{\textbf{Binance BTC intra-cycle metrics per 5-min cycle, by period.} This figure plots the intra-cycle profile of three Binance BTC metrics, by period. Each panel folds every 10s Binance BTC bucket onto its position in the rolling 5-minute clock cycle and overlays, for each period, the metric pooled over all 5-minute cycles in that period (see Section~\ref{app:method}). \emph{Left}: absolute order flow per cycle (\$M), the per-cycle $|$net signed taker dollar flow$|$ in the bin. \emph{Center}: absolute 10s return ($|\Delta \log \mathrm{mid}|$, bps). \emph{Right}: time-weighted depth within $\pm$10 ticks of the mid, cumulative resting size (bid + ask) at price levels within 10 ticks ($\$0.10$) of the mid. The dotted vertical line marks the final 10s before close. Two periods are overlaid: pre-Polymarket (P1, Jul--Oct 2025) and the five-minute-live period (P3, Feb 12 -- Apr 8, 2026); the intermediate period with only the fifteen-minute and four-hour contracts live (P2) is omitted.}
    \label{fig:byperiod}
\end{figure}

Figure~\ref{fig:byperiod} plots the resulting intra-cycle profiles. The pattern is striking. In bin 29, the last ten seconds of the cycle, all three metrics spike once the five-minute contract is live (the $+5$m series); in the pre-launch baseline, bin 29 looks no different from the rest of the cycle (we call the first 25 bins the \emph{body}). The jump is confined to the final bin, so the contrast at the close is not a broad shift in Binance trading.\footnote{There are minor spikes at the minute-ends. They are likely driven by trading algorithms that react to the minute mark.}

We formalize the visual pattern with a pre/post comparison of the bin-29 activity, comparing the pre-launch baseline (P1) against the period in which the five-minute contract is live (P3):
\[
    \log y_{c,b} \;=\; \gamma_b + \alpha_d + \mu_h + \delta_{P3}\,\mathbf{1}\{b{=}29\}{\cdot}\mathbf{1}\{P3\} + \varepsilon_{c,b},
\]
where $\log y_{c,b}$ is the log of the metric in cycle $c$ and bin $b$, restricted to the body bins and the final-10s bin, $b \in \{0, \ldots, 24, 29\}$, so the comparison is clean between the near-settlement activity and the body. $\alpha_d$ and $\mu_h$ are date and hour-of-day fixed effects; $\gamma_b$ is a bin fixed effect that absorbs the pre-existing intra-cycle profile. The specification pools all cycles: we do not differentiate overlapping from non-overlapping closes here; the overlap split returns in the fifteen-minute analysis (Section~\ref{sec:fifteen_min}) and, interacted with moneyness, in the within-P3 refinements of Appendix~\ref{app:boundary_atm_zooms}. The coefficient $\delta_{P3}$ is the average bin-29-versus-body activity increase once the five-minute contract is live, relative to the pre-launch baseline.

Table~\ref{tab:reg_uncond} reports the estimates. Once the five-minute contract is live, the typical cycle's bin-29 absolute order flow is 50\% higher relative to the body than before the launch ($\hat{\delta}_{P3} = 0.406$ log points), its absolute return is 15\% higher ($0.140$), and its near-mid depth 5\% higher ($0.048$), all significant at the 1\% level. Three parts of the footprint thus appear at the close exactly when the five-minute contract begins to settle there.

\begin{table}[!t]
    \centering
    \footnotesize
    \caption{\textbf{Near-settlement activity: pre-Polymarket vs five-minute period.} This table reports cycle-level OLS estimates on the (cycle, bin) panel restricted to bins $\{0\text{--}24, 29\}$ and to the pre-Polymarket period (P1, Jul--Oct 2025) and the five-minute-live period (P3, Feb 12 -- Apr 8, 2026); the intermediate period in which only the fifteen-minute and four-hour contracts are live (P2) is excluded. The dependent variable is $\log y_{c,b}$, the log of the metric named in the column header: \emph{Order flow}, per-cycle absolute net signed taker dollar flow ($|\sum_i d_i\,p_i a_i|$, $d=+1$ buy / $-1$ sell); \emph{Abs.\ return}, absolute 10\,s log return ($|\Delta \log \mathrm{mid}|$, bps); \emph{Depth}, time-weighted resting size within $\pm$10 ticks of the mid. $\delta_{P3}$ is the coefficient on $\mathbf{1}\{b{=}29\}\cdot\mathbf{1}\{P3\}$: the bin-29-versus-body activity increase once the five-minute contract is live, relative to the pre-Polymarket baseline. Fixed effects are date, hour-of-day, and bin; the specification pools all cycles and does not differentiate overlap from non-overlap closes. Coefficients are log differences. One observation per (cycle, bin) cell. HC1 standard errors in parentheses are clustered at the cycle level. Significance: $^{*}\,p<0.10$, $^{**}\,p<0.05$, $^{***}\,p<0.01$.}
    \label{tab:reg_uncond}
    \begin{tabular}{l S[table-format=-1.3] S[table-format=-1.3] S[table-format=-1.3]}
    \toprule
     & {(1)} & {(2)} & {(3)} \\
     & {Order flow} & {Abs. return} & {Depth} \\
    \midrule
    $\delta_{P3}$ & 0.406{\sym{***}} & 0.140{\sym{***}} & 0.048{\sym{***}} \\
     & (0.019) & (0.018) & (0.003) \\
    \midrule
    Date FE & {Yes} & {Yes} & {Yes} \\
    Hour-of-day FE & {Yes} & {Yes} & {Yes} \\
    Bin FE & {Yes} & {Yes} & {Yes} \\
    Observations & {1{,}168{,}118} & {738{,}313} & {1{,}168{,}114} \\
    Clusters (cycle) & {44{,}928} & {44{,}696} & {44{,}928} \\
    \bottomrule
\end{tabular}

\end{table}

\paragraph{Post-settlement reversal.}
\label{sec:reversal}
\label{subsec:reversal_did}

The fourth part of the footprint is the reversal: manipulation flow is uninformative, so its price impact should not last. We test this by regressing each lagged ten-second return around the settlement on the final-10s return $r_{29}$ at the cycle level.

Separately at each lag $j$, we estimate
\[
    r_{j,c} \;=\; \alpha_d + \mu_h + \beta\,r_{29,c} + \delta_{P3}\,r_{29,c}\,\mathbf{1}\{P3\} + \varepsilon_{j,c},
\]
on the pre-Polymarket (P1) and five-minute-live (P3) cycles. The dependent variable $r_{j,c}$ is the lag-$j$ ten-second return of cycle $c$ measured relative to its close: positive lags map to bins 0--5 of the next cycle (post-settlement), negative lags to bins 25--28 of the same cycle (the run-up into the close), and $j=0$ is the bin-29 reference itself, omitted because its slope is $1$ trivially. The regressor $r_{29,c}$ is the final-10s (bin-29) return. $\beta$ is the pre-Polymarket reversal slope and $\delta_{P3}$ the change in that slope once the five-minute contract is live; $\alpha_d$ and $\mu_h$ are date and hour-of-day fixed effects, and standard errors are clustered by date. A reverting price is a negative slope; the key lag is $j = +1$, the first ten seconds after the close, where a transitory manipulation push predicts reversal.

\begin{figure}[t]
    \centering
    \includegraphics[width=0.57\textwidth]{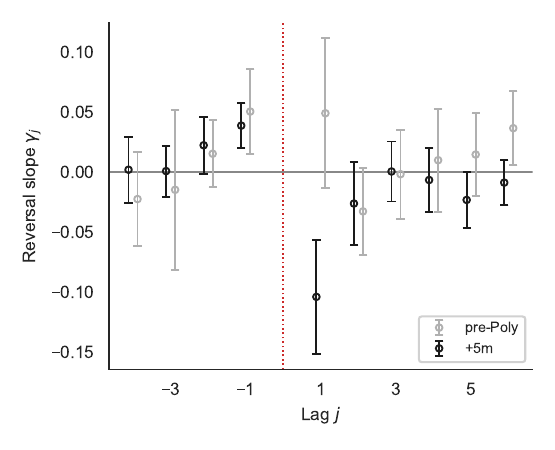}
    \caption{\textbf{Post-settlement reversal slope $\gamma_j$: pre-Polymarket vs five-minute period.} This figure plots, at each lag $j$, the slope $\gamma_j$ from a per-period pooled OLS of the lag-$j$ 10s return on the final-10s return $r_{29}$ (date and hour-of-day fixed effects, standard errors clustered by date), overlaid for the pre-Polymarket period (P1) and the five-minute-live period (P3); the intermediate +15m/4h period (P2) is omitted. Markers are $\gamma_j$; error bars are $\pm 1.96 \times$ cluster-robust SE; $j = 0$ is omitted. The dotted vertical marks the close. Negative $j$ is bin $29+j$ of the same cycle; positive $j$ is bin $j-1$ of the next cycle.}
    \label{fig:reversal_uncond}
\end{figure}

Figure~\ref{fig:reversal_uncond} overlays the reversal slope $\gamma_j$ for the two periods. A sharp reversal appears in the first ten seconds after the close, and only with the five-minute contract: $\gamma_{+1}$ is strongly negative once the contract is live and indistinguishable from zero in the pre-launch baseline, while at $j \ge +2$ the slopes are flat in both periods.

Table~\ref{tab:reversal_uncond} confirms it. Against a pre-launch baseline with no reversal at the close ($\beta = 0.05$ at $j=+1$, indistinguishable from a random walk), the five-minute contract shifts the reversal slope by $\delta_{P3} = -0.15$, so the post-close slope under the five-minute contract is $\beta + \delta_{P3} = -0.10$ at $j=+1$: about a tenth of the final-10s spot move reverses within the next ten seconds. The shift is specific to the first post-close lag: $\delta_{P3}$ is flat at $j \ge +2$. Taken together, all four parts of the footprint appear with the five-minute contract, and only then. A before/after comparison, however, cannot on its own distinguish manipulation from any other change in close-time trading that coincides with the launch. The identification comes from the cross-section within the post-launch period, to which we now turn.

\begin{table}[!t]
    \centering
    \footnotesize
    \caption{\textbf{Post-settlement reversal: pre-Polymarket vs five-minute period.} This table reports the cycle-level OLS $r_j = \alpha_d + \mu_h + \beta\,r_{29} + \delta_{P3}\,r_{29}\mathbf{1}\{P3\} + \varepsilon$, estimated separately at each lag $j \in \{-3,-2,-1,+1,+2,+3\}$, on the pre-Polymarket period (P1, Jul 1 -- Oct 8, 2025) and the five-minute-live period (P3, Feb 12 -- Apr 8, 2026); the intermediate period in which only the fifteen-minute and four-hour contracts are live (P2) is excluded. $\beta$ is the pre-Polymarket reversal slope on $r_{29}$ and $\delta_{P3}$ its change once the five-minute contract is live (a negative slope is reversal). Date and hour-of-day fixed effects; HC1 standard errors in parentheses are clustered by date. Negative $j$ = same-cycle bin $(29 + j)$; positive $j$ = bin $(j-1)$ of the next cycle. Significance: $^{*}\,p<0.10$, $^{**}\,p<0.05$, $^{***}\,p<0.01$.}
    \label{tab:reversal_uncond}
    \begin{tabular}{l S[table-format=-1.3] S[table-format=-1.3] S[table-format=-1.3] S[table-format=-1.3] S[table-format=-1.3] S[table-format=-1.3]}
    \toprule
     & \multicolumn{3}{c}{Pre-close} & \multicolumn{3}{c}{Post-close} \\
    \cmidrule(lr){2-4} \cmidrule(lr){5-7}
     & {$j = -3$} & {$j = -2$} & {$j = -1$} & {$j = +1$} & {$j = +2$} & {$j = +3$} \\
    \midrule
    $\beta$ & -0.014 & 0.015 & 0.051{\sym{***}} & 0.049 & -0.033{\sym{*}} & -0.002 \\
     & (0.034) & (0.014) & (0.018) & (0.032) & (0.019) & (0.019) \\[2pt]
    $\delta_{P3}$ & 0.015 & 0.007 & -0.012 & -0.152{\sym{***}} & 0.007 & 0.002 \\
     & (0.036) & (0.019) & (0.020) & (0.040) & (0.026) & (0.023) \\
    \midrule
    Date FE & {Yes} & {Yes} & {Yes} & {Yes} & {Yes} & {Yes} \\
    Hour-of-day FE & {Yes} & {Yes} & {Yes} & {Yes} & {Yes} & {Yes} \\
    Observations & \multicolumn{6}{c}{44{,}772} \\
    Clusters (UTC date) & \multicolumn{6}{c}{156} \\
    \bottomrule
\end{tabular}

\end{table}

\clearpage
\subsubsection{The footprint concentrates in NTM cycles}
\label{subsec:atm_concentration}
\label{sec:atm}
\label{sec:mechanism}
\label{subsec:activity_atm}
\label{subsec:reversal_atm}

The before/after evidence so far shows a four-part footprint that coincides with the five-minute launch, but a launch coincidence cannot by itself separate manipulation from any change in close-time trading. The identification comes from a cross-sectional cut \emph{within} the five-minute period. As the model predicts, if the push is manipulation, all four parts of the footprint should be largest in cycles where the outcome is still undecided near the close --- there a small spot push can flip the resolution, so the manipulator's payoff is greatest --- and muted in cycles already decided, where a push changes nothing. Because this contrast is taken within a single period, it differences out any time trend common to the launch.

\paragraph{Near-the-money classification.} We classify each post-launch five-minute cycle by its prediction-market price near settlement. Specifically, for each cycle, we take the Up-token price 10 seconds before the settlement and label the cycle \emph{near-the-money} (NTM) if the Up-token price is between 0.40 and 0.60 (the contract is still a coin-flip, so a push across the strike can flip the outcome) and \emph{far-from-the-money} (FFM) otherwise. About 6\% of cycles are NTM.\footnote{The $\sim$4\% of P3 cycles with no qualifying trade in the 10--12\,s window are dropped as unclassified; these are systematically the sparse-trading cycles. The tight window trades coverage for freshness: a wider window would retain them but allow the classifying price to be stale by up to several minutes, mislabeling a cycle when the spot moves in the interim. Pre-launch cycles (P1, P2) have no on-chain 5m trades and are unclassified by construction.}

\paragraph{Near-settlement spike in NTM cycles.} Figure~\ref{fig:cond} plots three intra-cycle metrics --- absolute order flow, absolute return, and depth --- on both venues, separately for NTM and FFM cycles. The figure carries two messages. First, the Binance spike concentrates in NTM cycles: order flow and absolute return rise at the close in both groups, but markedly more where the contract is still undecided and a push can flip the resolution. Second, liquidity diverges across the venues into the NTM close: Binance makers \emph{add} near-mid depth while Polymarket makers \emph{withdraw} it. Two secondary observations complete the picture. The binary's own price moves more in NTM cycles for a mechanical reason: a price near $0.5$ must travel to $0$ or $1$ as the contract resolves, while a decided contract's price is already there. Its order flow, by contrast, spikes \emph{less}: close-time flow on the binary is dominated by the one-sided sweep of an already-decided contract, largest exactly when the outcome is no longer in play, and in contested cycles the directional flow appears on Binance instead.

\begin{figure}[t]
    \centering
    \begin{subfigure}{\textwidth}
        \centering
        \includegraphics[width=\textwidth]{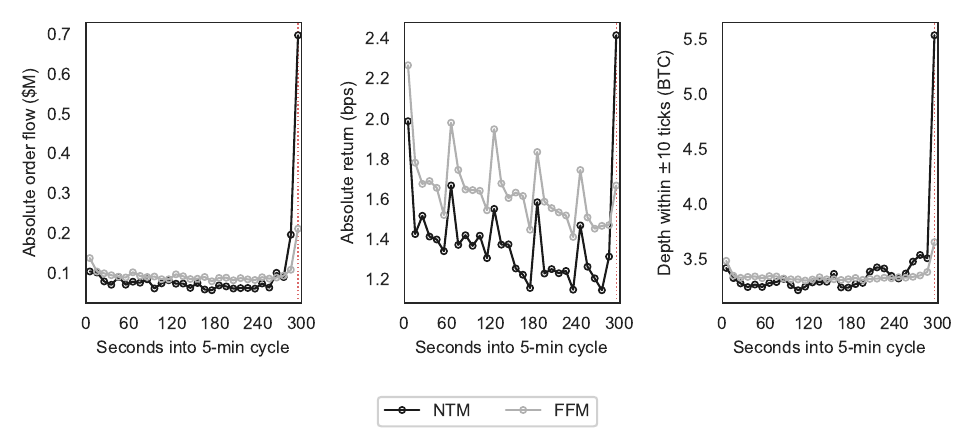}
        \caption{Binance BTC spot market.}
        \label{fig:cond_binance}
    \end{subfigure}

    \begin{subfigure}{\textwidth}
        \centering
        \includegraphics[width=\textwidth]{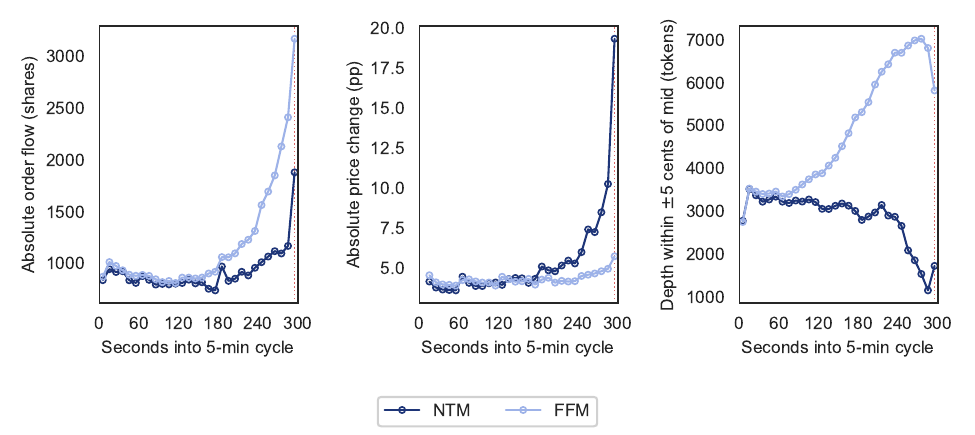}
        \caption{Polymarket BTC 5-minute up/down contract.}
        \label{fig:cond_poly}
    \end{subfigure}
    \caption{\textbf{Intra-cycle metrics per 5-min cycle, \NTM vs.\ \FFM\ (P3 only).} This figure plots intra-cycle profiles on both venues, separately for \NTM\ and \FFM\ cycles. Each P3 5-minute cycle is classified \NTM\ ($|p_{\text{Up}} - 0.5| < 0.10$) or \FFM\ from the latest taker-focused on-chain Polymarket Up-token trade at time-to-maturity TTM~$\in[10,12]$\,s; each metric is pooled over all classified P3 5-minute cycles. Each $1\times3$ grid plots one metric's intra-cycle profile against seconds into the 5-min cycle; the dotted vertical marks the final 10s before close. \textbf{(a)~Binance BTC spot}: absolute order flow per cycle (\$M, $|$net signed taker dollar flow$|$), absolute 10s return ($|\Delta \log \mathrm{mid}|$, bps), time-weighted depth within $\pm$10 ticks of the mid. \textbf{(b)~Polymarket BTC 5-minute up/down contract}: absolute order flow per cycle ($|$net signed Up-token share flow$|$), absolute 10s probability change ($|\Delta p|$, pp), time-weighted depth within $\pm$5 cents of the mid (cumulative Up-token resting size at price levels within \$0.05 of the mid).}
    \label{fig:cond}
\end{figure}

To quantify the NTM-vs-FFM contrast on both venues, we estimate the following regression:
\[
    \log y_{c,b} \;=\; \alpha_d + \gamma_b + \mu_h
    + \beta_{\NTM}\,\mathbf{1}\{\NTM_c\}
    + \delta_{\NTM}\,\mathbf{1}\{b{=}29\}{\cdot}\mathbf{1}\{\NTM_c\}
    + \varepsilon_{c,b},
\]
where $\alpha_d$, $\mu_h$, $\gamma_b$ capture the date, hour-of-day, and bin fixed effects. The standard errors are clustered at the cycle level. The \FFM\ bin-29 increase is absorbed into the bin fixed effect $\gamma_{29}$, so $\delta_{\NTM}$ measures the additional bin-29-vs-body activity increase in \NTM\ cycles over the \FFM\ baseline.

Table~\ref{tab:reg_cond} reports the regression results. There are two key observations. First, on Binance, the bin-29 order flow is $\hat\delta_{\NTM} = 1.37$ log points larger in NTM than in FFM cycles (about $e^{1.37} \approx 3.9$ times the FFM one on the typical cycle), and the same NTM concentration holds for absolute return ($0.56$), both positive at the $1\%$ level. Polymarket diverges on flow: its NTM order-flow effect is \emph{negative} ($-0.18$), the binary's close spike being smaller in coin-flip cycles than in decided ones, where the one-sided sweep of the certain winner dominates. Its price-change effect is large and positive ($2.62$), but largely mechanical: resolution pulls a near-$0.5$ price to $0$ or $1$, while a decided contract's price is already there. Where the outcome is in play, the incremental directional trade concentrates on the spot venue, not on the binary itself.

Second, Binance and Polymarket see opposite patterns in liquidity: into the NTM close Binance liquidity improves ($\hat\delta_{\NTM} = 0.17$) while Polymarket liquidity worsens ($\hat\delta_{\NTM} = -1.02$). The liquidity divergence is exactly the model's prediction (Section~\ref{sec:predictions}). A maker quoting the Polymarket binary near the close becomes the counterparty to anyone about to push the spot across the strike: the manipulator buys the side he is about to make win, so the prediction-market maker who sold it to him is left on the losing side of the resolution. Polymarket makers therefore pull their book as the close nears, and pull it hardest in NTM cycles, where a flipping push is most likely. On Binance the same push is uninformed and therefore generates only a \emph{transitory} price impact. So the push flow is benign, and Binance makers supply \emph{more} depth into the NTM close. Appendix~\ref{app:activity_overlap} refines the NTM cut by 15-min contract overlap.

\begin{table}[!t]
    \centering
    \footnotesize
    \caption{\textbf{P3 near-settlement activity, NTM vs.\ FFM cycles, Binance and Polymarket.} This table reports cycle-level OLS estimates on the (cycle, bin) panel restricted to bins $\{0\text{--}24, 29\}$ and P3 cycles, $\log y_{c,b} = \alpha_d + \gamma_b + \mu_h + \beta_{\NTM}\mathbf{1}\{\NTM_c\} + \delta_{\NTM}\,\mathbf{1}\{b{=}29\}\mathbf{1}\{\NTM_c\} + \varepsilon_{c,b}$, with date, hour-of-day, and bin fixed effects. The first three columns report Binance spot, the last three the Polymarket 5-minute contract; within each venue the dependent variable $\log y_{c,b}$ is the log of the column metric. The activity column is \emph{Order flow} on both venues --- the per-cycle absolute net signed flow, in taker dollars on Binance ($|\sum_i d_i\,p_i a_i|$) and in Up-token shares on Polymarket. The second column per venue is \emph{Abs.\ return} on Binance (absolute 10\,s log return, $|\Delta \log \mathrm{mid}|$, bps) and \emph{Abs.\ price change} on Polymarket (absolute 10\,s probability change, $|\Delta p|$, pp); the third is \emph{Depth} (time-weighted resting size within a mid-anchored band: $\pm$10 ticks of the mid on Binance, $\pm$5 cents on Polymarket). A P3 cycle is NTM iff the 5m-contract Polymarket Up-token price is within $0.10$ of $0.5$ at TTM~$\in[10,12]$\,s, FFM otherwise. $\delta_{\NTM}$ is the bin-29-vs-body activity increase in NTM cycles over the FFM baseline (the latter absorbed in the bin FE); coefficients are log differences, so $\delta$ is a multiplicative effect of $e^{\delta}$ on the typical (geometric-mean) cycle. Appendix~\ref{app:activity_overlap} reports the same regression with the NTM cut refined into the full 15-min-overlap $\times$ per-contract-NTM split. One observation per (cycle, bin) cell. Polymarket order-book metrics (absolute return, depth) have smaller samples than order flow because the order-book archive begins later than the trade archive and outage days are excluded. HC1 standard errors in parentheses are clustered at the cycle level. Significance: $^{*}\,p<0.10$, $^{**}\,p<0.05$, $^{***}\,p<0.01$.}
    \label{tab:reg_cond}
    \begin{tabular}{l S[table-format=-1.3] S[table-format=-1.3] S[table-format=-1.3] S[table-format=-1.3] S[table-format=-1.3] S[table-format=-1.3]}
    \toprule
     & \multicolumn{3}{c}{Binance} & \multicolumn{3}{c}{Polymarket} \\
    \cmidrule(lr){2-4}\cmidrule(lr){5-7}
     & {Order flow} & {Abs. return} & {Depth} & {Order flow} & {Abs. price change} & {Depth} \\
    \midrule
    $\delta_{\NTM}$ & 1.366{\sym{***}} & 0.555{\sym{***}} & 0.173{\sym{***}} & -0.182{\sym{***}} & 2.618{\sym{***}} & -1.016{\sym{***}} \\
     & (0.086) & (0.052) & (0.022) & (0.044) & (0.089) & (0.052) \\
    \midrule
    Date FE & {Yes} & {Yes} & {Yes} & {Yes} & {Yes} & {Yes} \\
    Hour-of-day FE & {Yes} & {Yes} & {Yes} & {Yes} & {Yes} & {Yes} \\
    Bin FE & {Yes} & {Yes} & {Yes} & {Yes} & {Yes} & {Yes} \\
    Observations & {395{,}790} & {307{,}108} & {395{,}790} & {395{,}154} & {138{,}405} & {142{,}468} \\
    Clusters (cycle) & {15{,}223} & {15{,}221} & {15{,}223} & {15{,}223} & {5{,}631} & {5{,}631} \\
    \bottomrule
\end{tabular}

\end{table}

\paragraph{Post-settlement price reversal in NTM cycles.}
In addition to the activity spike, NTM cycles see a more pronounced post-settlement price reversal. Using the same moneyness classifier, at each lag we estimate the following regression:
\[
    r_{j,c} \;=\; \alpha_d + \mu_h + \nu_g + \gamma_{j}\,r_{29,c} + \delta_{j,\NTM}\,\mathbf{1}\{\NTM_c\}\,r_{29,c} + \varepsilon_{j,c},
\]
where $\gamma_{j}$ is the FFM baseline reversal slope on $r_{29}$ and $\delta_{j,\NTM}$ is the NTM increment over it, so the NTM-cycle slope is $\gamma_{j} + \delta_{j,\NTM}$ and $\delta_{j,\NTM}$ is directly the NTM-minus-FFM reversal gap. $\alpha_d$, $\mu_h$, and $\nu_g$ are date, hour-of-day, and moneyness-group fixed effects; standard errors are clustered by date.

We report the regression results in Table~\ref{tab:reversal_cond} and plot the slopes in Figure~\ref{fig:reversal_cond}. FFM cycles revert by about 10\% ($\gamma_{+1} = -0.10$), and NTM cycles revert significantly harder: the NTM increment is $\delta_{+1,\NTM} = -0.15$, so NTM cycles revert by about 25\% in total. A transient push pays off only by moving a price the prediction market has not already decided, so the cycles that revert hardest are exactly the NTM ones, where the push can flip the resolution. As before, in Appendix~\ref{app:reversal_overlap} we further partition the cycles by 15-minute overlap. We find that the reversal is deeper in cycles where the 5m and 15m Polymarket contracts settle at the same close.

\begin{figure}[t]
    \centering
    \includegraphics[width=0.57\textwidth]{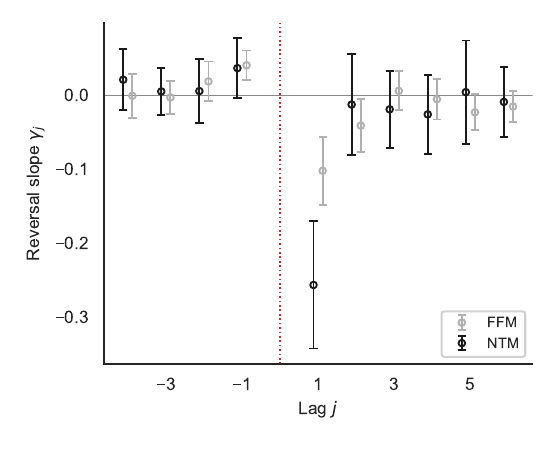}
    \caption{\textbf{Within-P3 post-settlement reversal slope $\gamma_j$, by 5m-contract moneyness.} This figure plots $\gamma_j$ for each 5m-contract moneyness group (the FFM baseline and the NTM total $\gamma_j + \delta_{j,\NTM}$ of Table~\ref{tab:reversal_cond}), on all P3 cycles split by 5m-contract Polymarket NTM (Up-token price within $0.10$ of $0.5$ near TTM~$=10$\,s). Markers are $\gamma_j$; error bars are $\pm 1.96 \times$ cluster-robust SE (clustered by date); $j = 0$ is omitted.}
    \label{fig:reversal_cond}
\end{figure}

\begin{table}[!t]
    \centering
    \footnotesize
    \caption{\textbf{Within-P3 post-settlement reversal slope by 5m-contract moneyness.} This table reports, at each lag $j \in \{-3,-2,-1,+1,+2,+3\}$, the two coefficients from a single cycle-level OLS on all P3 cycles (Feb 12 -- Apr 8, 2026), $r_j = \alpha_d + \mu_h + \nu_g + \gamma_j\,r_{29} + \delta_{j,\NTM}\,\mathbf{1}\{\NTM\}\,r_{29} + \varepsilon$: $\gamma_j$ is the FFM baseline reversal slope and $\delta_{j,\NTM}$ the NTM increment over it, so the NTM-cycle slope is $\gamma_j + \delta_{j,\NTM}$ and $\delta_{j,\NTM}$ is the NTM-minus-FFM gap; the stars test each coefficient against zero. A cycle is $\NTM_{5\mn}$ iff the 5m-contract Polymarket Up-token price is within $0.10$ of $0.5$ near TTM~$\in[10,12]$\,s, FFM otherwise. Date, hour-of-day, and moneyness-group fixed effects; standard errors in parentheses are clustered by date. Negative $j$ = same-cycle bin $(29 + j)$; positive $j$ = bin $(j-1)$ of the next cycle. Appendix~\ref{app:reversal_overlap} dissects these groups further by 15-minute overlap. Significance: $^{*}\,p<0.10$, $^{**}\,p<0.05$, $^{***}\,p<0.01$.}
    \label{tab:reversal_cond}
    \begin{tabular}{l S[table-format=-1.3] S[table-format=-1.3] S[table-format=-1.3] S[table-format=-1.3] S[table-format=-1.3] S[table-format=-1.3]}
    \toprule
     & \multicolumn{3}{c}{Pre-close} & \multicolumn{3}{c}{Post-close} \\
    \cmidrule(lr){2-4} \cmidrule(lr){5-7}
     & {$j = -3$} & {$j = -2$} & {$j = -1$} & {$j = +1$} & {$j = +2$} & {$j = +3$} \\
    \midrule
    FFM$_{5\mn}$ & -0.002 & 0.019 & 0.041{\sym{***}} & -0.101{\sym{***}} & -0.040{\sym{**}} & 0.007 \\
     & (0.011) & (0.014) & (0.010) & (0.023) & (0.018) & (0.013) \\[2pt]
    $\delta_{\NTM_{5\mn}}$ & 0.008 & -0.013 & -0.004 & -0.154{\sym{***}} & 0.028 & -0.025 \\
     & (0.017) & (0.024) & (0.021) & (0.038) & (0.037) & (0.026) \\
    \midrule
    Date FE & {Yes} & {Yes} & {Yes} & {Yes} & {Yes} & {Yes} \\
    Hour-of-day FE & {Yes} & {Yes} & {Yes} & {Yes} & {Yes} & {Yes} \\
    Cycle-group FE & {Yes} & {Yes} & {Yes} & {Yes} & {Yes} & {Yes} \\
    Observations & \multicolumn{6}{c}{15{,}168} \\
    Clusters (UTC date) & \multicolumn{6}{c}{56} \\
    \bottomrule
\end{tabular}

\end{table}

\clearpage
\subsubsection{The manipulation footprint fades at the fifteen-minute horizon}
\label{sec:fifteen_min}

The model makes one more prediction: the manipulation footprint should weaken as the settlement horizon lengthens, because a longer window gives the asset's own price discovery more time to unfold, so a fixed-size push is less likely to be pivotal and manipulation opportunities become rarer ex ante (Section~\ref{sec:predictions}). The fifteen-minute contract provides the test. It went live earlier (in period P2), resolves based on the same Chainlink oracle price, and, during P2, was the most heavily traded of the crypto up/down contracts (Appendix~\ref{app:volume}), so a weak footprint cannot be attributed to a small prize.

\paragraph{Identification.} Within the five-minute clock, the fifteen-minute (and four-hour) contract settles only at the closes that coincide with a 15-minute mark: the \emph{overlapping} cycles, one in three; the other closes carry no fifteen-minute settlement. The fifteen-minute effect is therefore identified by contrasting overlapping with non-overlapping closes. One complication forces a richer design: 15-minute marks might be intrinsically special on the spot market (round-time execution algorithms cluster there), so a raw overlap contrast would confound the fifteen-minute settlement with this clock effect. We difference it against the same contrast in the pre-Polymarket baseline: on the (cycle, bin) panel over P1 and P2, restricted to bins $\{0\text{--}24, 29\}$ as before, we estimate
\[
    \log y_{c,b} \;=\; \theta_{t_c,\,b} + \alpha_d + \mu_h + \delta_{P2}\,\mathbf{1}\{b{=}29\}{\cdot}\mathbf{1}\{P2\} + \delta_{P2,\,5\mn\cap15\mn}\,\mathbf{1}\{b{=}29\}{\cdot}\mathbf{1}\{P2\}{\cdot}D_{5\mn\,\cap\,15\mn,c} + \varepsilon_{c,b},
\]
where $D_{5\mn\,\cap\,15\mn,c}$ indicates an overlapping cycle. The specification modifies the activity regression of Section~\ref{subsec:launch_impact} in two ways. First, the bin fixed effect is replaced by a cycle-type$\times$bin fixed effect $\theta_{t_c,\,b}$, which gives overlapping and non-overlapping cycles each their own intra-cycle profile and thereby absorbs the clock effect, whatever its shape. Second, the bin-29 shift is split by overlap status: $\delta_{P2}$ is the bin-29-versus-body increase in P2 at non-overlapping closes, a pure period effect, and $\delta_{P2,\,5\mn\cap15\mn}$ is the \emph{additional} increase at overlapping closes, the difference-in-differences $(\text{P2}-\text{P1})\times(\text{overlap}-\text{non-overlap})$ that isolates the fifteen-minute settlement effect, netting out both the period and the clock. The reversal regression of Section~\ref{subsec:launch_impact} is augmented in the same way. At each lag $j$, estimated separately on P1 and P2 cycles,
\[
    \begin{aligned}
        r_{j,c} \;=\;\; & \alpha_d + \mu_h + \beta\,r_{29,c} + \beta_{5\mn\cap15\mn}\,r_{29,c}\,D_{5\mn\,\cap\,15\mn,c}                                                                   \\
                        & + \delta_{P2}\,r_{29,c}\,\mathbf{1}\{P2\} + \delta_{P2\,\times\,(5\mn\cap15\mn)}\,r_{29,c}\,\mathbf{1}\{P2\}{\cdot}D_{5\mn\,\cap\,15\mn,c} + \varepsilon_{j,c},
    \end{aligned}
\]
with the lower-order level terms $D_{5\mn\,\cap\,15\mn,c}$ and $\mathbf{1}\{P2\}{\cdot}D_{5\mn\,\cap\,15\mn,c}$ included; the $\mathbf{1}\{P2\}$ level is absorbed by the date fixed effects. The four slopes mirror the activity design: $\beta$ is the pre-Polymarket reversal slope at non-overlapping closes and $\beta_{5\mn\cap15\mn}$ its overlap increment, the clock effect on the slope; $\delta_{P2}$ is the P2 slope change at non-overlapping closes, the period effect; and $\delta_{P2\,\times\,(5\mn\cap15\mn)}$ is the additional P2 change at overlapping closes, the difference-in-differences that gives the fifteen-minute settlement effect on the reversal slope.

\paragraph{Activity and reversal.} Table~\ref{tab:p2_activity} reports the activity difference-in-differences over P1 and P2, and Table~\ref{tab:p2_reversal} the reversal one. The fifteen-minute settlement increment $\delta_{P2,\,5\mn\cap15\mn}$ is small on every dimension: an order-flow increase of $0.068$ against the five-minute $0.406$, a depth increase of $0.021$ against $0.048$, no detectable absolute-return effect, and no post-settlement reversal at $j=+1$. Smaller but not exactly zero is what the horizon mechanism predicts: a longer window shrinks the incentive rather than eliminating it.

\begin{table}[!t]
    \centering
    \footnotesize
    \caption{\textbf{Fifteen-minute settlement: near-settlement activity, pre-Polymarket vs P2 by overlap.} This table reports cycle-level OLS estimates on the (cycle, bin) panel restricted to bins $\{0\text{--}24,29\}$ and to the pre-Polymarket period (P1, Jul--Oct 2025) and the +15m/4h period (P2, Oct 2025 -- Feb 2026). $\delta_{P2}$ is the bin-29-versus-body activity increase in P2 non-overlapping cycles relative to P1; $\delta_{P2,\,5\mn\cap15\mn}$ is the \emph{additional} increase at overlapping closes (those landing on a 15-min mark, where the fifteen-minute contract settles) --- the difference-in-differences $(\text{P2}-\text{P1})\times(\text{overlap}-\text{non-overlap})$ isolating the fifteen-minute settlement effect. Metrics and standard errors as in Table~\ref{tab:reg_uncond}; fixed effects are date, hour-of-day, and cycle-type$\times$bin. Coefficients are log differences. Significance: $^{*}\,p<0.10$, $^{**}\,p<0.05$, $^{***}\,p<0.01$.}
    \label{tab:p2_activity}
    \begin{tabular}{l S[table-format=-1.3] S[table-format=-1.3] S[table-format=-1.3]}
    \toprule
     & {(1)} & {(2)} & {(3)} \\
     & {Order flow} & {Abs. return} & {Depth} \\
    \midrule
    $\delta_{P2}$ & -0.062{\sym{***}} & -0.091{\sym{***}} & 0.000 \\
     & (0.017) & (0.019) & (0.003) \\[4pt]
    $\delta_{P2,\,5\mn\cap15\mn}$ & 0.068{\sym{**}} & 0.042 & 0.021{\sym{***}} \\
     & (0.032) & (0.035) & (0.005) \\
    \midrule
    Date FE & {Yes} & {Yes} & {Yes} \\
    Hour-of-day FE & {Yes} & {Yes} & {Yes} \\
    Cycle-type $\times$ Bin FE & {Yes} & {Yes} & {Yes} \\
    Observations & {1{,}684{,}789} & {1{,}071{,}666} & {1{,}684{,}782} \\
    Clusters (cycle) & {64{,}800} & {64{,}391} & {64{,}800} \\
    \bottomrule
\end{tabular}

\end{table}

\begin{table}[!t]
    \centering
    \footnotesize
    \caption{\textbf{Fifteen-minute settlement: post-settlement reversal, pre-Polymarket vs P2 by overlap.} This table reports, at each lag $j \in \{-3,\ldots,+3\}$, a cycle-level OLS over P1 and P2 cycles. $\beta$ is the pre-Polymarket reversal slope on $r_{29}$ at non-overlapping closes, $\beta_{5\mn\cap15\mn}$ its pre-Polymarket overlap increment (the clock effect at 15-min marks), $\delta_{P2}$ the P2 slope shift at non-overlapping closes, and $\delta_{P2\,\times\,(5\mn\cap15\mn)}$ the P2 overlap increment --- the fifteen-minute settlement effect, a difference-in-differences. A negative slope is reversal. Date and hour-of-day fixed effects; HC1 standard errors in parentheses clustered by date. Negative $j$ = same-cycle bin $(29+j)$; positive $j$ = bin $(j-1)$ of the next cycle. Significance: $^{*}\,p<0.10$, $^{**}\,p<0.05$, $^{***}\,p<0.01$.}
    \label{tab:p2_reversal}
    \begin{tabular}{l S[table-format=-1.3] S[table-format=-1.3] S[table-format=-1.3] S[table-format=-1.3] S[table-format=-1.3] S[table-format=-1.3]}
    \toprule
     & \multicolumn{3}{c}{Pre-close} & \multicolumn{3}{c}{Post-close} \\
    \cmidrule(lr){2-4} \cmidrule(lr){5-7}
     & {$j = -3$} & {$j = -2$} & {$j = -1$} & {$j = +1$} & {$j = +2$} & {$j = +3$} \\
    \midrule
    $\beta$ & -0.034 & 0.000 & 0.036 & 0.073{\sym{***}} & -0.000 & 0.010 \\
     & (0.047) & (0.018) & (0.022) & (0.028) & (0.019) & (0.018) \\[2pt]
    $\beta_{5\mn\cap15\mn}$ & 0.066 & 0.053{\sym{*}} & 0.050 & -0.081 & -0.110{\sym{**}} & -0.039 \\
     & (0.047) & (0.030) & (0.034) & (0.077) & (0.045) & (0.047) \\[2pt]
    $\delta_{P2}$ & 0.033 & -0.017 & 0.015 & -0.116{\sym{***}} & 0.014 & 0.011 \\
     & (0.052) & (0.033) & (0.030) & (0.036) & (0.024) & (0.025) \\[2pt]
    $\delta_{P2\,\times\,(5\mn\cap15\mn)}$ & 0.002 & -0.058 & -0.081{\sym{*}} & 0.024 & 0.066 & -0.005 \\
     & (0.085) & (0.046) & (0.043) & (0.087) & (0.060) & (0.059) \\
    \midrule
    Date FE & {Yes} & {Yes} & {Yes} & {Yes} & {Yes} & {Yes} \\
    Hour-of-day FE & {Yes} & {Yes} & {Yes} & {Yes} & {Yes} & {Yes} \\
    Observations & \multicolumn{6}{c}{64{,}575} \\
    Clusters (UTC date) & \multicolumn{6}{c}{225} \\
    \bottomrule
\end{tabular}

\end{table}

Conditioning on moneyness sharpens the picture but does not change it: even the fifteen-minute NTM cycles show at most a modest activity increase and no reversal, far below the five-minute footprint (Appendix~\ref{app:p2_placebo}). The horizon prediction is borne out: the footprint that arrives in full force with the five-minute contract, the shortest, most frequent, and most cheaply moved settlement, appears at the fifteen-minute horizon only in strongly attenuated form.

\clearpage
\subsection{Who wins and loses in manipulated cycles?}
\label{sec:wallet_cohorts}

Section~\ref{sec:headline} shows that the five-minute launch coincides with a transitory final-10s spot push concentrated in the cycles the prediction market has left undecided: the manipulation footprint. This section tests the model's second prediction (Section~\ref{sec:predictions}): that the push transfers wealth, from the liquidity traders holding the losing side to the manipulator. We first single out the cycles carrying the largest manipulation pressure and document where they sit and whether the push succeeds; we then sort the traders active in them into types and measure who profits and who pays. Last, we rule out the alternative that the push is prediction-market makers hedging their binary exposure. Throughout this section, the sample is the five-minute-live period (P3), the same window as the within-period identification above.

\subsubsection{Identification of manipulated cycles}
\label{subsec:wallet_cohorts_push_id}
\label{subsec:wallet_cohorts_setup}

We identify which cycles carry the largest manipulation pressure using a within-cycle, outcome-independent statistic:
\[
    \textit{PushIntensity}_c \;=\; \frac{|\,\text{order flow}_{29,c}\,|}{\operatorname{median}_{b\,\in\,\{0,\ldots,24\}} |\,\text{order flow}_{b,c}\,|},
\]
where order flow$_{b,c}$ is the Binance taker buy minus sell dollar volume in bin $b$ of cycle $c$. The numerator is the magnitude of the final-10s order flow; the denominator is the typical magnitude of order flow per body bin in the \emph{same} cycle. This normalization has three useful properties: (i) it is dimensionless and bounded below by zero; (ii) it is invariant to the overall scale of activity in the cycle, so a quiet and an active cycle with the same push-vs-body asymmetry score equally; and (iii) it is constructed entirely from the cycle's own intra-cycle Binance order flow, with no input from the resolution price, the Polymarket Up-token price, or the strike, so a trader's PnL is not mechanically related to its score.

We label a cycle a \emph{manipulated cycle} if its $\textit{PushIntensity}$ is at or above the 90th percentile of the cycle distribution, $\textit{PushIntensity} \geq 16.11$. Of the 16{,}073 cycles with on-chain Polymarket trading, this gives 1{,}613 manipulated cycles and 14{,}460 \emph{normal cycles}.\footnote{A further 55 cycles in the window, Polymarket outage periods and ultra-thin overnight cycles, carry a $\textit{PushIntensity}$ score from Binance order flow but have no Polymarket trades, and drop out of the trader-level analysis.} The top-decile cut is a convention, chosen large enough that the manipulated-cycle sample carries meaningful statistical power (1{,}613 cycles $\times$ on average $\sim$1{,}500 traders per cycle) and small enough that the cycles selected are visibly distinct from the body of the distribution (the 90th-percentile $\textit{PushIntensity}$ of $16$ is an order of magnitude larger than the median $\textit{PushIntensity}$ of $\sim 0.9$, the latter close to what would be expected under no asymmetry between the final 10s and the body).

\subsubsection{Characteristics of manipulated cycles}

In Table~\ref{tab:push_cycle_summary}, we contrast manipulated cycles with the normal ones along four dimensions: Binance activity, Polymarket activity, timing, and moneyness.

Because manipulated cycles are selected on $\textit{PushIntensity}$, a large Binance spike among them is in part mechanical; the informative contrast is across venues. On Binance the bin-29 order flow averages \$1.7M against a typical body bin of just \$18k; on Polymarket the bin-29 order flow (\$2.8k) sits much closer to its own body bin (\$0.4k) and barely exceeds its normal-cycle level. The near-settlement push moves Binance spot without any comparable surge in Polymarket's own order flow, the first sign that the push is executed on the spot venue, matching the manipulation footprint.

Manipulated cycles are also \emph{quieter} away from the close, not busier. The typical body-bin order flow is lower than in normal cycles on both venues, and the cycles fall disproportionately in thin-liquidity windows: 56\% land in the Asia/overnight hours (vs 40\% of normal cycles) and 44\% on weekends (vs 27\%), against only 23\% in the deep European--US overlap (vs 35\%). Yet despite this quieter backdrop the absolute Binance near-settlement order flow is an order of magnitude larger than in a normal cycle (\$1.7M vs \$68k). Manipulation thus concentrates precisely where the book is thinnest, the overnight and weekend hours, where a given dollar of near-settlement order flow moves the spot, and the Chainlink resolution it feeds, the most.

\begin{table}[!t]
    \centering
    \footnotesize
    \caption{\textbf{Summary statistics for manipulated cycles vs normal cycles, P3 BTC 5\mn.} This table reports summary statistics for manipulated cycles, the top decile by $\textit{PushIntensity}$ ($\textit{PushIntensity} \geq 16.11$), against normal cycles, the remaining 90\%. The \emph{Binance activity} and \emph{Polymarket activity} panels report the same two flow rows in thousands of dollars (USDC on Polymarket): the absolute net order flow in bin~29 (the final 10\,s), and the per-cycle median of the absolute net order flow across the body bins (bins 0--24, the pre-ramp window, the same body denominator that enters $\textit{PushIntensity}$). Bin~29 is a single 10\,s bin and the body figure is the cycle's typical body bin, so the two rows are on the same per-bin scale within each venue and across the two venues. \emph{Binance activity} additionally reports the cross-cycle mean and median of $\textit{PushIntensity}$. \emph{Timing}: hour-of-day bucket (UTC) and weekday share. \emph{Moneyness}: the share of cycles that are NTM (5m-contract Up-token price within $0.10$ of $0.5$ near TTM~$\in[10,12]$\,s), computed over the cycles classifiable by that rule (99\% of manipulated and 94\% of normal cycles). All flow values are cross-cycle means.}
    \label{tab:push_cycle_summary}
    \begin{tabular}{lrr}
    \toprule
     & Manipulated cycles & Normal cycles \\
    \midrule
    Number of cycles & 1{,}613 & 14{,}460 \\
    \addlinespace
    \multicolumn{3}{l}{\textit{Binance activity}} \\
    $\textit{PushIntensity}$, mean & 157.4 & 2.0 \\
    $\textit{PushIntensity}$, median & 57.8 & 0.9 \\
    Bin-29 $|$signed notional$|$ (\$k) & \$1{,}714 & \$68 \\
    Body $|$signed notional$|$, median/bin (\$k) & \$18 & \$43 \\
    \addlinespace
    \multicolumn{3}{l}{\textit{Polymarket activity}} \\
    Bin-29 $|$signed notional$|$ (\$k) & \$2.83 & \$2.69 \\
    Body $|$signed notional$|$, median/bin (\$k) & \$0.39 & \$0.51 \\
    \addlinespace
    \multicolumn{3}{l}{\textit{Timing}} \\
    Asia / overnight (21:00--07:00 UTC) & 55.7\% & 39.9\% \\
    Europe, pre-US (07:00--13:00 UTC) & 21.3\% & 25.5\% \\
    US \& EU--US overlap (13:00--21:00 UTC) & 23.1\% & 34.6\% \\
    Weekend (Sat--Sun) & 43.6\% & 27.0\% \\
    \addlinespace
    \multicolumn{3}{l}{\textit{Moneyness}} \\
    NTM share & 18.7\% & 4.4\% \\
    \bottomrule
\end{tabular}

\end{table}

If the near-settlement push is manipulation, it should concentrate where the outcome is still in play: the cycles in which moving the spot across the strike can actually flip the resolution. The \emph{Moneyness} panel of Table~\ref{tab:push_cycle_summary} shows it does: among manipulated cycles 18.7\% are NTM, against only 4.4\% of normal cycles. Equivalently, a cycle that is NTM near the close is manipulated 33.0\% of the time versus 9.1\% for FFM cycles, a 3.6-fold higher rate. The two classifications are independent by construction: $\textit{PushIntensity}$ is built entirely from Binance signed flow and the moneyness cut entirely from the Polymarket Up-token price. So the concentration is not mechanical. It is the cross-sectional counterpart of the NTM activity and reversal results: the manipulation footprint is far more common exactly in the cycles where a push pays off.

\subsubsection{Does manipulation succeed?}

Does the push change the resolution? We classify each cycle by how confident Polymarket is just before the close: the favored-side price at TTM $=10$\,s, defined as $p_{\mathrm{fav}} = \max(p_{\mathrm{Up}},\, 1 - p_{\mathrm{Up}})$, ranges from $0.5$ (a coin flip) to $1$ (a settled outcome). Within manipulated cycles, the push can go in the \emph{same} direction as the favored side or in the \emph{opposite} direction, toward the underdog. Table~\ref{tab:push_characteristics} reports, by confidence bin and push direction, how often the side the push backs wins the resolution; the same side's win rate in normal cycles, which carry no push, is the baseline. Because manipulated cycles are selected on the size of the bin-29 flow, not on its direction or the outcome, the comparison is not mechanical.

The clearest pattern is that opposite-direction manipulation works: a push against the favored side raises the underdog's win rate far above its no-push baseline in every confidence bin, most dramatically where Polymarket is most confident. In the $[0.90, 1.00)$ bin the underdog wins just 1.0\% of normal cycles but 34.2\% of manipulated ones: a push against a side the market priced at $\geq 90\%$ reverses the resolution roughly a third of the time, versus once in a hundred absent a push. Same-direction pushes, by contrast, matter only near the money: they add $\sim 18$\,pp to the favored side's win rate in the near-NTM bin but are flat, even marginally negative, once the favored side already wins the large majority of normal cycles, where there is little left to push for.

The second pattern concerns the direction of the pushes: the split between same- and opposite-direction pushes is close to even in \emph{every} confidence bin, including the most lopsided one, where Polymarket prices the favored side at $\geq 90\%$ yet half the pushes still drive against it. Manipulators are not merely reinforcing near-certain outcomes; they routinely bet on overturning them, and the bet is attractive: an underdog priced below ten cents that wins a third of the time returns several times its stake in expectation. We quantify the realized profits below.

\begin{table}[!t]
    \centering
    \footnotesize
    \caption{\textbf{Manipulated-cycle outcome rates by Polymarket-favored confidence, P3 BTC 5\mn.} This table reports, by confidence bin and push direction, how often the side the push backs wins the resolution. Cycles are binned by the Polymarket-favored confidence at TTM~$=10$\,s, $p_{\mathrm{fav}} = \max(p_{\mathrm{Up}}, 1 - p_{\mathrm{Up}})$ (width 0.10, folded over Up/Down since the empirical Up-vs-Down asymmetry is negligible). Each bin has two rows for the two manipulation directions, defined relative to the Polymarket-favored side: \emph{Same-direction}, where the bin-29 Binance push agrees with the favored side, and \emph{Opposite-direction}, where it pushes against the favored side (toward the underdog). \emph{\# cycles} and \emph{\% cycles} give the number and share of the bin's manipulated cycles in each direction. \emph{Win rate} is the rate at which the side the push targets, the favored side for same-direction, the underdog for opposite-direction, resolves as the winner among the manipulated cycles of that direction; \emph{Win rate (normal cycles)} is the same side's unconditional win rate in the bin's normal cycles (which carry no push, hence no same/opposite split), serving as the no-manipulation baseline. Manipulated cycles are the top decile by $\textit{PushIntensity}$ ($\textit{PushIntensity} \geq 16.11$); normal cycles are the remaining 90\%.}
    \label{tab:push_characteristics}
    \begin{tabular}{clrrrr}
    \toprule
    \makecell{$p_{\mathrm{fav}}$\\range} & \makecell{Manipulation\\direction} & \# cycles & \% cycles & Win rate & \makecell{Win rate\\(normal cycles)} \\
    \midrule
    \multirow{2}{*}{{[0.50, 0.60)}} & Same-direction & 149 & 52\% & 76.5\% & 58.6\% \\
     & Opposite-direction & 136 & 48\% & 65.4\% & 41.4\% \\
    \addlinespace
    \multirow{2}{*}{{[0.60, 0.70)}} & Same-direction & 125 & 46\% & 76.0\% & 70.9\% \\
     & Opposite-direction & 145 & 54\% & 68.3\% & 29.1\% \\
    \addlinespace
    \multirow{2}{*}{{[0.70, 0.80)}} & Same-direction & 111 & 54\% & 82.9\% & 82.3\% \\
     & Opposite-direction & 94 & 46\% & 54.3\% & 17.7\% \\
    \addlinespace
    \multirow{2}{*}{{[0.80, 0.90)}} & Same-direction & 95 & 43\% & 92.6\% & 91.3\% \\
     & Opposite-direction & 125 & 57\% & 54.4\% & 8.7\% \\
    \addlinespace
    \multirow{2}{*}{{[0.90, 1.00)}} & Same-direction & 300 & 50\% & 96.7\% & 99.0\% \\
     & Opposite-direction & 304 & 50\% & 34.2\% & 1.0\% \\
    \bottomrule
\end{tabular}

\end{table}

\subsubsection{Trader types and PnL}
\label{subsec:wallet_cohorts_id}

\paragraph{PnL calculation.} To attribute profits and losses across trader types, we compute realized PnL at the wallet-cycle level. Each on-chain Polymarket fill identifies the two counterparties by Polygon wallet address; for each (wallet, cycle) we accumulate the wallet's signed cash flow and net Up/Down token positions from its fills, then settle the token positions at the Chainlink resolution price to obtain the cycle's realized PnL:
\[
    \text{PnL}_{w,c}
    \;=\; \Delta\text{cash}_{w,c}
    + N_{\text{Up},w,c}\,\mathbf{1}\{\text{Up wins}\}
    + N_{\text{Down},w,c}\,\mathbf{1}\{\text{Down wins}\},
\]
where $\Delta\text{cash}_{w,c}$ is the wallet's USDC inflow during the cycle (positive when sells exceed buys, negative otherwise) and each token settles at \$1 for the winning side and \$0 for the losing side. Because each 5\mn market has its own token IDs, a wallet's pre-cycle position is identically zero and each cycle is a closed loop. This is \emph{gross} PnL: exchange fees are not subtracted; the fee schedule changes mid-sample, and net-of-fee PnL is left to future work.\footnote{Fee-enabled crypto contracts charge only the aggressor: resting liquidity pays no exchange fee and is eligible for rebates funded out of taker-fee revenue. The taker fee is proportional to $p\,(1-p)$ in the contract price $p$, so it peaks at even odds, where short-horizon up/down contracts spend much of their lives. The schedule moved several times within our sample; the five-minute contracts were fee-enabled from launch, with a peak effective rate near 1.56\% at even odds, raised to roughly 1.80\% at the end of March 2026.} Settlement uses the Chainlink resolution price where the on-chain archive has it; for the small fraction of cycles with a missing Chainlink print, we fall back to the last trade in $[0, +30\,\text{s}]$ after close, where Polymarket secondary trade convergence to the Chainlink outcome takes $\sim$10\,s. The panel covers 16{,}073 cycles with on-chain Polymarket trading and 243{,}155 distinct wallets, which we treat as 243{,}155 traders (multi-wallet identity by a single trader is unobserved).

\paragraph{Trader-type identification.} We partition traders into three mutually exclusive types. A trader is labeled a \emph{manipulator} if it satisfies two criteria across the sample: it participated in at least five manipulated cycles, and its aggregate net PnL across those cycles is at least \$2{,}000. The filter selects on participation and profitability in manipulated cycles, not on manipulation itself: a trader could in principle profit in those cycles by luck or by unrelated information.

A trader is labeled a \emph{market maker} (MM) on the prediction-market contract if it does not pass the manipulator filter and satisfies three behavioral criteria, all defined on its trading across the sample. First, at least 85\% of its fills have it on the resting (maker) side rather than crossing the spread. Second, conditional on being active in a cycle, it averages at least 50 fills per cycle, roughly one fill every six seconds. Third, its per-cycle median directional inventory at the close, defined as the absolute difference between its Up- and Down-token positions as a fraction of the gross shares it traded in that cycle, is at most 20\%. The first two criteria capture passive execution and continuous quoting; the third separates liquidity provision from position-taking: an MM closes each cycle near delta-neutral rather than carrying a directional bet across resolution.

Every other trader is \emph{retail/other}, the residual category. Under this classification the trader population breaks down to 821 manipulators (0.34\% of the population), 227 MMs (0.09\%), and 242{,}107 retail/other traders (99.57\%).

\subsubsection{Profits and losses in manipulated cycles}
\label{subsec:wallet_cohorts_aggregate}

Who profits and who pays in the manipulated cycles? Table~\ref{tab:cohort_pnl} answers with four metrics per trader type, each split across manipulated and normal cycles: total net PnL, net PnL per cycle, the fraction of cycles in which the type loses money, and its share of the losses within a cycle. All four point to the same answer: the push transfers wealth from retail traders to manipulators, with market makers largely out of the way.

The PnL levels give the size of the transfer. Manipulators capture $+$\$8.2\,M in aggregate across manipulated cycles, $+$\$5{,}096 per manipulated cycle. Retail traders fund nearly all of it: they lose $-$\$7.6\,M, $-$\$4{,}715 per manipulated cycle. MMs largely dodge the transfer ($-$\$0.6\,M, $-$\$381 per cycle), consistent with the Section~\ref{sec:atm} finding that MMs withdraw their book into the NTM close: they are rarely the counterparty because they do not quote in the cycles where the push is most likely.

The remaining two metrics show that the transfer is broad-based, not driven by a few large-loss cycles. Retail is net negative in 65\% of manipulated cycles, against 48\% of normal ones. The last metric works at the cycle level: the gross loss pool, defined as the sum across trader types of their net losses within a cycle, equals the sum of their gains by zero-sum, and a type's loss share is its loss divided by that pool. Weighted across cycles by pool size, retail bears 70.5\% of manipulated-cycle losses versus 46.0\% in normal cycles. The manipulator cohort still carries 20.2\% of the manipulated-cycle pool even as the dominant net winner: that is the loss it books in the minority of cycles where its push fails, outweighed by its wins (in normal cycles, where it is more often a net loser, its share rises to 43.2\%). The MM share is small in both regimes (9.3\% and 10.8\%): the loss flow runs from retail to manipulators directly.

Normal cycles show the reverse and confirm that the transfer is specific to manipulation. There, MMs collect the spread ($+$\$3.1\,M, $+$\$215 per cycle), retail pays the ordinary cost of trading ($-$\$3.2\,M, $-$\$221 per cycle), and manipulators break even ($+$\$0.1\,M, $+$\$6 per cycle): the cohort that extracts \$5{,}096 per manipulated cycle has no edge in the remaining cycles. Retail's per-cycle loss is roughly twenty times larger when a push is present, and its counterparty switches from the MMs earning the spread to the manipulators. The full trader-type attribution is robust to the manipulator and MM thresholds (Appendix~\ref{app:cohort_sensitivity}).

\begin{table}[!t]
    \centering
    \footnotesize
    \setlength{\tabcolsep}{4.5pt}
    \caption{\textbf{Trader-type PnL in manipulated vs normal cycles, P3 BTC 5\mn.} This table reports four PnL metrics per trader type, split across manipulated and normal cycles. Trader types are mutually exclusive and exhaustive. \emph{Manipulator}: traders with $\geq$5 manipulated cycles and $\geq$\$2{,}000 aggregate manipulated-cycle PnL. \emph{MM}: residual of that filter satisfying passive-rate $\geq$85\%, $\geq$50 fills per cycle when active, and per-cycle median directional inventory $\leq$20\% of gross shares traded. \emph{Retail/other}: residual. Columns report four metrics, each split across manipulated (top-decile $\textit{PushIntensity}$) and normal (other) cycles. \emph{Total PnL} is the cohort's aggregate net PnL over all its (trader, cycle) observations of that type, in \$M; within each cycle type it sums to zero across cohorts by construction. \emph{PnL per cycle} is Total PnL divided by the number of cycles of that type (1{,}613 manipulated, 14{,}460 normal). \emph{\% losing cycles} is the fraction of cycles in which the cohort is net negative, where the (cohort, cycle) cell aggregates a cohort's PnL across its member traders within a cycle. \emph{Loss share} is the cohort's share of the \emph{gross} loss pool (the sum across cohorts of their losses within a cycle), aggregated across cycles with pool-size weights, so it sums to 100\% down each regime, and a net-winning cohort still carries a positive share from the cycles in which it happens to lose.}
    \label{tab:cohort_pnl}
    \begin{tabular}{lrrrrrrrr}
    \toprule
     & \multicolumn{2}{c}{Total PnL} & \multicolumn{2}{c}{PnL per cycle} & \multicolumn{2}{c}{\% losing cycles} & \multicolumn{2}{c}{Loss share} \\
    \cmidrule(lr){2-3}\cmidrule(lr){4-5}\cmidrule(lr){6-7}\cmidrule(lr){8-9}
     & Manip. & Normal & Manip. & Normal & Manip. & Normal & Manip. & Normal \\
    \midrule
    Manipulator & $+$\$8.22M & $+$\$0.09M & $+$\$5{,}096 & $+$\$6 & 38.0\% & 62.4\% & 20.2\% & 43.2\% \\
    Market maker & $-$\$0.62M & $+$\$3.11M & $-$\$381 & $+$\$215 & 58.6\% & 37.7\% & 9.3\% & 10.8\% \\
    Retail / other & $-$\$7.61M & $-$\$3.20M & $-$\$4{,}715 & $-$\$221 & 65.3\% & 47.7\% & 70.5\% & 46.0\% \\
    \bottomrule
\end{tabular}

\end{table}

\subsubsection{Ruling out the gamma-hedging channel}
\label{subsec:wallet_cohorts_hedging}

The leading innocent explanation for the bin-29 push is hedging: a Polymarket maker who has sold Up tokens through the body offsets the resulting short by trading BTC, and that hedge, not manipulation, is what moves the spot across the strike. The channel requires no intent. Retail buys Up tokens through the body; makers absorb the demand by selling Up; the accumulated short must be hedged in the spot; the hedge pushes the close across the strike, and the body-period Up buyers profit. Two facts rule this out. The first, and the sharper, is the state of the contract when the push lands: in the cycles a push most often flips, the maker has essentially nothing to hedge. The second is the timing of the spot flow.

\paragraph{There is nothing to hedge in the cycles a push flips.} A binary contract's sensitivity to the spot price is a delta that collapses away from the strike. When the favored side is priced near $0.5$ the contract behaves almost like the spot itself, and a maker who is short it carries a real directional exposure; but once the price moves toward $0$ or $1$, the contract is all but settled, its value barely responds to a further spot move, and the short position it represents is close to delta-flat. A maker holding such a position has almost nothing to hedge. Yet these near-certain cycles are exactly the ones the push overturns. In the $[0.90, 1.00)$ confidence bin --- where Polymarket priced the favored side at $\geq 90\%$ ten seconds before the close, so the binary's delta has all but vanished --- a push against the favored side reverses the resolution $34.2\%$ of the time, against $1.0\%$ in cycles with no push (Table~\ref{tab:push_characteristics}). A trade placed when there is nothing to hedge cannot be a hedge. Whatever hedging response the spot flow contains can only come \emph{after} the push, once it has dragged the price back toward the strike and revived the very exposure the hedging story needs; it cannot be the push's cause. The hedging channel is coherent only in the small near-the-money slice where the binary's delta is still live (about $6\%$ of cycles); it has no purchase in the near-certain cycles that the push flips one time in three.

\paragraph{The timing of the spot flow is wrong for a hedge.} The second fact concerns the cycles where there \emph{is} something to hedge, and asks when the spot trading arrives. Under dynamic hedging, the standard practice in liquid markets, a maker hedges each marginal share as the short accumulates: if the cohort's short builds steadily through the body, Binance buying should ramp in step. Figure~\ref{fig:cohort_flow_trajectory} shows it does not. The Polymarket positions build continuously --- the manipulator cohort's net Up holdings climb to roughly $8{,}800$ shares by the close, and the MM cohort absorbs a short that grows to roughly $1{,}350$ shares as the final ten seconds begin, a near-linear accumulation rather than a flat book that dumps at the end. The matching spot hedge never appears through the body. Cumulative Binance order flow stands at just $-\$46$\,k at second $215$ of the $300$-second cycle, essentially zero and, if anything, slightly against the eventual push direction; the buying arrives only in the final minute, and the final ten seconds alone carry roughly \$1.7\,M, essentially the cycle's entire net order flow in a single step. A position assembled only in the closing seconds is not a hedge accumulated against an exposure that built over the cycle; it is a single trade timed to move the settlement price.

\paragraph{The magnitudes rule out a deferred hedge.} One variant survives the timing test: a maker who holds the naked delta through the body and hedges only as gamma surges at the very close would also trade in a single late burst. The magnitudes rule it out. A short binary position can lose at most its notional, so the MM cohort's roughly $1{,}350$-share short puts about \$1{,}400 per cycle at risk --- three orders of magnitude below the \$1.7\,M spot trade the channel would attribute to it. No maker, let alone all $28$ active in the average manipulated cycle, hedges a four-figure maximum loss with a seven-figure spot position. The more credible reading is that the bin-29 push is direct spot-side action, coordinated with the manipulator cohort's body-period positioning on Polymarket, with at most a residual hedging component in the bin-29 buying.

\begin{figure}[t]
    \centering
    \includegraphics[width=\textwidth]{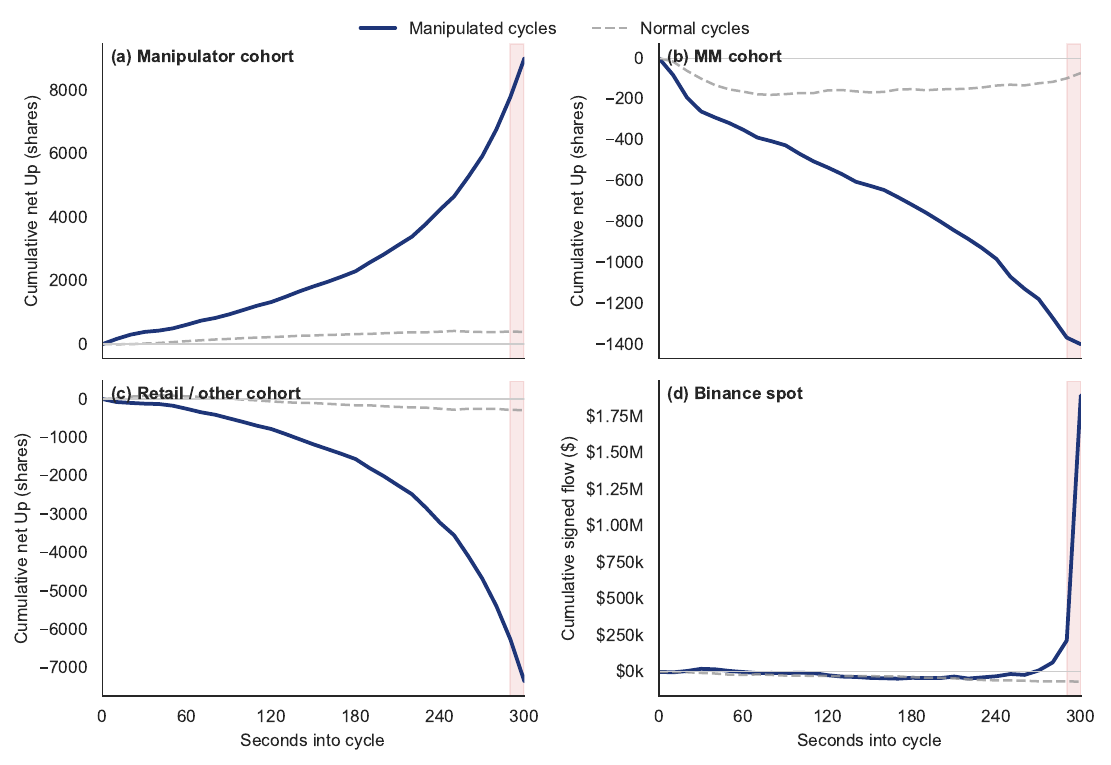}
    \caption{\textbf{Cumulative net position trajectory across the cycle, push-up vs.\ normal cycles, P3 BTC 5\mn.} This figure plots, in each panel, the mean per-cycle cumulative position (or order flow) against seconds into the cycle, starting from zero at the open. The solid line conditions on \emph{push-up} cycles, manipulated cycles whose near-settlement push is upward (labeled ``manipulated cycles'' in the legend), so that positions are not averaged across the two push directions; push-down cycles are reported in Appendix~\ref{app:pushdown_flow}. Panels \textbf{(a)}--\textbf{(c)} plot the net Up-token holdings (shares) of the manipulator, MM, and retail/other cohorts; panel \textbf{(d)} plots the cumulative Binance spot order flow (\$). Dashed gray lines: normal cycles as baselines. Pink shading: the final-10\,s region (seconds 290--300).}
    \label{fig:cohort_flow_trajectory}
\end{figure}

\section{Conclusion}
\label{sec:conclusion}

Prediction markets increasingly list contracts that settle on an asset price that holders can move by trading the underlying. A model and the on-chain record of Polymarket's five-minute Bitcoin contract agree: such a contract degrades price discovery in the underlying and transfers wealth from ordinary traders to the few who manipulate it, even as it makes the underlying more liquid.

The mechanism is a settlement-time push: a holder of the binary contract trades the underlying in the final seconds to move the settlement reference toward her side. The spot maker cannot separate this uninformed push from ordinary flow, so price discovery falls and wealth passes to the manipulator and, under a spread floor, to the prediction-market maker. The near-settlement spike appears only once the five-minute contract exists; the price move reverts within ten seconds; and the effect concentrates in the still-live cycles, where even a side priced above ninety percent is overturned about a third of the time, against one in a hundred otherwise. Some \$8.2~million was transferred over two months, mostly from retail.

The harm can be fixed: it tracks the contract horizon, the remedy our model provides. A manipulator enters only when a feasible push can carry the price across the strike cheaply enough, so a longer horizon makes pivotal price paths rarer and can eliminate the incentive. The settlement rule supplies parallel levers, from averaging or randomizing the close to capping the binary position or taxing settlement-window spot trades. Manipulation here is a settlement-design problem.

\clearpage
\appendix
\setcounter{figure}{0}
\setcounter{table}{0}
\renewcommand{\thefigure}{A\arabic{figure}}
\renewcommand{\thetable}{A\arabic{table}}
\noindent{\Large\bfseries Appendix\par}
\addcontentsline{toc}{section}{Appendix}
\section{Omitted Proofs}
\label{app:proofs}

\subsection{Stage 2 Quotes and Posteriors}
\label{app:stage2-quotes-and-posteriors}

We prove Lemma~\ref{lem:stage2-quotes} in this appendix.

\paragraph{Outcome probabilities.}
The informed trader enters with probability \(\pi\). She buys if \(V=+\sigma\)
and sells if \(V=-\sigma\) conditional on entry. A liquidity trader enters
with probability \(1-\pi\). She buys or sells with equal probability conditional
on entry. Therefore,
\[
    \Prb\left(Y_2^\Spot=\hbuy\mid V=\pm\sigma\right)
    =\frac{1\pm\pi}{2},
    \qquad
    \Prb\left(Y_2^\Spot=\hsell\mid V=\pm\sigma\right)
    =\frac{1\mp\pi}{2}.
\]
Marginalizing over \(V\) gives
$$
    q_{\hbuy}=q_{\hsell}=\frac{1}{2}.
$$

\paragraph{Posterior and quotes.}
Bayes' rule gives
$$
    \ppost_{\hbuy}=\Prb\left(V=+\sigma\mid Y_2^\Spot=\hbuy\right)
    =\frac{\Prb\left(Y_2^\Spot=\hbuy\mid V=+\sigma\right)\Prb\left(V=+\sigma\right)}{q_{\hbuy}}
    =\frac{\frac{1+\pi}{2} \cdot \frac{1}{2}}{\frac{1}{2}}=\frac{1+\pi}{2},
$$
$$
    \ppost_{\hsell}=1 - \ppost_{\hbuy}=\frac{1-\pi}{2}.
$$
Break-even pricing requires
$$
    \frac{1 - \pi}{2} A_2^\Spot = \frac{\pi}{2} \left(\sigma - A_2^\Spot\right)
$$
where the left-hand side is the expected profit from selling to a liquidity trader and the
right-hand side is the expected loss from selling to the informed trader. Solving gives
$$
    A_2^\Spot=\sigma\pi=\bar V_{\hbuy}.
$$
By symmetry,
$$
    B_2^\Spot=-\sigma\pi=\bar V_{\hsell}.
$$
The quotes are indeed inside \(\left(-\sigma,\sigma\right)\) for \(\pi\in\left(0, 1\right)\),
confirming that the informed trader will never stay out at stage~2.

\paragraph{Posterior variance.}
Direct calculation of \(\rvar_{h_2}=4\sigma^2\ppost_{h_2}\left(1-\ppost_{h_2}\right)\) gives
\[
    \rvar_{\hbuy}
    =\rvar_{\hsell}
    =\sigma^2\left(1-\pi^2\right).
\]

\subsection{Stage 3 Quotes and Posteriors}
\label{app:stage3-quotes-and-posteriors}

We prove Proposition~\ref{prop:stage3-closed-form-quotes} in this appendix.

\paragraph{State-conditional outcome probabilities.}
The manipulator enters with probability \(p_{\mathrm M}\). She buys
with probability \(\alpha_{h_2}\) and stays out with probability \(1-\alpha_{h_2}\)
conditional on entry. The informed trader enters with probability
\(\left(1-p_{\mathrm M}\right)\pi\). She buys if \(V=+\sigma\) and sells
if \(V=-\sigma\) conditional on entry. A liquidity
trader enters with probability \(\left(1-p_{\mathrm M}\right)\left(1-\pi\right)\). She
buys or sells with equal probability conditional on entry. Therefore,
\begin{align*}
    \Prb\left(Y_3^\Spot=\hbuy\mid V=\pm\sigma, Y_2^\Spot = h_2\right)  & =
    p_{\mathrm M}\alpha_{h_2}+\left(1-p_{\mathrm M}\right)\frac{1\pm\pi}{2}, \\
    \Prb\left(Y_3^\Spot=\hzero\mid V=\pm\sigma, Y_2^\Spot = h_2\right) & =
    p_{\mathrm M}\left(1-\alpha_{h_2}\right),                                \\
    \Prb\left(Y_3^\Spot=\hsell\mid V=\pm\sigma, Y_2^\Spot = h_2\right) & =
    \left(1-p_{\mathrm M}\right)\frac{1\mp\pi}{2}.
\end{align*}
Marginalizing over $V \mid Y_2^\Spot = h_2$ gives
\begin{align*}
    q_{h_2\hbuy}
     & =
    p_{\mathrm M}\alpha_{h_2}+\left(1-p_{\mathrm M}\right)\left[\frac{1-\pi}{2}+\pi\ppost_{h_2}\right], \\
    q_{h_2\hzero}
     & =
    p_{\mathrm M}\left(1-\alpha_{h_2}\right),                                                         \\
    q_{h_2\hsell}
     & =
    \left(1-p_{\mathrm M}\right)\left[\frac{1-\pi}{2}+\pi\left(1-\ppost_{h_2}\right)\right].
\end{align*}

\paragraph{Break-even ask.}
A stage-3 buy comes from the manipulator or a liquidity trader, both uninformed, or from
the informed trader, who buys only when \(V=+\sigma\). Break-even pricing requires
\[
    \left[p_{\mathrm M}\alpha_{h_2}+\left(1-p_{\mathrm M}\right)\frac{1-\pi}{2}\right]
    \left(A_{h_2}-\bar V_{h_2}\right)
    =
    \left(1-p_{\mathrm M}\right)\pi\ppost_{h_2}\left(\sigma-A_{h_2}\right)
\]
where the left-hand side is the expected profit from selling to an uninformed buyer and the
right-hand side is the expected loss from selling to the informed trader. Solving gives
\begin{equation}
\label{eq:app-stage3-ask-general}
    A_{h_2}
    =
    \sigma\,
    \frac{
    p_{\mathrm M}\alpha_{h_2}\left(2\ppost_{h_2}-1\right)
    +\frac{1-p_{\mathrm M}}{2}
    \left[2\ppost_{h_2}-1+\pi\right]}
    {p_{\mathrm M}\alpha_{h_2}
    +\frac{1-p_{\mathrm M}}{2}
    \left[1+\left(2\ppost_{h_2}-1\right)\pi\right]} .
\end{equation}
Substituting the stage-2 posteriors $\ppost_{\hbuy}=\frac{1+\pi}{2}$ and $\ppost_{\hsell}=\frac{1-\pi}{2}$
gives
\[
    A_{\hbuy}
    =
    \frac{\sigma\pi\left[p_{\mathrm M}\alpha_{\hbuy}+\left(1-p_{\mathrm M}\right)\right]}
    {p_{\mathrm M}\alpha_{\hbuy}+\left(1-p_{\mathrm M}\right)\frac{1+\pi^2}{2}} ,
    \qquad
    A_{\hsell}
    =
    \frac{-\sigma\pi\, p_{\mathrm M}\alpha_{\hsell}}
    {p_{\mathrm M}\alpha_{\hsell}+\left(1-p_{\mathrm M}\right)\frac{1-\pi^2}{2}} .
\]

\paragraph{Break-even bid.}
The manipulator never sells, so a stage-3 sell comes only from a liquidity trader or from
the informed trader, who sells only when \(V=-\sigma\). Break-even pricing requires
\[
    \left(1-p_{\mathrm M}\right)\frac{1-\pi}{2}\left(\bar V_{h_2}-B_{h_2}\right)
    =
    \left(1-p_{\mathrm M}\right)\pi\left(1-\ppost_{h_2}\right)\left(\sigma+B_{h_2}\right)
\]
where the left-hand side is the expected profit from buying from a liquidity seller and the
right-hand side is the expected loss from buying from the informed trader. The common factor
\(\left(1-p_{\mathrm M}\right)\) cancels, so the bid is independent of \(\alpha_{h_2}\) and
\(p_{\mathrm M}\). Solving gives
\begin{equation}
\label{eq:app-stage3-bid-general}
    B_{h_2}
    =
    \sigma\,
    \frac{\left(2\ppost_{h_2}-1\right)-\pi}
    {1-\left(2\ppost_{h_2}-1\right)\pi}.
\end{equation}
Substituting the stage-2 posteriors $\ppost_{\hbuy}=\frac{1+\pi}{2}$ and $\ppost_{\hsell}=\frac{1-\pi}{2}$ gives
\[
    B_{\hbuy}=0,\quad B_{\hsell}=\frac{-2\sigma\pi}{1+\pi^2} .
\]

\paragraph{No-trade posterior.}
No-trade occurs only when the manipulator is present, and she is uninformed. When it has
positive probability (\(\alpha_{h_2}<1\)) the posterior equals the prior entering stage~3; at
\(\alpha_{h_2}=1\) the event has zero probability and is assigned the same belief by convention.
Hence
$$
    \Prb\left(V=\pm\sigma\mid Y_3^\Spot=\hzero, Y_2^\Spot = h_2\right)=
    \Prb\left(V=\pm\sigma\mid Y_2^\Spot = h_2\right)
$$
and
$$
    \bar V_{h_2\hzero}=\bar V_{h_2}=\sigma\left(2\ppost_{h_2}-1\right),
$$
constant in \(\alpha_{h_2}\) and \(p_{\mathrm M}\).

\paragraph{Monotonicities.}
Recall the break-even pricing condition for the ask
\[
    \left[p_{\mathrm M}\alpha_{h_2}+\left(1-p_{\mathrm M}\right)\frac{1-\pi}{2}\right]
    \left(A_{h_2}-\bar V_{h_2}\right)
    =
    \left(1-p_{\mathrm M}\right)\pi\ppost_{h_2}\left(\sigma-A_{h_2}\right).
\]
For \(p_{\mathrm M}\in\left(0,1\right)\) and \(\pi\in\left(0,1\right)\) both
probability coefficients are strictly positive, so the balance forces
\(A_{h_2}\in\left(\bar V_{h_2},\sigma\right)\), which we can also verify
from the closed form above.
The right-hand side does not contain \(\alpha_{h_2}\), while the left-hand side
strictly increases in \(\alpha_{h_2}\). Raising \(\alpha_{h_2}\) therefore
moves the uninformed-profit side above the informed-loss side at the prevailing
quote. Hence \(A_{h_2}\) has to strictly decrease to restore equality.
For monotonicity in \(p_{\mathrm M}\), divide both sides by \(1-p_{\mathrm M}>0\)
to get the following, then apply a similar argument.
\[
    \left[\frac{p_{\mathrm M}}{1-p_{\mathrm M}}\,\alpha_{h_2}+\frac{1-\pi}{2}\right]
    \left(A_{h_2}-\bar V_{h_2}\right)
    =
    \pi\ppost_{h_2}\left(\sigma-A_{h_2}\right).
\]
The bid and the no-trade posterior mean are independent of both $p_{\mathrm M}$ and $\alpha_{h_2}$, as the closed forms above show.
\subsection{Selected Stage-3 Continuation Equilibrium}
\label{app:selected-continuation-existence}

\begin{definition}[Stable stage-3 fixed point]\label{def:stage3-selected-continuation}
    A stage-3 fixed point \(\alpha_{h_2}\) is \emph{stable} if for some
    \(\eta>0\),
    \begin{equation}
        \label{eq:weak-stability}
        \left(\alpha-\alpha_{h_2}\right)\Gamma_{h_2}\left(\alpha\right)\le 0
        \quad
        \text{for all }
        \alpha\in\left(\alpha_{h_2}-\eta,\alpha_{h_2}+\eta\right)\cap\left[0,1\right] .
    \end{equation}
    Equivalently, \(\Gamma_{h_2}\ge 0\) just left and \(\le 0\) just
    right of \(\alpha_{h_2}\). This excludes strict crossings from
    below, where local pressure pushes the conjecture away. Write
    \(\mathcal E_{h_2}\) for the set of stable fixed points.
\end{definition}

We prove Proposition~\ref{prop:stage3-selected-existence} in this appendix. The closed-form expression for $A_{h_2}$ shows that it is continuous in \(\alpha_{h_2}\),
which implies that $\Gamma_{h_2}$ is continuous in \(\alpha_{h_2}\) as well
since $H$ is continuous.
Moreover, \(H\left(\cdot/\varepsilon\right)\) is constant outside
\(\left(-\varepsilon,\varepsilon\right)\) and affine inside, and
\(A_{h_2}\left(\alpha_{h_2}\right)\) is continuous and strictly monotone, so
\(\Gamma_{h_2}\) is strictly monotone on each of the at most three intervals on
which the ask lies above, inside, or below \(\left[-\varepsilon,\varepsilon\right]\). \\
Specifically, if the ask lies inside \(\left[-\varepsilon,\varepsilon\right]\),
$$
    \Gamma_{h_2}'\left(\alpha_{h_2}\right) = \left(\frac{1}{2\varepsilon}-1\right)A_{h_2}'\left(\alpha_{h_2}\right) \neq 0
$$ because \(\varepsilon\ne\frac{1}{2}\).\footnote{The exclusion of
    the degenerate case \(\varepsilon=\frac{1}{2}\) plays no role below, as the canonical regime of
    Proposition~\ref{prop:canonical-equilibrium} has \(\varepsilon<\sigma\pi<\frac{1}{2}\).}
If the ask lies above \(\varepsilon\) or below \(-\varepsilon\), then
$$
    \Gamma_{h_2}'\left(\alpha_{h_2}\right) = -A_{h_2}'\left(\alpha_{h_2}\right) > 0.
$$
In particular, \(\Gamma_{h_2}\) has finitely many zeros. If \(\Gamma_{h_2}\le 0\) on a right-neighborhood of \(0\),
then \(\alpha_{h_2}=0\) is stable. If \(\Gamma_{h_2}\ge 0\) on a left-neighborhood
of \(1\), then \(\alpha_{h_2}=1\) is stable. Otherwise neither holds, so every
right-neighborhood of \(0\) contains a point with \(\Gamma_{h_2}>0\) and every
left-neighborhood of \(1\) a point with \(\Gamma_{h_2}<0\); among the finitely
many zeros lying between two such points, the first downcrossing has
\(\Gamma_{h_2}\ge 0\) immediately to its left and \(\le 0\) immediately to its
right, hence is stable.
Therefore, there is at least one stable fixed point, which allows us
to select the smallest \(\alpha_{h_2}^{*}=\min\mathcal E_{h_2}\).

\subsection{Price Discovery}
\label{app:price-discovery}

We prove Proposition~\ref{prop:price-discovery-sign} in this appendix.
By the law of total variance within each \(h_2\),
\[
    \sum_{h_3}q_{h_2h_3}\,\rvar_{h_2h_3}
    =\rvar_{h_2}- \operatorname{Var}\left(\bar V_{h_2h_3} \mid Y_2^{\Spot}=h_2\right),
\]
where
\[
    \operatorname{Var}\left(\bar V_{h_2h_3} \mid Y_2^{\Spot}=h_2\right)
    =
    \sum_{h_3}q_{h_2h_3}\left(\bar V_{h_2h_3}-\bar V_{h_2}\right)^2
\]
is the dispersion of the stage-3 posterior mean within each \(h_2\).
Since stage~2 has no manipulator, its outcome probabilities \(q_{h_2}\) and conditional
variances \(\rvar_{h_2}\) are all independent of \(p_{\mathrm M}\). Hence, it
suffices to show, for each \(h_2\), that
\[
    \operatorname{Var}\left(\bar V_{h_2h_3} \mid Y_2^{\Spot}=h_2\right)\left(p_{\mathrm M}\right)
    <
    \operatorname{Var}\left(\bar V_{h_2h_3} \mid Y_2^{\Spot}=h_2\right)\left(0\right).
\]
By Proposition~\ref{prop:stage3-closed-form-quotes},
\(\bar V_{h_2\hzero}\left(p_{\mathrm M}\right)=\bar V_{h_2}\),
so the $h_3 = \hzero$ term does not contribute to the within-\(h_2\) dispersion. \\
A sell order never comes from the manipulator, so the posterior mean after a sell is
unaffected by \(p_{\mathrm M}\), but its probability is thinned by the manipulator's
presence:
\[
    \bar V_{h_2\hsell}\left(p_{\mathrm M}\right)=\bar V_{h_2\hsell}\left(0\right),
    \qquad
    q_{h_2\hsell}\left(p_{\mathrm M}\right)=\left(1-p_{\mathrm M}\right)q_{h_2\hsell}\left(0\right),
\]
so the $h_3 = \hsell$ term equals \(\left(1-p_{\mathrm M}\right)\) times its benchmark value. \\
A buy order only moves the posterior mean when it comes from the informed trader:
\begin{align*}
      & q_{h_2\hbuy}\left(p_{\mathrm M}\right)\left(\bar V_{h_2\hbuy}\left(p_{\mathrm M}\right)-\bar V_{h_2}\right)                                                                                                              \\
    = & \E\left[\left(V-\bar V_{h_2}\right)\mathbf 1\{Y_3^{\Spot}=\hbuy\}\mid Y_2^{\Spot}=h_2\right]                                                                                                                             \\
    = & \underbrace{\E\left[\left(V-\bar V_{h_2}\right)\mathbf 1\{Y_3^{\Spot}=\hbuy \text{ from the manipulator}\}\mid Y_2^{\Spot}=h_2\right]}_{=0\ \ (\text{buy}\perp V)}                                                       \\
      & +\underbrace{\E\left[\left(V-\bar V_{h_2}\right)\mathbf 1\{Y_3^{\Spot}=\hbuy \text{ from a liquidity trader}\}\mid Y_2^{\Spot}=h_2\right]}_{=0\ \ (\text{buy}\perp V)}                                                   \\
      & +\underbrace{\E\left[\left(V-\bar V_{h_2}\right)\mathbf 1\{Y_3^{\Spot}=\hbuy \text{ from the informed trader}\}\mid Y_2^{\Spot}=h_2\right]}_{=\left(1-p_{\mathrm M}\right)\pi\ppost_{h_2}\left(\sigma-\bar V_{h_2}\right)} \\
    = & \left(1-p_{\mathrm M}\right)\pi\ppost_{h_2}\left(\sigma-\bar V_{h_2}\right),
\end{align*}
which equals \(\left(1-p_{\mathrm M}\right)\) times the benchmark quantity
\(q_{h_2\hbuy}\left(0\right)\left(\bar V_{h_2\hbuy}\left(0\right)-\bar V_{h_2}\right)\).
Hence,
\begin{align*}
        & q_{h_2\hbuy}\left(p_{\mathrm M}\right)\left(\bar V_{h_2\hbuy}\left(p_{\mathrm M}\right)-\bar V_{h_2}\right)^2                                                                                  \\
    =   & \left(1-p_{\mathrm M}\right)^2\frac{q_{h_2\hbuy}\left(0\right)}{q_{h_2\hbuy}\left(p_{\mathrm M}\right)}\,q_{h_2\hbuy}\left(0\right)\left(\bar V_{h_2\hbuy}\left(0\right)-\bar V_{h_2}\right)^2 \\
    \le & \left(1 - p_{\mathrm M}\right)q_{h_2\hbuy}\left(0\right)\left(\bar V_{h_2\hbuy}\left(0\right)-\bar V_{h_2}\right)^2
\end{align*}
because $q_{h_2\hbuy}\left(p_{\mathrm M}\right)=
    p_{\mathrm M}\alpha_{h_2}+\left(1-p_{\mathrm M}\right)q_{h_2\hbuy}\left(0\right)
    \ge \left(1 - p_{\mathrm M}\right)q_{h_2\hbuy}\left(0\right)$. \\
Summing the three outcomes,
\[
    \operatorname{Var}\left(\bar V_{h_2h_3} \mid Y_2^{\Spot}=h_2\right)\left(p_{\mathrm M}\right)
    \leq
    \left(1 - p_{\mathrm M}\right)\operatorname{Var}\left(\bar V_{h_2h_3} \mid Y_2^{\Spot}=h_2\right)\left(0\right).
\]
With \(\pi>0\) the informed trader moves the benchmark posterior, so
\[
    \operatorname{Var}\left(\bar V_{h_2h_3} \mid Y_2^{\Spot}=h_2\right)\left(0\right) > 0.
\]
Therefore, for any \(p_{\mathrm M}\in\left(0,1\right)\),
\[
    \operatorname{Var}\left(\bar V_{h_2h_3} \mid Y_2^{\Spot}=h_2\right)\left(p_{\mathrm M}\right)
    < \operatorname{Var}\left(\bar V_{h_2h_3} \mid Y_2^{\Spot}=h_2\right)\left(0\right).
\]

\subsection{Canonical-Regime Equilibrium Calculations}
\label{app:canonical-regime-equilibrium-calculations}

We prove Proposition~\ref{prop:canonical-equilibrium} in this appendix, which
focuses on the canonical regime
\[
    \varepsilon < \sigma\pi<\frac{1}{2}.
\]

\subsubsection{Selected stage-3 continuation equilibrium}
\label{app:selected-continuation-equilibrium-by-path}

We find the selected continuation equilibrium at each price path by a
three-step test of Proposition~\ref{prop:stage3-selected-existence}: \\
\emph{Step 1:} Check whether \(\alpha_{h_2}=0\) is a stable fixed point. \\
\emph{Step 2:} If \(\alpha_{h_2}=0\) fails, locate the interior fixed point by solving
\(\Gamma_{h_2}\left(\alpha_{h_2}\right)=0\). \\
\emph{Step 3:} If no interior fixed point exists either, the continuation equilibrium
can only be \(\alpha_{h_2}=1\).

\paragraph{Favorable price path \(h_2=\hbuy\): pure stay-out equilibrium.}

For any $p_{\mathrm M} \in \left(0, 1\right)$ and any \(\alpha_{\hbuy} \in \left[0, 1\right]\),
\(A_{\hbuy}\left(\alpha_{\hbuy}; p_{\mathrm M}\right) \geq \bar V_{\hbuy} = \sigma\pi\ > \varepsilon\),
so \(H\left(A_{\hbuy}\left(\alpha_{\hbuy}; p_{\mathrm M}\right)/\varepsilon\right)= H\left(\bar V_{\hbuy}/\varepsilon\right)=1\). Hence
\[
    \Gamma_{\hbuy}\left(\alpha_{\hbuy}; p_{\mathrm M}\right)
    =1-1-\left[A_{\hbuy}\left(\alpha_{\hbuy}; p_{\mathrm M}\right)-\sigma\pi\right]
    =-\left[A_{\hbuy}\left(\alpha_{\hbuy}; p_{\mathrm M}\right)-\sigma\pi\right] \leq 0.
\]
Therefore, for any $p_{\mathrm M} \in \left(0, 1\right)$, \(\alpha_{\hbuy}^{*}=0\) is a stable fixed point and is clearly the smallest.
Intuitively, after a favorable stage-2 buy the posterior already sits above the settlement
band (\(\bar V_{\hbuy}=\sigma\pi>\varepsilon\)), so the contract pays \(1\) whether or not
the manipulator buys again; a stage-3 buy cannot raise her settlement payoff and only incurs
spot cost, so she stays out.

\paragraph{Unfavorable price path \(h_2=\hsell\): mixed or pure push.}
Fix \(p_{\mathrm M}\in\left(0,1\right)\). At
\(\alpha_{\hsell}=0\), Proposition~\ref{prop:stage3-closed-form-quotes} gives
\(A_{\hsell}\left(0;p_{\mathrm M}\right)=0\), while
\(\bar V_{\hsell}=-\sigma\pi<-\varepsilon\). Therefore
\[
    \Gamma_{\hsell}\left(0;p_{\mathrm M}\right)
    =
    \frac{1}{2}-0
    -\left[0-\left(-\sigma\pi\right)\right]
    =
    \frac{1}{2}-\sigma\pi
    >0,
\]
so \(\alpha_{\hsell}=0\) is not a fixed point. For an interior fixed point
with \(A_{\hsell}\in\left[-\varepsilon,\varepsilon\right]\), the fixed-point condition is
\[
    \frac{A_{\hsell}+\varepsilon}{2\varepsilon}
    -
    \left(A_{\hsell}+\sigma\pi\right)
    =
    0,
\]
which gives the candidate mixed-branch stage-3 ask
\[
    A_{\hsell}
    =
    \frac{\varepsilon\left(2\sigma\pi-1\right)}{1-2\varepsilon}
    \in\left(-\varepsilon,0\right).
\]
Inverting the closed-form expression in Proposition~\ref{prop:stage3-closed-form-quotes} gives the candidate mixed-branch
intensity
\[
    \alpha_{\hsell}\left(p_{\mathrm M}\right)
    =
    \frac{1-p_{\mathrm M}}{p_{\mathrm M}}
    \frac{\varepsilon\left(1-\pi^2\right)\left(1-2\sigma\pi\right)}
         {2\left(\sigma\pi-\varepsilon\right)} > 0.
\]
This candidate is admissible if and only if
\[
    \alpha_{\hsell}\left(p_{\mathrm M}\right) < 1
    \quad\Longleftrightarrow\quad
    p_{\mathrm M} > p_0
    \coloneq
    \frac{\varepsilon\left(1-\pi^2\right)\left(1-2\sigma\pi\right)}
         {2\left(\sigma\pi-\varepsilon\right)
          +\varepsilon\left(1-\pi^2\right)\left(1-2\sigma\pi\right)}.
\]
Since $A_\hsell$ strictly decreases in $\alpha_\hsell$ by Corollary~\ref{cor:stage3-pm-buy-quote-monotonicity},
we can verify that $\Gamma_\hsell$ crosses zero from positive to negative as $\alpha_\hsell$ crosses
this candidate from below to above. Thus, for \(p_{\mathrm M} \in \left(p_0, 1\right)\), this candidate is the
selected continuation equilibrium. For \(p_{\mathrm M} \in \left(0,p_0\right]\), this candidate would be equal to
or greater than $1$, which is not admissible, so the selected continuation equilibrium is the endpoint \(\alpha_{\hsell}^{*}=1\):
\[
    \alpha_{\hsell}^{*}\left(p_{\mathrm M}\right)
    =
    \begin{cases}
    1,
    & 0<p_{\mathrm M}\leq p_0,\\[0.75em]
    \displaystyle
    \frac{1-p_{\mathrm M}}{p_{\mathrm M}}
    \frac{\varepsilon\left(1-\pi^2\right)\left(1-2\sigma\pi\right)}
         {2\left(\sigma\pi-\varepsilon\right)},
    & p_0<p_{\mathrm M}<1.
    \end{cases}
\]
The corresponding equilibrium ask is
\[
    A_{\hsell}^{*}\left(p_{\mathrm M}\right)
    =
    \begin{cases}
    \displaystyle
    -\frac{2\sigma\pi p_{\mathrm M}}
           {\left(1-\pi^2\right)+\left(1+\pi^2\right)p_{\mathrm M}},
    & 0<p_{\mathrm M}\leq p_0,\\[1.25em]
    \displaystyle
    -\frac{\varepsilon\left(1-2\sigma\pi\right)}{1-2\varepsilon},
    & p_0<p_{\mathrm M}<1.
    \end{cases}
\]
On both branches, this ask remains in the settlement band.\footnote{On the mixed branch,
the ask lies in $\left(-\varepsilon,0\right)$ by Assumption~\ref{ass:canonical}. On the pure-push branch, it is
clearly below zero, and it is easy to verify that the ask is strictly decreasing in $p_{\mathrm M}$. The two branches
meet at $p_\mathrm{M} = p_0$.}
Therefore the settlement probability induced by a stage-3 push after an unfavorable
price path is
\[
    H\left(A_{\hsell}^{*}\left(p_{\mathrm M}\right)/\varepsilon\right)
    =
    \begin{cases}
    \displaystyle
    \frac{1}{2}
    -
    \frac{\sigma\pi p_{\mathrm M}}
         {\varepsilon\left[\left(1-\pi^2\right)+\left(1+\pi^2\right)p_{\mathrm M}\right]},
    & 0<p_{\mathrm M}\leq p_0,\\[1.25em]
    \displaystyle
    \frac{\sigma\pi-\varepsilon}{1-2\varepsilon},
    & p_0<p_{\mathrm M}<1.
    \end{cases}
\]
The per-push spot cost after a stage-2 sell is
\[
    A_{\hsell}^{*}\left(p_{\mathrm M}\right)-\bar V_{\hsell}
    =
    \begin{cases}
    \displaystyle
    \frac{\sigma\pi\left(1-\pi^2\right)\left(1-p_{\mathrm M}\right)}
         {\left(1-\pi^2\right)+\left(1+\pi^2\right)p_{\mathrm M}},
    & 0<p_{\mathrm M}\leq p_0,\\[1.25em]
    \displaystyle
    \frac{\sigma\pi-\varepsilon}{1-2\varepsilon},
    & p_0<p_{\mathrm M}<1.
    \end{cases}
\]
The four displayed quantities are continuous at \(p_{\mathrm M}=p_0\), where the
pure-push and mixed branches meet.

\subsubsection{Optimal manipulator entry probability}
\label{app:optimal-manipulator-entry-probability}

Both the pricing of the binary contract and the manipulator's stage-1 entry problem use
the selected continuation equilibrium from Appendix~\ref{app:selected-continuation-equilibrium-by-path}.
Combining Appendix~\ref{app:stage2-quotes-and-posteriors}, the general ask and
bid formulas \eqref{eq:app-stage3-ask-general} and
\eqref{eq:app-stage3-bid-general}, and Appendix~\ref{app:selected-continuation-equilibrium-by-path},
the stage-3 asks, bids, and no-trade posterior means used below are
\[
    A_{\hbuy}^{*}
    =
    \frac{2\sigma\pi}{1+\pi^2},
    \qquad
    A_{\hsell}^{*}\left(p_{\mathrm M}\right)
    =
    \begin{cases}
        \displaystyle
        -\frac{2\sigma\pi p_{\mathrm M}}
               {\left(1-\pi^2\right)+\left(1+\pi^2\right)p_{\mathrm M}},
        & 0<p_{\mathrm M}\leq p_0,\\[1.2em]
        \displaystyle
        -\frac{\varepsilon\left(1-2\sigma\pi\right)}{1-2\varepsilon},
        & p_0<p_{\mathrm M}<1,
    \end{cases}
\]
\[
    B_{\hbuy}
    =
    0,
    \qquad
    B_{\hsell}
    =
    -\frac{2\sigma\pi}{1+\pi^2},
\]
\[
    \bar V_{\hbuy}
    =
    \sigma\pi,
    \qquad
    \bar V_{\hsell}
    =
    -\sigma\pi.
\]

\paragraph{Pure-push branch \(p_{\mathrm M} \in \left(0, p_0\right]\).}
On this branch, \(\alpha_{\hbuy}^{*}=0\) and \(\alpha_{\hsell}^{*}=1\). As a result,
there is no spot cost after a stage-2 buy, while the spot cost after a stage-2 sell
is $A_{\hsell}^{*}\left(p_{\mathrm M}\right)-\bar V_{\hsell}$.
The per-path expected binary-contract payoffs are:
\begin{align*}
    \Pi_{\mathrm M}^{+}\left(p_{\mathrm M},\hbuy\right)
    &=
    \alpha_{\hbuy}^{*}H\left(A_{\hbuy}^{*}/\varepsilon\right)
    +\left(1-\alpha_{\hbuy}^{*}\right)H\left(\bar V_{\hbuy}/\varepsilon\right)
    =
    H\left(\bar V_{\hbuy}/\varepsilon\right)
    =
    1,\\
    \Pi_{\mathrm M}^{+}\left(p_{\mathrm M},\hsell\right)
    &=
    \alpha_{\hsell}^{*}H\left(A_{\hsell}^{*}\left(p_{\mathrm M}\right)/\varepsilon\right)
    +\left(1-\alpha_{\hsell}^{*}\right)H\left(\bar V_{\hsell}/\varepsilon\right)
    =
    H\left(A_{\hsell}^{*}\left(p_{\mathrm M}\right)/\varepsilon\right),\\
    \Pi_{\mathrm L}^{+}\left(p_{\mathrm M},\hbuy\right)
    &=
    q_{\hbuy \hbuy}\left(0\right)H\left(A_{\hbuy}^{*}/\varepsilon\right)
    +
    q_{\hbuy \hsell}\left(0\right)H\left(B_{\hbuy}/\varepsilon\right)
    =
    \frac{1+\pi^2}{2}\cdot 1+\frac{1-\pi^2}{2}\cdot\frac{1}{2}
    =
    \frac{3+\pi^2}{4}, \\
    \Pi_{\mathrm L}^{+}\left(p_{\mathrm M},\hsell\right)
    &=
    q_{\hsell \hbuy}\left(0\right)H\left(A_{\hsell}^{*}\left(p_{\mathrm M}\right)/\varepsilon\right)
    +
    q_{\hsell \hsell}\left(0\right)H\left(B_{\hsell}/\varepsilon\right)
    =
    \frac{1-\pi^2}{2}
    H\left(A_{\hsell}^{*}\left(p_{\mathrm M}\right)/\varepsilon\right),
\end{align*}
where $q_{h_2 h_3}\left(0\right)$ denotes the conditional stage-3 outcome probabilities defined in
Section~\ref{sec:setup-order-flow} evaluated at $p_{\mathrm M} = 0$. \\
The manipulator's expected binary-contract payoff net of expected spot manipulation cost is
\begin{align*}
    &\Pi_{\mathrm M}^{+}\left(p_{\mathrm M}\right)-K^{+}\left(p_{\mathrm M}\right) \\
    =&
    q_{\hbuy}\Pi_{\mathrm M}^{+}\left(p_{\mathrm M},\hbuy\right)
    +
    q_{\hsell}\Pi_{\mathrm M}^{+}\left(p_{\mathrm M},\hsell\right)
    -
    q_{\hbuy}\cdot 0
    -
    q_{\hsell}\left[A_{\hsell}^{*}\left(p_{\mathrm M}\right)-\bar V_{\hsell}\right] \\
    =&
    \frac{1}{2}
    +
    \frac{1}{2}H\left(A_{\hsell}^{*}\left(p_{\mathrm M}\right)/\varepsilon\right)
    -
    \frac{1}{2}\left[A_{\hsell}^{*}\left(p_{\mathrm M}\right)-\bar V_{\hsell}\right] \\
    =&
    \frac{1}{2}
    +
    \frac{1}{2} \cdot
    \frac{
    \varepsilon\left[\left(1-\pi^2\right)+\left(1+\pi^2\right)p_{\mathrm M}\right]
    -2\sigma\pi p_{\mathrm M}
    }
    {
    2\varepsilon\left[\left(1-\pi^2\right)+\left(1+\pi^2\right)p_{\mathrm M}\right]
    }
    -
    \frac{1}{2} \cdot
    \frac{\sigma\pi\left(1-\pi^2\right)\left(1-p_{\mathrm M}\right)}
         {\left(1-\pi^2\right)+\left(1+\pi^2\right)p_{\mathrm M}} \\
    =&
    \frac{1}{2}
    +
    \frac{\left(1-\pi^2\right)\left(1-2\sigma\pi\right)
          \left(1-p_{\mathrm M}/p_0\right)}
         {4\left[\left(1-\pi^2\right)+\left(1+\pi^2\right)p_{\mathrm M}\right]}.
\end{align*}
The unconstrained shadow stage-1 ask obtained from the prediction-market maker's break-even
condition is
\begin{align*}
    &A_1^{\PM,\mathrm{BE}}\left(p_{\mathrm M}\right) \\
    =&
    p_{\mathrm M}\Pi_{\mathrm M}^{+}\left(p_{\mathrm M}\right)
    +
    \left(1-p_{\mathrm M}\right)\Pi_{\mathrm L}^{+}\left(p_{\mathrm M}\right)\\
    =&
    p_{\mathrm M}
    \left[
        q_{\hbuy}\Pi_{\mathrm M}^{+}\left(p_{\mathrm M},\hbuy\right)
        +
        q_{\hsell}\Pi_{\mathrm M}^{+}\left(p_{\mathrm M},\hsell\right)
    \right]
    +
    \left(1-p_{\mathrm M}\right)
    \left[
        q_{\hbuy}\Pi_{\mathrm L}^{+}\left(p_{\mathrm M},\hbuy\right)
        +
        q_{\hsell}\Pi_{\mathrm L}^{+}\left(p_{\mathrm M},\hsell\right)
    \right] \\
    =&
    p_{\mathrm M}
    \left[
        \frac{1}{2}
        +
        \frac{1}{2}H\left(A_{\hsell}^{*}\left(p_{\mathrm M}\right)/\varepsilon\right)
    \right]
    +
    \left(1-p_{\mathrm M}\right)
    \left[
        \frac{3+\pi^2}{8}
        +
        \frac{1-\pi^2}{4}
        H\left(A_{\hsell}^{*}\left(p_{\mathrm M}\right)/\varepsilon\right)
    \right] \\
    =&
    \frac{p_{\mathrm M}}{2}
    +
    \left(1-p_{\mathrm M}\right)\frac{3+\pi^2}{8}
    +
    \left[
        \frac{p_{\mathrm M}}{2}
        +
        \frac{\left(1-p_{\mathrm M}\right)\left(1-\pi^2\right)}{4}
    \right]
    \left[
        \frac{1}{2}
        -
        \frac{\sigma\pi p_{\mathrm M}}
             {\varepsilon\left[\left(1-\pi^2\right)+\left(1+\pi^2\right)p_{\mathrm M}\right]}
    \right] \\
    =&
    \frac{1}{2}
    -
    \frac{p_{\mathrm M}\left(\sigma\pi-\varepsilon\right)}{4\varepsilon} < \frac{1}{2}.
\end{align*}

\paragraph{Mixed branch \(p_{\mathrm M} \in \left(p_0, 1\right)\).}
On this branch, \(\alpha_{\hbuy}^{*}=0\) and \(\alpha_{\hsell}^{*}\left(p_{\mathrm M}\right)\in\left(0,1\right)\).
The per-path expected binary-contract payoffs are:
\begin{align*}
    \Pi_{\mathrm M}^{+}\left(p_{\mathrm M},\hbuy\right)
    &=
    1 \qquad \text{(same as the pure-push branch)},\\
    \Pi_{\mathrm M}^{+}\left(p_{\mathrm M},\hsell\right)
    &=
    \alpha_{\hsell}^{*}\left(p_{\mathrm M}\right)
    H\left(A_{\hsell}^{*}\left(p_{\mathrm M}\right)/\varepsilon\right)
    +
    \left(1-\alpha_{\hsell}^{*}\left(p_{\mathrm M}\right)\right)
    H\left(\bar V_{\hsell}/\varepsilon\right)
    =
    \alpha_{\hsell}^{*}\left(p_{\mathrm M}\right)
    H\left(A_{\hsell}^{*}\left(p_{\mathrm M}\right)/\varepsilon\right),\\
    \Pi_{\mathrm L}^{+}\left(p_{\mathrm M},\hbuy\right)
    &=
    \frac{3+\pi^2}{4} \qquad \text{(same as the pure-push branch)},\\
    \Pi_{\mathrm L}^{+}\left(p_{\mathrm M},\hsell\right)
    &=
    q_{\hsell \hbuy}\left(0\right)H\left(A_{\hsell}^{*}\left(p_{\mathrm M}\right)/\varepsilon\right)
    +
    q_{\hsell \hsell}\left(0\right)H\left(B_{\hsell}/\varepsilon\right)
    =
    \frac{1-\pi^2}{2}
    H\left(A_{\hsell}^{*}\left(p_{\mathrm M}\right)/\varepsilon\right).
\end{align*}
The manipulator's expected binary-contract payoff net of expected spot manipulation cost is
\begin{align*}
    &\Pi_{\mathrm M}^{+}\left(p_{\mathrm M}\right)-K^{+}\left(p_{\mathrm M}\right)\\
    =&
    q_{\hbuy}\Pi_{\mathrm M}^{+}\left(p_{\mathrm M},\hbuy\right)
    +
    q_{\hsell}\Pi_{\mathrm M}^{+}\left(p_{\mathrm M},\hsell\right)
    -
    q_{\hbuy}\cdot 0
    -
    q_{\hsell}\alpha_{\hsell}^{*}\left(p_{\mathrm M}\right)
    \left[A_{\hsell}^{*}\left(p_{\mathrm M}\right)-\bar V_{\hsell}\right]\\
    =&
    \frac{1}{2}
    +
    \frac{\alpha_{\hsell}^{*}\left(p_{\mathrm M}\right)}{2}
    \frac{\sigma\pi-\varepsilon}{1-2\varepsilon}
    -
    \frac{\alpha_{\hsell}^{*}\left(p_{\mathrm M}\right)}{2}
    \frac{\sigma\pi-\varepsilon}{1-2\varepsilon}\\
    =&
    \frac{1}{2}.
\end{align*}
The unconstrained shadow stage-1 ask obtained from the prediction-market maker's break-even
condition is
\begin{align*}
    &A_1^{\PM,\mathrm{BE}}\left(p_{\mathrm M}\right)\\
    =&
    p_{\mathrm M}\Pi_{\mathrm M}^{+}\left(p_{\mathrm M}\right)
    +
    \left(1-p_{\mathrm M}\right)\Pi_{\mathrm L}^{+}\left(p_{\mathrm M}\right)\\
    =&
    p_{\mathrm M}
    \left[
        q_{\hbuy}\Pi_{\mathrm M}^{+}\left(p_{\mathrm M},\hbuy\right)
        +
        q_{\hsell}\Pi_{\mathrm M}^{+}\left(p_{\mathrm M},\hsell\right)
    \right]
    +
    \left(1-p_{\mathrm M}\right)
    \left[
        q_{\hbuy}\Pi_{\mathrm L}^{+}\left(p_{\mathrm M},\hbuy\right)
        +
        q_{\hsell}\Pi_{\mathrm L}^{+}\left(p_{\mathrm M},\hsell\right)
    \right] \\
    =&
    p_{\mathrm M}
    \left[
        \frac{1}{2}
        +
        \frac{\alpha_{\hsell}^{*}\left(p_{\mathrm M}\right)}{2}
        \frac{\sigma\pi-\varepsilon}{1-2\varepsilon}
    \right]
    +
    \left(1-p_{\mathrm M}\right)
    \left[
        \frac{3+\pi^2}{8}
        +
        \frac{\left(1-\pi^2\right)\left(\sigma\pi-\varepsilon\right)}
             {4\left(1-2\varepsilon\right)}
    \right]\\
    =&
    \frac{p_{\mathrm M}}{2}
    +
    \left(1-p_{\mathrm M}\right)
    \frac{\varepsilon\left(1-\pi^2\right)\left(1-2\sigma\pi\right)}
         {4\left(1-2\varepsilon\right)}
    +
    \left(1-p_{\mathrm M}\right)
    \left[
        \frac{3+\pi^2}{8}
        +
        \frac{\left(1-\pi^2\right)\left(\sigma\pi-\varepsilon\right)}
             {4\left(1-2\varepsilon\right)}
    \right]\\
    =&
    \frac{1}{2}
    -
    \frac{\left(1-p_{\mathrm M}\right)\left(1-\pi^2\right)\left(1-2\sigma\pi\right)}{8} < \frac{1}{2}.
\end{align*}

\paragraph{Non-crossing equilibrium ask and optimal entry probability.} From the above calculations,
\(A_1^{\PM,\mathrm{BE}}\left(p_{\mathrm M}\right)<1/2\) for any \(p_{\mathrm M}\in\left(0,1\right)\). Therefore,
the non-crossing equilibrium stage-1 ask is
\[
    A_1^{\PM{*}}\left(p_{\mathrm M}\right)
    =
    \max\left\{
        A_1^{\PM,\mathrm{BE}}\left(p_{\mathrm M}\right),
        \frac{1}{2}
    \right\}
    =
    \frac{1}{2}
\]
for any \(p_{\mathrm M}\in\left(0,1\right)\). The manipulator's expected profit conditional on entering at
stage~1 is
\[
    S_{\mathrm M}^{+}\left(p_{\mathrm M}\right)
    =
    \Pi_{\mathrm M}^{+}\left(p_{\mathrm M}\right)
    -
    K^{+}\left(p_{\mathrm M}\right)
    -
    \frac{1}{2},
\]
which yields
\[
    S_{\mathrm M}^{+}\left(p_{\mathrm M}\right)
    =
    \begin{cases}
    \displaystyle
    \frac{\left(1-\pi^2\right)\left(1-2\sigma\pi\right)
          \left(1-p_{\mathrm M}/p_0\right)}
         {4\left[\left(1-\pi^2\right)+\left(1+\pi^2\right)p_{\mathrm M}\right]},
    & 0<p_{\mathrm M}\leq p_0,\\[1.5em]
    0,
    & p_0 < p_{\mathrm M}<1.
    \end{cases}
\]
The pure-push branch is positive when $0 < p_{\mathrm M} < p_0$ and vanishes at
$p_{\mathrm M} = p_0$. Therefore, the manipulator maximizes the ex-ante profit
\begin{equation}
\label{eq:app-canonical-entry-objective}
\begin{aligned}
    \bar S_{\mathrm M}^{+}\left(p_{\mathrm M}\right)
    &=
    p_{\mathrm M}S_{\mathrm M}^{+}\left(p_{\mathrm M}\right)
    =
    \frac{p_{\mathrm M}\left(1-\pi^2\right)\left(1-2\sigma\pi\right)
          \left(1-p_{\mathrm M}/p_0\right)}
         {4\left[\left(1-\pi^2\right)+\left(1+\pi^2\right)p_{\mathrm M}\right]}
\end{aligned}
\end{equation}
over \(0<p_{\mathrm M} < p_0\). The first-order condition reduces to
\[
    \left(1+\pi^2\right)p_{\mathrm M}^2
    +
    2\left(1-\pi^2\right)p_{\mathrm M}
    -
    \left(1-\pi^2\right)p_0
    =
    0.
\]
The unique positive root is
\[
    p_{\mathrm M}^{*}
    =
    \sqrt{\left(\frac{1-\pi^2}{1+\pi^2}\right)^2
          +\frac{1-\pi^2}{1+\pi^2}p_0}
    -\frac{1-\pi^2}{1+\pi^2},
\]
which we can verify is the unique maximizer over \(\left(0, p_0\right)\). At $p_{\mathrm M}^{*}$,
\begin{align*}
    \alpha_{\hbuy}^{*}\left(p_{\mathrm M}^{*}\right)
    &=
    0, \\
    \alpha_{\hsell}^{*}\left(p_{\mathrm M}^{*}\right)
    &=
    1, \\
    S_{\mathrm M}^{+}\left(p_{\mathrm M}^{*}\right)
    &=
    \frac{\left(1-2\sigma\pi\right)p_{\mathrm M}^{*}}
         {4p_0},
    \\
    \bar S_{\mathrm M}^{+}\left(p_{\mathrm M}^{*}\right)
    &=
    \frac{\left(1-2\sigma\pi\right)p_{\mathrm M}^{*2}}
         {4p_0}.
\end{align*}

\subsubsection{Comparative statics derivations}
\label{app:comparative-statics-derivations}

We verify the signs in Corollary~\ref{lem:freq-magnitude} under the
equilibrium of Proposition~\ref{prop:canonical-equilibrium}.

\paragraph{Pure-push cutoff \(p_0\).} We can sign the partial derivatives by inspection
of the closed-form expression for $p_0$
\[
p_0 = \left(1 + \frac{{2\left(\sigma\pi-\varepsilon\right)}}
                   {\varepsilon\left(1-\pi^2\right)\left(1-2\sigma\pi\right)}
    \right)^{-1},
\]
which gives
\[
    \frac{\partial p_0}{\partial\varepsilon} > 0, \qquad
    \frac{\partial p_0}{\partial\sigma} < 0, \qquad
    \frac{\partial p_0}{\partial\pi} < 0.
\]

\paragraph{Optimal entry probability \(p_{\mathrm M}^{*}\).}
For \(z \in \left(0, 1\right)\), $p_0 \in \left(0, 1\right)$, the map
\[
    \left(z, p_0\right)
    \mapsto
    \sqrt{z^2+z p_0}-z
    = \frac{p_0}{\sqrt{1+p_0/z}+1}
\]
strictly increases in both \(z\) and \(p_0\). Since
\[
    p_{\mathrm M}^{*}
    =
    \sqrt{
        \left(\frac{1-\pi^2}{1+\pi^2}\right)^2
        +
        \frac{1-\pi^2}{1+\pi^2}p_0
    }
    -
    \frac{1-\pi^2}{1+\pi^2},
\]
and $\left(1-\pi^2\right)/\left(1+\pi^2\right)$ strictly decreases in $\pi$, we obtain
\[
    \frac{\partial p_{\mathrm M}^{*}}{\partial\varepsilon}>0,
    \qquad
    \frac{\partial p_{\mathrm M}^{*}}{\partial\sigma}<0,
    \qquad
    \frac{\partial p_{\mathrm M}^{*}}{\partial\pi}<0.
\]

\paragraph{Stage-3 ask \(A_{\hsell}^{*}\left(p_{\mathrm M}^{*}\right)\).}
Recall that
\[
A_{\hsell}^{*}\left(p_{\mathrm M}^{*}\right)
=
-\frac{2\sigma\pi p_{\mathrm M}^{*}}
           {\left(1-\pi^2\right)+\left(1+\pi^2\right)p_{\mathrm M}^{*}}
=
-\frac{2\sigma\pi}
        {\left(1-\pi^2\right)/p_{\mathrm M}^{*}+\left(1+\pi^2\right)}
<0.
\]
This quantity depends on \(\varepsilon\) only through \(p_{\mathrm M}^{*}\) and
strictly decreases in \(p_{\mathrm M}^{*}\) holding all other terms fixed. Since
\(\partial p_{\mathrm M}^{*}/\partial\varepsilon>0\), we obtain
\[
    \frac{\partial A_{\hsell}^{*}\left(p_{\mathrm M}^{*}\right)}
         {\partial\varepsilon}
    <0.
\]
For \(\sigma\) and \(\pi\), using the closed-form expression for \(p_{\mathrm M}^{*}\), we have
\[
    \left(1-\pi^2\right)+\left(1+\pi^2\right)p_{\mathrm M}^{*}
    =
    \sqrt{
        \left(1-\pi^2\right)^2
        +
        \left(1-\pi^2\right)\left(1+\pi^2\right)p_0
    }
\]
and therefore
\[
    A_{\hsell}^{*}\left(p_{\mathrm M}^{*}\right)
    =
    -\frac{2\sigma\pi}{1+\pi^2}
    \left[
        1
        -
        \sqrt{
            \frac{1-\pi^2}
                 {\left(1-\pi^2\right)+\left(1+\pi^2\right)p_0}
        }
    \right].
\]
Substituting the closed-form expression for \(p_0\) gives
\begin{align*}
    A_{\hsell}^{*}\left(p_{\mathrm M}^{*}\right)
    &=
    -\frac{2\sigma\pi}{1+\pi^2}
    \left[
        1
        -
        \sqrt{
            1
            -
            \frac{\varepsilon\left(1+\pi^2\right)\left(1-2\sigma\pi\right)}
                 {2\sigma\pi\left(1-2\varepsilon\right)}
        }
    \right] \\
    &=
    -\frac{\varepsilon\left(1-2\sigma\pi\right)}
         {1-2\varepsilon}
    \left[
        1+
        \sqrt{
            1-
            \frac{\varepsilon\left(1+\pi^2\right)\left(1-2\sigma\pi\right)}
                 {2\sigma\pi\left(1-2\varepsilon\right)}
        }
    \right]^{-1} \\
    &=
    -\frac{\varepsilon\left(1-2\sigma\pi\right)}
         {1-2\varepsilon}
    \left[
        1+
        \sqrt{
            1-
            \frac{\varepsilon}{2\sigma\left(1-2\varepsilon\right)}
                \left[
                    \pi+\frac{1}{\pi}-2\sigma\left(1+\pi^2\right)
                \right]
        }
    \right]^{-1}.
\end{align*}
By inspection of the last expression, the square-root term strictly increases in both
\(\sigma\) and \(\pi\), while the positive fraction multiplying the leading negative
sign strictly decreases in both \(\sigma\) and \(\pi\). Therefore,
\[
    \frac{\partial A_{\hsell}^{*}\left(p_{\mathrm M}^{*}\right)}
         {\partial\sigma}
    >0,
    \qquad
    \frac{\partial A_{\hsell}^{*}\left(p_{\mathrm M}^{*}\right)}
         {\partial\pi}
    >0.
\]

\paragraph{Spot manipulation cost at $p_{\mathrm M}^{*}$.}
Since \(\bar V_{\hsell}=-\sigma\pi\), the spot manipulation cost is
\[
    A_{\hsell}^{*}\left(p_{\mathrm M}^{*}\right)-\bar V_{\hsell}
    =
    A_{\hsell}^{*}\left(p_{\mathrm M}^{*}\right)+\sigma\pi.
\]
Therefore,
\begin{align*}
    \frac{\partial}{\partial\varepsilon}
    \left[A_{\hsell}^{*}\left(p_{\mathrm M}^{*}\right)-\bar V_{\hsell}\right]
    &=
    \frac{\partial A_{\hsell}^{*}\left(p_{\mathrm M}^{*}\right)}
         {\partial\varepsilon}
    <0, \\
    \frac{\partial}{\partial\sigma}
    \left[A_{\hsell}^{*}\left(p_{\mathrm M}^{*}\right)-\bar V_{\hsell}\right]
    &=
    \frac{\partial A_{\hsell}^{*}\left(p_{\mathrm M}^{*}\right)}
         {\partial\sigma}
    +
    \pi
    >0,\\
    \frac{\partial}{\partial\pi}
    \left[A_{\hsell}^{*}\left(p_{\mathrm M}^{*}\right)-\bar V_{\hsell}\right]
    &=
    \frac{\partial A_{\hsell}^{*}\left(p_{\mathrm M}^{*}\right)}
         {\partial\pi}
    +
    \sigma
    >0.
\end{align*}

\paragraph{Ex-ante manipulation profit $\bar{S}_{\mathrm M}^{+}\left(p_{\mathrm M}^{*}\right)$.}
Using the first-order condition from Appendix~\ref{app:optimal-manipulator-entry-probability} evaluated at
\(p_{\mathrm M}^{*}\), we can verify that
\[
    \bar S_{\mathrm M}^{+}\left(p_{\mathrm M}^{*}\right)
    =
    \frac{1-2\sigma\pi}{4}
    \frac{p_{\mathrm M}^{*2}}{p_0}
    =
    \frac{1-2\sigma\pi}{4}
    \left(
        \frac{1+\pi^2}{1-\pi^2} + \frac{2}{p_\mathrm{M}^*}
    \right)^{-1}.
\]
Combined with the comparative statics results for $p_{\mathrm M}^{*}$, we obtain by inspection of the last
expression that
\[
    \frac{\partial\bar S_{\mathrm M}^{+}\left(p_{\mathrm M}^{*}\right)}
         {\partial\varepsilon}
    >0,
    \qquad
    \frac{\partial\bar S_{\mathrm M}^{+}\left(p_{\mathrm M}^{*}\right)}
         {\partial\sigma}
    <0,
    \qquad
    \frac{\partial\bar S_{\mathrm M}^{+}\left(p_{\mathrm M}^{*}\right)}
         {\partial\pi}
    <0.
\]

\paragraph{Conditional manipulation profit $S_{\mathrm M}^{+}\left(p_{\mathrm M}^{*}\right)$.}
Let
\[
    T
    \coloneq
    \sqrt{
        1-
        \frac{\varepsilon\left(1+\pi^2\right)\left(1-2\sigma\pi\right)}
             {2\sigma\pi\left(1-2\varepsilon\right)}
    }
    =
    \sqrt{
            \frac{1-\pi^2}
                 {\left(1-\pi^2\right)+\left(1+\pi^2\right)p_0}
    }.
\]
In the canonical regime, \(T\in\left(\sqrt{\left(1-\pi^2\right)/2},1\right)\).
Using the closed-form expression for \(p_{\mathrm M}^{*}\), we have
\begin{align*}
    \frac{p_{\mathrm M}^{*}}{p_0}
    &=
    \frac{1-\pi^2}{1+\pi^2}\frac{1}{p_0}\left(
        \sqrt{
            1 + \frac{1+\pi^2}{1-\pi^2} p_0
        } - 1
    \right)
    =
    \left(\sqrt{1 + \frac{1+\pi^2}{1-\pi^2} p_0}+1\right)^{-1}
    =
    \frac{T}{1+T}.
\end{align*}
Therefore, the conditional manipulation profit at the optimal entry probability decomposes as
\[
S_{\mathrm M}^{+}\left(p_{\mathrm M}^{*}\right)
=
\underbrace{
    \vphantom{\frac{1-2\sigma\pi}{4}\frac{T}{1+T}}
    \frac{1-2\sigma\pi}{4}
}_{\vphantom{\text{entry-ratio channel}}\text{prize-compression channel}}
\cdot
\underbrace{
    \vphantom{\frac{1-2\sigma\pi}{4}\frac{T}{1+T}}
    \frac{T}{1+T}
}_{\vphantom{\text{prize-compression channel}}\text{entry-ratio channel}}.
\]
Since \(\partial p_0 / \partial \varepsilon>0\), the expression for \(T\) shows that
\(T\) strictly decreases in \(\varepsilon\). Hence
\[
    \frac{\partial S_{\mathrm M}^{+}\left(p_{\mathrm M}^{*}\right)}
         {\partial\varepsilon}
    <0.
\]
For \(\sigma\) and \(\pi\), the two channels move in opposite directions. The
prize-compression channel \(\left(1-2\sigma\pi\right)/4\) strictly decreases in both
\(\sigma\) and \(\pi\). The entry-ratio channel \(T/\left(1+T\right)\) strictly increases
in \(T\), and direct differentiation of \(T\) gives
\begin{align*}
    \frac{\partial T}{\partial\sigma}
    &=
    \frac{1-T^2}{2\sigma T\left(1-2\sigma\pi\right)}
    >0, \\
    \frac{\partial T}{\partial\pi}
    &=
    \frac{\left(1-T^2\right)\left(1-\pi^2+4\sigma\pi^3\right)}
         {2\pi T\left(1+\pi^2\right)\left(1-2\sigma\pi\right)}
    >0.
\end{align*}
Thus the entry-ratio channel strictly increases in both \(\sigma\) and \(\pi\).
Putting the two channels together,
\begin{align*}
    \frac{\partial S_{\mathrm M}^{+}\left(p_{\mathrm M}^{*}\right)}
         {\partial\sigma}
    &=
    \frac{1 - T - 4\sigma\pi T^2}{8\sigma T \left(1+T\right)}, \\
    \frac{\partial S_{\mathrm M}^{+}\left(p_{\mathrm M}^{*}\right)}
         {\partial\pi}
    &=
    \frac{\left(1-T\right)\left(1-\pi^2+4\sigma\pi^3\right)
        -
        4\sigma\pi\left(1+\pi^2\right)T^2}
         {8\pi T \left(1+T\right)\left(1+\pi^2\right)}.
\end{align*}
The denominators are strictly positive. Therefore,
\begin{align*}
    &\frac{\partial S_{\mathrm M}^{+}\left(p_{\mathrm M}^{*}\right)}
         {\partial\sigma}
    >0
    \quad\Longleftrightarrow\quad
    T
    <
    T_\sigma^{*}
    \coloneq
    \frac{2}{1+\sqrt{1+16\sigma\pi}}, \\
    &\frac{\partial S_{\mathrm M}^{+}\left(p_{\mathrm M}^{*}\right)}
         {\partial\pi}
    >0
    \quad\Longleftrightarrow\quad
    T
    <
    T_\pi^{*}
    \coloneq
    \frac{2}
         {1+
          \sqrt{
            1+
            \frac{16\sigma\pi\left(1+\pi^2\right)}
                 {1-\pi^2+4\sigma\pi^3}
          }}.
\end{align*}
Moreover, \(T_\pi^{*}<T_\sigma^{*}\), because
\[
    1+\pi^2
    >
    1-\pi^2+4\sigma\pi^3
    \quad\Longleftrightarrow\quad
    \sigma\pi < \frac{1}{2}.
\]
To state the sign conditions as settlement-noise cutoffs, note that $1 - T^2$ strictly decreases in
$T$ when $T \geq 0$, so for \(j\in\left\{\sigma,\pi\right\}\),
\[
    \frac{\partial S_{\mathrm M}^{+}\left(p_{\mathrm M}^{*}\right)}
         {\partial j}>0
    \quad\Longleftrightarrow\quad
    T < T_j^*
    \quad\Longleftrightarrow\quad
    1-T^2 > 1-T_j^{*2}.
\]
Since
$$
1 - T^2 = \frac{\varepsilon\left(1+\pi^2\right)\left(1-2\sigma\pi\right)}
             {2\sigma\pi\left(1-2\varepsilon\right)}
$$
strictly increases in $\varepsilon$, the last condition is equivalent to
\[
    \varepsilon > \varepsilon_j^*
    \coloneq
    \frac{2\sigma\pi \left(1 - T_j^{*2}\right)}
         {\left(1+\pi^2\right)\left(1-2\sigma\pi\right)+4\sigma\pi \left(1-T_j^{*2}\right)},
\]
provided \(\varepsilon_j^*<\sigma\pi\). If
\(\varepsilon_j^*\ge\sigma\pi\), the partial derivative remains negative throughout
the canonical regime. Since \(T_\pi^*<T_\sigma^*\), we have
\(\varepsilon_\pi^*>\varepsilon_\sigma^*\). That is, a
positive partial derivative with respect to $\pi$, when it exists, requires higher
settlement noise than one with respect to \(\sigma\). \\
When \(\varepsilon\) is small, both \(\partial S_{\mathrm M}^{+}/\partial\sigma\)
and \(\partial S_{\mathrm M}^{+}/\partial\pi\) are negative: at low settlement noise the entry-ratio channel $T/\left(1+T\right)$ is already near its ceiling of $1/2$, so increasing $\sigma$ or $\pi$ mainly compresses the prize and conditional profit falls.

\subsubsection{Liquidity-trader loss decomposition}
\label{app:equilibrium-contract-payoffs}

We prove Proposition~\ref{prop:pm-maker-rent}. At
\(p_{\mathrm M}^{*}\in\left(0,p_0\right)\), the selected continuation
equilibrium after a stage-2 sell is on the pure-push branch.
From Appendix~\ref{app:optimal-manipulator-entry-probability}, the shadow stage-1 ask is
\[
    A_1^{\PM,\mathrm{BE}}\left(p_{\mathrm M}^{*}\right)
    =
    \frac{1}{2}
    -
    \frac{p_{\mathrm M}^{*}\left(\sigma\pi-\varepsilon\right)}
         {4\varepsilon}.
\]
Since the posted non-crossing ask is \(A_1^{\PM *}=1/2\), the prediction-market maker's expected
rent per buy order is
\[
    R_{\PM}^{+*}
    \coloneq
    A_1^{\PM *}-A_1^{\PM,\mathrm{BE}}\left(p_{\mathrm M}^{*}\right)
    =
    \frac{p_{\mathrm M}^{*}\left(\sigma\pi-\varepsilon\right)}
         {4\varepsilon}>0.
\]
The liquidity trader's payoff on the pure-push branch is
\[
    \Pi_{\mathrm L}^{+}\left(p_{\mathrm M}^{*}\right)
    =
    \frac{3+\pi^2}{8}
    +
    \frac{1-\pi^2}{4}
    H\left(A_{\hsell}^{*}\left(p_{\mathrm M}^{*}\right)/\varepsilon\right),
\]
with
\[
    H\left(A_{\hsell}^{*}\left(p_{\mathrm M}^{*}\right)/\varepsilon\right)
    =
    \frac{1}{2}
    -
    \frac{\sigma\pi p_{\mathrm M}^{*}}
         {\varepsilon\left[\left(1-\pi^2\right)
          +\left(1+\pi^2\right)p_{\mathrm M}^{*}\right]} .
\]
Therefore the expected loss per buy order for the liquidity trader is
\[
    L^{+}\left(p_{\mathrm M}^{*}\right)
    \coloneq
    \frac{1}{2}-\Pi_{\mathrm L}^{+}\left(p_{\mathrm M}^{*}\right)
    =
    \frac{\left(1-\pi^2\right)\sigma\pi p_{\mathrm M}^{*}}
         {4\varepsilon\left[\left(1-\pi^2\right)
          +\left(1+\pi^2\right)p_{\mathrm M}^{*}\right]} > 0.
\]
Finally, the accounting identity follows directly from the definitions:
\begin{align*}
    &R_{\PM}^{+*}
    +
    p_{\mathrm M}^{*}S_{\mathrm M}^{+}\left(p_{\mathrm M}^{*}\right)
    +
    p_{\mathrm M}^{*}K^{+}\left(p_{\mathrm M}^{*}\right) \\
    =&
    \frac{1}{2}
    -
    \left[
        p_{\mathrm M}^{*}\Pi_{\mathrm M}^{+}\left(p_{\mathrm M}^{*}\right)
        +
        \left(1-p_{\mathrm M}^{*}\right)
        \Pi_{\mathrm L}^{+}\left(p_{\mathrm M}^{*}\right)
    \right]
    +
    p_{\mathrm M}^{*}
    \left[
        \Pi_{\mathrm M}^{+}\left(p_{\mathrm M}^{*}\right)
        -
        K^{+}\left(p_{\mathrm M}^{*}\right)
        -
        \frac{1}{2}
    \right]
    +
    p_{\mathrm M}^{*}K^{+}\left(p_{\mathrm M}^{*}\right)\\
    =&
    \left(1-p_{\mathrm M}^{*}\right)
    \left[
        \frac{1}{2}
        -
        \Pi_{\mathrm L}^{+}\left(p_{\mathrm M}^{*}\right)
    \right]\\
    =&
    \left(1-p_{\mathrm M}^{*}\right)
    L^{+}\left(p_{\mathrm M}^{*}\right).
\end{align*}

\subsection{Two-Ordinary-Round Extension Calculations}
\label{app:horizon-two}

This appendix derives the two-ordinary-round extension used in
Section~\ref{sec:results-horizon-two}. The analysis is conditional on a stage-1
prediction-market buy; the sell branch is symmetric.

\subsubsection{Stage 2 Posteriors}
\label{app:horizon-two-posteriors}

\paragraph{Stage-2 price path probabilities.}
Stages~2.1 and~2.2 are independent ordinary Glosten--Milgrom spot rounds except
that the prior over $V$ entering stage~2.2 is the posterior from stage~2.1.
At each round, the informed trader enters with probability \(\pi\). She buys if
\(V=+\sigma\) and sells if \(V=-\sigma\) conditional on entry. A liquidity trader enters
with probability \(1-\pi\). She buys or sells with equal probability conditional
on entry. Therefore, for \(j\in\left\{2.1,2.2\right\}\),
\[
    \Prb\left(Y_j^\Spot=\hbuy\mid V=\pm\sigma\right)
    =\frac{1\pm\pi}{2},
    \qquad
    \Prb\left(Y_j^\Spot=\hsell\mid V=\pm\sigma\right)
    =\frac{1\mp\pi}{2}.
\]
Conditional on \(V\), the two ordinary spot orders are independent. Hence, for
stage-2 price path \(h_2=h_{2.1}h_{2.2}\),
\[
    \Prb\left(h_2=\hbuy\hbuy\mid V=+\sigma\right)
    =
    \left(\frac{1+\pi}{2}\right)^2,
    \qquad
    \Prb\left(h_2=\hbuy\hbuy\mid V=-\sigma\right)
    =
    \left(\frac{1-\pi}{2}\right)^2,
\]
\[
    \Prb\left(h_2=\hsell\hsell\mid V=+\sigma\right)
    =
    \left(\frac{1-\pi}{2}\right)^2,
    \qquad
    \Prb\left(h_2=\hsell\hsell\mid V=-\sigma\right)
    =
    \left(\frac{1+\pi}{2}\right)^2,
\]
\[
    \Prb\left(h_2=\hbuy\hsell\mid V=+\sigma\right)
    =
    \Prb\left(h_2=\hbuy\hsell\mid V=-\sigma\right)
    =
    \frac{1-\pi^2}{4},
\]
\[
    \Prb\left(h_2=\hsell\hbuy\mid V=+\sigma\right)
    =
    \Prb\left(h_2=\hsell\hbuy\mid V=-\sigma\right)
    =
    \frac{1-\pi^2}{4}.
\]
Marginalizing over \(V\) gives
$$
    q_{\hbuy\hbuy}
    =
    q_{\hsell\hsell}
    =
    \frac{1+\pi^2}{4},
    \qquad
    q_{\hbuy\hsell}
    =
    q_{\hsell\hbuy}
    =
    \frac{1-\pi^2}{4}.
$$

\paragraph{Posteriors.}
Bayes' rule gives
$$
    \ppost_{\hbuy\hbuy}
    =
    \Prb\left(V=+\sigma\mid h_2=\hbuy\hbuy\right)
    =
    \frac{
        \Prb\left(h_2=\hbuy\hbuy\mid V=+\sigma\right)
        \Prb\left(V=+\sigma\right)
    }{q_{\hbuy\hbuy}}
    =
    \frac{
        \left(\frac{1+\pi}{2}\right)^2 \cdot \frac{1}{2}
    }{\frac{1+\pi^2}{4}}
    =
    \frac{\left(1+\pi\right)^2}{2\left(1+\pi^2\right)}
$$
and similarly
$$
    \ppost_{\hbuy\hsell}
    =
    \ppost_{\hsell\hbuy}
    =
    \frac{1}{2},
    \qquad
    \ppost_{\hsell\hsell}
    =
    \frac{\left(1-\pi\right)^2}{2\left(1+\pi^2\right)}.
$$
Using \(\bar V_{h_2}=\sigma\left(2\ppost_{h_2}-1\right)\) gives
$$
    \bar V_{\hbuy\hbuy}
    =
    \frac{2\sigma\pi}{1+\pi^2},
    \qquad
    \bar V_{\hbuy\hsell}
    =
    \bar V_{\hsell\hbuy}
    =
    0,
    \qquad
    \bar V_{\hsell\hsell}
    =
    -\frac{2\sigma\pi}{1+\pi^2}.
$$

\paragraph{Posterior variance.}
Direct calculation of
\(\rvar_{h_2}=4\sigma^2\ppost_{h_2}\left(1-\ppost_{h_2}\right)\) gives
\[
    \rvar_{\hbuy\hbuy}
    =
    \rvar_{\hsell\hsell}
    =
    \frac{\sigma^2\left(1-\pi^2\right)^2}{\left(1+\pi^2\right)^2},
    \qquad
    \rvar_{\hbuy\hsell}
    =
    \rvar_{\hsell\hbuy}
    =
    \sigma^2.
\]

\subsubsection{Selected stage-3 continuation equilibrium}
\label{app:horizon-two-continuation}

Recall that the gap between the expected continuation payoff from buying and that
from staying out after stage-2 price path \(h_2\) is
\[
    \Gamma_{h_2}\left(\alpha_{h_2};p_{\mathrm M}\right)
    =
    H\left(\frac{A_{h_2}\left(\alpha_{h_2};p_{\mathrm M}\right)}{\varepsilon}\right)
    -
    H\left(\frac{\bar V_{h_2}}{\varepsilon}\right)
    -
    \left[A_{h_2}\left(\alpha_{h_2};p_{\mathrm M}\right)-\bar V_{h_2}\right].
\]
We find the selected continuation equilibrium at each stage-2 price path by the same
three-step test of Appendix~\ref{app:selected-continuation-equilibrium-by-path}.
Throughout, stage-3 asks are obtained by substituting the relevant stage-2 posterior
from Appendix~\ref{app:horizon-two-posteriors} into the general ask formula
\eqref{eq:app-stage3-ask-general}.

\paragraph{Favorable agreeing price path \(h_2=\hbuy\hbuy\): pure stay-out equilibrium.}
For any \(p_{\mathrm M}\in\left(0,1\right)\) and any
\(\alpha_{\hbuy\hbuy}\in\left[0,1\right]\), the stage-3 ask is at least the
prior mean, so
\[
    A_{\hbuy\hbuy}\left(\alpha_{\hbuy\hbuy};p_{\mathrm M}\right)
    \geq
    \bar V_{\hbuy\hbuy}
    =
    \frac{2\sigma\pi}{1+\pi^2}
    >
    \sigma\pi
    >
    \varepsilon.
\]
Therefore
\(H\left(A_{\hbuy\hbuy}\left(\alpha_{\hbuy\hbuy};p_{\mathrm M}\right)/\varepsilon\right)
=H\left(\bar V_{\hbuy\hbuy}/\varepsilon\right)=1\). Hence
\[
    \Gamma_{\hbuy\hbuy}\left(\alpha_{\hbuy\hbuy};p_{\mathrm M}\right)
    =
    1-1-
    \left[
        A_{\hbuy\hbuy}\left(\alpha_{\hbuy\hbuy};p_{\mathrm M}\right)
        -\bar V_{\hbuy\hbuy}
    \right]
    \leq
    0.
\]
Thus, for any \(p_{\mathrm M}\in\left(0,1\right)\),
\(\alpha_{\hbuy\hbuy}^{*}=0\) is a stable fixed point and is clearly the smallest.

\paragraph{Unfavorable agreeing price path \(h_2=\hsell\hsell\): pure stay-out equilibrium.}
For any \(p_{\mathrm M}\in\left(0,1\right)\),
\[
    \bar V_{\hsell\hsell}
    =
    -\frac{2\sigma\pi}{1+\pi^2}
    <
    -\sigma\pi
    <
    -\varepsilon.
\]
At \(\alpha_{\hsell\hsell}=0\), substituting
\(\ppost_{\hsell\hsell}
=\left(1-\pi\right)^2 / \left(2\left(1+\pi^2\right)\right)\)
into \eqref{eq:app-stage3-ask-general} gives
\[
    A_{\hsell\hsell}\left(0;p_{\mathrm M}\right)
    =
    -\sigma\pi
    <
    -\varepsilon.
\]
Since the ask strictly decreases in \(\alpha_{\hsell\hsell}\) and
is at least the prior mean,
\[
    \bar V_{\hsell\hsell}
    \le
    A_{\hsell\hsell}\left(\alpha_{\hsell\hsell};p_{\mathrm M}\right)
    \le
    -\sigma\pi
    <
    -\varepsilon.
\]
Consequently
\(H\left(A_{\hsell\hsell}\left(\alpha_{\hsell\hsell};p_{\mathrm M}\right)/\varepsilon\right)
=H\left(\bar V_{\hsell\hsell}/\varepsilon\right)=0\), and
\[
    \Gamma_{\hsell\hsell}\left(\alpha_{\hsell\hsell};p_{\mathrm M}\right)
    =
    0-0
    -\left[
        A_{\hsell\hsell}\left(\alpha_{\hsell\hsell};p_{\mathrm M}\right)
        -\bar V_{\hsell\hsell}
    \right]
    \leq
    0.
\]
Thus, for any \(p_{\mathrm M}\in\left(0,1\right)\),
\(\alpha_{\hsell\hsell}^{*}=0\) is a stable fixed point and is clearly the smallest.

\paragraph{Disagreeing price paths \(h_2\in\left\{\hbuy\hsell,\hsell\hbuy\right\}\): pure push equilibrium.}
Fix \(p_{\mathrm M}\in\left(0,1\right)\). For either disagreeing price path,
\(\bar V_{h_2}=0\). At \(\alpha_{h_2}=0\), substituting \(\ppost_{h_2}=1/2\) into
\eqref{eq:app-stage3-ask-general} gives
\(A_{h_2}\left(0;p_{\mathrm M}\right)=\sigma\pi\), and therefore
\[
    \Gamma_{h_2}\left(0;p_{\mathrm M}\right)
    =
    1
    -
    \frac{1}{2}
    -
    \left[
    \sigma\pi - 0
    \right]
    =
    \frac{1}{2}-\sigma\pi
    >
    0,
\]
so \(\alpha_{h_2}=0\) is not a fixed point. For any
\(\alpha_{h_2}\in\left[0,1\right]\), the ask is
\[
    A_{h_2}\left(\alpha_{h_2};p_{\mathrm M}\right)
    =
    \frac{\sigma\pi\left(1-p_{\mathrm M}\right)}
         {1-p_{\mathrm M}+2p_{\mathrm M}\alpha_{h_2}}
    \in
    \left[
        \frac{\sigma\pi\left(1-p_{\mathrm M}\right)}{1+p_{\mathrm M}},
        \sigma\pi
    \right]
    \subset
    \left(0,\frac{1}{2}\right).
\]
If \(A_{h_2}\left(\alpha_{h_2};p_{\mathrm M}\right)\ge\varepsilon\), then
\[
    \Gamma_{h_2}\left(\alpha_{h_2};p_{\mathrm M}\right)
    =
    \frac{1}{2}-A_{h_2}\left(\alpha_{h_2};p_{\mathrm M}\right)
    \geq
    \frac{1}{2}-\sigma\pi
    >
    0.
\]
If \(0<A_{h_2}\left(\alpha_{h_2};p_{\mathrm M}\right)<\varepsilon\), then
\[
    \Gamma_{h_2}\left(\alpha_{h_2};p_{\mathrm M}\right)
    =
    A_{h_2}\left(\alpha_{h_2};p_{\mathrm M}\right)
    \left(\frac{1}{2\varepsilon}-1\right)
    >
    0
\]
because \(\varepsilon<1/2\). Thus no interior fixed point exists either, and the
selected continuation equilibrium is the endpoint
\[
    \alpha_{\hbuy\hsell}^{*}
    =
    \alpha_{\hsell\hbuy}^{*}
    =
    1.
\]

\subsubsection{Optimal manipulator entry probability}
\label{app:horizon-two-entry}

Both the pricing of the binary contract and the manipulator's stage-1 entry problem use
the selected continuation equilibrium from Appendix~\ref{app:horizon-two-continuation}.
Combining Appendix~\ref{app:horizon-two-posteriors}, the general ask and bid formulas
\eqref{eq:app-stage3-ask-general} and \eqref{eq:app-stage3-bid-general}, and
Appendix~\ref{app:horizon-two-continuation}, the stage-3 asks, bids, and no-trade
posterior means used below are
\[
    A_{\hbuy\hbuy}^{*}\left(p_{\mathrm M}\right)
    =
    \frac{\sigma\pi\left(3+\pi^2\right)}{1+3\pi^2},
    \qquad
    A_{\hsell\hsell}^{*}\left(p_{\mathrm M}\right)
    =
    -\sigma\pi,
    \qquad
    A_{\hbuy\hsell}^{*}\left(p_{\mathrm M}\right)
    =
    A_{\hsell\hbuy}^{*}\left(p_{\mathrm M}\right)
    =
    \frac{\sigma\pi\left(1-p_{\mathrm M}\right)}
         {1+p_{\mathrm M}}.
\]
\[
    B_{\hbuy\hbuy}
    =
    \sigma\pi,
    \qquad
    B_{\hsell\hsell}
    =
    -\frac{\sigma\pi\left(3+\pi^2\right)}{1+3\pi^2},
    \qquad
    B_{\hbuy\hsell}
    =
    B_{\hsell\hbuy}
    =
    -\sigma\pi.
\]
\[
    \bar V_{\hbuy\hbuy}
    =
    \frac{2\sigma\pi}{1+\pi^2},
    \qquad
    \bar V_{\hbuy\hsell}
    =
    \bar V_{\hsell\hbuy}
    =
    0,
    \qquad
    \bar V_{\hsell\hsell}
    =
    -\frac{2\sigma\pi}{1+\pi^2}.
\]

For each stage-2 price path \(h_2\), the per-path expected binary-contract payoffs are
\begin{align*}
    \Pi_{\mathrm M}^{+}\left(p_{\mathrm M},h_2\right)
    &=
    \alpha_{h_2}^{*}H\left(A_{h_2}^{*}\left(p_{\mathrm M}\right)/\varepsilon\right)
    +
    \left(1-\alpha_{h_2}^{*}\right)
    H\left(\bar V_{h_2}/\varepsilon\right),\\
    \Pi_{\mathrm L}^{+}\left(p_{\mathrm M},h_2\right)
    &=
    q_{h_2\hbuy}\left(0\right)H\left(A_{h_2}^{*}\left(p_{\mathrm M}\right)/\varepsilon\right)
    +
    q_{h_2\hsell}\left(0\right)H\left(B_{h_2}/\varepsilon\right).
\end{align*}
Substituting the selected continuation equilibrium and the stage-3 quotes gives
\begin{align*}
    &\Pi_{\mathrm M}^{+}\left(p_{\mathrm M},\hbuy\hbuy\right)
    =
    1, \qquad
    \Pi_{\mathrm M}^{+}\left(p_{\mathrm M},\hsell\hsell\right)
    =
    0,\\
    &\Pi_{\mathrm M}^{+}\left(p_{\mathrm M},\hbuy\hsell\right)
    =
    \Pi_{\mathrm M}^{+}\left(p_{\mathrm M},\hsell\hbuy\right)
    =
    H\left(
        \frac{\sigma\pi\left(1-p_{\mathrm M}\right)}
             {\varepsilon\left(1+p_{\mathrm M}\right)}
    \right),\\
    &\Pi_{\mathrm L}^{+}\left(p_{\mathrm M},\hbuy\hbuy\right)
    =
    1, \qquad
    \Pi_{\mathrm L}^{+}\left(p_{\mathrm M},\hsell\hsell\right)
    =
    0, \\
    &\Pi_{\mathrm L}^{+}\left(p_{\mathrm M},\hbuy\hsell\right)
    =
    \Pi_{\mathrm L}^{+}\left(p_{\mathrm M},\hsell\hbuy\right)
    =
    \frac{1}{2}
    H\left(
        \frac{\sigma\pi\left(1-p_{\mathrm M}\right)}
             {\varepsilon\left(1+p_{\mathrm M}\right)}
    \right).
\end{align*}
Aggregating over the four stage-2 price paths gives
\begin{align*}
    \Pi_{\mathrm M}^{+}\left(p_{\mathrm M}\right)
    &=
    \frac{1+\pi^2}{4}
    +
    \frac{1-\pi^2}{2}
    H\left(
        \frac{\sigma\pi\left(1-p_{\mathrm M}\right)}
             {\varepsilon\left(1+p_{\mathrm M}\right)}
    \right),
    \\
    \Pi_{\mathrm L}^{+}\left(p_{\mathrm M}\right)
    &=
    \frac{1+\pi^2}{4}
    +
    \frac{1-\pi^2}{4}
    H\left(
        \frac{\sigma\pi\left(1-p_{\mathrm M}\right)}
             {\varepsilon\left(1+p_{\mathrm M}\right)}
    \right),
    \\
    K^{+}\left(p_{\mathrm M}\right)
    &=
    \frac{\left(1-\pi^2\right)\sigma\pi\left(1-p_{\mathrm M}\right)}
         {2\left(1+p_{\mathrm M}\right)}.
\end{align*}
The unconstrained shadow stage-1 ask obtained from the prediction-market maker's break-even
condition is
\begin{align*}
    &A_1^{\PM,\mathrm{BE}}\left(p_{\mathrm M}\right)
    \notag\\
    =&
    p_{\mathrm M}\Pi_{\mathrm M}^{+}\left(p_{\mathrm M}\right)
    +
    \left(1-p_{\mathrm M}\right)\Pi_{\mathrm L}^{+}\left(p_{\mathrm M}\right)
    \notag\\
    =&
    \frac{1+\pi^2}{4}
    +
    \frac{\left(1-\pi^2\right)\left(1+p_{\mathrm M}\right)}{4}
    H\left(
        \frac{\sigma\pi\left(1-p_{\mathrm M}\right)}
             {\varepsilon\left(1+p_{\mathrm M}\right)}
    \right)
    \notag\\
    =&
    \begin{cases}
        \displaystyle
        \frac{1}{2}
        +
        \frac{\left(1-\pi^2\right)p_{\mathrm M}}{4},
        & \displaystyle
          \frac{\sigma\pi\left(1-p_{\mathrm M}\right)}
               {1+p_{\mathrm M}}
          \geq \varepsilon,\\[0.8em]
        \displaystyle
        \frac{1}{2}
        +
        \frac{
            \left(1-\pi^2\right)\left(1-p_{\mathrm M}\right)
            \left(\sigma\pi-\varepsilon\right)
        }{8\varepsilon},
        & \displaystyle
          \frac{\sigma\pi\left(1-p_{\mathrm M}\right)}
               {1+p_{\mathrm M}}
          < \varepsilon,
    \end{cases}
\end{align*}
which is strictly greater than \(1/2\) for any
\(p_{\mathrm M}\in\left(0,1\right)\). Therefore the
non-crossing equilibrium stage-1 ask is just the break-even
stage-1 ask:
\[
    A_1^{\PM *}\left(p_{\mathrm M}\right)
    =
    \max\left\{
        A_1^{\PM,\mathrm{BE}}\left(p_{\mathrm M}\right),
        \frac{1}{2}
    \right\}
    =
    A_1^{\PM,\mathrm{BE}}\left(p_{\mathrm M}\right)
\]
for any \(p_{\mathrm M}\in\left(0,1\right)\), and
\[
    B_1^{\PM *}\left(p_{\mathrm M}\right)
    =
    1-A_1^{\PM *}\left(p_{\mathrm M}\right).
\]

The manipulator's expected profit conditional on entering at stage~1 is
\begin{align*}
    &S_{\mathrm M}^{+}\left(p_{\mathrm M}\right) \\
    =&
    \Pi_{\mathrm M}^{+}\left(p_{\mathrm M}\right)
    -
    K^{+}\left(p_{\mathrm M}\right)
    -
    A_1^{\PM *}\left(p_{\mathrm M}\right), \notag \\
    =&
    \frac{1-\pi^2}{4}
    \left[
        \left(1-p_{\mathrm M}\right)
        H\left(
            \frac{\sigma\pi\left(1-p_{\mathrm M}\right)}
                 {\varepsilon\left(1+p_{\mathrm M}\right)}
        \right)
        -
        \frac{2\sigma\pi\left(1-p_{\mathrm M}\right)}
             {1+p_{\mathrm M}}
    \right] \notag\\
    =&
    \frac{1-\pi^2}{4}
    \begin{cases}
        \displaystyle
        \frac{\left(1-p_{\mathrm M}\right)\left(1+p_{\mathrm M}-2\sigma\pi\right)}
             {1+p_{\mathrm M}},
        & \displaystyle
          0 < p_{\mathrm M}
          \leq \frac{\sigma\pi-\varepsilon}{\sigma\pi+\varepsilon},\\[1.2em]
        \displaystyle
        \frac{
            \left(1-p_{\mathrm M}\right)
            \left[
                \varepsilon\left(1+p_{\mathrm M}\right)
                +\sigma\pi\left(1-p_{\mathrm M}\right)
                -4\varepsilon\sigma\pi
            \right]
        }{
            2\varepsilon\left(1+p_{\mathrm M}\right)
        },
        & \displaystyle
          \frac{\sigma\pi-\varepsilon}{\sigma\pi+\varepsilon} < p_{\mathrm M}
          < 1.
    \end{cases}
\end{align*}
Therefore, the manipulator maximizes the ex-ante profit
\[
    \bar S_{\mathrm M}^{+}\left(p_{\mathrm M}\right)
    =
    p_{\mathrm M}S_{\mathrm M}^{+}\left(p_{\mathrm M}\right)
    =
    \frac{1-\pi^2}{4}\Phi\left(p_{\mathrm M};\sigma,\pi,\varepsilon\right),
\]
where
\begin{equation}
    \Phi\left(p_{\mathrm M};\sigma,\pi,\varepsilon\right)
    \coloneq
    p_{\mathrm M}
    \left[
        \left(1-p_{\mathrm M}\right)
        H\left(
            \frac{\sigma\pi\left(1-p_{\mathrm M}\right)}
                 {\varepsilon\left(1+p_{\mathrm M}\right)}
        \right)
        -
        \frac{2\sigma\pi\left(1-p_{\mathrm M}\right)}
             {1+p_{\mathrm M}}
    \right].
    \label{eq:app-horizon-two-entry-phi}
\end{equation}
Equivalently,
\[
    \Phi\left(p_{\mathrm M};\sigma,\pi,\varepsilon\right)
    =
    \begin{cases}
        \displaystyle
        \frac{
            p_{\mathrm M}\left(1-p_{\mathrm M}\right)
            \left(1+p_{\mathrm M}-2\sigma\pi\right)
        }{
            1+p_{\mathrm M}
        },
        & \displaystyle
          0 < p_{\mathrm M}
          \leq \frac{\sigma\pi-\varepsilon}{\sigma\pi+\varepsilon},\\[1.2em]
        \displaystyle
        \frac{
            p_{\mathrm M}\left(1-p_{\mathrm M}\right)
            \left[
                \varepsilon\left(1+p_{\mathrm M}\right)
                +\sigma\pi\left(1-p_{\mathrm M}\right)
                -4\varepsilon\sigma\pi
            \right]
        }{
            2\varepsilon\left(1+p_{\mathrm M}\right)
        },
        & \displaystyle
          \frac{\sigma\pi-\varepsilon}{\sigma\pi+\varepsilon} < p_{\mathrm M}
          < 1.
    \end{cases}
\]
The two branches agree at the cutoff
\(\left(\sigma\pi-\varepsilon\right)/\left(\sigma\pi+\varepsilon\right)\), so
\(\Phi\) is continuous on \(\left(0,1\right)\). The first branch is strictly
positive because \(1+p_{\mathrm M}-2\sigma\pi>1-2\sigma\pi>0\). On the second
branch, the bracketed factor has slope \(\varepsilon-\sigma\pi<0\) in
\(p_{\mathrm M}\), so it is bounded below by its value at \(p_{\mathrm M}=1\),
namely \(2\varepsilon\left(1-2\sigma\pi\right)>0\). Therefore, $\Phi$ is strictly
positive on \(\left(0,1\right)\). If we define the boundary values
$$
\Phi\left(0;\sigma,\pi,\varepsilon\right)=\lim_{p_{\mathrm M} \downarrow 0^+} \Phi\left(p_{\mathrm M};\sigma,\pi,\varepsilon\right) = 0,
\qquad
\Phi\left(1;\sigma,\pi,\varepsilon\right)=\lim_{p_{\mathrm M} \uparrow 1^-} \Phi\left(p_{\mathrm M};\sigma,\pi,\varepsilon\right) = 0,
$$
consistent with the convention that $\bar{S}_{\mathrm M}^{+}\left(0\right)=\bar{S}_{\mathrm M}^{+}\left(1\right)=0$,
then $\Phi$ is continuous on $\left[0, 1\right]$. Therefore, a maximizer exists and
must be an interior point in \(\left(0,1\right)\). To show that it is unique, we
can equivalently work with \(\ln \Phi\), whose branchwise first-order derivative is
\[
    \frac{\partial \ln \Phi}{\partial p_{\mathrm M}}
    =
    \begin{cases}
        \displaystyle
        \frac{1}{p_{\mathrm M}}
        -
        \frac{1}{1-p_{\mathrm M}}
        +
        \frac{1}{1+p_{\mathrm M}-2\sigma\pi}
        -
        \frac{1}{1+p_{\mathrm M}},
        & \displaystyle
          0 < p_{\mathrm M}
          < \frac{\sigma\pi-\varepsilon}{\sigma\pi+\varepsilon},\\[1.2em]
        \displaystyle
        \frac{1}{p_{\mathrm M}}
        -
        \frac{1}{1-p_{\mathrm M}}
        +
        \frac{\varepsilon-\sigma\pi}
             {\varepsilon\left(1+p_{\mathrm M}\right)+\sigma\pi\left(1-p_{\mathrm M}\right)-4\varepsilon\sigma\pi}
        -
        \frac{1}{1+p_{\mathrm M}},
        & \displaystyle
          \frac{\sigma\pi-\varepsilon}{\sigma\pi+\varepsilon} < p_{\mathrm M}
          < 1.
    \end{cases}
\]
The branchwise second-order derivative is
\[
    \frac{\partial^2 \ln \Phi}{\partial p_{\mathrm M}^2}
    =
    \begin{cases}
        \displaystyle
        -\frac{1}{p_{\mathrm M}^2}
        -\frac{1}{\left(1-p_{\mathrm M}\right)^2}
        -\frac{1}{\left(1+p_{\mathrm M}-2\sigma\pi\right)^2}
        +\frac{1}{\left(1+p_{\mathrm M}\right)^2}
        <0,
        & \displaystyle
          0 < p_{\mathrm M}
          < \frac{\sigma\pi-\varepsilon}{\sigma\pi+\varepsilon},\\[1.2em]
        \displaystyle
        \begin{aligned}
            &-\frac{1}{p_{\mathrm M}^2}
            -\frac{1}{\left(1-p_{\mathrm M}\right)^2}
            -\frac{\left(\sigma\pi-\varepsilon\right)^2}
                  {\left[\varepsilon+\sigma\pi-4\varepsilon\sigma\pi-\left(\sigma\pi-\varepsilon\right)p_{\mathrm M}\right]^2}\\
            &+\frac{1}{\left(1+p_{\mathrm M}\right)^2}
            <0,
        \end{aligned}
        & \displaystyle
          \frac{\sigma\pi-\varepsilon}{\sigma\pi+\varepsilon} < p_{\mathrm M}
          < 1.
    \end{cases}
\]
On the first branch, the inequality follows because
\(0 < 1+p_{\mathrm M}-2\sigma\pi<1+p_{\mathrm M}\). On the second branch, the
inequality follows because $0 < 1 - p_{\mathrm M} < 1 + p_{\mathrm M}$.
Thus $\partial \ln \Phi / \partial p_{\mathrm M}$ is strictly decreasing
on both branches. At $\left(\sigma\pi-\varepsilon\right)/\left(\sigma\pi+\varepsilon\right)$,
the left derivative exceeds the right derivative:
\[
    \lim_{p_{\mathrm M}\uparrow\left(\frac{\sigma\pi-\varepsilon}{\sigma\pi+\varepsilon}\right)_-}
    \frac{\partial \ln \Phi}{\partial p_{\mathrm M}}
    -
    \lim_{p_{\mathrm M}\downarrow\left(\frac{\sigma\pi-\varepsilon}{\sigma\pi+\varepsilon}\right)_+}
    \frac{\partial \ln \Phi}{\partial p_{\mathrm M}}
    =
    \frac{
        \left(\sigma\pi+\varepsilon\right)^2
    }{
        4\varepsilon\sigma\pi\left(1-\sigma\pi-\varepsilon\right)
    }
    >
    0.
\]
Therefore, \(\ln \Phi\) (and equivalently \(\Phi\)) has a unique maximizer on $\left(0, 1\right)$.
It is either the unique interior zero of the first-order derivative or the kink
$\left(\sigma\pi-\varepsilon\right)/\left(\sigma\pi+\varepsilon\right)$.
Let \(p_{\mathrm M}^{*\left(2\right)}\) denote this unique maximizer. It remains to show that it
is strictly greater than the corresponding maximizer in the baseline model, which we denote by
\(p_{\mathrm M}^{*\left(1\right)}\).

Bounding the closed-form expression for \(p_{\mathrm M}^{*\left(1\right)}\) in
Proposition~\ref{prop:canonical-equilibrium}, we have
\[
    0 <
    p_{\mathrm M}^{*\left(1\right)}
    =
    \sqrt{\left(\frac{1-\pi^2}{1+\pi^2}\right)^2
          +\frac{1-\pi^2}{1+\pi^2}p_0}
    -\frac{1-\pi^2}{1+\pi^2}
    <
    \sqrt{
        1+
        \frac{\varepsilon\left(1-2\sigma\pi\right)}
             {2\left(\sigma\pi-\varepsilon\right)
              +\varepsilon\left(1-2\sigma\pi\right)}
    }
    -1
    < \sqrt{2} - 1,
\]
where we used the fact that $p_{\mathrm M}^{*\left(1\right)}$ strictly increases in both
\(\left(1-\pi^2\right)/\left(1+\pi^2\right)\) and \(p_0\) as established in
Appendix~\ref{app:comparative-statics-derivations}, $0 < \left(1-\pi^2\right)/\left(1+\pi^2\right) < 1$,
and
\[
    0
    <
    p_0
    <
    \frac{\varepsilon\left(1-2\sigma\pi\right)}
         {2\left(\sigma\pi-\varepsilon\right)
          +\varepsilon\left(1-2\sigma\pi\right)}
    <
    1.
\]
Because the first-order derivative of $\ln \Phi$ strictly decreases on each branch
and jumps downward at $\left(\sigma\pi-\varepsilon\right)/\left(\sigma\pi+\varepsilon\right)$,
it suffices to show that it is still positive at the sharper upper bound on
\(p_{\mathrm M}^{*\left(1\right)}\) displayed above.
If that upper bound lies weakly below
\(\left(\sigma\pi-\varepsilon\right)/\left(\sigma\pi+\varepsilon\right)\), only the first
branch is relevant. It is strictly decreasing on $\left(0, 1\right)$ and equals
\(\left(\sqrt{2}-2\sigma\pi\right)^{-1}>0\) at \(p_{\mathrm M}=\sqrt{2}-1\), so it must be
positive at the sharper upper bound.
If instead that upper bound lies strictly above
\(\left(\sigma\pi-\varepsilon\right)/\left(\sigma\pi+\varepsilon\right)\), it suffices to
check the second branch at that value. At the sharper upper-bound
value,
\[
    p_{\mathrm M}^2+2p_{\mathrm M}
    =
    \frac{\varepsilon\left(1-2\sigma\pi\right)}
         {2\left(\sigma\pi-\varepsilon\right)+\varepsilon\left(1-2\sigma\pi\right)}
    \iff
    \varepsilon
    =
    \frac{
        2\sigma\pi \left(p_{\mathrm M}^2 +2p_{\mathrm M}\right)
    }{
        \left(1+2\sigma\pi\right)\left(p_{\mathrm M}^2+2p_{\mathrm M}\right)
        +1-2\sigma\pi
    }.
\]
Substituting this expression into the third term in the second branch gives
\[
    \frac{\varepsilon-\sigma\pi}
         {\varepsilon\left(1+p_{\mathrm M}\right)
          +\sigma\pi\left(1-p_{\mathrm M}\right)-4\varepsilon\sigma\pi}
    =
    \frac{p_{\mathrm M}^2+2p_{\mathrm M}-1}
         {\left(1+p_{\mathrm M}\right)\left(p_{\mathrm M}^2+4p_{\mathrm M}+1\right)}.
\]
The remaining terms in the second branch satisfy
\[
    \frac{1}{p_{\mathrm M}}
    -
    \frac{1}{1-p_{\mathrm M}}
    -
    \frac{1}{1+p_{\mathrm M}}
    =
    \frac{-p_{\mathrm M}^2-2p_{\mathrm M}+1}
         {p_{\mathrm M}\left(1-p_{\mathrm M}\right)\left(1+p_{\mathrm M}\right)}.
\]
Therefore the second branch at the sharper upper-bound value equals
\begin{align*}
    &\frac{p_{\mathrm M}^2+2p_{\mathrm M}-1}{1+p_{\mathrm M}}
    \left[
        \frac{1}{p_{\mathrm M}^2+4p_{\mathrm M}+1} -
        \frac{1}{p_{\mathrm M}\left(1-p_{\mathrm M}\right)}
    \right]
    =
    -\frac{
        \left(2p_{\mathrm M}+1\right)\left(p_{\mathrm M}^2+2p_{\mathrm M}-1\right)
    }{
        p_{\mathrm M}\left(1-p_{\mathrm M}\right)
        \left(p_{\mathrm M}^2+4p_{\mathrm M}+1\right)
    }
    >
    0,
\end{align*}
with \(p_{\mathrm M}\) evaluated at the sharper upper-bound value. The expression is
positive because this sharper upper bound lies in \(\left(0,\sqrt{2}-1\right)\), so
\(p_{\mathrm M}^2+2p_{\mathrm M}-1<0\) whereas all other factors are positive.
Therefore, $\bar S_{\mathrm M}^{+}\left(p_{\mathrm M}\right)$ (under the two-ordinary-round
extension) is still increasing at \(p_{\mathrm M}^{*\left(1\right)}\), which implies
\[
    p_{\mathrm M}^{*\left(2\right)}
    >
    p_{\mathrm M}^{*\left(1\right)}.
\]

\subsubsection{Price discovery}
\label{app:horizon-two-price-discovery}
We prove the price-discovery part of 
Proposition~\ref{prop:horizon-two-equilibrium}. By the law of total
variance, for \(n=1,2\),
\[
    \E\left[\rvar_{h_2 h_3}^{\left(n\right)}\right]\left(p_{\mathrm M}\right)
    =
    \rvar_{\emptyset}
    -
    \operatorname{Var}\left(\bar V_{h_2h_3}^{\left(n\right)}\right)
        \left(p_{\mathrm M}\right)
    =
    \sigma^2
    -
    \E\left[
        \left(\bar V_{h_2h_3}^{\left(n\right)}\left(p_{\mathrm M}\right)\right)^2
    \right],
\]
because $\E\left[\bar V_{h_2h_3}^{\left(n\right)}\left(p_{\mathrm M}\right)\right]
= \E\left[V\right]= 0$ by the law of iterated expectations. As in Appendix~\ref{app:price-discovery},
lower residual variance is equivalent to greater dispersion of the final posterior mean.

We first compute the no-manipulation benchmarks. In the baseline model
\(\left(n=1\right)\), when \(p_{\mathrm M}=0\), stages~2 and~3 consist of two ordinary
spot rounds and thus share the same event structure as stage~2 under the two-ordinary-round extension,
which allows us to reuse the results from Appendix~\ref{app:horizon-two-posteriors}.
The two agreeing price paths ($h_2=h_3=\hbuy$ or $\hsell$) each have probability
\(\left(1+\pi^2\right)/4\) and posterior means
\(\pm 2\sigma\pi/\left(1+\pi^2\right)\), while the two disagreeing price paths
(\(h_2=\hbuy, h_3=\hsell\) or \(h_2=\hsell, h_3=\hbuy\))
have final posterior mean zero. Hence
\begin{align*}
    \E\left[\rvar_{h_2 h_3}^{\left(1\right)}\right]\left(0\right)
    &=
    \sigma^2
    -
    2\cdot
    \frac{1+\pi^2}{4}
    \left(
        \frac{2\sigma\pi}{1+\pi^2}
    \right)^2
    =
    \frac{\sigma^2\left(1-\pi^2\right)}{1+\pi^2}.
\end{align*}
Under the two-ordinary-round extension \(\left(n=2\right)\), when
\(p_{\mathrm M}=0\), stages~2.1, 2.2, and~3 are three consecutive ordinary spot
rounds. We can calculate the price path probabilities and posteriors following the same
approach as in Appendix~\ref{app:horizon-two} with an additional ordinary spot
round. The all-buy and all-sell price paths each have probability
\(\left(1+3\pi^2\right)/8\) and posterior means
\(\pm\sigma\pi\left(3+\pi^2\right)/\left(1+3\pi^2\right)\). The three ordered
price paths with two buys and one sell have posterior mean \(\sigma\pi\), the
three ordered price paths with one buy and two sells have posterior mean
\(-\sigma\pi\), and each of these six mixed ordered price paths has probability
\(\left(1-\pi^2\right)/8\). Therefore
\begin{align*}
    \E\left[\rvar_{h_2 h_3}^{\left(2\right)}\right]\left(0\right)
    &=
    \sigma^2
    -
    2\cdot
    \frac{1+3\pi^2}{8}
    \left(
        \frac{\sigma\pi\left(3+\pi^2\right)}{1+3\pi^2}
    \right)^2
    -
    6\cdot
    \frac{1-\pi^2}{8}
    \left(\sigma\pi\right)^2
    =
    \frac{\sigma^2\left(1-\pi^2\right)^2\left(1+2\pi^2\right)}{1+3\pi^2}.
\end{align*}
Hence the no-manipulation gain from adding the extra ordinary round is
\begin{align*}
    \E\left[\rvar_{h_2 h_3}^{\left(1\right)}\right]\left(0\right)
    -
    \E\left[\rvar_{h_2 h_3}^{\left(2\right)}\right]\left(0\right)=
    \frac{
        \sigma^2\pi^2\left(1-\pi^2\right)\left(1+\pi^2+2\pi^4\right)
    }{
        \left(1+\pi^2\right)\left(1+3\pi^2\right)
    } > 0.
\end{align*}

We next compute the increase in
\(\E\left[\rvar_{h_2 h_3}^{\left(2\right)}\right]\) caused by the selected
manipulation continuation for a fixed \(p_{\mathrm M}\in\left(0,1\right)\). Applying the law of total variance
to the dispersion of the final posterior mean as in Appendix~\ref{app:price-discovery}, we have
\[
    \operatorname{Var}\left(\bar V_{h_2h_3}^{\left(2\right)}\right)
        \left(p_{\mathrm M}\right)
    =
    \operatorname{Var}\left(\bar V_{h_2}^{\left(2\right)}\right)
    +
    \sum_{h_2}q_{h_2}
    \operatorname{Var}\left(
        \bar V_{h_2h_3}^{\left(2\right)}
        \mid Y_2^{\Spot}=h_2
    \right)\left(p_{\mathrm M}\right).
\]
$\operatorname{Var}\left(\bar V_{h_2}^{\left(2\right)}\right)$ and \(q_{h_2}\) are
independent of $p_{\mathrm M}$. Therefore, the increase in
\(\E\left[\rvar_{h_2 h_3}^{\left(2\right)}\right]\) equals the weighted loss
of posterior-mean dispersion conditional on stage-2 price path $h_2$:
\begin{align*}
    &\E\left[\rvar_{h_2 h_3}^{\left(2\right)}\right]\left(p_{\mathrm M}\right)
    -
    \E\left[\rvar_{h_2 h_3}^{\left(2\right)}\right]\left(0\right) \\
    =&
    \sum_{h_2}q_{h_2}
    \left[
        \operatorname{Var}\left(
            \bar V_{h_2h_3}^{\left(2\right)}
            \mid Y_2^{\Spot}=h_2
        \right)\left(0\right)
        -
        \operatorname{Var}\left(
            \bar V_{h_2h_3}^{\left(2\right)}
            \mid Y_2^{\Spot}=h_2
        \right)\left(p_{\mathrm M}\right)
    \right].
\end{align*}
At the two agreeing price paths $h_2 \in \{\hbuy\hbuy, \hsell\hsell\}$,
\(q_{\hbuy\hbuy}=q_{\hsell\hsell}=\left(1+\pi^2\right)/4\), and, as we will prove
below, the bracket equals
\[
    \frac{
        \sigma^2\pi^2p_{\mathrm M}\left(1-\pi^2\right)^3
    }{
        \left(1+\pi^2\right)^2\left(1+3\pi^2\right)
    }.
\]
At the two disagreeing price paths $h_2 \in \{\hbuy\hsell, \hsell\hbuy\}$,
\(q_{\hbuy\hsell}=q_{\hsell\hbuy}=\left(1-\pi^2\right)/4\), and, as we will prove
below, the bracket equals
\[
    \frac{
        2\sigma^2\pi^2p_{\mathrm M}
    }{
        1+ p_{\mathrm M}
    }.
\]
Therefore,
\begin{align*}
    \E\left[\rvar_{h_2 h_3}^{\left(2\right)}\right]\left(p_{\mathrm M}\right)
    -
    \E\left[\rvar_{h_2 h_3}^{\left(2\right)}\right]\left(0\right)
    =
    \sigma^2\pi^2p_{\mathrm M}\left(1-\pi^2\right)
    \left[
        \frac{\left(1-\pi^2\right)^2}
             {2\left(1+\pi^2\right)\left(1+3\pi^2\right)}
        +
        \frac{1}{1+p_{\mathrm M}}
    \right],
\end{align*}
which strictly increases in \(p_{\mathrm M}\) by inspection.\footnote{
It is straightforward to verify that the expression's left limit at
\(p_{\mathrm M}=1\) equals the gain from adding the
extra ordinary round in the no-manipulation benchmark $
    \E\left[\rvar_{h_2 h_3}^{\left(1\right)}\right]\left(0\right)
    -
    \E\left[\rvar_{h_2 h_3}^{\left(2\right)}\right]\left(0\right)
$, as expected, since when $p_{\mathrm M}=1$, stage~3 is fully occupied
by the manipulator and does not contribute to price discovery at all.
}
Therefore, for every \(p_{\mathrm M}\in\left(0,1\right)\),
\begin{align*}
    &\E\left[\rvar_{h_2 h_3}^{\left(1\right)}\right]\left(0\right)
    -
    \E\left[\rvar_{h_2 h_3}^{\left(2\right)}\right]\left(p_{\mathrm M}\right)\\
    =&
    \left[
        \E\left[\rvar_{h_2 h_3}^{\left(1\right)}\right]\left(0\right)
        -
        \E\left[\rvar_{h_2 h_3}^{\left(2\right)}\right]\left(0\right)
    \right]
    -
    \left[
        \E\left[\rvar_{h_2 h_3}^{\left(2\right)}\right]\left(p_{\mathrm M}\right)
        -
        \E\left[\rvar_{h_2 h_3}^{\left(2\right)}\right]\left(0\right)
    \right]\\
    =&
    \sigma^2\pi^2\left(1-\pi^2\right)\left(1-p_{\mathrm M}\right)
    \left[
        \frac{\left(1-\pi^2\right)^2}
             {2\left(1+\pi^2\right)\left(1+3\pi^2\right)}
        +
        \frac{1}{2\left(1+p_{\mathrm M}\right)}
    \right]\\
    >&
    0.
\end{align*}
Finally, since \(p_{\mathrm M}^{*\left(1\right)}, p_{\mathrm M}^{*\left(2\right)}\in\left(0,1\right)\),
\[
    \E\left[\rvar_{h_2 h_3}^{\left(2\right)}\right]
    \left(p_{\mathrm M}^{*\left(2\right)}\right)
    <
    \E\left[\rvar_{h_2 h_3}^{\left(1\right)}\right]\left(0\right)
    <
    \E\left[\rvar_{h_2 h_3}^{\left(1\right)}\right]
    \left(p_{\mathrm M}^{*\left(1\right)}\right),
\]
where the second inequality follows from Proposition~\ref{prop:price-discovery-sign}.
This proves the price-discovery part of
Proposition~\ref{prop:horizon-two-equilibrium}.

We now verify the claims about $
        \operatorname{Var}\left(
            \bar V_{h_2h_3}^{\left(2\right)}
            \mid Y_2^{\Spot}=h_2
        \right)\left(0\right)
        -
        \operatorname{Var}\left(
            \bar V_{h_2h_3}^{\left(2\right)}
            \mid Y_2^{\Spot}=h_2
        \right)\left(p_{\mathrm M}\right)$ below.

\paragraph{Agreeing price paths $h_2 \in \left\{\hbuy\hbuy, \hsell\hsell\right\}$.}
At \(h_2=\hbuy\hbuy\), the stage-2 posterior mean is $\bar V_{\hbuy\hbuy} = 2\sigma\pi /\left(1+\pi^2\right)$.
When $p_{\mathrm M} =0$, stage~3 is another ordinary spot round, so the conditional probability of
a stage-3 buy is
\begin{align*}
    q_{\hbuy\hbuy\hbuy}\left(0\right)
    =
    \ppost_{\hbuy\hbuy}\frac{1+\pi}{2}
    +
    \left(1-\ppost_{\hbuy\hbuy}\right)\frac{1-\pi}{2}
    =
    \frac{1+3\pi^2}{2\left(1+\pi^2\right)}
\end{align*}
and the conditional probability of a stage-3 sell is
\[
    q_{\hbuy\hbuy\hsell}\left(0\right)
    =
    1 - q_{\hbuy\hbuy\hbuy}\left(0\right)
    =
    \frac{1-\pi^2}{2\left(1+\pi^2\right)}.
\]
Bayes' rule gives the posterior mean after a stage-3 buy
\begin{align*}
    &\bar V_{\hbuy\hbuy\hbuy}^{\left(2\right)}\left(0\right) \\
    =&
    \sigma
    \frac{
        \Prb\left(h_2 = \hbuy\hbuy, h_3=\hbuy\mid V=+\sigma\right)\Prb\left(V=+\sigma\right)
        -
        \Prb\left(h_2 = \hbuy\hbuy, h_3=\hbuy\mid V=-\sigma\right)\Prb\left(V=-\sigma\right)
    }{
        \Prb\left(h_2 = \hbuy\hbuy, h_3=\hbuy\mid V=+\sigma\right)\Prb\left(V=+\sigma\right)
        +
        \Prb\left(h_2 = \hbuy\hbuy, h_3=\hbuy\mid V=-\sigma\right)\Prb\left(V=-\sigma\right)
    }\\
    =&
    \sigma
    \frac{
        \left(\frac{1+\pi}{2}\right)^3\frac{1}{2}
        -
        \left(\frac{1-\pi}{2}\right)^3\frac{1}{2}
    }{
        \left(\frac{1+\pi}{2}\right)^3\frac{1}{2}
        +
        \left(\frac{1-\pi}{2}\right)^3\frac{1}{2}
    }\\
    =&
    \sigma
    \frac{
        \left(1+\pi\right)^3-\left(1-\pi\right)^3
    }{
        \left(1+\pi\right)^3+\left(1-\pi\right)^3
    }\\
    =&
    \frac{\sigma\pi\left(3+\pi^2\right)}{1+3\pi^2}
\end{align*}
and similarly the posterior mean after a stage-3 sell
\begin{align*}
    \bar V_{\hbuy\hbuy\hsell}^{\left(2\right)}\left(0\right)
    &=
    \sigma
    \frac{
        \left(1+\pi\right)^2\left(1-\pi\right)
        -
        \left(1-\pi\right)^2\left(1+\pi\right)
    }{
        \left(1+\pi\right)^2\left(1-\pi\right)
        +
        \left(1-\pi\right)^2\left(1+\pi\right)
    }
    =
    \sigma\pi.
\end{align*}
Therefore the posterior-mean dispersion conditional on stage-2 price path $h_2=\hbuy\hbuy$ is
\begin{align*}
    &\operatorname{Var}\left(
        \bar V_{\hbuy\hbuy h_3}^{\left(2\right)}
        \mid Y_2^{\Spot}=\hbuy\hbuy
    \right)\left(0\right)\\
    =&
    \frac{1+3\pi^2}{2\left(1+\pi^2\right)}
    \left[
        \frac{\sigma\pi\left(3+\pi^2\right)}{1+3\pi^2}
        -
        \frac{2\sigma\pi}{1+\pi^2}
    \right]^2
    +
    \frac{1-\pi^2}{2\left(1+\pi^2\right)}
    \left[
        \sigma\pi
        -
        \frac{2\sigma\pi}{1+\pi^2}
    \right]^2\\
    =&
    \frac{\sigma^2\pi^2\left(1-\pi^2\right)^3}
         {\left(1+\pi^2\right)^2\left(1+3\pi^2\right)}.
\end{align*}
When $p_{\mathrm M} \in \left(0, 1\right)$, the selected continuation equilibrium has
\(\alpha_{\hbuy\hbuy}^{*}=0\), so the manipulator does not pool with ordinary buy or
sell orders. When she occupies stage~3 with probability $p_{\mathrm M}$, she always
produces a no-trade outcome which does not move the posterior and hence contributes zero to
$\operatorname{Var}\left(\bar V_{\hbuy\hbuy h_3}^{\left(2\right)}\mid Y_2^{\Spot}=\hbuy\hbuy\right)\left(p_{\mathrm M}\right)$.
Otherwise, with probability \(1-p_{\mathrm M}\), the manipulator is absent from stage~3
and it becomes an ordinary spot round. Thus
\[
    \operatorname{Var}\left(
        \bar V_{\hbuy\hbuy h_3}^{\left(2\right)}
        \mid Y_2^{\Spot}=\hbuy\hbuy
    \right)\left(p_{\mathrm M}\right)
    =
    \left(1-p_{\mathrm M}\right)
    \frac{\sigma^2\pi^2\left(1-\pi^2\right)^3}
         {\left(1+\pi^2\right)^2\left(1+3\pi^2\right)}.
\]
The same calculation applies at \(h_2=\hsell\hsell\) by symmetry:
\[
    \operatorname{Var}\left(
        \bar V_{\hsell\hsell h_3}^{\left(2\right)}
        \mid Y_2^{\Spot}=\hsell\hsell
    \right)\left(p_{\mathrm M}\right)
    =
    \left(1-p_{\mathrm M}\right)
    \frac{\sigma^2\pi^2\left(1-\pi^2\right)^3}
         {\left(1+\pi^2\right)^2\left(1+3\pi^2\right)}.
\]
Subtracting the two expressions verifies the claim.

\paragraph{Disagreeing price paths $h_2 \in \left\{\hbuy\hsell, \hsell\hbuy\right\}$.}
At \(h_2=\hbuy\hsell\), the stage-2 posterior mean is $\bar{V}_{\hbuy\hsell} = 0$.
When $p_{\mathrm M} =0$, stage~3 is another ordinary spot round, so the conditional probability
of a stage-3 buy is
\begin{align*}
    q_{\hbuy\hsell\hbuy}\left(0\right)
    =
    \ppost_{\hbuy\hsell}\frac{1+\pi}{2}
    +
    \left(1-\ppost_{\hbuy\hsell}\right)\frac{1-\pi}{2}
    =
    \frac{1}{2}
\end{align*}
and the conditional probability of a stage-3 sell is
\[
    q_{\hbuy\hsell\hsell}\left(0\right)
    =
    1 - q_{\hbuy\hsell\hbuy}\left(0\right)
    =
    \frac{1}{2}.
\]
Bayes' rule gives the posterior mean after a stage-3 buy
\begin{align*}
    &\bar V_{\hbuy\hsell\hbuy}^{\left(2\right)}\left(0\right) \\
    =&
    \sigma
    \frac{
        \Prb\left(h_2 = \hbuy\hsell, h_3=\hbuy\mid V=+\sigma\right)\Prb\left(V=+\sigma\right)
        -
        \Prb\left(h_2 = \hbuy\hsell, h_3=\hbuy\mid V=-\sigma\right)\Prb\left(V=-\sigma\right)
    }{
        \Prb\left(h_2 = \hbuy\hsell, h_3=\hbuy\mid V=+\sigma\right)\Prb\left(V=+\sigma\right)
        +
        \Prb\left(h_2 = \hbuy\hsell, h_3=\hbuy\mid V=-\sigma\right)\Prb\left(V=-\sigma\right)
    }\\
    =&
    \sigma
    \frac{
        \left(\frac{1+\pi}{2}\right)^2\frac{1-\pi}{2}\frac{1}{2}
        -
        \left(\frac{1-\pi}{2}\right)^2\frac{1+\pi}{2}\frac{1}{2}
    }{
        \left(\frac{1+\pi}{2}\right)^2\frac{1-\pi}{2}\frac{1}{2}
        +
        \left(\frac{1-\pi}{2}\right)^2\frac{1+\pi}{2}\frac{1}{2}
    }\\
    =&
    \sigma
    \frac{
        \left(1+\pi\right)^2\left(1-\pi\right)
        -
        \left(1-\pi\right)^2\left(1+\pi\right)
    }{
        \left(1+\pi\right)^2\left(1-\pi\right)
        +
        \left(1-\pi\right)^2\left(1+\pi\right)
    }\\
    =&
    \sigma\pi,
\end{align*}
and similarly the posterior mean after a stage-3 sell
\begin{align*}
    \bar V_{\hbuy\hsell\hsell}^{\left(2\right)}\left(0\right)
    &=
    \sigma
    \frac{
        \left(1+\pi\right)\left(1-\pi\right)^2
        -
        \left(1-\pi\right)\left(1+\pi\right)^2
    }{
        \left(1+\pi\right)\left(1-\pi\right)^2
        +
        \left(1-\pi\right)\left(1+\pi\right)^2
    }
    =
    -\sigma\pi.
\end{align*}
Therefore the posterior-mean dispersion conditional on stage-2 price path $h_2=\hbuy\hsell$ is
\begin{align*}
    \operatorname{Var}\left(
        \bar V_{\hbuy\hsell h_3}^{\left(2\right)}
        \mid Y_2^{\Spot}=\hbuy\hsell
    \right)\left(0\right)
    =
    \frac{1}{2}
    \left[
        \sigma\pi
        -
        0
    \right]^2
    +
    \frac{1}{2}
    \left[
        -\sigma\pi
        -
        0
    \right]^2
    =
    \sigma^2\pi^2.
\end{align*}
When $p_{\mathrm M} \in \left(0, 1\right)$, the selected continuation equilibrium has \(\alpha_{\hbuy\hsell}^{*}=1\)
and equilibrium ask and bid as given in Appendix~\ref{app:horizon-two-entry}:
\[
    A_{\hbuy\hsell}^{*}\left(p_{\mathrm M}\right)
    =
    \frac{\sigma\pi\left(1-p_{\mathrm M}\right)}{1+p_{\mathrm M}},
    \qquad
    B_{\hbuy\hsell}
    =
    -\sigma\pi.
\]
The conditional probability of a stage-3 buy is
\begin{align*}
    q_{\hbuy\hsell\hbuy}\left(p_{\mathrm M}\right)
    =
    p_{\mathrm M}
    +
    \left(1-p_{\mathrm M}\right)
    q_{\hbuy\hsell\hbuy}\left(0\right)
    =
    \frac{1+p_{\mathrm M}}{2}
\end{align*}
and the conditional probability of a stage-3 sell is
\begin{align*}
    q_{\hbuy\hsell\hsell}\left(p_{\mathrm M}\right)
    =
    1 - q_{\hbuy\hsell\hbuy}\left(p_{\mathrm M}\right)
    =
    \frac{1-p_{\mathrm M}}{2}.
\end{align*}
A no-trade outcome at stage-3 must come from the manipulator, so it does not move the posterior and
contributes zero to $\operatorname{Var}\left(
        \bar V_{\hbuy\hsell h_3}^{\left(2\right)}
        \mid Y_2^{\Spot}=\hbuy\hsell
    \right)\left(p_{\mathrm M}\right)$. Therefore,
\begin{align*}
    \operatorname{Var}\left(
        \bar V_{\hbuy\hsell h_3}^{\left(2\right)}
        \mid Y_2^{\Spot}=\hbuy\hsell
    \right)\left(p_{\mathrm M}\right)
    =
    \frac{1+p_{\mathrm M}}{2}
    \left[
        \frac{\sigma\pi\left(1-p_{\mathrm M}\right)}
             {1+p_{\mathrm M}} - 0
    \right]^2
    +
    \frac{1-p_{\mathrm M}}{2}\left[-\sigma\pi - 0\right]^2
    =
    \sigma^2\pi^2\frac{1-p_{\mathrm M}}{1+p_{\mathrm M}}.
\end{align*}
The same calculation applies at \(h_2=\hsell\hbuy\) by symmetry:
\begin{align*}
    \operatorname{Var}\left(
        \bar V_{\hsell\hbuy h_3}^{\left(2\right)}
        \mid Y_2^{\Spot}=\hsell\hbuy
    \right)\left(p_{\mathrm M}\right)
    =
    \sigma^2\pi^2\frac{1-p_{\mathrm M}}{1+p_{\mathrm M}}.
\end{align*}
Subtracting the two expressions verifies the claim.

\subsection{\texorpdfstring{$n$}{n}-Ordinary-Round Extension Calculations}
\label{app:horizon-n}

We prove Proposition~\ref{prop:horizon-n-equilibrium} in this appendix. The analysis is
conditional on a stage-1 prediction-market buy; the sell branch is symmetric.

\subsubsection{Stage 2 Posteriors}
\label{app:horizon-n-posteriors}

Fix \(n \geq 1\). A stage-2 price path is a sequence of $n$ outcomes:
\[
    h_2=h_{2.1}\cdots h_{2.n}\in\left\{\hbuy,\hsell\right\}^n.
\]
As we will see, the relevant sufficient statistic for the stage-2 posterior that enters
stage~3 is the order imbalance
\[
    d\left(h_2\right)
    =
    \sum_{j=1}^{n}\one\left\{h_{2.j}=\hbuy\right\}
    -
    \sum_{j=1}^{n}\one\left\{h_{2.j}=\hsell\right\}
    =
    2\sum_{j=1}^{n}\one\left\{h_{2.j}=\hbuy\right\}
    -n,
\]
which takes values in the set
\[
    \mathcal D_n
    =
    \left\{-n,-n+2,\ldots,n-2,n\right\}.
\]
For \(k\in\mathcal D_n\), let
\[
    b\left(k\right)=\frac{n+k}{2},
    \qquad
    s\left(k\right)=\frac{n-k}{2},
\]
so that stage-2 price paths with imbalance \(k\) have \(b\left(k\right)\) buys and
\(s\left(k\right)\) sells. Conditional on \(V=+\sigma\), any individual
stage-2 price path with imbalance \(k\) has probability
\[
    \left(\frac{1+\pi}{2}\right)^{b\left(k\right)}
    \left(\frac{1-\pi}{2}\right)^{s\left(k\right)},
\]
whereas conditional on \(V=-\sigma\), it has probability
\[
    \left(\frac{1-\pi}{2}\right)^{b\left(k\right)}
    \left(\frac{1+\pi}{2}\right)^{s\left(k\right)}.
\]
Averaging over the symmetric prior on \(V\), any individual stage-2 price path
\(h_2\) with \(d\left(h_2\right)=k\) has probability
\[
    q_{h_2}
    =
    \frac{
        \left(1+\pi\right)^{b\left(k\right)}
        \left(1-\pi\right)^{s\left(k\right)}
        +
        \left(1-\pi\right)^{b\left(k\right)}
        \left(1+\pi\right)^{s\left(k\right)}
    }{2^{n+1}}.
\]
Thus the total probability of all stage-2 price paths with imbalance \(k\) is
\begin{equation}
\label{eq:app-horizon-n-imbalance-cell-mass}
    q_{d=k}
    \coloneq
    \sum_{h_2:d\left(h_2\right)=k}q_{h_2}
    =
    \binom{n}{b\left(k\right)}
    \frac{
        \left(1+\pi\right)^{b\left(k\right)}
        \left(1-\pi\right)^{s\left(k\right)}
        +
        \left(1-\pi\right)^{b\left(k\right)}
        \left(1+\pi\right)^{s\left(k\right)}
    }{2^{n+1}}.
\end{equation}
Since $b\left(-k\right)=s\left(k\right)=n-b\left(k\right)$ and similarly
$s\left(-k\right)=b\left(k\right)=n-s\left(k\right)$, we can verify that
\begin{equation}
\label{eq:app-horizon-n-imbalance-symmetry}
    q_{d=-k} = q_{d=k}.
\end{equation}
Let
\[
    r \coloneq \frac{1+\pi}{1-\pi}
\]
denote the likelihood ratio of a buy order in an ordinary spot round:
a buy is \(r\) times as likely when \(V=+\sigma\) as when \(V=-\sigma\). By symmetry,
the likelihood ratio of a sell order is $r^{-1}$.
Since the $n$ ordinary spot rounds are conditionally independent given $V$,
Bayes' rule gives directly, for any stage-2 price path \(h_2\) with
\(d\left(h_2\right)=k\),
\[
\begin{aligned}
    &\ppost_{h_2} \\
    =&
    \Prb\left(V=+\sigma\mid Y_2^\Spot=h_2\right)\\
    =&
    \frac{
        \Prb\left(Y_2^\Spot=h_2\mid V=+\sigma\right)
        \Prb\left(V=+\sigma\right)
    }{
        \Prb\left(Y_2^\Spot=h_2\mid V=+\sigma\right)
        \Prb\left(V=+\sigma\right)
        +
        \Prb\left(Y_2^\Spot=h_2\mid V=-\sigma\right)
        \Prb\left(V=-\sigma\right)
    }\\
    =&
    \frac{
        \left(\frac{1+\pi}{2}\right)^{b\left(k\right)}
        \left(\frac{1-\pi}{2}\right)^{s\left(k\right)}
        \frac{1}{2}
    }{
        \left(\frac{1+\pi}{2}\right)^{b\left(k\right)}
        \left(\frac{1-\pi}{2}\right)^{s\left(k\right)}
        \frac{1}{2}
        +
        \left(\frac{1-\pi}{2}\right)^{b\left(k\right)}
        \left(\frac{1+\pi}{2}\right)^{s\left(k\right)}
        \frac{1}{2}
    }
    =
    \frac{r^k}{1+r^k},
\end{aligned}
\]
where the last equality cancels common factors and uses \(b\left(k\right)-s\left(k\right)=k\).
This shows that \(\ppost_{h_2}\) depends on \(h_2\) only through \(k=d\left(h_2\right)\), which proves
that \(d\left(h_2\right)\) is indeed a sufficient statistic for the stage-2 posterior.
We write this common value as
\[
    \ppost_{d=k}
    =
    \frac{r^k}{1+r^k}.
\]
The posterior mean and variance are therefore
\[
    \bar V_{d=k}
    =
    \sigma\left(2\ppost_{d=k}-1\right)
    =
    \sigma\frac{r^k-1}{r^k+1},
\]
and
\[
    \rvar_{d=k}
    =
    4\sigma^2\ppost_{d=k}\left(1-\ppost_{d=k}\right)
    =
    \frac{4\sigma^2r^k}{\left(1+r^k\right)^2}.
\]
Thus every stage-2 price path with the same imbalance induces the same posterior:
\[
    \ppost_{h_2}
    =
    \ppost_{d=d\left(h_2\right)},
    \qquad
    \bar V_{h_2}
    =
    \bar V_{d=d\left(h_2\right)},
    \qquad
    \rvar_{h_2}
    =
    \rvar_{d=d\left(h_2\right)}.
\]

\subsubsection{Selected stage-3 continuation equilibrium}
\label{app:horizon-n-continuation}

Substituting \(\ppost_{d=k}\) into the general ask and bid formulas
\eqref{eq:app-stage3-ask-general} and \eqref{eq:app-stage3-bid-general} gives
\[
    A_{d=k}\left(\alpha_{d=k};p_{\mathrm M}\right)
    =
    \sigma
    \frac{
        p_{\mathrm M}\alpha_{d=k}\left(\frac{r^k-1}{r^k+1}\right)
        +
        \frac{1-p_{\mathrm M}}{2}\left(\frac{r^k-1}{r^k+1}+\pi\right)
    }{
        p_{\mathrm M}\alpha_{d=k}
        +
        \frac{1-p_{\mathrm M}}{2}\left(1+\frac{r^k-1}{r^k+1}\pi\right)
    }, \qquad
    B_{d=k}
    =
    \sigma\frac{r^{k-1}-1}{r^{k-1}+1}.
\]
At imbalance $k$, the gap
between the expected continuation payoff from buying and that from staying out is
\[
    \Gamma_{d=k}\left(\alpha_{d=k};p_{\mathrm M}\right)
    =
    H\left(\frac{A_{d=k}\left(\alpha_{d=k};p_{\mathrm M}\right)}{\varepsilon}\right)
    -
    H\left(\frac{\bar V_{d=k}}{\varepsilon}\right)
    -
    \left[
        A_{d=k}\left(\alpha_{d=k};p_{\mathrm M}\right)-\bar V_{d=k}
    \right].
\]
\paragraph{Favorable imbalances \(k\geq 1\): pure stay-out equilibrium.}
For any \(p_{\mathrm M}\in\left(0,1\right)\), any
\(\alpha_{d=k}\in\left[0,1\right]\), and any \(k\geq 1\), 
the stage-3 ask is at least the prior mean, so
\[
    A_{d=k}\left(\alpha_{d=k};p_{\mathrm M}\right)
    \geq
    \bar V_{d=k}
    \geq
    \bar V_{d=1}
    =
    \sigma\pi
    >
    \varepsilon.
\]
Therefore
\(H\left(A_{d=k}\left(\alpha_{d=k};p_{\mathrm M}\right)/\varepsilon\right)
=H\left(\bar V_{d=k}/\varepsilon\right)=1\). Hence
\[
    \Gamma_{d=k}\left(\alpha_{d=k};p_{\mathrm M}\right)
    =
    1-1-
    \left[
        A_{d=k}\left(\alpha_{d=k};p_{\mathrm M}\right)-\bar V_{d=k}
    \right]
    \leq
    0.
\]
Thus, for any \(p_{\mathrm M}\in\left(0,1\right)\) and any \(k\geq 1\),
\(\alpha_{d=k}^{*}=0\) is a stable fixed point and is clearly the smallest.

\paragraph{Hopeless imbalances \(k\leq -2\): pure stay-out equilibrium.}
For any \(p_{\mathrm M}\in\left(0,1\right)\) and any \(k\leq -2\),
\[
    A_{d=k}\left(0;p_{\mathrm M}\right)
    =
    \sigma\frac{r^{k+1}-1}{r^{k+1}+1}
    \leq
    \sigma\frac{r^{-2+1}-1}{r^{-2+1}+1}
    =
    \bar V_{d=-1}
    =
    -\sigma\pi
    <
    -\varepsilon.
\]
Since the ask strictly decreases in \(\alpha_{d=k}\) and is at least the prior mean,
\[
    \bar V_{d=k}
    \leq
    A_{d=k}\left(\alpha_{d=k};p_{\mathrm M}\right)
    \leq
    A_{d=k}\left(0;p_{\mathrm M}\right)
    <
    -\varepsilon.
\]
Consequently
\(H\left(A_{d=k}\left(\alpha_{d=k};p_{\mathrm M}\right)/\varepsilon\right)
=H\left(\bar V_{d=k}/\varepsilon\right)=0\), and
\[
    \Gamma_{d=k}\left(\alpha_{d=k};p_{\mathrm M}\right)
    =
    0-0
    -\left[
        A_{d=k}\left(\alpha_{d=k};p_{\mathrm M}\right)-\bar V_{d=k}
    \right]
    \leq
    0.
\]
Thus, for any \(p_{\mathrm M}\in\left(0,1\right)\) and any \(k\leq -2\),
\(\alpha_{d=k}^{*}=0\) is a stable fixed point and is clearly the smallest.

\paragraph{Boundary imbalance \(k=0\): pure push equilibrium.}
$k=0$ can occur only when \(n\) is even. Fix
\(p_{\mathrm M}\in\left(0,1\right)\). Here
\[
    \bar V_{d=0}=0,
\]
and the stage-3 ask is
\[
    A_{d=0}\left(\alpha_{d=0};p_{\mathrm M}\right)
    =
    \frac{\sigma\pi\left(1-p_{\mathrm M}\right)}
         {1-p_{\mathrm M}+2p_{\mathrm M}\alpha_{d=0}}.
\]
At \(\alpha_{d=0}=0\),
\[
    A_{d=0}\left(0;p_{\mathrm M}\right)=\sigma\pi>\varepsilon,
\]
so
\[
    \Gamma_{d=0}\left(0;p_{\mathrm M}\right)
    =
    1-\frac{1}{2}-\sigma\pi
    =
    \frac{1}{2}-\sigma\pi
    >
    0.
\]
Thus \(\alpha_{d=0}=0\) is not a fixed point. For any
\(\alpha_{d=0}\in\left[0,1\right]\), the ask is
\[
    A_{d=0}\left(\alpha_{d=0};p_{\mathrm M}\right)
    \in
    \left[
        \frac{\sigma\pi\left(1-p_{\mathrm M}\right)}{1+p_{\mathrm M}},
        \sigma\pi
    \right]
    \subset
    \left(0,\frac{1}{2}\right).
\]
If \(A_{d=0}\left(\alpha_{d=0};p_{\mathrm M}\right)\geq\varepsilon\), then
\[
    \Gamma_{d=0}\left(\alpha_{d=0};p_{\mathrm M}\right)
    =
    \frac{1}{2}-A_{d=0}\left(\alpha_{d=0};p_{\mathrm M}\right)
    \geq 
    \frac{1}{2}-\sigma\pi
    >
    0.
\]
If \(0<A_{d=0}\left(\alpha_{d=0};p_{\mathrm M}\right)<\varepsilon\), then
\[
    \Gamma_{d=0}\left(\alpha_{d=0};p_{\mathrm M}\right)
    =
    A_{d=0}\left(\alpha_{d=0};p_{\mathrm M}\right)
    \left(\frac{1}{2\varepsilon}-1\right)
    >
    0
\]
because $\varepsilon < 1/2$. Thus no interior fixed point exists either,
and the selected continuation equilibrium is the endpoint
\[
    \alpha_{d=0}^{*}=1.
\]

\paragraph{Boundary imbalance \(k=-1\): mixed or pure push.}
$k=-1$ can occur only when \(n\) is odd. Fix
\(p_{\mathrm M}\in\left(0,1\right)\). Here
\[
    \bar V_{d=-1}=-\sigma\pi,
\]
and
\[
    A_{d=-1}\left(\alpha_{d=-1};p_{\mathrm M}\right)
    =
    \frac{-\sigma\pi p_{\mathrm M}\alpha_{d=-1}}
         {p_{\mathrm M}\alpha_{d=-1}+\left(1-p_{\mathrm M}\right)\frac{1-\pi^2}{2}}.
\]
This is exactly the unfavorable-path continuation problem from
Appendix~\ref{app:selected-continuation-equilibrium-by-path}. Therefore, with
\[
    p_0
    =
    \frac{\varepsilon\left(1-\pi^2\right)\left(1-2\sigma\pi\right)}
         {2\left(\sigma\pi-\varepsilon\right)
          +\varepsilon\left(1-\pi^2\right)\left(1-2\sigma\pi\right)},
\]
the selected continuation equilibrium is
\[
    \alpha_{d=-1}^{*}\left(p_{\mathrm M}\right)
    =
    \begin{cases}
        1,
        & 0<p_{\mathrm M}\leq p_0,\\[0.8em]
        \displaystyle
        \frac{1-p_{\mathrm M}}{p_{\mathrm M}}
        \frac{\varepsilon\left(1-\pi^2\right)\left(1-2\sigma\pi\right)}
             {2\left(\sigma\pi-\varepsilon\right)},
        & p_0<p_{\mathrm M}<1.
    \end{cases}
\]
The corresponding equilibrium ask is
\[
    A_{d=-1}^{*}\left(p_{\mathrm M}\right)
    =
    \begin{cases}
        \displaystyle
        -\frac{2\sigma\pi p_{\mathrm M}}
               {\left(1-\pi^2\right)+\left(1+\pi^2\right)p_{\mathrm M}},
        & 0<p_{\mathrm M}\leq p_0,\\[1.2em]
        \displaystyle
        -\frac{\varepsilon\left(1-2\sigma\pi\right)}{1-2\varepsilon},
        & p_0<p_{\mathrm M}<1.
    \end{cases}
\]
Thus \(d=-1\) is the unique pivotal imbalance when \(n\) is odd.

\subsubsection{Optimal manipulator entry probability}
\label{app:horizon-n-entry}

Both the pricing of the binary contract and the manipulator's stage-1 entry problem use
the selected continuation equilibrium from Appendix~\ref{app:horizon-n-continuation}.
Appendix~\ref{app:horizon-n-posteriors} shows that the stage-2 posterior depends on
\(h_2\) only through the imbalance \(d\left(h_2\right)\), so we can aggregate over
imbalances rather than individual price paths. Combining Appendix~\ref{app:horizon-n-posteriors},
the general ask and bid formulas
\eqref{eq:app-stage3-ask-general} and \eqref{eq:app-stage3-bid-general}, and
Appendix~\ref{app:horizon-n-continuation}, we obtain the following stage-3 asks, bids, and no-trade
posterior means for a stage-2 price path \(h_2\) with \(d\left(h_2\right)=k\):
\[
    A_{d=k}^{*}\left(p_{\mathrm M}\right)
    =
    \begin{cases}
        \displaystyle
        \sigma\frac{r^{k+1}-1}{r^{k+1}+1}=\bar{V}_{d=k+1},
        & k\geq 1 \text{ or } k\leq -2,\\[1.2em]
        \displaystyle
        \frac{\sigma\pi\left(1-p_{\mathrm M}\right)}
             {1+p_{\mathrm M}},
        & k=0,\\[1.2em]
        \displaystyle
        -\frac{2\sigma\pi p_{\mathrm M}}
               {\left(1-\pi^2\right)+\left(1+\pi^2\right)p_{\mathrm M}},
        & k=-1 \text{ and } 0<p_{\mathrm M}\leq p_0,\\[1.2em]
        \displaystyle
        -\frac{\varepsilon\left(1-2\sigma\pi\right)}{1-2\varepsilon},
        & k=-1 \text{ and } p_0<p_{\mathrm M}<1,
    \end{cases}
\]
\[
    B_{d=k}
    =
    \sigma\frac{r^{k-1}-1}{r^{k-1}+1}=\bar{V}_{d=k-1}.
\]
\[
    \bar V_{d=k}
    =
    \sigma\frac{r^k-1}{r^k+1}.
\]

For each stage-2 price path \(h_2\) with \(d\left(h_2\right)=k\), the expected binary-contract payoffs are
\begin{align*}
    \Pi_{\mathrm M}^{+}\left(p_{\mathrm M}, k\right)
    &=
    \alpha_{d=k}^{*}\left(p_{\mathrm M}\right)
    H\left(A_{d=k}^{*}\left(p_{\mathrm M}\right)/\varepsilon\right)
    +
    \left(1-\alpha_{d=k}^{*}\left(p_{\mathrm M}\right)\right)
    H\left(\bar V_{d=k}/\varepsilon\right),\\
    \Pi_{\mathrm L}^{+}\left(p_{\mathrm M}, k\right)
    &=
    q_{d=k,\hbuy}\left(0\right)
    H\left(A_{d=k}^{*}\left(p_{\mathrm M}\right)/\varepsilon\right)
    +
    q_{d=k,\hsell}\left(0\right)H\left(B_{d=k}/\varepsilon\right).
\end{align*}

\paragraph{Even \(n\).}
Suppose \(n\) is even. Substituting the selected continuation equilibrium and the
stage-3 quotes gives
\begin{align*}
    \Pi_{\mathrm M}^{+}\left(p_{\mathrm M},k\right)
    &=
    \begin{cases}
        1,
        & k\geq 2,\\[0.6em]
        \displaystyle
        H\left(
            \frac{\sigma\pi\left(1-p_{\mathrm M}\right)}
                 {\varepsilon\left(1+p_{\mathrm M}\right)}
        \right),
        & k=0,\\[1.2em]
        0,
        & k\leq -2,
    \end{cases}\\
    \Pi_{\mathrm L}^{+}\left(p_{\mathrm M},k\right)
    &=
    \begin{cases}
        1,
        & k\geq 2,\\[0.6em]
        \displaystyle
        \frac{1}{2}
        H\left(
            \frac{\sigma\pi\left(1-p_{\mathrm M}\right)}
                 {\varepsilon\left(1+p_{\mathrm M}\right)}
        \right),
        & k=0,\\[1.2em]
        0,
        & k\leq -2.
    \end{cases}
\end{align*}
The probability of imbalance \(k=0\) is \(q_{d=0}\). By
\eqref{eq:app-horizon-n-imbalance-symmetry}, the positive and negative tails have
equal probability, so
\[
    \sum_{k\geq 2}
    q_{d=k}
    =
    \sum_{k \leq -2}
    q_{d=k}
    =
    \frac{1-q_{d=0}}{2}.
\]
Aggregating over $k$ gives
\begin{align*}
    \Pi_{\mathrm M}^{+}\left(p_{\mathrm M}\right)
    &=
    \frac{1-q_{d=0}}{2}
    +
    q_{d=0}
    H\left(
        \frac{\sigma\pi\left(1-p_{\mathrm M}\right)}
             {\varepsilon\left(1+p_{\mathrm M}\right)}
    \right),\\
    \Pi_{\mathrm L}^{+}\left(p_{\mathrm M}\right)
    &=
    \frac{1-q_{d=0}}{2}
    +
    \frac{q_{d=0}}{2}
    H\left(
        \frac{\sigma\pi\left(1-p_{\mathrm M}\right)}
             {\varepsilon\left(1+p_{\mathrm M}\right)}
    \right),\\
    K^{+}\left(p_{\mathrm M}\right)
    &=
    q_{d=0}
    \frac{\sigma\pi\left(1-p_{\mathrm M}\right)}
         {1+p_{\mathrm M}}.
\end{align*}
The unconstrained shadow stage-1 ask obtained from the prediction-market maker's break-even
condition is
\begin{align*}
    &A_1^{\PM,\mathrm{BE}}\left(p_{\mathrm M}\right) \\
    =&
    p_{\mathrm M}\Pi_{\mathrm M}^{+}\left(p_{\mathrm M}\right)
    +
    \left(1-p_{\mathrm M}\right)
    \Pi_{\mathrm L}^{+}\left(p_{\mathrm M}\right)\\
    =&
    \frac{1-q_{d=0}}{2}
    +
    \frac{q_{d=0}\left(1+p_{\mathrm M}\right)}{2}
    H\left(
        \frac{\sigma\pi\left(1-p_{\mathrm M}\right)}
             {\varepsilon\left(1+p_{\mathrm M}\right)}
    \right)\\
    =&
    \begin{cases}
        \displaystyle
        \frac{1}{2}
        +
        \frac{q_{d=0}p_{\mathrm M}}{2},
        & \displaystyle
          \frac{\sigma\pi\left(1-p_{\mathrm M}\right)}
               {1+p_{\mathrm M}}
          \geq \varepsilon,\\[0.8em]
        \displaystyle
        \frac{1}{2}
        +
        \frac{
            q_{d=0}\left(1-p_{\mathrm M}\right)
            \left(\sigma\pi-\varepsilon\right)
        }{4\varepsilon},
        & \displaystyle
          \frac{\sigma\pi\left(1-p_{\mathrm M}\right)}
               {1+p_{\mathrm M}}
          < \varepsilon.
    \end{cases}
\end{align*}
which is strictly greater than \(1/2\) for any
\(p_{\mathrm M}\in\left(0,1\right)\). Therefore the non-crossing equilibrium stage-1
ask is just the break-even stage-1 ask:
\[
    A_1^{\PM *}\left(p_{\mathrm M}\right)
    =
    \max\left\{
        A_1^{\PM,\mathrm{BE}}\left(p_{\mathrm M}\right),
        \frac{1}{2}
    \right\}
    =
    A_1^{\PM,\mathrm{BE}}\left(p_{\mathrm M}\right),
\]
and
\[
    B_1^{\PM *}\left(p_{\mathrm M}\right) = 1 - A_1^{\PM *}\left(p_{\mathrm M}\right)
\]
for any \(p_{\mathrm M}\in\left(0,1\right)\).

The manipulator's expected profit conditional on entering at stage~1 is
\begin{align*}
    &S_{\mathrm M}^{+}\left(p_{\mathrm M}\right) \\
    =&
    \Pi_{\mathrm M}^{+}\left(p_{\mathrm M}\right)
    -
    K^{+}\left(p_{\mathrm M}\right)
    -
    A_1^{\PM *}\left(p_{\mathrm M}\right)\\
    =&
    \frac{q_{d=0}}{2}
    \left[
        \left(1-p_{\mathrm M}\right)
        H\left(
            \frac{\sigma\pi\left(1-p_{\mathrm M}\right)}
                 {\varepsilon\left(1+p_{\mathrm M}\right)}
        \right)
        -
        \frac{2\sigma\pi\left(1-p_{\mathrm M}\right)}
             {1+p_{\mathrm M}}
    \right].
\end{align*}
Hence
\[
    \bar S_{\mathrm M}^{+}\left(p_{\mathrm M}\right)
    =
    p_{\mathrm M}S_{\mathrm M}^{+}\left(p_{\mathrm M}\right)
    =
    \frac{q_{d=0}}{2}
    \Phi\left(p_{\mathrm M};\sigma,\pi,\varepsilon\right),
\]
where \(\Phi\) is exactly the same function as defined in \eqref{eq:app-horizon-two-entry-phi}
for the manipulator's optimization problem under the two-ordinary-round extension.
Thus every even \(n\) has the same optimal entry probability as the two-ordinary-round extension:
\[
    p_{\mathrm M}^{*\left(n\right)}
    =
    p_{\mathrm M}^{*\left(2\right)}
    \quad
    \text{if }n\text{ is even}.
\]
Restoring horizon superscripts, since \(q_{d=0}=m^{\left(n\right)}\) for even \(n\)
and \(m^{\left(2\right)}=\left(1-\pi^2\right)/2\), where $m^{\left(n\right)}$ is defined
in Section~\ref{sec:results-horizon-n} and $m^{\left(2\right)}$ in Section~\ref{sec:results-horizon-two},
the preceding formulas imply
\begin{align*}
    S_{\mathrm M}^{+\left(n\right)}
    \left(p_{\mathrm M}^{*\left(n\right)}\right)
    &=
    \frac{m^{\left(n\right)}}{m^{\left(2\right)}}
    S_{\mathrm M}^{+\left(2\right)}
    \left(p_{\mathrm M}^{*\left(2\right)}\right),\\
    \bar S_{\mathrm M}^{+\left(n\right)}
    \left(p_{\mathrm M}^{*\left(n\right)}\right)
    &=
    \frac{m^{\left(n\right)}}{m^{\left(2\right)}}
    \bar S_{\mathrm M}^{+\left(2\right)}
    \left(p_{\mathrm M}^{*\left(2\right)}\right)
\end{align*}
for even \(n\).

\paragraph{Odd \(n\).}
Suppose \(n\) is odd. Substituting the selected continuation equilibrium and the
stage-3 quotes gives
\begin{align*}
    \Pi_{\mathrm M}^{+}\left(p_{\mathrm M},k\right)
    &=
    \begin{cases}
        1,
        & k\geq 1,\\[0.6em]
        \displaystyle
        \alpha_{d=-1}^{*}\left(p_{\mathrm M}\right)
        H\left(A_{d=-1}^{*}\left(p_{\mathrm M}\right)/\varepsilon\right),
        & k=-1,\\[1.2em]
        0,
        & k\leq -3,
    \end{cases}\\
    \Pi_{\mathrm L}^{+}\left(p_{\mathrm M},k\right)
    &=
    \begin{cases}
        1,
        & k\geq 3,\\[0.6em]
        \displaystyle
        \frac{3+\pi^2}{4},
        & k=1,\\[1.2em]
        \displaystyle
        \frac{1-\pi^2}{2}
        H\left(A_{d=-1}^{*}\left(p_{\mathrm M}\right)/\varepsilon\right),
        & k=-1,\\[1.2em]
        0,
        & k\leq -3.
    \end{cases}
\end{align*}
The imbalances \(k=\pm 1\) have probabilities \(q_{d=\pm 1}\). By \eqref{eq:app-horizon-n-imbalance-symmetry},
$q_{d=1}=q_{d=-1}$ and the positive and negative tails have equal
probability, so
\[
    \sum_{k\geq 3}
    q_{d=k}
    =
    \sum_{k\leq -3}
    q_{d=k}
    =
    \frac{1}{2}-q_{d=-1}.
\]
Aggregating over $k$ gives
\begin{align*}
    \Pi_{\mathrm M}^{+}\left(p_{\mathrm M}\right)
    &=
    \frac{1}{2}
    +
    q_{d=-1}
    \alpha_{d=-1}^{*}\left(p_{\mathrm M}\right)
    H\left(A_{d=-1}^{*}\left(p_{\mathrm M}\right)/\varepsilon\right),\\
    \Pi_{\mathrm L}^{+}\left(p_{\mathrm M}\right)
    &=
    \frac{1}{2}
    +
    q_{d=-1}\left(1-\pi^2\right)
    \left[
        \frac{1}{2}H\left(A_{d=-1}^{*}\left(p_{\mathrm M}\right)/\varepsilon\right)
        -
        \frac{1}{4}
    \right],\\
    K^{+}\left(p_{\mathrm M}\right)
    &=
    q_{d=-1}
    \alpha_{d=-1}^{*}\left(p_{\mathrm M}\right)
    \left[
        A_{d=-1}^{*}\left(p_{\mathrm M}\right)+\sigma\pi
    \right].
\end{align*}
The unconstrained shadow stage-1 ask obtained from the prediction-market maker's break-even
condition is
\begin{align*}
    &A_1^{\PM,\mathrm{BE}}\left(p_{\mathrm M}\right) \\
    =&
    p_{\mathrm M}\Pi_{\mathrm M}^{+}\left(p_{\mathrm M}\right)
    +
    \left(1-p_{\mathrm M}\right)
    \Pi_{\mathrm L}^{+}\left(p_{\mathrm M}\right)\\
    =&
    \frac{1}{2}
    +
    q_{d=-1}
    \left[
        p_{\mathrm M}\alpha_{d=-1}^{*}\left(p_{\mathrm M}\right)
        H\left(A_{d=-1}^{*}\left(p_{\mathrm M}\right)/\varepsilon\right)
        +
        \left(1-p_{\mathrm M}\right)\left(1-\pi^2\right)
        \left[
            \frac{1}{2}H\left(A_{d=-1}^{*}\left(p_{\mathrm M}\right)/\varepsilon\right)
            -
            \frac{1}{4}
        \right]
    \right]\\
    =&
    \begin{cases}
        \displaystyle
        \frac{1}{2}
        -
        q_{d=-1}
        \frac{p_{\mathrm M}\left(\sigma\pi-\varepsilon\right)}
             {2\varepsilon},
        & 0<p_{\mathrm M}\leq p_0,\\[1.2em]
        \displaystyle
        \frac{1}{2}
        -
        q_{d=-1}
        \frac{\left(1-p_{\mathrm M}\right)\left(1-\pi^2\right)\left(1-2\sigma\pi\right)}
             {4},
        & p_0<p_{\mathrm M}<1.
    \end{cases}
\end{align*}
which is strictly less than \(1/2\) for any
\(p_{\mathrm M}\in\left(0,1\right)\). Therefore the non-crossing equilibrium stage-1
ask and bid are respectively
\[
    A_1^{\PM *}\left(p_{\mathrm M}\right)
    =
    \max\left\{
        A_1^{\PM,\mathrm{BE}}\left(p_{\mathrm M}\right),
        \frac{1}{2}
    \right\}
    =
    \frac{1}{2},
    \qquad
    B_1^{\PM *}\left(p_{\mathrm M}\right)
    =
    1 - A_1^{\PM *}\left(p_{\mathrm M}\right)
    =
    \frac{1}{2}
\]
for any \(p_{\mathrm M}\in\left(0,1\right)\).

The manipulator's expected profit conditional on entering at stage~1 is
\[
    S_{\mathrm M}^{+}\left(p_{\mathrm M}\right)
    =
    \Pi_{\mathrm M}^{+}\left(p_{\mathrm M}\right)
    -
    K^{+}\left(p_{\mathrm M}\right)
    -
    \frac{1}{2},
\]
which yields
\[
    S_{\mathrm M}^{+}\left(p_{\mathrm M}\right)
    =
    \begin{cases}
        \displaystyle
        q_{d=-1}
        \frac{
            \left(1-\pi^2\right)\left(1-2\sigma\pi\right)
            \left(1-p_{\mathrm M}/p_0\right)
        }{
            2\left[
                \left(1-\pi^2\right)
                +
                \left(1+\pi^2\right)p_{\mathrm M}
            \right]
        },
        & 0<p_{\mathrm M}\leq p_0,\\[1.5em]
        0,
        & p_0<p_{\mathrm M}<1.
    \end{cases}
\]
The pure-push branch is positive when \(0<p_{\mathrm M}<p_0\) and vanishes at
\(p_{\mathrm M}=p_0\). Therefore, the manipulator maximizes the ex-ante profit
\begin{align*}
    \bar S_{\mathrm M}^{+}\left(p_{\mathrm M}\right)
    =
    p_{\mathrm M}S_{\mathrm M}^{+}\left(p_{\mathrm M}\right)
    =
    q_{d=-1}
    \frac{
        p_{\mathrm M}
        \left(1-\pi^2\right)\left(1-2\sigma\pi\right)
        \left(1-p_{\mathrm M}/p_0\right)
    }{
        2\left[
            \left(1-\pi^2\right)
            +
            \left(1+\pi^2\right)p_{\mathrm M}
        \right]
    }
\end{align*}
over \(0<p_{\mathrm M}<p_0\). This equals \(2q_{d=-1}\) times the manipulator's ex-ante profit
\eqref{eq:app-canonical-entry-objective} in the baseline model. Thus every odd $n$ has the
same optimal entry probability as the baseline model:
\[
    p_{\mathrm M}^{*\left(n\right)}
    =
    p_{\mathrm M}^{*\left(1\right)}
    \quad
    \text{if }n\text{ is odd}.
\]
Since \(p_{\mathrm M}^{*\left(n\right)} = p_{\mathrm M}^{*\left(1\right)} \in \left(0,p_0\right)\),
the optimum lies on the pure-push branch. Restoring horizon superscripts, since
\(q_{d=-1}=m^{\left(n\right)}\) for odd \(n\) and \(m^{\left(1\right)}=1/2\), where $m^{\left(n\right)}$ is defined
in Section~\ref{sec:results-horizon-n} and $m^{\left(1\right)}$ in Section~\ref{sec:results-horizon-two},
the preceding formulas imply
\begin{align*}
    S_{\mathrm M}^{+\left(n\right)}
    \left(p_{\mathrm M}^{*\left(n\right)}\right)
    &=
    \frac{m^{\left(n\right)}}{m^{\left(1\right)}}
    S_{\mathrm M}^{+\left(1\right)}
    \left(p_{\mathrm M}^{*\left(1\right)}\right),\\
    \bar S_{\mathrm M}^{+\left(n\right)}
    \left(p_{\mathrm M}^{*\left(n\right)}\right)
    &=
    \frac{m^{\left(n\right)}}{m^{\left(1\right)}}
    \bar S_{\mathrm M}^{+\left(1\right)}
    \left(p_{\mathrm M}^{*\left(1\right)}\right)
\end{align*}
for odd \(n\). The same observation pins down
$$
\alpha_{d=-1}^{*\left(n\right)}\left(p_{\mathrm M}^{*\left(n\right)}\right) = 1 \quad \text{if }n\text{ is odd}.
$$

\subsubsection{Price discovery}
\label{app:horizon-n-price-discovery}

Fix \(n > 2\). In this subsection only, write \(q_{d=k}^{\left(n\right)}\) for
the imbalance-$k$ probability in \eqref{eq:app-horizon-n-imbalance-cell-mass}
computed under the $n$-ordinary-round extension to emphasize the dependence on
$n$. For any stage-2 price path \(h_2\) with \(d\left(h_2\right)=k\) and any
\(p_{\mathrm M}^{\left(n\right)}\in\left(0,1\right)\), also write
\(q_{d=k,h_3}\left(p_{\mathrm M}^{\left(n\right)}\right)\),
\(\bar V_{d=k,h_3}\left(p_{\mathrm M}^{\left(n\right)}\right)\), and
\(\rvar_{d=k,h_3}\left(p_{\mathrm M}^{\left(n\right)}\right)\) for the corresponding
conditional stage-3 outcome probability, posterior mean, and posterior variance.

Then
\[
    \E\left[\rvar_{h_2 h_3}^{\left(n\right)}\right]\left(p_{\mathrm M}^{\left(n\right)}\right)
    =
    \sum_{k\in\mathcal D_n}
    q_{d=k}^{\left(n\right)}
    \sum_{h_3\in\left\{\hbuy,\hzero,\hsell\right\}}
    q_{d=k,h_3}\left(p_{\mathrm M}^{\left(n\right)}\right)
    \rvar_{d=k,h_3}\left(p_{\mathrm M}^{\left(n\right)}\right)
\]
by the law of total expectation. To prove the price-discovery part of
Proposition~\ref{prop:horizon-n-equilibrium}, it suffices to compare horizon \(n\)
to horizon \(n-1\) through the no-manipulation \(\left(n-1\right)\)-stage
benchmark. We show, for every
\(p_{\mathrm M}^{\left(n\right)}, p_{\mathrm M}^{\left(n-1\right)}\in\left(0,1\right)\),
\[
    \E\left[\rvar_{h_2 h_3}^{\left(n\right)}\right]\left(p_{\mathrm M}^{\left(n\right)}\right)
    <
    \E\left[\rvar_{h_2 h_3}^{\left(n-1\right)}\right]\left(0\right)
    <
    \E\left[\rvar_{h_2 h_3}^{\left(n-1\right)}\right]\left(p_{\mathrm M}^{\left(n-1\right)}\right).
\]
At the no-manipulation benchmark $p_{\mathrm M}^{\left(n-1\right)}=0$, the \(\left(n-1\right)\)-ordinary-round model has
\(n-1\) ordinary spot rounds in stage~2 and one ordinary spot round in stage~3, and thus
stages~2 and~3 together share the same event structure as stage~2 under the $n$-ordinary-round
extension. Hence
\[
    \E\left[\rvar_{h_2 h_3}^{\left(n-1\right)}\right]\left(0\right)
    =
    \sum_{k\in\mathcal D_n}
    q_{d=k}^{\left(n\right)}
    \rvar_{d=k}.
\]
For each \(k\in\mathcal D_n\), the law of total variance gives
\begin{align*}
    \sum_{h_3\in\left\{\hbuy,\hzero,\hsell\right\}}
    q_{d=k,h_3}\left(p_{\mathrm M}^{\left(n\right)}\right)
    \rvar_{d=k,h_3}\left(p_{\mathrm M}^{\left(n\right)}\right)
    =
    \rvar_{d=k}
    -
    \operatorname{Var}\left(
        \bar V_{d=k,h_3}
        \mid d\left(h_2\right)=k
    \right)\left(p_{\mathrm M}^{\left(n\right)}\right).
\end{align*}
Since \(\ppost_{d=k}\in\left(0,1\right)\),
\[
    q_{d=k,\hsell}\left(p_{\mathrm M}^{\left(n\right)}\right)
    =
    \left(1-p_{\mathrm M}^{\left(n\right)}\right)
    \left[
        \frac{1-\pi}{2}
        +
        \pi\left(1-\ppost_{d=k}\right)
    \right]
    >
    0,
\]
\[
    \bar V_{d=k,\hsell}\left(p_{\mathrm M}^{\left(n\right)}\right)
    -
    \bar V_{d=k}
    =
    B_{d=k} - \bar V_{d=k}
    =
    -
    \frac{
        \sigma\pi
        \left[
            1-\left(2\ppost_{d=k}-1\right)^2
        \right]
    }{
        1-\left(2\ppost_{d=k}-1\right)\pi
    }
    <
    0.
\]
Therefore
\[
    \operatorname{Var}\left(
        \bar V_{d=k,h_3}
        \mid d\left(h_2\right)=k
    \right)\left(p_{\mathrm M}^{\left(n\right)}\right)
    \geq
    q_{d=k,\hsell}\left(p_{\mathrm M}^{\left(n\right)}\right)
    \left[
        \bar V_{d=k,\hsell}\left(p_{\mathrm M}^{\left(n\right)}\right)
        -
        \bar V_{d=k}
    \right]^2
    >
    0.
\]
Taking the \(q_{d=k}^{\left(n\right)}\)-weighted average of the law-of-total-variance
identity over \(k\in\mathcal D_n\) gives
\begin{align*}
    \E\left[\rvar_{h_2 h_3}^{\left(n\right)}\right]
    \left(p_{\mathrm M}^{\left(n\right)}\right)
    =
    \sum_{k\in\mathcal D_n}
    q_{d=k}^{\left(n\right)}
    \left[
        \rvar_{d=k}
        -
        \operatorname{Var}\left(
            \bar V_{d=k,h_3}
            \mid d\left(h_2\right)=k
        \right)\left(p_{\mathrm M}^{\left(n\right)}\right)
    \right].
\end{align*}
Each \(q_{d=k}^{\left(n\right)}\) is strictly positive, and the previous display
shows that each subtracted posterior-mean dispersion term is strictly positive,
so
\begin{align*}
    \E\left[\rvar_{h_2 h_3}^{\left(n\right)}\right]\left(p_{\mathrm M}^{\left(n\right)}\right)
    <
    \sum_{k\in\mathcal D_n}
    q_{d=k}^{\left(n\right)}
    \rvar_{d=k}
    =
    \E\left[\rvar_{h_2 h_3}^{\left(n-1\right)}\right]\left(0\right).
\end{align*}
Finally, although Proposition~\ref{prop:price-discovery-sign} is stated for the baseline model, its
proof in Appendix~\ref{app:price-discovery} conditions only on the stage-2 posterior
and thus still applies to the \(\left(n-1\right)\)-ordinary-round extension, giving,
for every \(p_{\mathrm M}^{\left(n-1\right)}\in\left(0,1\right)\),
\[
    \E\left[\rvar_{h_2 h_3}^{\left(n-1\right)}\right]\left(p_{\mathrm M}^{\left(n-1\right)}\right)
    >
    \E\left[\rvar_{h_2 h_3}^{\left(n-1\right)}\right]\left(0\right).
\]
Chaining the inequalities gives the stated result.

\clearpage
\section{Polymarket Contract Trading Volume}
\label{app:volume}

This appendix documents the trading volume of the Polymarket crypto up/down contracts that the main analysis studies, across the six coins (BTC, ETH, SOL, XRP, DOGE, BNB) and the main contract durations (five-minute, fifteen-minute, four-hour). The contracts carry tens of millions of dollars of daily volume; the five-minute contract is the most heavily traded and BTC dominates the cross-section.

\paragraph{By coin.} Figure~\ref{fig:app_vol_stacked} stacks daily Polymarket crypto up/down USDC volume by coin, summed across all contract durations, from the earliest Polymarket trade. Volume grows from a few million dollars a day in October 2025 to roughly \$30--40M by March 2026, with a visible step up when the five-minute contract launches in February. BTC is the large majority of the total throughout, ETH a distant second, and SOL/XRP/DOGE/BNB thin slivers, consistent with the manipulation footprint being concentrated in the coin whose pool is deepest.

\paragraph{By contract duration.} Figure~\ref{fig:app_vol_duration} stacks daily BTC volume by contract duration. The fifteen-minute contract carries most of the volume through the pre-five-minute period and the four-hour contract a thin sliver; the five-minute layer appears at its February launch and quickly becomes the largest, raising total daily BTC volume to roughly \$30M.

\begin{figure}[!t]
    \centering
    \begin{subfigure}{\textwidth}
        \centering
        \includegraphics[width=0.92\textwidth]{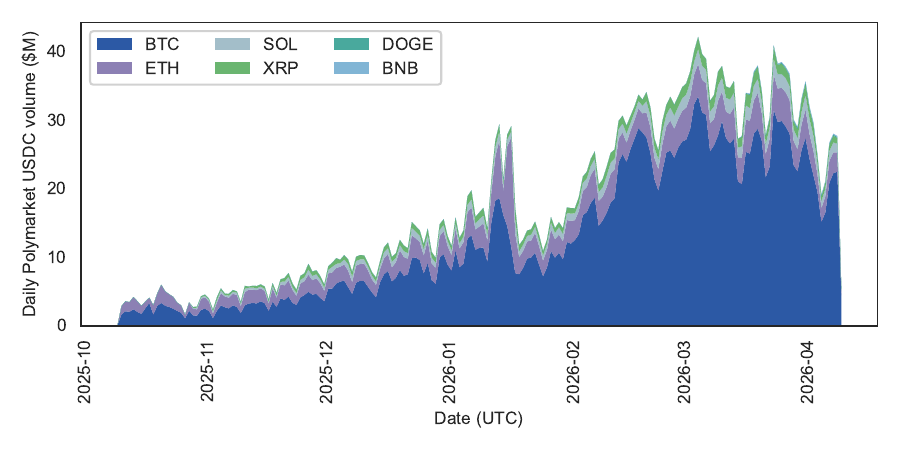}
        \caption{All coins, by coin.}
        \label{fig:app_vol_stacked}
    \end{subfigure}

    \begin{subfigure}{\textwidth}
        \centering
        \includegraphics[width=0.92\textwidth]{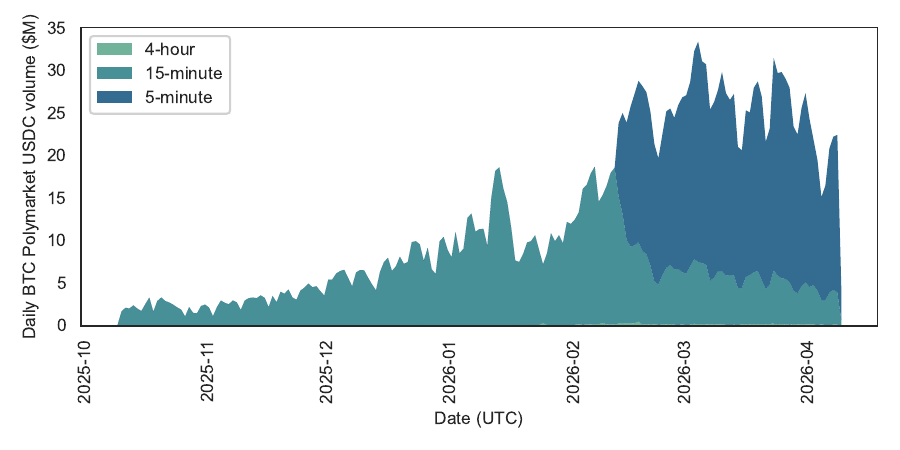}
        \caption{BTC, by contract duration.}
        \label{fig:app_vol_duration}
    \end{subfigure}
    \caption{\textbf{Daily Polymarket crypto up/down USDC volume.} This figure plots stacked daily taker USDC notional on the Polymarket crypto up/down contracts, from the earliest Polymarket trade (the October 2025 fifteen-minute / four-hour launch). \textbf{(a)}~All six coins, summed across the three durations (five-minute, fifteen-minute, four-hour), stacked by coin. \textbf{(b)}~BTC only, stacked by contract duration (four-hour, fifteen-minute, five-minute, from bottom to top).}
    \label{fig:app_vol}
\end{figure}

\clearpage
\section{Overlap and Moneyness Refinements}
\label{app:boundary_atm_zooms}

The two subsections below refine the cycle-level partition under which the main-body activity and reversal results are estimated. The first refines the within-P3 activity regression (Table~\ref{tab:reg_cond}) and reversal regression (Table~\ref{tab:reversal_cond}) into the full 15-min-overlap $\times$ per-contract-NTM groups. The second applies the moneyness cut to the fifteen-minute contract alone in P2, before the five-minute launch, asking whether the NTM concentration already appears at the longer horizon.

\begin{proof}[Proof of Proposition~\ref{cor:horizon-n-manipulation-prone-mass}]
Since \(1-\pi^2 \in \left(0, 1\right)\) and
\(\binom{n}{\left\lfloor n/2\right\rfloor}\leq 2^n\),
\[
    0
    \leq
    m^{\left(n\right)}
    \leq
    \left(1-\pi^2\right)^{\left\lfloor n/2\right\rfloor}.
\]
The limit of $m^{\left(n\right)}$ then follows from the squeeze theorem:
$$
\lim_{n \uparrow \infty} \left(1-\pi^2\right)^{\left\lfloor n/2\right\rfloor} = 0.
$$
Appendix~\ref{app:horizon-n-entry} shows that the optimal profit objects are
the corresponding lower-horizon objects scaled by the pivotal probabilities:
\[
    S_{\mathrm M}^{+\left(n\right)}
    \left(p_{\mathrm M}^{*\left(n\right)}\right)
    =
    \begin{cases}
        \displaystyle
        \frac{m^{\left(n\right)}}{m^{\left(1\right)}}
        S_{\mathrm M}^{+\left(1\right)}
        \left(p_{\mathrm M}^{*\left(1\right)}\right),
        & n \text{ is odd},\\[1.2em]
        \displaystyle
        \frac{m^{\left(n\right)}}{m^{\left(2\right)}}
        S_{\mathrm M}^{+\left(2\right)}
        \left(p_{\mathrm M}^{*\left(2\right)}\right),
        & n \text{ is even},
    \end{cases}
\]
and similarly
\[
    \bar S_{\mathrm M}^{+\left(n\right)}
    \left(p_{\mathrm M}^{*\left(n\right)}\right)
    =
    \begin{cases}
        \displaystyle
        \frac{m^{\left(n\right)}}{m^{\left(1\right)}}
        \bar S_{\mathrm M}^{+\left(1\right)}
        \left(p_{\mathrm M}^{*\left(1\right)}\right),
        & n \text{ is odd},\\[1.2em]
        \displaystyle
        \frac{m^{\left(n\right)}}{m^{\left(2\right)}}
        \bar S_{\mathrm M}^{+\left(2\right)}
        \left(p_{\mathrm M}^{*\left(2\right)}\right),
        & n \text{ is even}.
    \end{cases}
\]
Clearly, as \(n \uparrow \infty\), both profit objects converge to zero along with $m^{\left(n\right)}$.
\end{proof}

\subsection{Activity and reversal by overlap and moneyness}
\label{app:activity_overlap}
\label{app:reversal_overlap}

This subsection refines the within-P3 NTM results, the activity regression of Table~\ref{tab:reg_cond} and the reversal regression of Table~\ref{tab:reversal_cond}, by whether the cycle's close also coincides with a 15-minute mark, where the 15m Polymarket contract settles alongside the 5m contract. The cycle groups are written in set notation: $5\mn \cap 15\mn$ denotes overlapping cycles, where both the 5m and 15m Polymarket contracts settle at the close (four of twelve per hour), and $5\mn \setminus 15\mn$ non-overlapping cycles, where only the 5m contract settles. NTM is per contract from the Polymarket Up-token price near time-to-maturity TTM~$\in[10,12]$\,s: $\NTM_{5\mn}^{\poly}$ iff $|p_{Up,5\mn} - 0.5| < 0.10$, $\NTM_{15\mn}^{\poly}$ iff $|p_{Up,15\mn} - 0.5| < 0.10$ (defined only for overlapping cycles). Together these yield six mutually exclusive groups: $5\mn \setminus 15\mn$ split by 5m-contract NTM, and $5\mn \cap 15\mn$ split into neither-NTM, $\NTM_{5\mn}$-only, $\NTM_{15\mn}$-only, and both-NTM.

\paragraph{Activity.} Restricting to P3 and to bins in $\{0\text{--}24, 29\}$, the regression is
\[
    \begin{aligned}
        \log y_{c,b} \;=\; {} & \alpha_d + \gamma_b + \mu_h
        + \delta_{5\mn\,\cap\,15\mn}\,\mathbf{1}\{b{=}29\}{\cdot}D_{5\mn\,\cap\,15\mn,c}                                                                                                                                                   \\
                              & + \delta_{\NTM_{5\mn},\, 5\mn\setminus 15\mn}\,\mathbf{1}\{b{=}29\}{\cdot}D_{5\mn\setminus 15\mn,c}{\cdot}\mathbf{1}\{\NTM_{5\mn,c}^{\poly}\}                                                              \\
                              & + \delta_{\NTM_{5\mn},\, 5\mn\,\cap\,15\mn}\,\mathbf{1}\{b{=}29\}{\cdot}D_{5\mn\,\cap\,15\mn,c}{\cdot}\mathbf{1}\{\NTM_{5\mn,c}^{\poly}\}                                                                  \\
                              & + \delta_{\NTM_{15\mn},\, 5\mn\,\cap\,15\mn}\,\mathbf{1}\{b{=}29\}{\cdot}D_{5\mn\,\cap\,15\mn,c}{\cdot}\mathbf{1}\{\NTM_{15\mn,c}^{\poly}\}                                                                \\
                              & + \delta_{\NTM_{5\mn}\!\times\!\NTM_{15\mn},\, 5\mn\,\cap\,15\mn}\,\mathbf{1}\{b{=}29\}{\cdot}D_{5\mn\,\cap\,15\mn,c}{\cdot}\mathbf{1}\{\NTM_{5\mn,c}^{\poly}\}{\cdot}\mathbf{1}\{\NTM_{15\mn,c}^{\poly}\} \\
                              & + (\text{level effects}) + \varepsilon_{c,b},
    \end{aligned}
\]
estimated with date, hour-of-day, and bin fixed effects and cycle-clustered standard errors. The baseline, absorbed into the bin fixed effect $\gamma_{29}$, is the $5\mn \setminus 15\mn \times \FFM$ group; each $\delta$ is the bin-29-vs-body increase in its group over that baseline, $\delta_{5\mn \cap 15\mn}$ in overlapping cycles where neither contract is NTM, $\delta_{\NTM_{5\mn},\, 5\mn\setminus 15\mn}$ the 5m-contract NTM effect in non-overlapping cycles, $\delta_{\NTM_{5\mn},\, 5\mn \cap 15\mn}$ the 5m-contract NTM effect in overlapping cycles, and $\delta_{\NTM_{15\mn},\, 5\mn \cap 15\mn}$ the incremental 15m-contract NTM effect in overlapping cycles. The final term, $\delta_{\NTM_{5\mn}\!\times\!\NTM_{15\mn},\, 5\mn \cap 15\mn}$, is the both-NTM interaction: it saturates the four overlap-moneyness cells, so the both-NTM cell is a free parameter rather than the additive sum of the two single-contract effects.

In the order-flow column of Table~\ref{tab:reg_cond_full}, the overlap-without-NTM effect is small ($\hat\delta_{5\mn \cap 15\mn} = 0.21$), while the 5m-contract NTM effect is large in both the non-overlap group ($\hat\delta_{\NTM_{5\mn},\, 5\mn\setminus 15\mn} = 1.32$) and the overlap group ($\hat\delta_{\NTM_{5\mn},\, 5\mn \cap 15\mn} = 1.39$), and the 15m-contract NTM coefficient adds a further $\hat\delta_{\NTM_{15\mn},\, 5\mn \cap 15\mn} = 1.12$ inside the overlap group. The both-NTM interaction is positive on top of that: in cycles NTM by \emph{both} contracts, the bin-29 spike exceeds the sum of the two single-contract effects by a further $\hat\delta_{\NTM_{5\mn}\!\times\!\NTM_{15\mn}} = 1.14$ log points (significant at the 5\% level despite resting on only 11 such cycles), so the both-NTM overlap close sums all four terms, $0.21 + 1.39 + 1.12 + 1.14 = 3.87$ log points ($e^{3.87} \approx 48\times$ on the typical cycle) above the $5\mn \setminus 15\mn$ FFM baseline. This is what the model predicts: with two pools winnable at one close, a push is worth more than the sum of its parts. The same ordering holds for absolute return, and in the Polymarket columns the depth coefficients are again negative (prediction-market makers withdraw). The activity surge thus compounds with both moneyness and contract overlap: it is largest exactly where two contracts settle at one close and the outcome is still live by both.

\begin{table}[!t]
    \centering
    \footnotesize
    \caption{\textbf{Within-P3 bin-29 activity increase by 15-min overlap and per-contract Polymarket NTM, Binance and Polymarket.} This table reports cycle-level OLS estimates on the (cycle, bin) panel restricted to bins $\{0\text{--}24, 29\}$ and P3 cycles, with date, hour-of-day, and bin fixed effects; dependent variable $\log y_{c,b}$, the log of the column metric. Coefficients are log differences ($\delta$ is a multiplicative $e^{\delta}$ effect on the typical geometric-mean cycle). Set notation for the cycle group: $5\mn \cap 15\mn$ = 5m cycles whose close coincides with a 15-min boundary (both the 5m and 15m contracts settle); $5\mn \setminus 15\mn$ = the other eight of twelve per hour (only the 5m contract settles). NTM is per contract from the Polymarket Up-token price near TTM~$\in[10,12]$\,s: $\NTM_{5\mn}^{\poly}$ iff $|p_{Up,5\mn} - 0.5| < 0.10$, $\NTM_{15\mn}^{\poly}$ iff $|p_{Up,15\mn} - 0.5| < 0.10$ (overlapping cycles only). The five reported coefficients are $\mathbf{1}\{b{=}29\}$ interactions; the $5\mn \setminus 15\mn \times$ FFM bin-29 increase is the baseline, absorbed in the bin FE. The $\delta_{\NTM_{5\mn}\!\times\!\NTM_{15\mn}}$ interaction saturates the four overlap-moneyness cells, so each cell's bin-29 increase is a free parameter rather than an additive constraint; on the Binance order-flow sample the cells hold 586 ($5\mn\setminus15\mn \times \NTM_{5\mn}$), 211 ($\NTM_{5\mn}$-only), 132 ($\NTM_{15\mn}$-only), and 11 (both-NTM) cycles, so the interaction is estimated from few cycles. One observation per (cycle, bin) cell. HC1 standard errors in parentheses are clustered at the cycle level. Significance: $^{*}\,p<0.10$, $^{**}\,p<0.05$, $^{***}\,p<0.01$.}
    \label{tab:reg_cond_full}
    \begin{tabular}{l S[table-format=-1.3] S[table-format=-1.3] S[table-format=-1.3] S[table-format=-1.3] S[table-format=-1.3] S[table-format=-1.3]}
    \toprule
     & \multicolumn{3}{c}{Binance} & \multicolumn{3}{c}{Polymarket} \\
    \cmidrule(lr){2-4}\cmidrule(lr){5-7}
     & {Order flow} & {Abs. return} & {Depth} & {Order flow} & {Abs. price change} & {Depth} \\
    \midrule
    $\delta_{5\mn\,\cap\,15\mn}$ & 0.212{\sym{***}} & 0.109{\sym{***}} & 0.013{\sym{*}} & 0.163{\sym{***}} & 0.768{\sym{***}} & 0.026 \\
     & (0.043) & (0.034) & (0.008) & (0.032) & (0.153) & (0.047) \\[4pt]
    $\delta_{\NTM_{5\mn},\, 5\mn\setminus 15\mn}$ & 1.324{\sym{***}} & 0.555{\sym{***}} & 0.154{\sym{***}} & -0.090{\sym{*}} & 2.896{\sym{***}} & -0.980{\sym{***}} \\
     & (0.105) & (0.063) & (0.026) & (0.052) & (0.110) & (0.064) \\[4pt]
    $\delta_{\NTM_{5\mn},\, 5\mn\,\cap\,15\mn}$ & 1.393{\sym{***}} & 0.479{\sym{***}} & 0.168{\sym{***}} & -0.307{\sym{***}} & 1.879{\sym{***}} & -1.146{\sym{***}} \\
     & (0.185) & (0.111) & (0.041) & (0.099) & (0.176) & (0.113) \\[4pt]
    $\delta_{\NTM_{15\mn},\, 5\mn\,\cap\,15\mn}$ & 1.123{\sym{***}} & 0.435{\sym{***}} & 0.205{\sym{***}} & 0.167 & 0.251 & -0.234 \\
     & (0.228) & (0.145) & (0.062) & (0.134) & (0.517) & (0.211) \\[4pt]
    $\delta_{\NTM_{5\mn}\!\times\!\NTM_{15\mn},\, 5\mn\,\cap\,15\mn}$ & 1.143{\sym{**}} & 0.457{\sym{*}} & 0.259 & 0.290 & -1.331 & -0.060 \\
     & (0.583) & (0.277) & (0.262) & (0.341) & (1.097) & (0.324) \\
    \midrule
    Date FE & {Yes} & {Yes} & {Yes} & {Yes} & {Yes} & {Yes} \\
    Hour-of-day FE & {Yes} & {Yes} & {Yes} & {Yes} & {Yes} & {Yes} \\
    Bin FE & {Yes} & {Yes} & {Yes} & {Yes} & {Yes} & {Yes} \\
    Observations & {352{,}370} & {272{,}785} & {352{,}370} & {351{,}791} & {122{,}958} & {126{,}552} \\
    Clusters (cycle) & {13{,}553} & {13{,}551} & {13{,}553} & {13{,}553} & {5{,}001} & {5{,}001} \\
    \bottomrule
\end{tabular}

\end{table}

\paragraph{Reversal.} The reversal counterpart estimates, at each lag, a baseline slope plus four additive increments over the same six groups: the $5\mn \setminus 15\mn \times \FFM$ group is the baseline slope $\gamma_j$ on $r_{29}$, and we add $\delta_{5\mn \cap 15\mn}$ (overlapping close, neither contract NTM), $\delta_{\NTM_{5\mn},\, 5\mn\setminus 15\mn}$ (5m-contract NTM in non-overlapping cycles), $\delta_{\NTM_{5\mn},\, 5\mn \cap 15\mn}$ (5m-contract NTM in overlapping cycles), and $\delta_{\NTM_{15\mn},\, 5\mn \cap 15\mn}$ (incremental 15m-contract NTM in overlapping cycles), each as a slope increment on $r_{29}$. The nine overlapping cycles NTM by both contracts load on both overlap-NTM increments; unlike the saturated activity regression, nine cycles cannot support a free interaction slope.

Figure~\ref{fig:reversal_overlap} plots these slopes by group. At lag $j = +1$ in Table~\ref{tab:reversal_overlap}, the moneyness increments carry the result. Against the non-overlapping FFM baseline ($\gamma_{+1} = -0.08$), a non-overlapping cycle that is $\NTM_{5\mn}$ adds a further $-0.16$, about tripling the baseline reversal, and within overlapping cycles, NTM by either contract adds about $-0.22$ ($\delta_{\NTM_{5\mn},\, 5\mn \cap 15\mn} = -0.23$ and $\delta_{\NTM_{15\mn},\, 5\mn \cap 15\mn} = -0.22$), all significant; the deepest-reverting groups, overlap $\times$ NTM, revert about five times as hard as the baseline. That the $\NTM_{5\mn}$ and $\NTM_{15\mn}$ increments are almost identical says each contract's moneyness is on its own enough to mark a high-reversal cycle, consistent with the manipulator-selection reading: the payoff per dollar of push is larger when two contracts settle at the same close. The overlap level increment ($\delta_{5\mn \cap 15\mn} = -0.08$) we treat as a descriptive control rather than a settlement effect: 15-minute marks attract round-time algorithmic flow of their own (Section~\ref{sec:fifteen_min}). The manipulation reading therefore rests on the moneyness increments, which are defined by the live contracts' own prices and have no clock analogue.

\begin{figure}[t]
    \centering
    \includegraphics[width=0.83\textwidth]{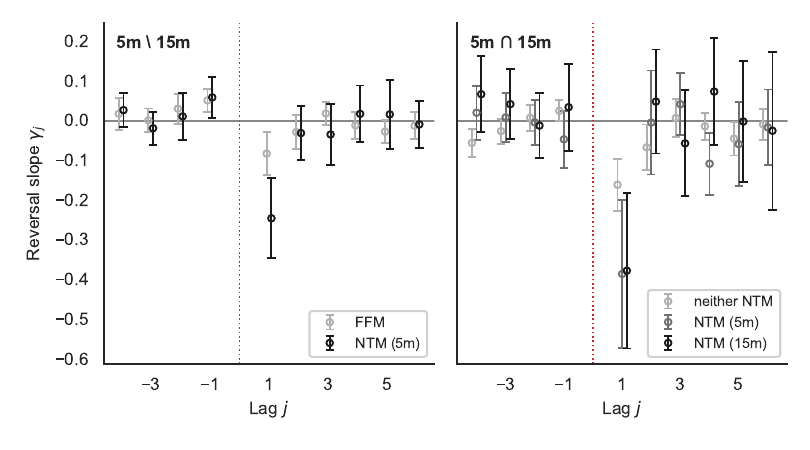}
    \caption{\textbf{Within-P3 post-settlement reversal slope $\gamma_j$, by cycle group.} This figure plots $\gamma_j$ for each cycle group (the underlying regression is Table~\ref{tab:reversal_overlap}): left panel non-overlapping ($5\mn \setminus 15\mn$) cycles, right panel overlapping ($5\mn \cap 15\mn$) cycles, each split by per-contract Polymarket NTM. Markers are $\gamma_j$; error bars are $\pm 1.96 \times$ cluster-robust SE (clustered by date); $j = 0$ is omitted. The both-NTM overlap group (nine cycles) is omitted.}
    \label{fig:reversal_overlap}
\end{figure}

\begin{table}[!t]
    \centering
    \footnotesize
    \caption{\textbf{Within-P3 post-settlement reversal slope by 15-min overlap and per-contract moneyness.} This table reports, at each lag $j \in \{-3,-2,-1,+1,+2,+3\}$, the coefficients from a single cycle-level OLS on P3 cycles (Feb 12 -- Apr 8, 2026): the $5\mn \setminus 15\mn \times \FFM$ group is the baseline reversal slope $\gamma_j$ on $r_{29}$, and the four reported $\delta$ are additive slope increments over it, $\delta_{5\mn \cap 15\mn}$ (overlapping close), $\delta_{\NTM_{5\mn},\, 5\mn\setminus 15\mn}$ and $\delta_{\NTM_{5\mn},\, 5\mn \cap 15\mn}$ (5m-contract NTM in non-overlapping and overlapping cycles), and $\delta_{\NTM_{15\mn},\, 5\mn \cap 15\mn}$ (incremental 15m-contract NTM in overlapping cycles); the stars test each coefficient against zero. Groups use set notation: $5\mn \setminus 15\mn$ = non-overlapping cycles (only the 5m contract settles), $5\mn \cap 15\mn$ = overlapping cycles whose close coincides with a 15-min mark (the 5m and 15m contracts settle simultaneously). Within each, cycles are split by per-contract Polymarket NTM, defined by the Up-token price near TTM~$\in[10,12]$\,s: $\NTM_{5\mn}^{\poly}$ iff $|p_{Up,5\mn} - 0.5| < 0.10$, $\NTM_{15\mn}^{\poly}$ iff $|p_{Up,15\mn} - 0.5| < 0.10$. The nine overlapping cycles NTM by \emph{both} contracts load on both the $\NTM_{5\mn}$ and $\NTM_{15\mn}$ overlap increments --- their slope is constrained to the additive sum (unlike the saturated activity Table~\ref{tab:reg_cond_full}; nine cycles cannot support a free interaction slope). Date, hour-of-day, and cycle-group fixed effects; standard errors in parentheses are clustered by date. Sample is P3 cycles with the relevant Polymarket classifications available: $5\mn \setminus 15\mn$ cycles need the 5m classifier, $5\mn \cap 15\mn$ cycles need both. Negative $j$ = same-cycle bin $(29 + j)$; positive $j$ = bin $(j-1)$ of the next cycle. Significance: $^{*}\,p<0.10$, $^{**}\,p<0.05$, $^{***}\,p<0.01$.}
    \label{tab:reversal_overlap}
    \begin{tabular}{l S[table-format=-1.3] S[table-format=-1.3] S[table-format=-1.3] S[table-format=-1.3] S[table-format=-1.3] S[table-format=-1.3]}
    \toprule
     & \multicolumn{3}{c}{Pre-close} & \multicolumn{3}{c}{Post-close} \\
    \cmidrule(lr){2-4} \cmidrule(lr){5-7}
     & {$j = -3$} & {$j = -2$} & {$j = -1$} & {$j = +1$} & {$j = +2$} & {$j = +3$} \\
    \midrule
    $5\mn\setminus15\mn$, FFM (baseline) & 0.001 & 0.031 & 0.052{\sym{***}} & -0.083{\sym{***}} & -0.028 & 0.019 \\
     & (0.015) & (0.019) & (0.015) & (0.028) & (0.022) & (0.014) \\[2pt]
    $\delta_{5\mn\,\cap\,15\mn}$ & -0.027 & -0.023 & -0.026 & -0.078{\sym{**}} & -0.039 & -0.011 \\
     & (0.022) & (0.025) & (0.021) & (0.033) & (0.037) & (0.028) \\[2pt]
    $\delta_{\NTM_{5\mn},\, 5\mn\setminus 15\mn}$ & -0.020 & -0.019 & 0.007 & -0.163{\sym{***}} & -0.003 & -0.053 \\
     & (0.022) & (0.032) & (0.028) & (0.044) & (0.039) & (0.039) \\[2pt]
    $\delta_{\NTM_{5\mn},\, 5\mn\,\cap\,15\mn}$ & 0.035 & -0.012 & -0.072{\sym{**}} & -0.225{\sym{**}} & 0.062 & 0.035 \\
     & (0.032) & (0.032) & (0.035) & (0.097) & (0.069) & (0.040) \\[2pt]
    $\delta_{\NTM_{15\mn},\, 5\mn\,\cap\,15\mn}$ & 0.068 & -0.020 & 0.009 & -0.216{\sym{**}} & 0.115 & -0.064 \\
     & (0.045) & (0.043) & (0.054) & (0.106) & (0.071) & (0.072) \\
    \midrule
    Date FE & {Yes} & {Yes} & {Yes} & {Yes} & {Yes} & {Yes} \\
    Hour-of-day FE & {Yes} & {Yes} & {Yes} & {Yes} & {Yes} & {Yes} \\
    Cycle-group FE & {Yes} & {Yes} & {Yes} & {Yes} & {Yes} & {Yes} \\
    Observations & \multicolumn{6}{c}{13{,}507} \\
    Clusters (UTC date) & \multicolumn{6}{c}{56} \\
    \bottomrule
\end{tabular}

\end{table}

\subsection{Moneyness for the fifteen-minute contract alone}
\label{app:p2_placebo}

Section~\ref{sec:fifteen_min} finds only a strongly attenuated fifteen-minute footprint, but that test is unconditional: it averages over all overlapping closes, most of which are far from the money by the time the contract settles. Within P3, the footprint concentrates in NTM cycles (Section~\ref{sec:atm}), so a genuine fifteen-minute effect could hide in the unconditional average, diluted by the FFM majority. This appendix therefore applies the moneyness cut to the fifteen-minute contract on its own. We take P2, where the fifteen-minute (and four-hour) contract is live but the five-minute contract is not, restrict to overlapping cycles, those whose close lands on a 15-min mark, where the fifteen-minute contract settles, and classify them NTM vs FFM by the \emph{fifteen-minute} contract's Polymarket Up-token price within $0.10$ of $0.5$ near TTM~$\in[10,12]$\,s, the same moneyness rule applied to the contract that is actually live. If the fifteen-minute horizon supported manipulation, its footprint should surface here, in exactly the cycles where a push could flip its resolution.

Table~\ref{tab:p2_placebo} reports the result, and it is largely a null. Panel~A shows a modest bin-29 activity increase in NTM cycles --- about $0.24$ in order flow and $0.30$ in absolute return, roughly $27\%$ and $35\%$ more close-time activity --- both significant at the 5\% level, but no effect on depth. These magnitudes are a small fraction of the five-minute effect in P3: differencing the log coefficients, the P3 NTM order-flow effect of Table~\ref{tab:reg_cond} is about $e^{1.37 - 0.24} \approx 3.1\times$ larger, and P3 carries the depth and reversal signatures that P2 lacks. Panel~B shows the reversal channel is absent altogether: the FFM baseline reverts at the first post-close lag (the general overlap reversal, present once the fifteen-minute contract is live), but the NTM increment $\delta_{\NTM,+1}$ is small, wrong-signed, and far from significant. The thin NTM sample (about 230 cycles) is the binding constraint on Panel~B. Taken together, the fifteen-minute contract alone generates at most a modest moneyness-graded activity increase, with neither the liquidity withdrawal-versus-provision pattern nor the deeper NTM reversal that define the P3 footprint. Even in the cycles where a push could flip the fifteen-minute resolution, the footprint barely appears: the fifteen-minute window is too long for the push to be reliably pivotal, and the near-settlement effect switches on with the five-minute contract.

\begin{table}[!t]
    \centering
    \footnotesize
    \caption{\textbf{Fifteen-minute-contract moneyness in P2 (BTC).} This table reports near-settlement activity (Panel~A) and post-settlement reversal (Panel~B) in P2 overlapping cycles, split by fifteen-minute-contract Polymarket NTM. Sample: P2 cycles (Oct 9, 2025 -- Feb 11, 2026; fifteen-minute and four-hour contracts live, five-minute not yet) whose close coincides with a 15-min mark, classified NTM iff the fifteen-minute-contract Up-token price is within $0.10$ of $0.5$ near TTM~$\in[10,12]$\,s, FFM otherwise. \emph{Panel~A:} the bin-29 NTM activity increase $\delta_{\NTM}$ (the coefficient on $\mathbf{1}\{b{=}29\}\cdot\mathbf{1}\{\NTM\}$) for each metric, log-OLS on the (cycle, bin) panel restricted to bins $\{0\text{--}24,29\}$ with date, hour-of-day, and bin fixed effects; standard errors clustered at the cycle level. Metrics as in Table~\ref{tab:reg_cond}; coefficients are log differences, so $\delta$ is a multiplicative $e^{\delta}$ effect. \emph{Panel~B:} the post-settlement reversal slope by lag $j$, the FFM baseline slope $\gamma_j$ on $r_{29}$ and the NTM increment $\delta_{\NTM,j}$, with date, hour-of-day, and group fixed effects; standard errors clustered by date; a negative slope is reversal. Negative $j$ = same-cycle bin $(29+j)$; positive $j$ = bin $(j-1)$ of the next cycle. Significance: $^{*}\,p<0.10$, $^{**}\,p<0.05$, $^{***}\,p<0.01$.}
    \label{tab:p2_placebo}
    \textit{Panel A: Near-settlement activity} \\[2pt]
    \begin{tabular}{l S[table-format=-1.3] S[table-format=-1.3] S[table-format=-1.3]}
    \toprule
     & {Order flow} & {Abs.\ return} & {Depth} \\
    \midrule
    $\delta_{\NTM}$ & 0.243{\sym{**}} & 0.303{\sym{**}} & -0.015 \\
     & (0.116) & (0.144) & (0.017) \\[2pt]
    \midrule
    Date FE & {Yes} & {Yes} & {Yes} \\
    Hour-of-day FE & {Yes} & {Yes} & {Yes} \\
    Bin FE & {Yes} & {Yes} & {Yes} \\
    Observations & {158{,}570} & {102{,}329} & {158{,}570} \\
    NTM cycles & \multicolumn{3}{c}{229} \\
    FFM cycles & \multicolumn{3}{c}{5{,}870} \\
    \bottomrule
\end{tabular}

    \\[6pt]
    \textit{Panel B: Post-settlement reversal} \\[2pt]
    \begin{tabular}{l S[table-format=-1.3] S[table-format=-1.3] S[table-format=-1.3] S[table-format=-1.3] S[table-format=-1.3] S[table-format=-1.3]}
    \toprule
     & \multicolumn{3}{c}{Pre-close} & \multicolumn{3}{c}{Post-close} \\
    \cmidrule(lr){2-4} \cmidrule(lr){5-7}
     & {$j = -3$} & {$j = -2$} & {$j = -1$} & {$j = +1$} & {$j = +2$} & {$j = +3$} \\
    \midrule
    FFM $\gamma_j$ & 0.121 & -0.026 & 0.015 & -0.137{\sym{***}} & -0.011 & -0.029 \\
     & (0.108) & (0.036) & (0.028) & (0.045) & (0.051) & (0.037) \\[2pt]
    $\delta_{\NTM,\,j}$ & -0.082 & -0.129{\sym{*}} & -0.004 & 0.076 & -0.184 & 0.011 \\
     & (0.115) & (0.077) & (0.061) & (0.234) & (0.115) & (0.071) \\[2pt]
    \midrule
    Date FE & {Yes} & {Yes} & {Yes} & {Yes} & {Yes} & {Yes} \\
    Hour-of-day FE & {Yes} & {Yes} & {Yes} & {Yes} & {Yes} & {Yes} \\
    Cycle-group FE & {Yes} & {Yes} & {Yes} & {Yes} & {Yes} & {Yes} \\
    Observations & \multicolumn{6}{c}{6{,}069} \\
    Clusters (UTC date) & \multicolumn{6}{c}{125} \\
    \bottomrule
\end{tabular}

\end{table}

\clearpage
\section{Binance Spot Tracks the Chainlink Resolution Price}
\label{app:chainlink_validation}

The 5m / 15m / 4h crypto up/down contracts resolve on a Chainlink VWAP price, not on Binance. Every Binance-based result in the main body therefore rests on the Binance midquote being a faithful proxy for the price that actually settles each market. We check this directly on the subset of BTC markets for which the Chainlink strike (fixed at window open) and the Chainlink settlement print (at window close) are both known; coverage is concentrated in the Mar~16 -- Apr~10, 2026 window. For each such market we read the Binance midquote at the window open and close from the 10-second midquote series via a strictly causal match: each bucket stores the \emph{last} book mid in $[t, t+10\,\text{s})$, so its effective timestamp is the bucket end, and we match each window boundary to the last book mid of the 10-second bucket ending at or before that instant.

Define, per market, the Chainlink--Binance basis at time $t$ (with $\text{mid}_t$ the Binance midquote):
\[
    \textit{Basis}_t = (\text{Chainlink}_t - \text{mid}_t)/\text{mid}_t.
\]
The market's realized Up/Down outcome is $\operatorname{sign}(\text{settlement} - \text{strike})$; the directional call a manipulator targets is whether the Binance mid finishes above or below the strike, $\operatorname{sign}(\text{mid}_{\text{close}} - \text{strike})$, the side a final-10s push tries to land on.

\begin{table}[!t]
    \centering
    \footnotesize
    \caption{\textbf{Binance spot midquote vs.\ the Chainlink resolution price, BTC.} This table compares the Binance midquote with the Chainlink price that settles each market, one observation per Polymarket BTC market with both the Chainlink strike (fixed at window open) and the Chainlink settlement print (at window close) known. \emph{Median $\textit{Basis}_t$} is the median per-market $(\text{Chainlink} - \text{Binance mid})/\text{Binance mid}$ at window open (bps). \emph{SD $\Delta\textit{Basis}_t$} is the standard deviation of the open-to-close change $\Delta\textit{Basis}_t \equiv \textit{Basis}_{\text{close}} - \textit{Basis}_{\text{open}}$ (bps). \emph{Strike-side agreement} is the fraction of markets on which the Binance mid finishes on the same side of the strike as the realized resolution, $\operatorname{sign}(\text{mid}_{\text{close}} - \text{strike}) = \operatorname{sign}(\text{settlement} - \text{strike})$.}
    \label{tab:binance_chainlink_basis}
    \begin{tabular}{lrrrr}
    \toprule
    Interval & $n$ & Median $\textit{Basis}_t$ & SD $\Delta\textit{Basis}_t$ & \makecell{Strike-side\\agreement} \\
    \midrule
    5m & 5{,}553 & $-2.45$ & $1.44$ & $84.8\%$ \\
    15m & 1{,}855 & $-2.49$ & $2.24$ & $91.3\%$ \\
    \bottomrule
\end{tabular}

\end{table}

Table~\ref{tab:binance_chainlink_basis} reports the comparison. The basis is a stable level offset of $\approx -2.5$\,bps and is essentially flat across horizons, so it cancels in the open-to-close return comparison: the within-window change in the basis has standard deviation of only $\approx 1.4$--$2.2$\,bps, small relative to a typical window move ($5.8$\,bps at 5m, $10.8$\,bps at 15m), so within each window the Binance mid and the Chainlink price move essentially together.

Where Binance and Chainlink \emph{can} part company is right at the strike. Because Binance sits $\approx 2.5$\,bps above Chainlink, the Binance mid can finish just above the strike while Chainlink, and hence the resolution, is still below it. The strike-side agreement in the last column is therefore below one: $84.8\%$ at 5m and $91.3\%$ at 15m, rising with horizon as the typical move grows relative to the fixed $\approx 2.5$\,bps basis. The residual disagreement is confined to genuine coin-toss windows: sorting the BTC 5m markets into quintiles by the size of the Binance window move, agreement is $61\%$ in the smallest quintile ($|\text{move}| < 1.8$\,bps), $87\%$ in the middle, and $100\%$ once the move exceeds $\approx 13$\,bps. That $\approx 2.5$\,bps band is precisely the margin a spot push must clear: to flip the resolution a manipulator must push Binance \emph{past} the strike by more than the basis, not merely to it; a push that only reaches the strike still fails on the $\sim 15\%$ of 5m windows where the basis leaves Chainlink on the other side. A push well past the strike, however, reliably carries the resolution, which makes spot-market pressure a coherent channel for moving the prediction-market outcome.

\clearpage
\section{Manipulator Cohort Threshold Sensitivity}
\label{app:cohort_sensitivity}

The manipulator cohort definition in Section~\ref{subsec:wallet_cohorts_id} uses two ad hoc thresholds: a minimum manipulated-cycle participation count of 5 and a minimum aggregate manipulated-cycle PnL of \$2{,}000. Table~\ref{tab:cohort_sensitivity} re-derives the \emph{full} trader-type attribution at each cell of a $4\times5$ grid of $(n_{\text{cycle,min}}, \$_{\min})$ thresholds: the manipulator cohort is recomputed, the MM cohort is re-derived as the behavioral residual (same passive-rate, fills-per-active-cycle, and inventory criteria as the main text), and retail/other is the final residual, so the entire partition responds to the threshold. For each cell the three panels report the three cohort metrics of Table~\ref{tab:cohort_pnl}: net PnL (Panel A), the fraction of cycles in which each cohort is net negative (Panel B), and the pool-weighted loss share (Panel C), each split across manipulated and normal cycles. Within each regime the three cohort PnLs sum to zero and the three loss shares to 100\% by construction.

    All three metrics are stable across the twenty cells. The manipulated-cycle wealth transfer runs directly from retail to manipulators: manipulators net between $+$\$4.1\,M and $+$\$10.0\,M and retail is the near-exact mirror ($-$\$3.6\,M to $-$\$9.3\,M); retail is on the losing side of 64--66\% of manipulated cycles and bears 62--74\% of the manipulated-cycle loss pool at every threshold. The MM cohort is a small bystander in manipulated cycles: its net PnL holds near $-$\$0.6\,M throughout and its loss share stays modest ($8$--$16\%$). Its behavioral profile --- passive, delta-neutral quoting --- is orthogonal to the manipulator's directional-PnL threshold, so moving the threshold barely reclassifies it. In normal cycles the manipulator cohort's PnL is near zero at moderate cells and only mildly negative at the tightest activity cuts ($n_{\text{cycle,min}}\geq 10$), confirming that manipulators do not systematically profit in normal cycles. The core finding --- a large, retail-funded manipulated-cycle profit for the manipulator cohort --- is robust to both threshold choices.

\begin{table}[p]
    \centering
    \footnotesize
    \setlength{\tabcolsep}{4pt}
    \caption{\textbf{Trader-type attribution sensitivity to manipulator thresholds.} This table re-derives the full trader-type partition at each cell of a $4\times5$ threshold grid: the manipulator cohort is the set of wallets with $n_{\text{cycle,p}}\geq n_{\mathrm{cyc,min}}$ and net manipulated-cycle PnL $\geq \$_{\min}$; the MM cohort is the behavioral residual passing the same passive-rate, fills-per-active-cycle, and inventory criteria as the main text; retail/other is the final residual. Each panel reports one trader-type metric of Table~\ref{tab:cohort_pnl} for all three cohorts, split across manipulated and normal cycles: \emph{Panel A}, aggregate net PnL (\$M), plus the manipulator cohort size (Wallets); \emph{Panel B}, the fraction of cycles in which the cohort is net negative; \emph{Panel C}, the cohort's share of the gross loss pool, aggregated across cycles with pool-size weights. Within each regime the three cohort PnLs sum to zero and the three loss shares to 100\%.}
    \label{tab:cohort_sensitivity}

    \textit{Panel A. Net PnL (\$M)}\\[2pt]
    \begin{tabular}{rrrrrrrrr}
    \toprule
     &  &  & \multicolumn{3}{c}{Manipulated cycles (\$M)} & \multicolumn{3}{c}{Normal cycles (\$M)} \\
    \cmidrule(lr){4-6}\cmidrule(lr){7-9}
    $n_{\mathrm{cyc,min}}$ & $\$_{\min}$ & Wallets & Manip. & MM & Retail & Manip. & MM & Retail \\
    \midrule
    3 & \$500 & 2{,}276 & $+$9.98 & $-$0.65 & $-$9.33 & $-$0.92 & $+$3.01 & $-$2.08 \\
    3 & \$1{,}000 & 1{,}456 & $+$9.41 & $-$0.64 & $-$8.76 & $-$0.19 & $+$3.02 & $-$2.83 \\
    3 & \$2{,}000 & 857 & $+$8.56 & $-$0.62 & $-$7.94 & $+$0.28 & $+$3.11 & $-$3.39 \\
    3 & \$5{,}000 & 414 & $+$7.27 & $-$0.59 & $-$6.68 & $+$0.68 & $+$3.21 & $-$3.90 \\
    3 & \$10{,}000 & 221 & $+$5.89 & $-$0.55 & $-$5.34 & $+$0.31 & $+$3.40 & $-$3.70 \\
    5 & \$500 & 2{,}158 & $+$9.56 & $-$0.65 & $-$8.91 & $-$0.94 & $+$3.01 & $-$2.07 \\
    5 & \$1{,}000 & 1{,}389 & $+$9.03 & $-$0.64 & $-$8.39 & $-$0.30 & $+$3.02 & $-$2.72 \\
    5 & \$2{,}000 & 821 & $+$8.22 & $-$0.62 & $-$7.61 & $+$0.09 & $+$3.11 & $-$3.20 \\
    5 & \$5{,}000 & 397 & $+$6.99 & $-$0.59 & $-$6.40 & $+$0.52 & $+$3.21 & $-$3.73 \\
    5 & \$10{,}000 & 210 & $+$5.65 & $-$0.55 & $-$5.10 & $+$0.16 & $+$3.40 & $-$3.56 \\
    10 & \$500 & 1{,}949 & $+$8.63 & $-$0.65 & $-$7.98 & $-$1.43 & $+$3.01 & $-$1.58 \\
    10 & \$1{,}000 & 1{,}275 & $+$8.16 & $-$0.64 & $-$7.52 & $-$0.89 & $+$3.02 & $-$2.13 \\
    10 & \$2{,}000 & 751 & $+$7.41 & $-$0.62 & $-$6.79 & $-$0.53 & $+$3.11 & $-$2.58 \\
    10 & \$5{,}000 & 356 & $+$6.27 & $-$0.59 & $-$5.68 & $-$0.04 & $+$3.21 & $-$3.18 \\
    10 & \$10{,}000 & 181 & $+$5.01 & $-$0.55 & $-$4.46 & $-$0.31 & $+$3.40 & $-$3.09 \\
    20 & \$500 & 1{,}673 & $+$7.37 & $-$0.65 & $-$6.72 & $-$1.43 & $+$3.01 & $-$1.58 \\
    20 & \$1{,}000 & 1{,}130 & $+$6.99 & $-$0.64 & $-$6.35 & $-$1.05 & $+$3.02 & $-$1.97 \\
    20 & \$2{,}000 & 672 & $+$6.33 & $-$0.62 & $-$5.71 & $-$0.75 & $+$3.11 & $-$2.36 \\
    20 & \$5{,}000 & 315 & $+$5.30 & $-$0.59 & $-$4.71 & $-$0.33 & $+$3.21 & $-$2.88 \\
    20 & \$10{,}000 & 151 & $+$4.12 & $-$0.55 & $-$3.57 & $-$0.54 & $+$3.40 & $-$2.86 \\
    \bottomrule
\end{tabular}

\end{table}

\begin{table}[p]\ContinuedFloat
    \centering
    \footnotesize
    \setlength{\tabcolsep}{4pt}
    \caption{\textbf{Trader-type attribution sensitivity to manipulator thresholds} \emph{(continued)}.}

    \textit{Panel B. \% losing cycles}\\[2pt]
    \begin{tabular}{rrrrrrrr}
    \toprule
     &  & \multicolumn{3}{c}{Manipulated cycles} & \multicolumn{3}{c}{Normal cycles} \\
    \cmidrule(lr){3-5}\cmidrule(lr){6-8}
    $n_{\mathrm{cyc,min}}$ & $\$_{\min}$ & Manip. & MM & Retail & Manip. & MM & Retail \\
    \midrule
    3 & \$500 & 35.8\% & 58.8\% & 66.3\% & 62.2\% & 37.6\% & 43.8\% \\
    3 & \$1{,}000 & 36.8\% & 58.8\% & 66.0\% & 61.8\% & 37.7\% & 45.4\% \\
    3 & \$2{,}000 & 38.1\% & 58.6\% & 65.4\% & 62.3\% & 37.7\% & 47.7\% \\
    3 & \$5{,}000 & 38.2\% & 58.7\% & 65.7\% & 60.5\% & 37.5\% & 53.4\% \\
    3 & \$10{,}000 & 40.2\% & 57.6\% & 65.6\% & 66.6\% & 37.2\% & 55.7\% \\
    5 & \$500 & 36.0\% & 58.8\% & 65.8\% & 62.3\% & 37.6\% & 43.7\% \\
    5 & \$1{,}000 & 36.9\% & 58.8\% & 65.7\% & 61.9\% & 37.7\% & 45.4\% \\
    5 & \$2{,}000 & 38.0\% & 58.6\% & 65.3\% & 62.4\% & 37.7\% & 47.7\% \\
    5 & \$5{,}000 & 38.2\% & 58.7\% & 65.5\% & 60.6\% & 37.5\% & 53.4\% \\
    5 & \$10{,}000 & 40.3\% & 57.6\% & 65.2\% & 66.6\% & 37.2\% & 55.7\% \\
    10 & \$500 & 36.7\% & 58.8\% & 65.0\% & 62.3\% & 37.6\% & 43.8\% \\
    10 & \$1{,}000 & 37.2\% & 58.8\% & 65.1\% & 62.0\% & 37.7\% & 45.4\% \\
    10 & \$2{,}000 & 38.4\% & 58.6\% & 64.8\% & 62.6\% & 37.7\% & 47.9\% \\
    10 & \$5{,}000 & 38.6\% & 58.7\% & 65.2\% & 60.7\% & 37.5\% & 53.3\% \\
    10 & \$10{,}000 & 40.7\% & 57.6\% & 64.8\% & 66.9\% & 37.2\% & 55.5\% \\
    20 & \$500 & 37.2\% & 58.8\% & 64.4\% & 62.3\% & 37.6\% & 44.7\% \\
    20 & \$1{,}000 & 37.3\% & 58.8\% & 64.4\% & 61.9\% & 37.7\% & 46.3\% \\
    20 & \$2{,}000 & 38.7\% & 58.6\% & 64.4\% & 62.5\% & 37.7\% & 48.4\% \\
    20 & \$5{,}000 & 38.5\% & 58.7\% & 65.0\% & 60.9\% & 37.5\% & 53.5\% \\
    20 & \$10{,}000 & 40.7\% & 57.6\% & 64.3\% & 67.0\% & 37.2\% & 55.6\% \\
    \bottomrule
\end{tabular}

    \vspace{1.2em}
    \textit{Panel C. Loss share}\\[2pt]
    \begin{tabular}{rrrrrrrr}
    \toprule
     &  & \multicolumn{3}{c}{Manipulated cycles} & \multicolumn{3}{c}{Normal cycles} \\
    \cmidrule(lr){3-5}\cmidrule(lr){6-8}
    $n_{\mathrm{cyc,min}}$ & $\$_{\min}$ & Manip. & MM & Retail & Manip. & MM & Retail \\
    \midrule
    3 & \$500 & 17.8\% & 8.4\% & 73.8\% & 46.3\% & 9.1\% & 44.6\% \\
    3 & \$1{,}000 & 18.7\% & 8.6\% & 72.6\% & 44.9\% & 9.6\% & 45.5\% \\
    3 & \$2{,}000 & 19.7\% & 9.1\% & 71.2\% & 43.1\% & 10.6\% & 46.3\% \\
    3 & \$5{,}000 & 20.3\% & 10.3\% & 69.4\% & 40.0\% & 12.4\% & 47.5\% \\
    3 & \$10{,}000 & 20.1\% & 12.5\% & 67.3\% & 37.0\% & 15.3\% & 47.7\% \\
    5 & \$500 & 18.4\% & 8.6\% & 73.0\% & 46.2\% & 9.3\% & 44.5\% \\
    5 & \$1{,}000 & 19.3\% & 8.8\% & 71.9\% & 44.8\% & 9.8\% & 45.4\% \\
    5 & \$2{,}000 & 20.2\% & 9.3\% & 70.5\% & 43.2\% & 10.8\% & 46.0\% \\
    5 & \$5{,}000 & 20.7\% & 10.6\% & 68.7\% & 40.1\% & 12.7\% & 47.3\% \\
    5 & \$10{,}000 & 20.6\% & 12.8\% & 66.6\% & 37.1\% & 15.5\% & 47.4\% \\
    10 & \$500 & 19.1\% & 9.2\% & 71.6\% & 46.5\% & 9.9\% & 43.6\% \\
    10 & \$1{,}000 & 19.9\% & 9.5\% & 70.5\% & 45.3\% & 10.4\% & 44.3\% \\
    10 & \$2{,}000 & 20.8\% & 10.1\% & 69.1\% & 43.6\% & 11.6\% & 44.8\% \\
    10 & \$5{,}000 & 21.5\% & 11.4\% & 67.1\% & 40.4\% & 13.5\% & 46.1\% \\
    10 & \$10{,}000 & 21.5\% & 13.9\% & 64.7\% & 37.3\% & 16.3\% & 46.3\% \\
    20 & \$500 & 19.8\% & 10.5\% & 69.7\% & 45.9\% & 10.7\% & 43.4\% \\
    20 & \$1{,}000 & 20.5\% & 10.8\% & 68.7\% & 44.9\% & 11.2\% & 43.9\% \\
    20 & \$2{,}000 & 21.3\% & 11.5\% & 67.2\% & 43.3\% & 12.4\% & 44.3\% \\
    20 & \$5{,}000 & 22.2\% & 13.0\% & 64.8\% & 40.2\% & 14.3\% & 45.4\% \\
    20 & \$10{,}000 & 22.2\% & 16.1\% & 61.7\% & 37.2\% & 17.1\% & 45.7\% \\
    \bottomrule
\end{tabular}

\end{table}

\clearpage
\section{Push-Down Cohort Flow}
\label{app:pushdown_flow}

The cohort-flow figure in Section~\ref{subsec:wallet_cohorts_hedging} (Figure~\ref{fig:cohort_flow_trajectory}) conditions on \emph{push-up} cycles, manipulated cycles whose final-10s spot push is upward, so that the directional accumulation reads as a positive ramp. This appendix repeats the same construction on the \emph{push-down} mirror (manipulated cycles whose push is downward) in Figure~\ref{fig:cohort_flow_trajectory_pushdown}.

The pattern is the sign-flipped mirror with the identical timing signature. The manipulator cohort ramps smoothly into a net \emph{short} Up position (buying the Down token), reaching $\sim -9{,}500$ shares by close; MMs and retail take the long-Up side; and the Binance order flow is flat through the body and then steps down by $\sim\$1.75$\,M in the final 10\,s (cumulative $\sim -\$2$\,M by close), the down-side counterpart of the $+\$1.7$\,M up-push. The mirror is not exact in magnitude: both the final-10s Binance step ($\sim -\$1.75$\,M vs $\sim +\$1.7$\,M) and the manipulator cohort's cumulative directional position ($\sim -9{,}500$ vs $\sim +8{,}800$ shares) are marginally larger on the down side. But the linear-ramp-then-late-step timing is the same in both directions, confirming that the timing argument against the gamma-hedging channel (Section~\ref{subsec:wallet_cohorts_hedging}) is not an artifact of conditioning on the up direction.

\begin{figure}[t]
    \centering
    \includegraphics[width=\textwidth]{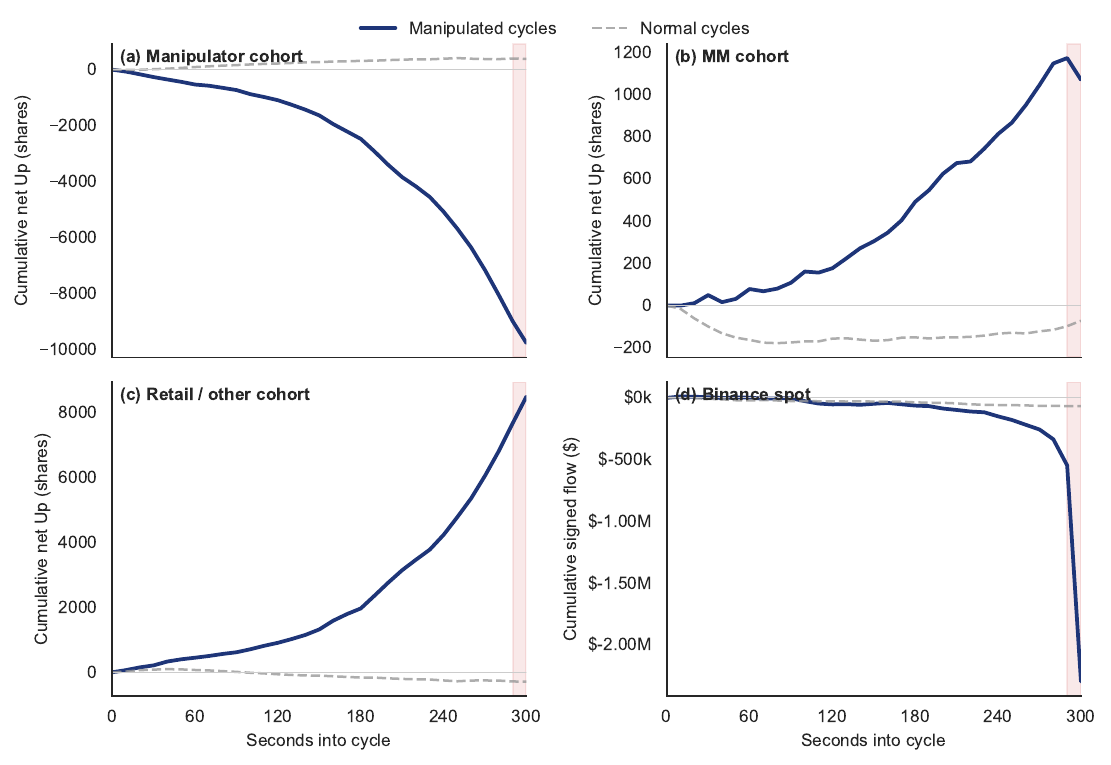}
    \caption{\textbf{Cumulative net position trajectory across the cycle, push-down vs.\ normal cycles, P3 BTC 5\mn.} This figure plots the push-down counterpart of Figure~\ref{fig:cohort_flow_trajectory}; the solid line conditions on push-down cycles (labeled ``manipulated cycles'' in the legend), construction otherwise identical. Panels \textbf{(a)}--\textbf{(c)} plot the net Up-token holdings (shares) of the manipulator, MM, and retail/other cohorts; panel \textbf{(d)} plots the cumulative Binance spot order flow (\$). Dashed gray lines: normal cycles. Pink shading: the final-10\,s region (seconds 290--300).}
    \label{fig:cohort_flow_trajectory_pushdown}
\end{figure}

\clearpage
\printbibliography

\end{document}